\title{Lewis meets Brouwer: constructive strict implication}
\author[tl]{Tadeusz Litak}
\address[tl]{Informatik 8, FAU Erlangen-N\"{u}rnberg, Martensstra\ss e 3, 91058 Erlangen \\
\href{mailto:tadeusz.litak@fau.de}{tadeusz.litak@fau.de}}
\author[av]{Albert Visser}
\address[av]{Philosophy, Faculty of Humanities,
                Utrecht University,
               Janskerkhof 13,
                3512BL~~Utrecht \\% The Netherlands \\
\href{mailto:a.visser@uu.nl}{a.visser@uu.nl}}
\date{Draft of \today}                                           % Activate to display a given date or no date
\newcolumntype{L}[1]{>{\raggedright\let\newline\\\arraybackslash\hspace{0pt}}m{#1}}
\newcolumntype{C}[1]{>{\centering\let\newline\\\arraybackslash\hspace{0pt}}m{#1}}
\newcolumntype{R}[1]{>{\raggedleft\let\newline\\\arraybackslash\hspace{0pt}}m{#1}}
\newif\if@full
\let\iffull\if@full
\newif\if@deriv
\let\ifderiv\if@deriv
\newif\if@bibt
\let\ifbibt\if@bibt
\ifdef{\c@author}{
\let\c@author\relax
}{}
\def\blx@bblfile@biber{%
  \blx@secinit
  \begingroup
  \blx@bblstart
%%%%%%%%%%%%%%%%%%%%%%%%%%%%%%%%%%%%%
%
%% copy here the contents of the created bbl file
% from https://tex.stackexchange.com/questions/12175/biblatex-submitting-to-a-journal
%
%%%%%%%%%%%%%%%%%%%%%%%%%%%%%%%%%%%%%
% $ biblatex auxiliary file $
% $ biblatex bbl format version 2.8 $
% Do not modify the above lines!
%
% This is an auxiliary file used by the 'biblatex' package.
% This file may safely be deleted. It will be recreated by
% biber as required.
%
\begingroup
\makeatletter
\@ifundefined{ver@biblatex.sty}
  {\@latex@error
     {Missing 'biblatex' package}
     {The bibliography requires the 'biblatex' package.}
      \aftergroup }
  {}
\endgroup

\refsection{0}
  \sortlist[entry]{nty/global/}
    \entry{AbelV14:aplas}{incollection}{}
      \name{author}{2}{}{%
        {{hash=492fee9410a0fd601511d2201c985f28}{%
           family={Abel},
           familyi={A\bibinitperiod},
           given={Andreas},
           giveni={A\bibinitperiod}}}%
        {{hash=38b551d0b5ec5e2b4c915bd08649e102}{%
           family={Vezzosi},
           familyi={V\bibinitperiod},
           given={Andrea},
           giveni={A\bibinitperiod}}}%
      }
      \name{editor}{1}{}{%
        {{hash=afd798828c50cc237b473922c42b9658}{%
           family={Garrigue},
           familyi={G\bibinitperiod},
           given={Jacques},
           giveni={J\bibinitperiod}}}%
      }
      \list{publisher}{1}{%
        {Springer International Publishing}%
      }
      \strng{namehash}{168dcaf224a645f74d9f2ad0681bdd2d}
      \strng{fullhash}{168dcaf224a645f74d9f2ad0681bdd2d}
      \strng{authornamehash}{168dcaf224a645f74d9f2ad0681bdd2d}
      \strng{authorfullhash}{168dcaf224a645f74d9f2ad0681bdd2d}
      \strng{editornamehash}{afd798828c50cc237b473922c42b9658}
      \strng{editorfullhash}{afd798828c50cc237b473922c42b9658}
      \field{labelalpha}{AV14}
      \field{sortinit}{A}
      \field{sortinithash}{3248043b5fe8d0a34dab5ab6b8d4309b}
      \field{labelnamesource}{author}
      \field{labeltitlesource}{title}
      \field{booktitle}{Proceedings of {APLAS}}
      \field{isbn}{978-3-319-12735-4}
      \field{series}{{LNCS}}
      \field{title}{A Formalized Proof of Strong Normalization for Guarded Recursive Types}
      \field{volume}{8858}
      \field{year}{2014}
      \field{pages}{140\bibrangedash 158}
      \range{pages}{19}
      \verb{doi}
      \verb 10.1007/978-3-319-12736-1_8
      \endverb
      \verb{url}
      \verb http://dx.doi.org/10.1007/978-3-319-12736-1_8
      \endverb
    \endentry
    \entry{Abramsky93:tcs}{article}{}
      \name{author}{1}{}{%
        {{hash=f4721a56c3b94b002e7ef8e0ef2b89fa}{%
           family={Abramsky},
           familyi={A\bibinitperiod},
           given={Samson},
           giveni={S\bibinitperiod}}}%
      }
      \strng{namehash}{f4721a56c3b94b002e7ef8e0ef2b89fa}
      \strng{fullhash}{f4721a56c3b94b002e7ef8e0ef2b89fa}
      \strng{authornamehash}{f4721a56c3b94b002e7ef8e0ef2b89fa}
      \strng{authorfullhash}{f4721a56c3b94b002e7ef8e0ef2b89fa}
      \field{labelalpha}{Abr93}
      \field{sortinit}{A}
      \field{sortinithash}{3248043b5fe8d0a34dab5ab6b8d4309b}
      \field{labelnamesource}{author}
      \field{labeltitlesource}{title}
      \field{abstract}{We study Girard's linear logic from the point of view of giving a concrete computational interpretation of the logic, based on the Curry\x{fffd}\x{fffd}\x{fffd}Howard isomorphism. In the case of Intuitionistic linear logic, this leads to a refinement of the lambda calculus, giving finer control over order of evaluation and storage allocation, while maintaining the logical content of programs as proofs, and computation as cut-elimination. In the classical case, it leads to a concurrent process paradigm with an operational semantics in the style of Berry and Boudol's chemical abstract machine. This opens up a promising new approach to the parallel implementation of functional programming languages; and offers the prospect of typed concurrent programming in which correctness is guaranteed by the typing.}
      \field{issn}{0304-3975}
      \field{journaltitle}{Theoretical Computer Science}
      \field{number}{1--2}
      \field{title}{Computational interpretations of linear logic}
      \field{volume}{111}
      \field{year}{1993}
      \field{pages}{3\bibrangedash 57}
      \range{pages}{55}
      \verb{doi}
      \verb http://dx.doi.org/10.1016/0304-3975(93)90181-R
      \endverb
      \verb{url}
      \verb http://www.sciencedirect.com/science/article/pii/030439759390181R
      \endverb
    \endentry
    \entry{Amerbauer96:sl}{article}{}
      \name{author}{1}{}{%
        {{hash=bd9d8c212734c86c7f3481883c4493cb}{%
           family={Amerbauer},
           familyi={A\bibinitperiod},
           given={Martin},
           giveni={M\bibinitperiod}}}%
      }
      \list{publisher}{1}{%
        {Springer}%
      }
      \strng{namehash}{bd9d8c212734c86c7f3481883c4493cb}
      \strng{fullhash}{bd9d8c212734c86c7f3481883c4493cb}
      \strng{authornamehash}{bd9d8c212734c86c7f3481883c4493cb}
      \strng{authorfullhash}{bd9d8c212734c86c7f3481883c4493cb}
      \field{labelalpha}{Ame96}
      \field{sortinit}{A}
      \field{sortinithash}{3248043b5fe8d0a34dab5ab6b8d4309b}
      \field{labelnamesource}{author}
      \field{labeltitlesource}{title}
      \field{abstract}{We give sound and complete tableau and sequent calculi for the propositional normal modal logics S4.04, K4B and $\text{G}_{0}$ (these logics are the smallest normal modal logics containing K and the schemata $\square A\rightarrow \square \square A,\square A\rightarrow A$ and $\square \diamond \square A\rightarrow (A\rightarrow \square A)$ ; $\square A\rightarrow \square \square A$ and $A\rightarrow \square \diamond A;\square A\rightarrow \square \square A$ and $\square (\square (A\rightarrow \square A)\rightarrow A)\rightarrow \square A\text{resp.}$ ) with the following properties: the calculi for S4.04 and $\text{G}_{0}$ are cut-free and have the interpolation property, the calculus for K4B contains a restricted version of the cut-rule, the so-called analytical cut-rule. In addition we show that $\text{G}_{0}$ is not compact (and therefore not canonical), and we proof with the tableau-method that $\text{G}_{0}$ is characterised by the class of all finite, (transitive) trees of degenerate or simple clusters of worlds; therefore $\text{G}_{0}$ is decidable and also characterised by the class of all frames for $\text{G}_{0}$ .}
      \field{issn}{00393215}
      \field{journaltitle}{Studia Logica}
      \field{number}{2/3}
      \field{title}{Cut-Free Tableau Calculi for Some Propositional Normal Modal Logics}
      \field{volume}{57}
      \field{year}{1996}
      \field{pages}{359\bibrangedash 372}
      \range{pages}{14}
      \verb{url}
      \verb http://www.jstor.org/stable/20015881
      \endverb
    \endentry
    \entry{AppelMRV07:popl}{inproceedings}{}
      \name{author}{4}{}{%
        {{hash=448045e4531c99f2a2552ae1affe65fe}{%
           family={Appel},
           familyi={A\bibinitperiod},
           given={Andrew\bibnamedelima W.},
           giveni={A\bibinitperiod\bibinitdelim W\bibinitperiod}}}%
        {{hash=a9f37916182d59fd60115ad8a650c06d}{%
           family={Melli\`{e}s},
           familyi={M\bibinitperiod},
           given={Paul-Andr\'{e}},
           giveni={P\bibinithyphendelim A\bibinitperiod}}}%
        {{hash=586cb12157524735b38a2dc2a67b4b34}{%
           family={Richards},
           familyi={R\bibinitperiod},
           given={Christopher\bibnamedelima D.},
           giveni={C\bibinitperiod\bibinitdelim D\bibinitperiod}}}%
        {{hash=7f094af8d8fd00381650074f87b06f85}{%
           family={Vouillon},
           familyi={V\bibinitperiod},
           given={J\'{e}r\^{o}me},
           giveni={J\bibinitperiod}}}%
      }
      \name{editor}{2}{}{%
        {{hash=d3c25d820c67e8530324893291b911d5}{%
           family={Hofmann},
           familyi={H\bibinitperiod},
           given={Martin},
           giveni={M\bibinitperiod}}}%
        {{hash=f2dd06499ca403079167fe9967155025}{%
           family={Felleisen},
           familyi={F\bibinitperiod},
           given={Matthias},
           giveni={M\bibinitperiod}}}%
      }
      \list{organization}{1}{%
        {ACM SIGPLAN-SIGACT}%
      }
      \strng{namehash}{a6da2eb93f5402ea8df679b20c24cb6a}
      \strng{fullhash}{a6da2eb93f5402ea8df679b20c24cb6a}
      \strng{authornamehash}{a6da2eb93f5402ea8df679b20c24cb6a}
      \strng{authorfullhash}{a6da2eb93f5402ea8df679b20c24cb6a}
      \strng{editornamehash}{5bdb6a2226c8898014fbf46b7ca7cc95}
      \strng{editorfullhash}{5bdb6a2226c8898014fbf46b7ca7cc95}
      \field{labelalpha}{AMRV07}
      \field{sortinit}{A}
      \field{sortinithash}{3248043b5fe8d0a34dab5ab6b8d4309b}
      \field{labelnamesource}{author}
      \field{labeltitlesource}{title}
      \field{booktitle}{Proceedings of {POPL}}
      \field{isbn}{1-59593-575-4}
      \field{title}{A very modal model of a modern, major, general type system}
      \field{year}{2007}
      \field{pages}{109\bibrangedash 122}
      \range{pages}{14}
    \endentry
    \entry{arde:sigm14}{article}{}
      \name{author}{2}{}{%
        {{hash=9d51539569ddbe71309385f0d8091518}{%
           family={Ardeshir},
           familyi={A\bibinitperiod},
           given={M.},
           giveni={M\bibinitperiod}}}%
        {{hash=2281f3af07dbcc1d485dcb11f041cef1}{%
           family={Mojtahedi},
           familyi={M\bibinitperiod},
           given={S.M.},
           giveni={S\bibinitperiod}}}%
      }
      \strng{namehash}{1b88a4e523af535801f5bc0452d39d7f}
      \strng{fullhash}{1b88a4e523af535801f5bc0452d39d7f}
      \strng{authornamehash}{1b88a4e523af535801f5bc0452d39d7f}
      \strng{authorfullhash}{1b88a4e523af535801f5bc0452d39d7f}
      \field{labelalpha}{AM14}
      \field{sortinit}{A}
      \field{sortinithash}{3248043b5fe8d0a34dab5ab6b8d4309b}
      \field{labelnamesource}{author}
      \field{labeltitlesource}{title}
      \field{journaltitle}{arXiv preprint arXiv:1409.5699}
      \field{title}{The {$\Sigma_1$}-Provability Logic of {{\sf HA}}}
      \field{year}{2014}
    \endentry
    \entry{ArtemovP16:rsl}{article}{}
      \name{author}{2}{}{%
        {{hash=1841fe17ddbeb4863d02f07ce78da70b}{%
           family={Artemov},
           familyi={A\bibinitperiod},
           given={Sergei},
           giveni={S\bibinitperiod}}}%
        {{hash=9dc13e90c848df4322c6a5bb59e02221}{%
           family={Protopopescu},
           familyi={P\bibinitperiod},
           given={Tudor},
           giveni={T\bibinitperiod}}}%
      }
      \list{publisher}{1}{%
        {Cambridge University Press}%
      }
      \strng{namehash}{80f38770b47d3cb0f09adba3688fdd45}
      \strng{fullhash}{80f38770b47d3cb0f09adba3688fdd45}
      \strng{authornamehash}{80f38770b47d3cb0f09adba3688fdd45}
      \strng{authorfullhash}{80f38770b47d3cb0f09adba3688fdd45}
      \field{labelalpha}{AP16}
      \field{sortinit}{A}
      \field{sortinithash}{3248043b5fe8d0a34dab5ab6b8d4309b}
      \field{labelnamesource}{author}
      \field{labeltitlesource}{title}
      \field{journaltitle}{The Review of Symbolic Logic}
      \field{number}{2}
      \field{title}{Intuitionistic Epistemic Logic}
      \field{volume}{9}
      \field{year}{2016}
      \field{pages}{266\bibrangedash 298}
      \range{pages}{33}
      \verb{doi}
      \verb 10.1017/S1755020315000374
      \endverb
    \endentry
    \entry{arte:prov04}{incollection}{}
      \name{author}{2}{}{%
        {{hash=e89f88161784433c4a82d2fcaa62dfd6}{%
           family={Artemov},
           familyi={A\bibinitperiod},
           given={S.N.},
           giveni={S\bibinitperiod}}}%
        {{hash=f8878e25fde2ea6d38da5bec072a0a34}{%
           family={Beklemishev},
           familyi={B\bibinitperiod},
           given={L.D.},
           giveni={L\bibinitperiod}}}%
      }
      \name{editor}{2}{}{%
        {{hash=d82591404c108a8a2a6835e75d5d92e0}{%
           family={Gabbay},
           familyi={G\bibinitperiod},
           given={D.},
           giveni={D\bibinitperiod}}}%
        {{hash=5da0854c8cdce485a3fd9dcf28b01a8f}{%
           family={Guenthner},
           familyi={G\bibinitperiod},
           given={F.},
           giveni={F\bibinitperiod}}}%
      }
      \list{location}{1}{%
        {Dordrecht}%
      }
      \list{publisher}{1}{%
        {Springer}%
      }
      \strng{namehash}{9dc4f9b7005a02d875316a2e5ca4b6ab}
      \strng{fullhash}{9dc4f9b7005a02d875316a2e5ca4b6ab}
      \strng{authornamehash}{9dc4f9b7005a02d875316a2e5ca4b6ab}
      \strng{authorfullhash}{9dc4f9b7005a02d875316a2e5ca4b6ab}
      \strng{editornamehash}{d2c0e896668cb1b6e69ea35c85a805c9}
      \strng{editorfullhash}{d2c0e896668cb1b6e69ea35c85a805c9}
      \field{labelalpha}{AB04}
      \field{sortinit}{A}
      \field{sortinithash}{3248043b5fe8d0a34dab5ab6b8d4309b}
      \field{labelnamesource}{author}
      \field{labeltitlesource}{title}
      \field{booktitle}{{Handbook of Philosophical Logic}, 2nd ed.}
      \field{title}{Provability Logic}
      \field{volume}{13}
      \field{year}{2004}
      \field{pages}{229\bibrangedash 403}
      \range{pages}{175}
    \endentry
    \entry{AtkeyMB13:icfp}{inproceedings}{}
      \name{author}{2}{}{%
        {{hash=b5507ca1177b01ddc89cb764e5fbf287}{%
           family={Atkey},
           familyi={A\bibinitperiod},
           given={Robert},
           giveni={R\bibinitperiod}}}%
        {{hash=2c2c2406f3f7d52d846f02ca883d9b3d}{%
           family={McBride},
           familyi={M\bibinitperiod},
           given={Conor},
           giveni={C\bibinitperiod}}}%
      }
      \name{editor}{2}{}{%
        {{hash=33b4dd1b5ab1beaed368ad65da17385e}{%
           family={Morrisett},
           familyi={M\bibinitperiod},
           given={Greg},
           giveni={G\bibinitperiod}}}%
        {{hash=ee50513d1a9288cce2900f00506e9fe4}{%
           family={Uustalu},
           familyi={U\bibinitperiod},
           given={Tarmo},
           giveni={T\bibinitperiod}}}%
      }
      \list{organization}{1}{%
        {ACM SIGPLAN}%
      }
      \strng{namehash}{a38c055325e87463ec50f6cb0ddbb150}
      \strng{fullhash}{a38c055325e87463ec50f6cb0ddbb150}
      \strng{authornamehash}{a38c055325e87463ec50f6cb0ddbb150}
      \strng{authorfullhash}{a38c055325e87463ec50f6cb0ddbb150}
      \strng{editornamehash}{465f69f8fbaf3ea353be4436ef6a4005}
      \strng{editorfullhash}{465f69f8fbaf3ea353be4436ef6a4005}
      \field{labelalpha}{AM13}
      \field{sortinit}{A}
      \field{sortinithash}{3248043b5fe8d0a34dab5ab6b8d4309b}
      \field{labelnamesource}{author}
      \field{labeltitlesource}{title}
      \field{booktitle}{International Conference on Functional Programming, (ICFP)}
      \field{isbn}{978-1-4503-2326-0}
      \field{title}{Productive coprogramming with guarded recursion}
      \field{year}{2013}
      \field{pages}{197\bibrangedash 208}
      \range{pages}{12}
    \endentry
    \entry{Barcan53:jsl}{article}{}
      \name{author}{1}{}{%
        {{hash=ac1a9cd17bfa1d02b2d413f5bcabe328}{%
           family={{Barcan Marcus}},
           familyi={B\bibinitperiod},
           given={Ruth},
           giveni={R\bibinitperiod}}}%
      }
      \list{publisher}{1}{%
        {Association for Symbolic Logic}%
      }
      \strng{namehash}{ac1a9cd17bfa1d02b2d413f5bcabe328}
      \strng{fullhash}{ac1a9cd17bfa1d02b2d413f5bcabe328}
      \strng{authornamehash}{ac1a9cd17bfa1d02b2d413f5bcabe328}
      \strng{authorfullhash}{ac1a9cd17bfa1d02b2d413f5bcabe328}
      \field{labelalpha}{Bar53}
      \field{sortinit}{B}
      \field{sortinithash}{5f6fa000f686ee5b41be67ba6ff7962d}
      \field{labelnamesource}{author}
      \field{labeltitlesource}{title}
      \field{issn}{00224812}
      \field{journaltitle}{The Journal of Symbolic Logic}
      \field{number}{3}
      \field{title}{Strict Implication, Deducibility and the Deduction Theorem}
      \field{volume}{18}
      \field{year}{1953}
      \field{pages}{234\bibrangedash 236}
      \range{pages}{3}
      \verb{url}
      \verb http://www.jstor.org/stable/2267407
      \endverb
    \endentry
    \entry{Barcan46:jsl}{article}{}
      \name{author}{1}{}{%
        {{hash=e0f500c3ad17dffdd6f52857e4b1cf74}{%
           family={Barcan},
           familyi={B\bibinitperiod},
           given={Ruth\bibnamedelima C.},
           giveni={R\bibinitperiod\bibinitdelim C\bibinitperiod}}}%
      }
      \list{publisher}{1}{%
        {Association for Symbolic Logic}%
      }
      \strng{namehash}{e0f500c3ad17dffdd6f52857e4b1cf74}
      \strng{fullhash}{e0f500c3ad17dffdd6f52857e4b1cf74}
      \strng{authornamehash}{e0f500c3ad17dffdd6f52857e4b1cf74}
      \strng{authorfullhash}{e0f500c3ad17dffdd6f52857e4b1cf74}
      \field{labelalpha}{Bar46}
      \field{sortinit}{B}
      \field{sortinithash}{5f6fa000f686ee5b41be67ba6ff7962d}
      \field{labelnamesource}{author}
      \field{labeltitlesource}{title}
      \field{issn}{00224812}
      \field{journaltitle}{The Journal of Symbolic Logic}
      \field{number}{4}
      \field{title}{The Deduction Theorem in a Functional Calculus of First Order Based on Strict Implication}
      \field{volume}{11}
      \field{year}{1946}
      \field{pages}{115\bibrangedash 118}
      \range{pages}{4}
      \verb{url}
      \verb http://www.jstor.org/stable/2268309
      \endverb
    \endentry
    \entry{Bazhanov03}{article}{}
      \name{author}{1}{}{%
        {{hash=985c0e441ed5d81fc82527fde33e9ed1}{%
           family={Bazhanov},
           familyi={B\bibinitperiod},
           given={Valentin\bibnamedelima A.},
           giveni={V\bibinitperiod\bibinitdelim A\bibinitperiod}}}%
      }
      \list{publisher}{1}{%
        {Cambridge University Press}%
      }
      \strng{namehash}{985c0e441ed5d81fc82527fde33e9ed1}
      \strng{fullhash}{985c0e441ed5d81fc82527fde33e9ed1}
      \strng{authornamehash}{985c0e441ed5d81fc82527fde33e9ed1}
      \strng{authorfullhash}{985c0e441ed5d81fc82527fde33e9ed1}
      \field{labelalpha}{Baz03}
      \field{sortinit}{B}
      \field{sortinithash}{5f6fa000f686ee5b41be67ba6ff7962d}
      \field{labelnamesource}{author}
      \field{labeltitlesource}{title}
      \field{journaltitle}{Science in Context}
      \field{number}{4}
      \field{title}{The Scholar and the ``Wolfhound Era'': The Fate of Ivan E. Orlov's Ideas in Logic, Philosophy, and Science}
      \field{volume}{16}
      \field{year}{2003}
      \field{pages}{535\bibrangedash 550}
      \range{pages}{16}
      \verb{doi}
      \verb 10.1017/S0269889703000954
      \endverb
    \endentry
    \entry{Becker30}{book}{}
      \name{author}{1}{}{%
        {{hash=feaef9ed0cefe0209c3958bd49a7a7eb}{%
           family={Becker},
           familyi={B\bibinitperiod},
           given={Oskar},
           giveni={O\bibinitperiod}}}%
      }
      \list{publisher}{1}{%
        {Halle}%
      }
      \strng{namehash}{feaef9ed0cefe0209c3958bd49a7a7eb}
      \strng{fullhash}{feaef9ed0cefe0209c3958bd49a7a7eb}
      \strng{authornamehash}{feaef9ed0cefe0209c3958bd49a7a7eb}
      \strng{authorfullhash}{feaef9ed0cefe0209c3958bd49a7a7eb}
      \field{labelalpha}{Bec30}
      \field{sortinit}{B}
      \field{sortinithash}{5f6fa000f686ee5b41be67ba6ff7962d}
      \field{labelnamesource}{author}
      \field{labeltitlesource}{title}
      \field{series}{Jahrbuch f\"{u}r Philosophie und ph\"{a}\-no\-me\-no\-lo\-gische Forschung}
      \field{title}{Zur Logik der Modalit\"{a}ten}
      \field{year}{1930}
    \endentry
    \entry{bees:nond75}{article}{}
      \name{author}{1}{}{%
        {{hash=4c6fe4bf7d421e0d0693b2329613439a}{%
           family={Beeson},
           familyi={B\bibinitperiod},
           given={M.},
           giveni={M\bibinitperiod}}}%
      }
      \strng{namehash}{4c6fe4bf7d421e0d0693b2329613439a}
      \strng{fullhash}{4c6fe4bf7d421e0d0693b2329613439a}
      \strng{authornamehash}{4c6fe4bf7d421e0d0693b2329613439a}
      \strng{authorfullhash}{4c6fe4bf7d421e0d0693b2329613439a}
      \field{labelalpha}{Bee75}
      \field{sortinit}{B}
      \field{sortinithash}{5f6fa000f686ee5b41be67ba6ff7962d}
      \field{labelnamesource}{author}
      \field{labeltitlesource}{title}
      \field{journaltitle}{The Journal of Symbolic Logic}
      \field{title}{The nonderivability in intuitionistic formal systems of theorems on the continuity of effective operations}
      \field{volume}{40}
      \field{year}{1975}
      \field{pages}{321\bibrangedash 346}
      \range{pages}{26}
    \endentry
    \entry{bekl:limi05}{article}{}
      \name{author}{2}{}{%
        {{hash=f8878e25fde2ea6d38da5bec072a0a34}{%
           family={Beklemishev},
           familyi={B\bibinitperiod},
           given={L.D.},
           giveni={L\bibinitperiod}}}%
        {{hash=7c59a81f2d67e5dbdf77cc8061473f2d}{%
           family={Visser},
           familyi={V\bibinitperiod},
           given={A.},
           giveni={A\bibinitperiod}}}%
      }
      \strng{namehash}{2dfc128e9b36a361a53b140b9f438751}
      \strng{fullhash}{2dfc128e9b36a361a53b140b9f438751}
      \strng{authornamehash}{2dfc128e9b36a361a53b140b9f438751}
      \strng{authorfullhash}{2dfc128e9b36a361a53b140b9f438751}
      \field{labelalpha}{BV05}
      \field{sortinit}{B}
      \field{sortinithash}{5f6fa000f686ee5b41be67ba6ff7962d}
      \field{labelnamesource}{author}
      \field{labeltitlesource}{title}
      \field{journaltitle}{Annals of Pure and Applied Logic}
      \field{number}{1--2}
      \field{title}{On the Limit Existence Principles in Elementary Arithmetic and {$\Sigma^0_n$}-consequences of Theories}
      \field{volume}{136}
      \field{year}{2005}
      \field{pages}{56\bibrangedash 74}
      \range{pages}{19}
    \endentry
    \entry{BentonBP98:jfp}{article}{}
      \name{author}{3}{}{%
        {{hash=3b57216091a3ec53b2beec3ffc9e8751}{%
           family={Benton},
           familyi={B\bibinitperiod},
           given={P.\bibnamedelimi N.},
           giveni={P\bibinitperiod\bibinitdelim N\bibinitperiod}}}%
        {{hash=aa5401d298a45e90389918daca5a6f8b}{%
           family={Bierman},
           familyi={B\bibinitperiod},
           given={Gavin\bibnamedelima M.},
           giveni={G\bibinitperiod\bibinitdelim M\bibinitperiod}}}%
        {{hash=e8301f8c4b3b95126e1606835de6bc12}{%
           family={Paiva},
           familyi={P\bibinitperiod},
           given={Valeria},
           giveni={V\bibinitperiod},
           prefix={de},
           prefixi={d\bibinitperiod}}}%
      }
      \strng{namehash}{f927d20b32c13d927647f09f7dd3dc5b}
      \strng{fullhash}{f927d20b32c13d927647f09f7dd3dc5b}
      \strng{authornamehash}{f927d20b32c13d927647f09f7dd3dc5b}
      \strng{authorfullhash}{f927d20b32c13d927647f09f7dd3dc5b}
      \field{labelalpha}{BBP98}
      \field{sortinit}{B}
      \field{sortinithash}{5f6fa000f686ee5b41be67ba6ff7962d}
      \field{labelnamesource}{author}
      \field{labeltitlesource}{title}
      \field{journaltitle}{J. Funct. Programming}
      \field{number}{2}
      \field{title}{Computational Types from a Logical Perspective}
      \field{volume}{8}
      \field{year}{1998}
      \field{pages}{177\bibrangedash 193}
      \range{pages}{17}
    \endentry
    \entry{bera:inte90}{article}{}
      \name{author}{1}{}{%
        {{hash=dd60f4c0e2dacf636c541ce67c6cfbbc}{%
           family={Berarducci},
           familyi={B\bibinitperiod},
           given={A.},
           giveni={A\bibinitperiod}}}%
      }
      \strng{namehash}{dd60f4c0e2dacf636c541ce67c6cfbbc}
      \strng{fullhash}{dd60f4c0e2dacf636c541ce67c6cfbbc}
      \strng{authornamehash}{dd60f4c0e2dacf636c541ce67c6cfbbc}
      \strng{authorfullhash}{dd60f4c0e2dacf636c541ce67c6cfbbc}
      \field{labelalpha}{Ber90}
      \field{sortinit}{B}
      \field{sortinithash}{5f6fa000f686ee5b41be67ba6ff7962d}
      \field{labelnamesource}{author}
      \field{labeltitlesource}{title}
      \field{journaltitle}{The Journal of Symbolic Logic}
      \field{title}{The interpretability logic of {P}eano arithmetic}
      \field{volume}{55}
      \field{year}{1990}
      \field{pages}{1059\bibrangedash 1089}
      \range{pages}{31}
    \endentry
    \entry{BezhanishviliJ12:acs}{article}{}
      \name{author}{2}{}{%
        {{hash=a7c042aaa1a3fa49ceb88ffd5870c794}{%
           family={Bezhanishvili},
           familyi={B\bibinitperiod},
           given={Guram},
           giveni={G\bibinitperiod}}}%
        {{hash=3c0295e0ef5b742d981462b229350392}{%
           family={Jansana},
           familyi={J\bibinitperiod},
           given={Ramon},
           giveni={R\bibinitperiod}}}%
      }
      \list{language}{1}{%
        {English}%
      }
      \list{publisher}{1}{%
        {Springer Netherlands}%
      }
      \strng{namehash}{9881fa507f7165b5151d8b3853cc2df0}
      \strng{fullhash}{9881fa507f7165b5151d8b3853cc2df0}
      \strng{authornamehash}{9881fa507f7165b5151d8b3853cc2df0}
      \strng{authorfullhash}{9881fa507f7165b5151d8b3853cc2df0}
      \field{labelalpha}{BJ13}
      \field{sortinit}{B}
      \field{sortinithash}{5f6fa000f686ee5b41be67ba6ff7962d}
      \field{labelnamesource}{author}
      \field{labeltitlesource}{title}
      \field{abstract}{We develop a new duality for implicative semilattices, generalizing Esakia duality for Heyting algebras. Our duality is a restricted version of generalized Priestley duality for distributive semilattices, and provides an improvement of Vrancken-Mawet and Celani dualities. We also show that Heyting algebra homomorphisms can be characterized by means of special partial functions between Esakia spaces. On the one hand, this yields a new duality for Heyting algebras, which is an alternative to Esakia duality. On the other hand, it provides a natural generalization of K\x{fffd}\x{fffd}hler\x{fffd}\x{fffd}\x{fffd}s partial functions between finite posets to the infinite case.}
      \field{issn}{0927-2852}
      \field{journaltitle}{Appl. Categor. Struct.}
      \field{note}{10.1007/s10485-011-9265-0}
      \field{title}{Esakia Style Duality for Implicative Semilattices}
      \field{volume}{21}
      \field{year}{2013}
      \field{pages}{181\bibrangedash 208}
      \range{pages}{28}
      \verb{url}
      \verb http://dx.doi.org/10.1007/s10485-011-9265-0
      \endverb
    \endentry
    \entry{Bierman94}{report}{}
      \name{author}{1}{}{%
        {{hash=0b0ca41321c70d04661fe69322aa43fa}{%
           family={Bierman},
           familyi={B\bibinitperiod},
           given={G.M.},
           giveni={G\bibinitperiod}}}%
      }
      \list{institution}{1}{%
        {University of Cambridge, Computer Laboratory}%
      }
      \list{location}{1}{%
        {15 JJ Thomson Avenue, Cambridge CB3 0FD, United Kingdom, phone +44 1223 763500}%
      }
      \strng{namehash}{0b0ca41321c70d04661fe69322aa43fa}
      \strng{fullhash}{0b0ca41321c70d04661fe69322aa43fa}
      \strng{authornamehash}{0b0ca41321c70d04661fe69322aa43fa}
      \strng{authorfullhash}{0b0ca41321c70d04661fe69322aa43fa}
      \field{labelalpha}{Bie94}
      \field{sortinit}{B}
      \field{sortinithash}{5f6fa000f686ee5b41be67ba6ff7962d}
      \field{labelnamesource}{author}
      \field{labeltitlesource}{title}
      \field{month}{8}
      \field{number}{UCAM-CL-TR-346}
      \field{title}{{On intuitionistic linear logic}}
      \field{type}{techreport}
      \field{year}{1994}
    \endentry
    \entry{BirkedalM13:lics}{inproceedings}{}
      \name{author}{2}{}{%
        {{hash=f4714354745ea31cf6c543434102f444}{%
           family={Birkedal},
           familyi={B\bibinitperiod},
           given={Lars},
           giveni={L\bibinitperiod}}}%
        {{hash=87f3fdf5910beec6d8dba13dfb21e1be}{%
           family={M{\o{}}gelberg},
           familyi={M\bibinitperiod},
           given={Rasmus\bibnamedelima Ejlers},
           giveni={R\bibinitperiod\bibinitdelim E\bibinitperiod}}}%
      }
      \list{organization}{1}{%
        {ACM/IEEE}%
      }
      \strng{namehash}{3086f870f6b94dded414fce663296736}
      \strng{fullhash}{3086f870f6b94dded414fce663296736}
      \strng{authornamehash}{3086f870f6b94dded414fce663296736}
      \strng{authorfullhash}{3086f870f6b94dded414fce663296736}
      \field{labelalpha}{BM13}
      \field{sortinit}{B}
      \field{sortinithash}{5f6fa000f686ee5b41be67ba6ff7962d}
      \field{labelnamesource}{author}
      \field{labeltitlesource}{title}
      \field{booktitle}{Proceedings of {LiCS}}
      \field{isbn}{978-1-4799-0413-6}
      \field{title}{Intensional Type Theory with Guarded Recursive Types qua Fixed Points on Universes}
      \field{year}{2013}
      \field{pages}{213\bibrangedash 222}
      \range{pages}{10}
    \endentry
    \entry{BirkedalMSS12:lmcs}{article}{}
      \name{author}{4}{}{%
        {{hash=f4714354745ea31cf6c543434102f444}{%
           family={Birkedal},
           familyi={B\bibinitperiod},
           given={Lars},
           giveni={L\bibinitperiod}}}%
        {{hash=87f3fdf5910beec6d8dba13dfb21e1be}{%
           family={M{\o{}}gelberg},
           familyi={M\bibinitperiod},
           given={Rasmus\bibnamedelima Ejlers},
           giveni={R\bibinitperiod\bibinitdelim E\bibinitperiod}}}%
        {{hash=82fe256cf7f8e5dfa92bd8bf21a3c9f3}{%
           family={Schwinghammer},
           familyi={S\bibinitperiod},
           given={Jan},
           giveni={J\bibinitperiod}}}%
        {{hash=be0cfa690f9269274456598c103e4a4f}{%
           family={St{\o{}}vring},
           familyi={S\bibinitperiod},
           given={Kristian},
           giveni={K\bibinitperiod}}}%
      }
      \strng{namehash}{6000d29f39c0372c76f24cf18718bab7}
      \strng{fullhash}{6000d29f39c0372c76f24cf18718bab7}
      \strng{authornamehash}{6000d29f39c0372c76f24cf18718bab7}
      \strng{authorfullhash}{6000d29f39c0372c76f24cf18718bab7}
      \field{labelalpha}{BMSS12}
      \field{sortinit}{B}
      \field{sortinithash}{5f6fa000f686ee5b41be67ba6ff7962d}
      \field{labelnamesource}{author}
      \field{labeltitlesource}{title}
      \field{issue}{4}
      \field{journaltitle}{{LMCS}}
      \field{title}{First Steps in Synthetic Guarded Domain Theory: Step-Indexing in the Topos of Trees}
      \field{volume}{8}
      \field{year}{2012}
      \field{pages}{1\bibrangedash 45}
      \range{pages}{45}
    \endentry
    \entry{BizjakBGCM16:fossacs}{inproceedings}{}
      \name{author}{5}{}{%
        {{hash=34ed4cfc95e28238c0fabb0b8e292362}{%
           family={Bizjak},
           familyi={B\bibinitperiod},
           given={A.},
           giveni={A\bibinitperiod}}}%
        {{hash=9d3b4ae0575e402ab19716f09d7be0b3}{%
           family={Grathwohl},
           familyi={G\bibinitperiod},
           given={H.B.},
           giveni={H\bibinitperiod}}}%
        {{hash=a335e2b40cadabe119918a701153ab34}{%
           family={Clouston},
           familyi={C\bibinitperiod},
           given={R.},
           giveni={R\bibinitperiod}}}%
        {{hash=0a92bc213687f05d1d19634a32bdf1cd}{%
           family={M{\o{}}gelberg},
           familyi={M\bibinitperiod},
           given={R.E.},
           giveni={R\bibinitperiod}}}%
        {{hash=9d96263a8f3ea9129a610e2480687427}{%
           family={Birkedal},
           familyi={B\bibinitperiod},
           given={L.},
           giveni={L\bibinitperiod}}}%
      }
      \strng{namehash}{76a9ce6d0038860a0cb3b7ad4ce2ce1f}
      \strng{fullhash}{76a9ce6d0038860a0cb3b7ad4ce2ce1f}
      \strng{authornamehash}{76a9ce6d0038860a0cb3b7ad4ce2ce1f}
      \strng{authorfullhash}{76a9ce6d0038860a0cb3b7ad4ce2ce1f}
      \field{labelalpha}{BGCMB16}
      \field{sortinit}{B}
      \field{sortinithash}{5f6fa000f686ee5b41be67ba6ff7962d}
      \field{labelnamesource}{author}
      \field{labeltitlesource}{title}
      \field{booktitle}{Proceedings of {FoSSaCS}}
      \field{title}{Guarded Dependent Type Theory with Coinductive Types}
      \field{year}{2016}
    \endentry
    \entry{BizjakBM14:rtlc}{incollection}{}
      \name{author}{3}{}{%
        {{hash=817e5b781022c3cc6410d371130a87e5}{%
           family={Bizjak},
           familyi={B\bibinitperiod},
           given={Ale\v{s}},
           giveni={A\bibinitperiod}}}%
        {{hash=f4714354745ea31cf6c543434102f444}{%
           family={Birkedal},
           familyi={B\bibinitperiod},
           given={Lars},
           giveni={L\bibinitperiod}}}%
        {{hash=86bf5aa5120acb7a2dad8e2a1adf1e2d}{%
           family={Miculan},
           familyi={M\bibinitperiod},
           given={Marino},
           giveni={M\bibinitperiod}}}%
      }
      \name{editor}{1}{}{%
        {{hash=3b369ef275beda75fb5310cca3a8326a}{%
           family={Dowek},
           familyi={D\bibinitperiod},
           given={Gilles},
           giveni={G\bibinitperiod}}}%
      }
      \list{language}{1}{%
        {English}%
      }
      \list{publisher}{1}{%
        {Springer International Publishing}%
      }
      \strng{namehash}{fda58b3645e06e5e1b578322edb922c4}
      \strng{fullhash}{fda58b3645e06e5e1b578322edb922c4}
      \strng{authornamehash}{fda58b3645e06e5e1b578322edb922c4}
      \strng{authorfullhash}{fda58b3645e06e5e1b578322edb922c4}
      \strng{editornamehash}{3b369ef275beda75fb5310cca3a8326a}
      \strng{editorfullhash}{3b369ef275beda75fb5310cca3a8326a}
      \field{labelalpha}{BBM14}
      \field{sortinit}{B}
      \field{sortinithash}{5f6fa000f686ee5b41be67ba6ff7962d}
      \field{labelnamesource}{author}
      \field{labeltitlesource}{title}
      \field{booktitle}{Proceedings of RTA-TLCA}
      \field{isbn}{978-3-319-08917-1}
      \field{series}{{LNCS}}
      \field{title}{A Model of Countable Nondeterminism in Guarded Type Theory}
      \field{volume}{8560}
      \field{year}{2014}
      \field{pages}{108\bibrangedash 123}
      \range{pages}{16}
      \verb{doi}
      \verb 10.1007/978-3-319-08918-8_8
      \endverb
      \verb{url}
      \verb http://dx.doi.org/10.1007/978-3-319-08918-8_8
      \endverb
    \endentry
    \entry{BizjakM15:entcs}{article}{}
      \name{author}{2}{}{%
        {{hash=c246861cdfff84700e5f8605b429159f}{%
           family={Bizjak},
           familyi={B\bibinitperiod},
           given={Ales},
           giveni={A\bibinitperiod}}}%
        {{hash=87f3fdf5910beec6d8dba13dfb21e1be}{%
           family={M{\o{}}gelberg},
           familyi={M\bibinitperiod},
           given={Rasmus\bibnamedelima Ejlers},
           giveni={R\bibinitperiod\bibinitdelim E\bibinitperiod}}}%
      }
      \strng{namehash}{27d1a7ce32f665409859bb3712e32f9a}
      \strng{fullhash}{27d1a7ce32f665409859bb3712e32f9a}
      \strng{authornamehash}{27d1a7ce32f665409859bb3712e32f9a}
      \strng{authorfullhash}{27d1a7ce32f665409859bb3712e32f9a}
      \field{labelalpha}{BM15}
      \field{sortinit}{B}
      \field{sortinithash}{5f6fa000f686ee5b41be67ba6ff7962d}
      \field{labelnamesource}{author}
      \field{labeltitlesource}{title}
      \field{journaltitle}{{ENTCS}}
      \field{title}{A Model of Guarded Recursion With Clock Synchronisation}
      \field{volume}{319}
      \field{year}{2015}
      \field{pages}{83\bibrangedash 101}
      \range{pages}{19}
      \verb{doi}
      \verb 10.1016/j.entcs.2015.12.007
      \endverb
      \verb{url}
      \verb http://dx.doi.org/10.1016/j.entcs.2015.12.007
      \endverb
    \endentry
    \entry{Boolos1993}{book}{}
      \name{author}{1}{}{%
        {{hash=2a7b53bfa4244e6c5b7e509cef7233ae}{%
           family={Boolos},
           familyi={B\bibinitperiod},
           given={G.},
           giveni={G\bibinitperiod}}}%
      }
      \list{location}{1}{%
        {Cambridge}%
      }
      \list{publisher}{1}{%
        {Cambridge University Press}%
      }
      \strng{namehash}{2a7b53bfa4244e6c5b7e509cef7233ae}
      \strng{fullhash}{2a7b53bfa4244e6c5b7e509cef7233ae}
      \strng{authornamehash}{2a7b53bfa4244e6c5b7e509cef7233ae}
      \strng{authorfullhash}{2a7b53bfa4244e6c5b7e509cef7233ae}
      \field{labelalpha}{Boo93}
      \field{sortinit}{B}
      \field{sortinithash}{5f6fa000f686ee5b41be67ba6ff7962d}
      \field{labelnamesource}{author}
      \field{labeltitlesource}{title}
      \field{title}{The Logic of Provability}
      \field{year}{1993}
    \endentry
    \entry{bool:emer91}{article}{}
      \name{author}{2}{}{%
        {{hash=2a7b53bfa4244e6c5b7e509cef7233ae}{%
           family={Boolos},
           familyi={B\bibinitperiod},
           given={G.},
           giveni={G\bibinitperiod}}}%
        {{hash=2c81ad77fdcc492add5526215bdb6d25}{%
           family={Sambin},
           familyi={S\bibinitperiod},
           given={G.},
           giveni={G\bibinitperiod}}}%
      }
      \strng{namehash}{ce8d42dd8aa7fb9c83952d0d4a156dd1}
      \strng{fullhash}{ce8d42dd8aa7fb9c83952d0d4a156dd1}
      \strng{authornamehash}{ce8d42dd8aa7fb9c83952d0d4a156dd1}
      \strng{authorfullhash}{ce8d42dd8aa7fb9c83952d0d4a156dd1}
      \field{labelalpha}{BS91}
      \field{sortinit}{B}
      \field{sortinithash}{5f6fa000f686ee5b41be67ba6ff7962d}
      \field{labelnamesource}{author}
      \field{labeltitlesource}{title}
      \field{journaltitle}{Studia Logica}
      \field{title}{Provability: the emergence of a mathematical modality}
      \field{volume}{50}
      \field{year}{1991}
      \field{pages}{1\bibrangedash 23}
      \range{pages}{23}
    \endentry
    \entry{BozicD84:sl}{article}{}
      \name{author}{2}{}{%
        {{hash=36547857bec46a50594fbd8e24a0cae7}{%
           family={Bo\v{z}i\'{c}},
           familyi={B\bibinitperiod},
           given={Milan},
           giveni={M\bibinitperiod}}}%
        {{hash=c4ab19be43d195339fde731718dcf134}{%
           family={Do\v{s}en},
           familyi={D\bibinitperiod},
           given={Kosta},
           giveni={K\bibinitperiod}}}%
      }
      \list{publisher}{1}{%
        {Springer}%
      }
      \strng{namehash}{40e57ae3f65df8447c9a7c66605332d8}
      \strng{fullhash}{40e57ae3f65df8447c9a7c66605332d8}
      \strng{authornamehash}{40e57ae3f65df8447c9a7c66605332d8}
      \strng{authorfullhash}{40e57ae3f65df8447c9a7c66605332d8}
      \field{labelalpha}{BD84}
      \field{sortinit}{B}
      \field{sortinithash}{5f6fa000f686ee5b41be67ba6ff7962d}
      \field{labelnamesource}{author}
      \field{labeltitlesource}{title}
      \field{abstract}{Kripke-style models with two accessibility relations, one intuitionistic and the other modal, are given for analogues of the modal system K based on Heyting's propositional logic. It is shown that these two relations can combine with each other in various ways. Soundness and completeness are proved for systems with only the necessity operator, or only the possibility operator, or both. Embeddings in modal systems with several modal operators, based on classical propositional logic, are also considered. This paper lays the ground for an investigation of intuitionistic analogues of systems stronger than K. A brief survey is given of the existing literature on intuitionistic modal logic.}
      \field{issn}{00393215}
      \field{journaltitle}{Studia Logica}
      \field{number}{3}
      \field{title}{Models for Normal Intuitionistic Modal Logics}
      \field{volume}{43}
      \field{year}{1984}
      \field{pages}{217\bibrangedash 245}
      \range{pages}{29}
      \verb{url}
      \verb http://www.jstor.org/stable/20015164
      \endverb
    \endentry
    \entry{CelaniJ01:ndjfl}{article}{}
      \name{author}{2}{}{%
        {{hash=346270a05c86fffb695dfa2787d67103}{%
           family={Celani},
           familyi={C\bibinitperiod},
           given={Sergio},
           giveni={S\bibinitperiod}}}%
        {{hash=3c0295e0ef5b742d981462b229350392}{%
           family={Jansana},
           familyi={J\bibinitperiod},
           given={Ramon},
           giveni={R\bibinitperiod}}}%
      }
      \list{publisher}{1}{%
        {Duke University Press}%
      }
      \strng{namehash}{fb1973141a684ba647862f531665c112}
      \strng{fullhash}{fb1973141a684ba647862f531665c112}
      \strng{authornamehash}{fb1973141a684ba647862f531665c112}
      \strng{authorfullhash}{fb1973141a684ba647862f531665c112}
      \field{labelalpha}{CJ01}
      \field{sortinit}{C}
      \field{sortinithash}{095692fd22cc3c74d7fe223d02314dbd}
      \field{labelnamesource}{author}
      \field{labeltitlesource}{title}
      \field{journaltitle}{Notre Dame J. Formal Logic}
      \field{month}{10}
      \field{number}{4}
      \field{title}{A Closer Look at Some Subintuitionistic Logics}
      \field{volume}{42}
      \field{year}{2001}
      \field{pages}{225\bibrangedash 255}
      \range{pages}{31}
      \verb{doi}
      \verb 10.1305/ndjfl/1063372244
      \endverb
      \verb{url}
      \verb http://dx.doi.org/10.1305/ndjfl/1063372244
      \endverb
    \endentry
    \entry{CelaniJ05:mlq}{article}{}
      \name{author}{2}{}{%
        {{hash=346270a05c86fffb695dfa2787d67103}{%
           family={Celani},
           familyi={C\bibinitperiod},
           given={Sergio},
           giveni={S\bibinitperiod}}}%
        {{hash=3c0295e0ef5b742d981462b229350392}{%
           family={Jansana},
           familyi={J\bibinitperiod},
           given={Ramon},
           giveni={R\bibinitperiod}}}%
      }
      \list{publisher}{1}{%
        {WILEY-VCH Verlag}%
      }
      \strng{namehash}{fb1973141a684ba647862f531665c112}
      \strng{fullhash}{fb1973141a684ba647862f531665c112}
      \strng{authornamehash}{fb1973141a684ba647862f531665c112}
      \strng{authorfullhash}{fb1973141a684ba647862f531665c112}
      \field{labelalpha}{CJ05}
      \field{sortinit}{C}
      \field{sortinithash}{095692fd22cc3c74d7fe223d02314dbd}
      \field{labelnamesource}{author}
      \field{labeltitlesource}{title}
      \field{issn}{1521-3870}
      \field{journaltitle}{Mathematical Logic Quarterly}
      \field{number}{3}
      \field{title}{Bounded distributive lattices with strict implication}
      \field{volume}{51}
      \field{year}{2005}
      \field{pages}{219\bibrangedash 246}
      \range{pages}{28}
      \verb{doi}
      \verb 10.1002/malq.200410022
      \endverb
      \verb{url}
      \verb http://dx.doi.org/10.1002/malq.200410022
      \endverb
      \keyw{Subintuitionistic logic,distributive lattice with a binary operation,subresiduated lattice,basic algebra,Heyting algebra,duality theory}
    \endentry
    \entry{CloustonBGB15:fossacs}{inproceedings}{}
      \name{author}{4}{}{%
        {{hash=2dc908c0aac7961293eb598f4585b1c9}{%
           family={Clouston},
           familyi={C\bibinitperiod},
           given={Ranald},
           giveni={R\bibinitperiod}}}%
        {{hash=c246861cdfff84700e5f8605b429159f}{%
           family={Bizjak},
           familyi={B\bibinitperiod},
           given={Ales},
           giveni={A\bibinitperiod}}}%
        {{hash=6eb7a3363d2f1eed08000448233917bd}{%
           family={Grathwohl},
           familyi={G\bibinitperiod},
           given={Hans\bibnamedelima Bugge},
           giveni={H\bibinitperiod\bibinitdelim B\bibinitperiod}}}%
        {{hash=f4714354745ea31cf6c543434102f444}{%
           family={Birkedal},
           familyi={B\bibinitperiod},
           given={Lars},
           giveni={L\bibinitperiod}}}%
      }
      \name{editor}{1}{}{%
        {{hash=2e998c3985e6a8efab31950621ac04ff}{%
           family={Pitts},
           familyi={P\bibinitperiod},
           given={Andrew\bibnamedelima M.},
           giveni={A\bibinitperiod\bibinitdelim M\bibinitperiod}}}%
      }
      \list{publisher}{1}{%
        {Springer}%
      }
      \strng{namehash}{a03ba48c5c273fad7689aa14d1cc6c52}
      \strng{fullhash}{a03ba48c5c273fad7689aa14d1cc6c52}
      \strng{authornamehash}{a03ba48c5c273fad7689aa14d1cc6c52}
      \strng{authorfullhash}{a03ba48c5c273fad7689aa14d1cc6c52}
      \strng{editornamehash}{2e998c3985e6a8efab31950621ac04ff}
      \strng{editorfullhash}{2e998c3985e6a8efab31950621ac04ff}
      \field{labelalpha}{CBGB15}
      \field{sortinit}{C}
      \field{sortinithash}{095692fd22cc3c74d7fe223d02314dbd}
      \field{labelnamesource}{author}
      \field{labeltitlesource}{title}
      \field{booktitle}{Proceedings of {FoSSaCS}}
      \strng{crossref}{DBLP:conf/fossacs/2015}
      \field{series}{{LNCS}}
      \field{title}{Programming and Reasoning with Guarded Recursion for Coinductive Types}
      \field{volume}{9034}
      \field{year}{2015}
      \field{pages}{407\bibrangedash 421}
      \range{pages}{15}
      \verb{doi}
      \verb 10.1007/978-3-662-46678-0_26
      \endverb
      \verb{url}
      \verb http://dx.doi.org/10.1007/978-3-662-46678-0_26
      \endverb
    \endentry
    \entry{CloustonG15:fossacs}{inproceedings}{}
      \name{author}{2}{}{%
        {{hash=2dc908c0aac7961293eb598f4585b1c9}{%
           family={Clouston},
           familyi={C\bibinitperiod},
           given={Ranald},
           giveni={R\bibinitperiod}}}%
        {{hash=8c4ba0a2bbed11bccf9d81ded27b3f1f}{%
           family={Gor\'{e}},
           familyi={G\bibinitperiod},
           given={Rajeev},
           giveni={R\bibinitperiod}}}%
      }
      \name{editor}{1}{}{%
        {{hash=2e998c3985e6a8efab31950621ac04ff}{%
           family={Pitts},
           familyi={P\bibinitperiod},
           given={Andrew\bibnamedelima M.},
           giveni={A\bibinitperiod\bibinitdelim M\bibinitperiod}}}%
      }
      \list{publisher}{1}{%
        {Springer}%
      }
      \strng{namehash}{c18892566e35d117d2ba813d350e9344}
      \strng{fullhash}{c18892566e35d117d2ba813d350e9344}
      \strng{authornamehash}{c18892566e35d117d2ba813d350e9344}
      \strng{authorfullhash}{c18892566e35d117d2ba813d350e9344}
      \strng{editornamehash}{2e998c3985e6a8efab31950621ac04ff}
      \strng{editorfullhash}{2e998c3985e6a8efab31950621ac04ff}
      \field{labelalpha}{CG15}
      \field{sortinit}{C}
      \field{sortinithash}{095692fd22cc3c74d7fe223d02314dbd}
      \field{labelnamesource}{author}
      \field{labeltitlesource}{title}
      \field{booktitle}{Proceedings of {FoSSaCS}}
      \strng{crossref}{DBLP:conf/fossacs/2015}
      \field{series}{{LNCS}}
      \field{title}{Sequent Calculus in the Topos of Trees}
      \field{volume}{9034}
      \field{year}{2015}
      \field{pages}{133\bibrangedash 147}
      \range{pages}{15}
    \endentry
    \entry{Corsi87:mlq}{article}{}
      \name{author}{1}{}{%
        {{hash=6cf33341658de1045024e7fa43199cd3}{%
           family={Corsi},
           familyi={C\bibinitperiod},
           given={Giovanna},
           giveni={G\bibinitperiod}}}%
      }
      \list{publisher}{1}{%
        {WILEY-VCH Verlag Berlin GmbH}%
      }
      \strng{namehash}{6cf33341658de1045024e7fa43199cd3}
      \strng{fullhash}{6cf33341658de1045024e7fa43199cd3}
      \strng{authornamehash}{6cf33341658de1045024e7fa43199cd3}
      \strng{authorfullhash}{6cf33341658de1045024e7fa43199cd3}
      \field{labelalpha}{Cor87}
      \field{sortinit}{C}
      \field{sortinithash}{095692fd22cc3c74d7fe223d02314dbd}
      \field{labelnamesource}{author}
      \field{labeltitlesource}{title}
      \field{issn}{1521-3870}
      \field{journaltitle}{Mathematical Logic Quarterly}
      \field{number}{5}
      \field{title}{Weak Logics with Strict Implication}
      \field{volume}{33}
      \field{year}{1987}
      \field{pages}{389\bibrangedash 406}
      \range{pages}{18}
      \verb{doi}
      \verb 10.1002/malq.19870330503
      \endverb
      \verb{url}
      \verb http://dx.doi.org/10.1002/malq.19870330503
      \endverb
    \endentry
    \entry{cotr:tran13}{article}{}
      \name{author}{2}{}{%
        {{hash=f0b66a8c73f3822a239f0463a3870899}{%
           family={Cotrini},
           familyi={C\bibinitperiod},
           given={C.},
           giveni={C\bibinitperiod}}}%
        {{hash=5acf9628f31347408bc2dddcb8459cd3}{%
           family={Gurevich},
           familyi={G\bibinitperiod},
           given={Y.},
           giveni={Y\bibinitperiod}}}%
      }
      \list{publisher}{1}{%
        {Cambridge University Press}%
      }
      \strng{namehash}{9f26b7d88ee1aa77bdeb9ff344a0f2fb}
      \strng{fullhash}{9f26b7d88ee1aa77bdeb9ff344a0f2fb}
      \strng{authornamehash}{9f26b7d88ee1aa77bdeb9ff344a0f2fb}
      \strng{authorfullhash}{9f26b7d88ee1aa77bdeb9ff344a0f2fb}
      \field{labelalpha}{CG13}
      \field{sortinit}{C}
      \field{sortinithash}{095692fd22cc3c74d7fe223d02314dbd}
      \field{labelnamesource}{author}
      \field{labeltitlesource}{title}
      \field{journaltitle}{The Review of Symbolic Logic}
      \field{number}{02}
      \field{title}{Transitive primal infon logic}
      \field{volume}{6}
      \field{year}{2013}
      \field{pages}{281\bibrangedash 304}
      \range{pages}{24}
    \endentry
    \entry{Dosen93}{incollection}{}
      \name{author}{1}{}{%
        {{hash=c4ab19be43d195339fde731718dcf134}{%
           family={Do\v{s}en},
           familyi={D\bibinitperiod},
           given={Kosta},
           giveni={K\bibinitperiod}}}%
      }
      \name{editor}{1}{}{%
        {{hash=ce8500c4435316ca20758d1f88f330bb}{%
           family={Rijke},
           familyi={R\bibinitperiod},
           given={Maarten},
           giveni={M\bibinitperiod},
           prefix={de},
           prefixi={d\bibinitperiod}}}%
      }
      \list{location}{1}{%
        {Dordrecht}%
      }
      \list{publisher}{1}{%
        {Springer Netherlands}%
      }
      \strng{namehash}{c4ab19be43d195339fde731718dcf134}
      \strng{fullhash}{c4ab19be43d195339fde731718dcf134}
      \strng{authornamehash}{c4ab19be43d195339fde731718dcf134}
      \strng{authorfullhash}{c4ab19be43d195339fde731718dcf134}
      \strng{editornamehash}{ce8500c4435316ca20758d1f88f330bb}
      \strng{editorfullhash}{ce8500c4435316ca20758d1f88f330bb}
      \field{labelalpha}{Do\v{s}93}
      \field{sortinit}{D}
      \field{sortinithash}{d10b5413de1f3d197b20897dd0d565bb}
      \field{labelnamesource}{author}
      \field{labeltitlesource}{title}
      \field{booktitle}{Diamonds and Defaults: Studies in Pure and Applied Intensional Logic}
      \field{isbn}{978-94-015-8242-1}
      \field{title}{Modal Translations in {K} and {D}}
      \field{year}{1993}
      \field{pages}{103\bibrangedash 127}
      \range{pages}{25}
    \endentry
    \entry{Dosen1992:orlov}{article}{}
      \name{author}{1}{}{%
        {{hash=c4ab19be43d195339fde731718dcf134}{%
           family={Do\v{s}en},
           familyi={D\bibinitperiod},
           given={Kosta},
           giveni={K\bibinitperiod}}}%
      }
      \strng{namehash}{c4ab19be43d195339fde731718dcf134}
      \strng{fullhash}{c4ab19be43d195339fde731718dcf134}
      \strng{authornamehash}{c4ab19be43d195339fde731718dcf134}
      \strng{authorfullhash}{c4ab19be43d195339fde731718dcf134}
      \field{labelalpha}{Do\v{s}92}
      \field{sortinit}{D}
      \field{sortinithash}{d10b5413de1f3d197b20897dd0d565bb}
      \field{labelnamesource}{author}
      \field{labeltitlesource}{title}
      \field{abstract}{This is a review, with historical and critical comments, of a paper by I. E. Orlov from 1928, which gives the oldest known axiomatization of the implication-negation fragment of the relevant logic R. Orlov's paper also foreshadows the modal translation of systems with an intuitionistic negation into S4-type extensions of systems with a classical, involutive, negation. Orlov introduces the modal postulates of S4 before Becker, Lewis and G\"{o}del. Orlov's work, which seems to be nearly completely ignored, is related to the contemporancous work on the axiomatization of intuitionistic logic.}
      \field{issn}{1573-0433}
      \field{journaltitle}{Journal of Philosophical Logic}
      \field{month}{11}
      \field{number}{4}
      \field{title}{The first axiomatization of relevant logic}
      \field{volume}{21}
      \field{year}{1992}
      \field{pages}{339\bibrangedash 356}
      \range{pages}{18}
      \verb{doi}
      \verb 10.1007/BF00260740
      \endverb
      \verb{url}
      \verb https://doi.org/10.1007/BF00260740
      \endverb
    \endentry
    \entry{Dragalin88:trams}{book}{}
      \name{author}{1}{}{%
        {{hash=99b46c29553cec302374a1d09e91d53a}{%
           family={Dragalin},
           familyi={D\bibinitperiod},
           given={Albert\bibnamedelima Grigorevich},
           giveni={A\bibinitperiod\bibinitdelim G\bibinitperiod}}}%
      }
      \list{publisher}{1}{%
        {American Mathematical Society}%
      }
      \strng{namehash}{99b46c29553cec302374a1d09e91d53a}
      \strng{fullhash}{99b46c29553cec302374a1d09e91d53a}
      \strng{authornamehash}{99b46c29553cec302374a1d09e91d53a}
      \strng{authorfullhash}{99b46c29553cec302374a1d09e91d53a}
      \field{labelalpha}{Dra88}
      \field{sortinit}{D}
      \field{sortinithash}{d10b5413de1f3d197b20897dd0d565bb}
      \field{labelnamesource}{author}
      \field{labeltitlesource}{title}
      \field{note}{Translated by E. Mendelson, ed. by B. Silver}
      \field{series}{Translations of Mathematical Monographs}
      \field{title}{Mathematical Intuitionism: Introduction to Proof Theory}
      \field{volume}{67}
      \field{year}{1988}
    \endentry
    \entry{DreyerAB11:lmcs}{article}{}
      \name{author}{3}{}{%
        {{hash=3c61a075f82133ed00b1cf07a5b5c194}{%
           family={Dreyer},
           familyi={D\bibinitperiod},
           given={Derek},
           giveni={D\bibinitperiod}}}%
        {{hash=94a245e2a7f48b654d74d292a96172d0}{%
           family={Ahmed},
           familyi={A\bibinitperiod},
           given={Amal},
           giveni={A\bibinitperiod}}}%
        {{hash=f4714354745ea31cf6c543434102f444}{%
           family={Birkedal},
           familyi={B\bibinitperiod},
           given={Lars},
           giveni={L\bibinitperiod}}}%
      }
      \strng{namehash}{87ce23b74c8f4ef7e8a1b3e1842ea08d}
      \strng{fullhash}{87ce23b74c8f4ef7e8a1b3e1842ea08d}
      \strng{authornamehash}{87ce23b74c8f4ef7e8a1b3e1842ea08d}
      \strng{authorfullhash}{87ce23b74c8f4ef7e8a1b3e1842ea08d}
      \field{labelalpha}{DAB11}
      \field{sortinit}{D}
      \field{sortinithash}{d10b5413de1f3d197b20897dd0d565bb}
      \field{labelnamesource}{author}
      \field{labeltitlesource}{title}
      \field{journaltitle}{{LMCS}}
      \field{number}{2}
      \field{title}{Logical Step-Indexed Logical Relations}
      \field{volume}{7}
      \field{year}{2011}
      \verb{doi}
      \verb 10.2168/LMCS-7(2:16)2011
      \endverb
      \verb{url}
      \verb http://dx.doi.org/10.2168/LMCS-7(2:16)2011
      \endverb
    \endentry
    \entry{Dunn72}{article}{}
      \name{author}{1}{}{%
        {{hash=b8fc70824855efc7d89bc98323f8f3ec}{%
           family={Dunn},
           familyi={D\bibinitperiod},
           given={J.\bibnamedelimi Michael},
           giveni={J\bibinitperiod\bibinitdelim M\bibinitperiod}}}%
      }
      \list{publisher}{1}{%
        {Duke University Press}%
      }
      \strng{namehash}{b8fc70824855efc7d89bc98323f8f3ec}
      \strng{fullhash}{b8fc70824855efc7d89bc98323f8f3ec}
      \strng{authornamehash}{b8fc70824855efc7d89bc98323f8f3ec}
      \strng{authorfullhash}{b8fc70824855efc7d89bc98323f8f3ec}
      \field{labelalpha}{Dun72}
      \field{sortinit}{D}
      \field{sortinithash}{d10b5413de1f3d197b20897dd0d565bb}
      \field{labelnamesource}{author}
      \field{labeltitlesource}{title}
      \field{journaltitle}{Notre Dame J. Formal Logic}
      \field{month}{04}
      \field{number}{2}
      \field{title}{A modification of {P}arry's analytic implication.}
      \field{volume}{13}
      \field{year}{1972}
      \field{pages}{195\bibrangedash 205}
      \range{pages}{11}
      \verb{doi}
      \verb 10.1305/ndjfl/1093894715
      \endverb
      \verb{url}
      \verb http://dx.doi.org/10.1305/ndjfl/1093894715
      \endverb
    \endentry
    \entry{FairtloughM97:ic}{article}{}
      \name{author}{2}{}{%
        {{hash=b67dd7eb5995a6de4e6249b0fc708c5e}{%
           family={Fairtlough},
           familyi={F\bibinitperiod},
           given={Matt},
           giveni={M\bibinitperiod}}}%
        {{hash=8e080dab84b81c2d36b0172b0d64e46b}{%
           family={Mendler},
           familyi={M\bibinitperiod},
           given={Michael},
           giveni={M\bibinitperiod}}}%
      }
      \strng{namehash}{10047853bd0a54a43f4b0c4aab290742}
      \strng{fullhash}{10047853bd0a54a43f4b0c4aab290742}
      \strng{authornamehash}{10047853bd0a54a43f4b0c4aab290742}
      \strng{authorfullhash}{10047853bd0a54a43f4b0c4aab290742}
      \field{labelalpha}{FM97}
      \field{sortinit}{F}
      \field{sortinithash}{276475738cc058478c1677046f857703}
      \field{labelnamesource}{author}
      \field{labeltitlesource}{title}
      \field{journaltitle}{Inform. and Comput.}
      \field{number}{1}
      \field{title}{Propositional Lax Logic}
      \field{volume}{137}
      \field{year}{1997}
      \field{pages}{1\bibrangedash 33}
      \range{pages}{33}
    \endentry
    \entry{Fine86}{article}{}
      \name{author}{1}{}{%
        {{hash=68c361f515af4ba746fa726b79ce7825}{%
           family={Fine},
           familyi={F\bibinitperiod},
           given={Kit},
           giveni={K\bibinitperiod}}}%
      }
      \list{publisher}{1}{%
        {Duke University Press}%
      }
      \strng{namehash}{68c361f515af4ba746fa726b79ce7825}
      \strng{fullhash}{68c361f515af4ba746fa726b79ce7825}
      \strng{authornamehash}{68c361f515af4ba746fa726b79ce7825}
      \strng{authorfullhash}{68c361f515af4ba746fa726b79ce7825}
      \field{labelalpha}{Fin86}
      \field{sortinit}{F}
      \field{sortinithash}{276475738cc058478c1677046f857703}
      \field{labelnamesource}{author}
      \field{labeltitlesource}{title}
      \field{journaltitle}{Notre Dame J. Formal Logic}
      \field{month}{04}
      \field{number}{2}
      \field{title}{Analytic implication.}
      \field{volume}{27}
      \field{year}{1986}
      \field{pages}{169\bibrangedash 179}
      \range{pages}{11}
      \verb{doi}
      \verb 10.1305/ndjfl/1093636609
      \endverb
      \verb{url}
      \verb http://dx.doi.org/10.1305/ndjfl/1093636609
      \endverb
    \endentry
    \entry{Fine75:connections}{inproceedings}{}
      \name{author}{1}{}{%
        {{hash=68c361f515af4ba746fa726b79ce7825}{%
           family={Fine},
           familyi={F\bibinitperiod},
           given={Kit},
           giveni={K\bibinitperiod}}}%
      }
      \name{editor}{1}{}{%
        {{hash=e602da6416052b54dfc1959cb174bd16}{%
           family={Kanger},
           familyi={K\bibinitperiod},
           given={S.},
           giveni={S\bibinitperiod}}}%
      }
      \list{publisher}{1}{%
        {North-Holland Publishing Company}%
      }
      \strng{namehash}{68c361f515af4ba746fa726b79ce7825}
      \strng{fullhash}{68c361f515af4ba746fa726b79ce7825}
      \strng{authornamehash}{68c361f515af4ba746fa726b79ce7825}
      \strng{authorfullhash}{68c361f515af4ba746fa726b79ce7825}
      \strng{editornamehash}{e602da6416052b54dfc1959cb174bd16}
      \strng{editorfullhash}{e602da6416052b54dfc1959cb174bd16}
      \field{labelalpha}{Fin75}
      \field{sortinit}{F}
      \field{sortinithash}{276475738cc058478c1677046f857703}
      \field{labelnamesource}{author}
      \field{labeltitlesource}{title}
      \field{booktitle}{Proceedings of the Third Scandinavian Logic Symposium}
      \field{title}{Some Connections between Elementary and Modal Logic}
      \field{year}{1975}
    \endentry
    \entry{garg:prov84}{inproceedings}{}
      \name{author}{1}{}{%
        {{hash=8f21fc6e98230cecd0b686aaebcccfb9}{%
           family={Gargov},
           familyi={G\bibinitperiod},
           given={G.\bibnamedelimi K.},
           giveni={G\bibinitperiod\bibinitdelim K\bibinitperiod}}}%
      }
      \list{publisher}{1}{%
        {Bulgarian Academy of Sciences}%
      }
      \strng{namehash}{8f21fc6e98230cecd0b686aaebcccfb9}
      \strng{fullhash}{8f21fc6e98230cecd0b686aaebcccfb9}
      \strng{authornamehash}{8f21fc6e98230cecd0b686aaebcccfb9}
      \strng{authorfullhash}{8f21fc6e98230cecd0b686aaebcccfb9}
      \field{labelalpha}{Gar84}
      \field{sortinit}{G}
      \field{sortinithash}{618d986594b7198ba52cf8b00d348f3f}
      \field{labelnamesource}{author}
      \field{labeltitlesource}{title}
      \field{booktitle}{Mathematical {L}ogic, proceedings of the conference on {M}athematical {L}ogic, dedicated to the memory of {A}. {A}. {M}arkov (1903--1979), {S}ofia, {S}eptember 22--23, 1980}
      \field{title}{A note on the provability logics of certain extensions of {H}eyting's {A}rithmetic}
      \field{year}{1984}
      \field{pages}{20\bibrangedash 26}
      \range{pages}{7}
    \endentry
    \entry{GHV06}{article}{}
      \name{author}{3}{}{%
        {{hash=7050d6d1f17087a7a5f3147e5b86a1f6}{%
           family={Gehrke},
           familyi={G\bibinitperiod},
           given={M.},
           giveni={M\bibinitperiod}}}%
        {{hash=aa4539700054af76f08b5a2c2d629048}{%
           family={Harding},
           familyi={H\bibinitperiod},
           given={J.},
           giveni={J\bibinitperiod}}}%
        {{hash=f8c3d122473d3f9cc001bd94a34e334d}{%
           family={Venema},
           familyi={V\bibinitperiod},
           given={Y},
           giveni={Y\bibinitperiod}}}%
      }
      \strng{namehash}{2ef99519cbe91d115fd0a329afbc7e5a}
      \strng{fullhash}{2ef99519cbe91d115fd0a329afbc7e5a}
      \strng{authornamehash}{2ef99519cbe91d115fd0a329afbc7e5a}
      \strng{authorfullhash}{2ef99519cbe91d115fd0a329afbc7e5a}
      \field{labelalpha}{GHV06}
      \field{sortinit}{G}
      \field{sortinithash}{618d986594b7198ba52cf8b00d348f3f}
      \field{labelnamesource}{author}
      \field{labeltitlesource}{title}
      \field{journaltitle}{Transactions of the American Mathematical Society}
      \field{title}{Mac{N}eille completions and canonical extensions}
      \field{volume}{358}
      \field{year}{2006}
      \field{pages}{573\bibrangedash 590}
      \range{pages}{18}
    \endentry
    \entry{Girard87}{article}{}
      \name{author}{1}{}{%
        {{hash=b2918bf11715bb95276ff254fb247d4b}{%
           family={Girard},
           familyi={G\bibinitperiod},
           given={Jean-Yves},
           giveni={J\bibinithyphendelim Y\bibinitperiod}}}%
      }
      \strng{namehash}{b2918bf11715bb95276ff254fb247d4b}
      \strng{fullhash}{b2918bf11715bb95276ff254fb247d4b}
      \strng{authornamehash}{b2918bf11715bb95276ff254fb247d4b}
      \strng{authorfullhash}{b2918bf11715bb95276ff254fb247d4b}
      \field{labelalpha}{Gir87}
      \field{sortinit}{G}
      \field{sortinithash}{618d986594b7198ba52cf8b00d348f3f}
      \field{labelnamesource}{author}
      \field{labeltitlesource}{title}
      \field{journaltitle}{Theoretical Computer Science}
      \field{title}{Linear Logic}
      \field{volume}{50}
      \field{year}{1987}
      \field{pages}{1\bibrangedash 102}
      \range{pages}{102}
    \endentry
    \entry{Goedelv1}{book}{}
      \name{author}{1}{}{%
        {{hash=741b5eae8f0486fe4cfb19725eb596a0}{%
           family={G\"{o}del},
           familyi={G\bibinitperiod},
           given={Kurt},
           giveni={K\bibinitperiod}}}%
      }
      \list{publisher}{1}{%
        {OUP USA}%
      }
      \strng{namehash}{741b5eae8f0486fe4cfb19725eb596a0}
      \strng{fullhash}{741b5eae8f0486fe4cfb19725eb596a0}
      \strng{authornamehash}{741b5eae8f0486fe4cfb19725eb596a0}
      \strng{authorfullhash}{741b5eae8f0486fe4cfb19725eb596a0}
      \field{labelalpha}{G\"{o}d86}
      \field{sortinit}{G}
      \field{sortinithash}{618d986594b7198ba52cf8b00d348f3f}
      \field{labelnamesource}{author}
      \field{labeltitlesource}{title}
      \field{note}{ed. by S. Feferman et al.}
      \field{series}{Collected Works}
      \field{title}{Kurt G\"{o}del: Publications 1929--1936}
      \field{volume}{1}
      \field{year}{1986}
    \endentry
    \entry{Goldblatt81:mlq}{article}{}
      \name{author}{1}{}{%
        {{hash=67bb8e8058d2f020f3e5302702a18037}{%
           family={Goldblatt},
           familyi={G\bibinitperiod},
           given={Robert\bibnamedelima I.},
           giveni={R\bibinitperiod\bibinitdelim I\bibinitperiod}}}%
      }
      \list{publisher}{1}{%
        {WILEY-VCH Verlag Berlin GmbH}%
      }
      \strng{namehash}{67bb8e8058d2f020f3e5302702a18037}
      \strng{fullhash}{67bb8e8058d2f020f3e5302702a18037}
      \strng{authornamehash}{67bb8e8058d2f020f3e5302702a18037}
      \strng{authorfullhash}{67bb8e8058d2f020f3e5302702a18037}
      \field{labelalpha}{Gol81}
      \field{sortinit}{G}
      \field{sortinithash}{618d986594b7198ba52cf8b00d348f3f}
      \field{labelnamesource}{author}
      \field{labeltitlesource}{title}
      \field{issn}{1521-3870}
      \field{journaltitle}{{MLQ}}
      \field{number}{31--35}
      \field{title}{Grothendieck Topology as Geometric Modality}
      \field{volume}{27}
      \field{year}{1981}
      \field{pages}{495\bibrangedash 529}
      \range{pages}{35}
      \verb{doi}
      \verb 10.1002/malq.19810273104
      \endverb
      \verb{url}
      \verb http://dx.doi.org/10.1002/malq.19810273104
      \endverb
    \endentry
    \entry{haje:cons90}{article}{}
      \name{author}{2}{}{%
        {{hash=264d232906477f45b77faf271ac79f37}{%
           family={H\'{a}jek},
           familyi={H\bibinitperiod},
           given={P.},
           giveni={P\bibinitperiod}}}%
        {{hash=5239d615f497bac4c8d2489df911d02d}{%
           family={Montagna},
           familyi={M\bibinitperiod},
           given={F.},
           giveni={F\bibinitperiod}}}%
      }
      \strng{namehash}{fb05b53e566a8ea159e6bbcdb26c2af5}
      \strng{fullhash}{fb05b53e566a8ea159e6bbcdb26c2af5}
      \strng{authornamehash}{fb05b53e566a8ea159e6bbcdb26c2af5}
      \strng{authorfullhash}{fb05b53e566a8ea159e6bbcdb26c2af5}
      \field{labelalpha}{HM90}
      \field{sortinit}{H}
      \field{sortinithash}{2f664b453ec75da1fe3804ca92633405}
      \field{labelnamesource}{author}
      \field{labeltitlesource}{title}
      \field{journaltitle}{Archiv f\"{u}r Mathematische Logik und Grundlagenforschung}
      \field{title}{The Logic of {$\Pi_1$}-conservativity}
      \field{volume}{30}
      \field{year}{1990}
      \field{pages}{113\bibrangedash 123}
      \range{pages}{11}
    \endentry
    \entry{halb:henk14}{incollection}{}
      \name{author}{2}{}{%
        {{hash=3e061fe1ac2ef476241b38ee00db1f15}{%
           family={Halbach},
           familyi={H\bibinitperiod},
           given={V.},
           giveni={V\bibinitperiod}}}%
        {{hash=7c59a81f2d67e5dbdf77cc8061473f2d}{%
           family={Visser},
           familyi={V\bibinitperiod},
           given={A.},
           giveni={A\bibinitperiod}}}%
      }
      \name{editor}{3}{}{%
        {{hash=d2969faf2d2f928408fa8ac38657ef9a}{%
           family={Manzano},
           familyi={M\bibinitperiod},
           given={M.},
           giveni={M\bibinitperiod}}}%
        {{hash=acbc53751758de102e47e9eb8edb4566}{%
           family={Sain},
           familyi={S\bibinitperiod},
           given={I.},
           giveni={I\bibinitperiod}}}%
        {{hash=ccc4b0ec47e22b0699221ccc45b81659}{%
           family={Alonso},
           familyi={A\bibinitperiod},
           given={E.},
           giveni={E\bibinitperiod}}}%
      }
      \list{publisher}{1}{%
        {Springer}%
      }
      \strng{namehash}{900bcff3bad122c7bc5afb43fa505aca}
      \strng{fullhash}{900bcff3bad122c7bc5afb43fa505aca}
      \strng{authornamehash}{900bcff3bad122c7bc5afb43fa505aca}
      \strng{authorfullhash}{900bcff3bad122c7bc5afb43fa505aca}
      \strng{editornamehash}{a91ee17dd37175889db1bbd4a887c9a8}
      \strng{editorfullhash}{a91ee17dd37175889db1bbd4a887c9a8}
      \field{labelalpha}{HV14}
      \field{sortinit}{H}
      \field{sortinithash}{2f664b453ec75da1fe3804ca92633405}
      \field{labelnamesource}{author}
      \field{labeltitlesource}{title}
      \field{booktitle}{The Life and Work of {L}eon {H}enkin}
      \field{title}{The {H}enkin sentence}
      \field{year}{2014}
      \field{pages}{249\bibrangedash 263}
      \range{pages}{15}
    \endentry
    \entry{hodg:foun96}{proceedings}{}
      \name{editor}{4}{}{%
        {{hash=c1bb1fc061237f46cc9852530e09d7e6}{%
           family={Hodges},
           familyi={H\bibinitperiod},
           given={W.},
           giveni={W\bibinitperiod}}}%
        {{hash=dc5936afe8fedbfe955e61816bb3223c}{%
           family={Hyland},
           familyi={H\bibinitperiod},
           given={M.},
           giveni={M\bibinitperiod}}}%
        {{hash=a41f373941d989f3413e06ecdacce707}{%
           family={Steinhorn},
           familyi={S\bibinitperiod},
           given={C.},
           giveni={C\bibinitperiod}}}%
        {{hash=97dc326461152543c695748917c3b100}{%
           family={Truss},
           familyi={T\bibinitperiod},
           given={J.},
           giveni={J\bibinitperiod}}}%
      }
      \list{publisher}{1}{%
        {Clarendon Press, Oxford}%
      }
      \strng{namehash}{2ede39bca1fb2203c381a0cfc8e65eb5}
      \strng{fullhash}{2ede39bca1fb2203c381a0cfc8e65eb5}
      \strng{editornamehash}{2ede39bca1fb2203c381a0cfc8e65eb5}
      \strng{editorfullhash}{2ede39bca1fb2203c381a0cfc8e65eb5}
      \field{labelalpha}{HHST96}
      \field{sortinit}{H}
      \field{sortinithash}{2f664b453ec75da1fe3804ca92633405}
      \field{labelnamesource}{editor}
      \field{labeltitlesource}{title}
      \field{title}{Logic: from foundations to applications}
      \field{year}{1996}
    \endentry
    \entry{HollidayL16}{misc}{}
      \name{author}{2}{}{%
        {{hash=34b0bd1f18eafa2894e9adba9a23b205}{%
           family={Holliday},
           familyi={H\bibinitperiod},
           given={Wesley\bibnamedelima H.},
           giveni={W\bibinitperiod\bibinitdelim H\bibinitperiod}}}%
        {{hash=8f5431086357f9f5a0eef79eb62f74f4}{%
           family={Litak},
           familyi={L\bibinitperiod},
           given={Tadeusz},
           giveni={T\bibinitperiod}}}%
      }
      \strng{namehash}{0ac8d8f9b968c914d4eaa1eb99bb5668}
      \strng{fullhash}{0ac8d8f9b968c914d4eaa1eb99bb5668}
      \strng{authornamehash}{0ac8d8f9b968c914d4eaa1eb99bb5668}
      \strng{authorfullhash}{0ac8d8f9b968c914d4eaa1eb99bb5668}
      \field{labelalpha}{HL16}
      \field{sortinit}{H}
      \field{sortinithash}{2f664b453ec75da1fe3804ca92633405}
      \field{labelnamesource}{author}
      \field{labeltitlesource}{title}
      \field{note}{eScholarship system of UC Berkeley, Working Papers series. Accepted by ``The Review of Symbolic Logic''}
      \field{title}{Complete Additivity and Modal Incompleteness}
      \field{year}{2016}
      \verb{url}
      \verb http://www.escholarship.org/uc/item/8pp4d94t
      \endverb
    \endentry
    \entry{Hughes00:scp}{article}{}
      \name{author}{1}{}{%
        {{hash=05db43db1a50c29c0eca01fd35ec8758}{%
           family={Hughes},
           familyi={H\bibinitperiod},
           given={John},
           giveni={J\bibinitperiod}}}%
      }
      \strng{namehash}{05db43db1a50c29c0eca01fd35ec8758}
      \strng{fullhash}{05db43db1a50c29c0eca01fd35ec8758}
      \strng{authornamehash}{05db43db1a50c29c0eca01fd35ec8758}
      \strng{authorfullhash}{05db43db1a50c29c0eca01fd35ec8758}
      \field{labelalpha}{Hug00}
      \field{sortinit}{H}
      \field{sortinithash}{2f664b453ec75da1fe3804ca92633405}
      \field{labelnamesource}{author}
      \field{labeltitlesource}{title}
      \field{journaltitle}{Science of Computer Programming}
      \field{number}{1-3}
      \field{title}{Generalising monads to arrows}
      \field{volume}{37}
      \field{year}{2000}
      \field{pages}{67\bibrangedash 111}
      \range{pages}{45}
      \verb{doi}
      \verb 10.1016/S0167-6423(99)00023-4
      \endverb
      \verb{url}
      \verb http://dx.doi.org/10.1016/S0167-6423(99)00023-4
      \endverb
    \endentry
    \entry{iemh:moda01}{inproceedings}{}
      \name{author}{1}{}{%
        {{hash=3fec7473b9b7d0881d58711403e21a4e}{%
           family={Iemhoff},
           familyi={I\bibinitperiod},
           given={R.},
           giveni={R\bibinitperiod}}}%
      }
      \list{location}{1}{%
        {Uppsala}%
      }
      \strng{namehash}{3fec7473b9b7d0881d58711403e21a4e}
      \strng{fullhash}{3fec7473b9b7d0881d58711403e21a4e}
      \strng{authornamehash}{3fec7473b9b7d0881d58711403e21a4e}
      \strng{authorfullhash}{3fec7473b9b7d0881d58711403e21a4e}
      \field{labelalpha}{Iem01}
      \field{sortinit}{I}
      \field{sortinithash}{a3dcedd53b04d1adfd5ac303ecd5e6fa}
      \field{extraalpha}{1}
      \field{labelnamesource}{author}
      \field{labeltitlesource}{title}
      \field{booktitle}{Proceedings of AiML'98}
      \field{title}{A Modal Analysis of Some Principles of the Provability Logic of {H}eyting {A}rithmetic}
      \field{volume}{2}
      \field{year}{2001}
    \endentry
    \entry{iemh:pres03}{article}{}
      \name{author}{1}{}{%
        {{hash=3fec7473b9b7d0881d58711403e21a4e}{%
           family={Iemhoff},
           familyi={I\bibinitperiod},
           given={R.},
           giveni={R\bibinitperiod}}}%
      }
      \list{publisher}{1}{%
        {Wiley Online Library}%
      }
      \strng{namehash}{3fec7473b9b7d0881d58711403e21a4e}
      \strng{fullhash}{3fec7473b9b7d0881d58711403e21a4e}
      \strng{authornamehash}{3fec7473b9b7d0881d58711403e21a4e}
      \strng{authorfullhash}{3fec7473b9b7d0881d58711403e21a4e}
      \field{labelalpha}{Iem03}
      \field{sortinit}{I}
      \field{sortinithash}{a3dcedd53b04d1adfd5ac303ecd5e6fa}
      \field{labelnamesource}{author}
      \field{labeltitlesource}{title}
      \field{journaltitle}{Mathematical Logic Quarterly}
      \field{number}{3}
      \field{title}{Preservativity logic: An analogue of interpretability logic for constructive theories}
      \field{volume}{49}
      \field{year}{2003}
      \field{pages}{230\bibrangedash 249}
      \range{pages}{20}
    \endentry
    \entry{iemh:prop05}{article}{}
      \name{author}{3}{}{%
        {{hash=3fec7473b9b7d0881d58711403e21a4e}{%
           family={Iemhoff},
           familyi={I\bibinitperiod},
           given={R.},
           giveni={R\bibinitperiod}}}%
        {{hash=52216a6faee5a0bd06313992005d48c4}{%
           family={De\bibnamedelima Jongh},
           familyi={D\bibinitperiod\bibinitdelim J\bibinitperiod},
           given={D.H.J.},
           giveni={D\bibinitperiod}}}%
        {{hash=f45998c28e76ee1438ff765f242dcfc1}{%
           family={Zhou},
           familyi={Z\bibinitperiod},
           given={C.},
           giveni={C\bibinitperiod}}}%
      }
      \list{publisher}{1}{%
        {Oxford Univ Press}%
      }
      \strng{namehash}{a51f9ae56ce2966c8e49d3d0ad8d2c3e}
      \strng{fullhash}{a51f9ae56ce2966c8e49d3d0ad8d2c3e}
      \strng{authornamehash}{a51f9ae56ce2966c8e49d3d0ad8d2c3e}
      \strng{authorfullhash}{a51f9ae56ce2966c8e49d3d0ad8d2c3e}
      \field{labelalpha}{IDZ05}
      \field{sortinit}{I}
      \field{sortinithash}{a3dcedd53b04d1adfd5ac303ecd5e6fa}
      \field{labelnamesource}{author}
      \field{labeltitlesource}{title}
      \field{journaltitle}{Logic Journal of IGPL}
      \field{number}{6}
      \field{title}{Properties of intuitionistic provability and preservativity logics}
      \field{volume}{13}
      \field{year}{2005}
      \field{pages}{615\bibrangedash 636}
      \range{pages}{22}
    \endentry
    \entry{Iemhoff01:phd}{thesis}{}
      \name{author}{1}{}{%
        {{hash=14eccfc6a4675763cd48755206aa7593}{%
           family={Iemhoff},
           familyi={I\bibinitperiod},
           given={Rosalie},
           giveni={R\bibinitperiod}}}%
      }
      \list{institution}{1}{%
        {University of Amsterdam}%
      }
      \strng{namehash}{14eccfc6a4675763cd48755206aa7593}
      \strng{fullhash}{14eccfc6a4675763cd48755206aa7593}
      \strng{authornamehash}{14eccfc6a4675763cd48755206aa7593}
      \strng{authorfullhash}{14eccfc6a4675763cd48755206aa7593}
      \field{labelalpha}{Iem01}
      \field{sortinit}{I}
      \field{sortinithash}{a3dcedd53b04d1adfd5ac303ecd5e6fa}
      \field{extraalpha}{2}
      \field{labelnamesource}{author}
      \field{labeltitlesource}{title}
      \field{title}{Provability Logic and Admissible Rules}
      \field{type}{phdthesis}
      \field{year}{2001}
    \endentry
    \entry{Japaridze1988}{incollection}{}
      \name{author}{1}{}{%
        {{hash=170b7393797b79442e45772612c70fe9}{%
           family={Japaridze},
           familyi={J\bibinitperiod},
           given={G.\bibnamedelimi K.},
           giveni={G\bibinitperiod\bibinitdelim K\bibinitperiod}}}%
      }
      \name{editor}{2}{}{%
        {{hash=edde96e6a69e131dd31dedb8ff273577}{%
           family={Smirnov},
           familyi={S\bibinitperiod},
           given={V.\bibnamedelimi A.},
           giveni={V\bibinitperiod\bibinitdelim A\bibinitperiod}}}%
        {{hash=eef280ff290f27c1db4d71dfb323fe87}{%
           family={Bezhanishvili},
           familyi={B\bibinitperiod},
           given={M.\bibnamedelimi N.},
           giveni={M\bibinitperiod\bibinitdelim N\bibinitperiod}}}%
      }
      \list{location}{1}{%
        {Tbilisi}%
      }
      \list{publisher}{1}{%
        {Metsniereba}%
      }
      \strng{namehash}{170b7393797b79442e45772612c70fe9}
      \strng{fullhash}{170b7393797b79442e45772612c70fe9}
      \strng{authornamehash}{170b7393797b79442e45772612c70fe9}
      \strng{authorfullhash}{170b7393797b79442e45772612c70fe9}
      \strng{editornamehash}{7549ffb49d42adf744f8615010d75bd4}
      \strng{editorfullhash}{7549ffb49d42adf744f8615010d75bd4}
      \field{labelalpha}{Jap88}
      \field{sortinit}{J}
      \field{sortinithash}{c86bd6cced82a15683b396c2169909ef}
      \field{labelnamesource}{author}
      \field{labeltitlesource}{title}
      \field{booktitle}{Intensional Logics and the Logical Structure of Theories: Proceedings of the Fourth Soviet-Finnish Symposium on Logic, Telavi, May 1985}
      \field{title}{The polymodal logic of provability}
      \field{year}{1988}
      \field{pages}{16\bibrangedash 48}
      \range{pages}{33}
    \endentry
    \entry{japa:logi98}{incollection}{}
      \name{author}{2}{}{%
        {{hash=545b29311514fec87e359471730dbf37}{%
           family={Japaridze},
           familyi={J\bibinitperiod},
           given={G.},
           giveni={G\bibinitperiod}}}%
        {{hash=e9ecdbe168f7c409d1dcd29f24b6fcd9}{%
           family={{de}\bibnamedelima Jongh},
           familyi={d\bibinitperiod\bibinitdelim J\bibinitperiod},
           given={D.},
           giveni={D\bibinitperiod}}}%
      }
      \name{editor}{1}{}{%
        {{hash=66ecca62bae230c1116c46f9f65929cc}{%
           family={Buss},
           familyi={B\bibinitperiod},
           given={S.},
           giveni={S\bibinitperiod}}}%
      }
      \list{location}{1}{%
        {Amsterdam}%
      }
      \list{publisher}{1}{%
        {North-Holland Publishing Co.}%
      }
      \strng{namehash}{b78a9f77ce6e04b3e076c3c712b99a56}
      \strng{fullhash}{b78a9f77ce6e04b3e076c3c712b99a56}
      \strng{authornamehash}{b78a9f77ce6e04b3e076c3c712b99a56}
      \strng{authorfullhash}{b78a9f77ce6e04b3e076c3c712b99a56}
      \strng{editornamehash}{66ecca62bae230c1116c46f9f65929cc}
      \strng{editorfullhash}{66ecca62bae230c1116c46f9f65929cc}
      \field{labelalpha}{Jd98}
      \field{sortinit}{J}
      \field{sortinithash}{c86bd6cced82a15683b396c2169909ef}
      \field{labelnamesource}{author}
      \field{labeltitlesource}{title}
      \field{booktitle}{Handbook of proof theory}
      \field{title}{The logic of provability}
      \field{year}{1998}
      \field{pages}{475\bibrangedash 546}
      \range{pages}{72}
    \endentry
    \entry{viss:inte11}{article}{}
      \name{author}{3}{}{%
        {{hash=af475dcf6f4bea11895ee84cab19a816}{%
           family={Jongh},
           familyi={J\bibinitperiod},
           given={D.},
           giveni={D\bibinitperiod},
           prefix={de},
           prefixi={d\bibinitperiod}}}%
        {{hash=2a797a2abeb800251638a6a46dffc88f}{%
           family={Verbrugge},
           familyi={V\bibinitperiod},
           given={R.},
           giveni={R\bibinitperiod}}}%
        {{hash=7c59a81f2d67e5dbdf77cc8061473f2d}{%
           family={Visser},
           familyi={V\bibinitperiod},
           given={A.},
           giveni={A\bibinitperiod}}}%
      }
      \list{language}{1}{%
        {English}%
      }
      \list{publisher}{1}{%
        {Springer-Verlag}%
      }
      \strng{namehash}{7efff0461107f0b9ba719e866bdcd7ee}
      \strng{fullhash}{7efff0461107f0b9ba719e866bdcd7ee}
      \strng{authornamehash}{7efff0461107f0b9ba719e866bdcd7ee}
      \strng{authorfullhash}{7efff0461107f0b9ba719e866bdcd7ee}
      \field{labelalpha}{JVV11}
      \field{sortinit}{J}
      \field{sortinithash}{c86bd6cced82a15683b396c2169909ef}
      \field{labelnamesource}{author}
      \field{labeltitlesource}{title}
      \field{issn}{0933-5846}
      \field{journaltitle}{Archive for Mathematical Logic}
      \field{number}{1--2}
      \field{title}{Intermediate Logics and the de {J}ongh property}
      \field{volume}{50}
      \field{year}{2011}
      \field{pages}{197\bibrangedash 213}
      \range{pages}{17}
      \verb{doi}
      \verb 10.1007/s00153-010-0209-4
      \endverb
      \verb{url}
      \verb http://dx.doi.org/10.1007/s00153-010-0209-4
      \endverb
      \keyw{Intuitionistic logic; Heyting's arithmetic; 03F25; 03F30; 03-02; 03B20; 03F50; 03F40}
    \endentry
    \entry{dejo:prov90}{inproceedings}{}
      \name{author}{2}{}{%
        {{hash=de5640ed7db048129ae5153bae66c64f}{%
           family={Jongh},
           familyi={J\bibinitperiod},
           given={D.H.J.},
           giveni={D\bibinitperiod},
           prefix={de},
           prefixi={d\bibinitperiod}}}%
        {{hash=40ad4fb4a0fd850f311cc3494f62e93e}{%
           family={Veltman},
           familyi={V\bibinitperiod},
           given={F.},
           giveni={F\bibinitperiod}}}%
      }
      \strng{namehash}{d5c13559d42878ef25e26171cf8c99c0}
      \strng{fullhash}{d5c13559d42878ef25e26171cf8c99c0}
      \strng{authornamehash}{d5c13559d42878ef25e26171cf8c99c0}
      \strng{authorfullhash}{d5c13559d42878ef25e26171cf8c99c0}
      \field{labelalpha}{JV90}
      \field{sortinit}{J}
      \field{sortinithash}{c86bd6cced82a15683b396c2169909ef}
      \field{labelnamesource}{author}
      \field{labeltitlesource}{title}
      \field{booktitle}{\cite{heyt:malo90}}
      \field{title}{Provability logics for relative interpretability}
      \field{year}{1990}
      \field{pages}{31\bibrangedash 42}
      \range{pages}{12}
    \endentry
    \entry{dejo:embe96}{inproceedings}{}
      \name{author}{2}{}{%
        {{hash=de5640ed7db048129ae5153bae66c64f}{%
           family={Jongh},
           familyi={J\bibinitperiod},
           given={D.H.J.},
           giveni={D\bibinitperiod},
           prefix={de},
           prefixi={d\bibinitperiod}}}%
        {{hash=7c59a81f2d67e5dbdf77cc8061473f2d}{%
           family={Visser},
           familyi={V\bibinitperiod},
           given={A.},
           giveni={A\bibinitperiod}}}%
      }
      \strng{namehash}{3c162806c0887e876a6095030630ac07}
      \strng{fullhash}{3c162806c0887e876a6095030630ac07}
      \strng{authornamehash}{3c162806c0887e876a6095030630ac07}
      \strng{authorfullhash}{3c162806c0887e876a6095030630ac07}
      \field{labelalpha}{JV96}
      \field{sortinit}{J}
      \field{sortinithash}{c86bd6cced82a15683b396c2169909ef}
      \field{labelnamesource}{author}
      \field{labeltitlesource}{title}
      \field{booktitle}{\cite{hodg:foun96}}
      \field{title}{Embeddings of {H}eyting algebras}
      \field{year}{1996}
      \field{pages}{187\bibrangedash 213}
      \range{pages}{27}
    \endentry
    \entry{JungSSSTBD15:popl}{inproceedings}{}
      \name{author}{7}{}{%
        {{hash=b941e8725b8fd9992bdee2980fefad07}{%
           family={Jung},
           familyi={J\bibinitperiod},
           given={Ralf},
           giveni={R\bibinitperiod}}}%
        {{hash=42f1bd9f4fa031c343b0daa0c61d5afe}{%
           family={Swasey},
           familyi={S\bibinitperiod},
           given={David},
           giveni={D\bibinitperiod}}}%
        {{hash=8e6c8d55ba05cecc763597cfdfa4c0c4}{%
           family={Sieczkowski},
           familyi={S\bibinitperiod},
           given={Filip},
           giveni={F\bibinitperiod}}}%
        {{hash=09bbde70c2e6081108b9a23332cb48e4}{%
           family={Svendsen},
           familyi={S\bibinitperiod},
           given={Kasper},
           giveni={K\bibinitperiod}}}%
        {{hash=621c444d9755cdfcc6dbe35d5b37df90}{%
           family={Turon},
           familyi={T\bibinitperiod},
           given={Aaron},
           giveni={A\bibinitperiod}}}%
        {{hash=f4714354745ea31cf6c543434102f444}{%
           family={Birkedal},
           familyi={B\bibinitperiod},
           given={Lars},
           giveni={L\bibinitperiod}}}%
        {{hash=3c61a075f82133ed00b1cf07a5b5c194}{%
           family={Dreyer},
           familyi={D\bibinitperiod},
           given={Derek},
           giveni={D\bibinitperiod}}}%
      }
      \name{editor}{2}{}{%
        {{hash=bb241265fa873a5164ccf321b5f5cc42}{%
           family={Rajamani},
           familyi={R\bibinitperiod},
           given={Sriram\bibnamedelima K.},
           giveni={S\bibinitperiod\bibinitdelim K\bibinitperiod}}}%
        {{hash=f6ebee077ea6f2c8f67f01d541cad379}{%
           family={Walker},
           familyi={W\bibinitperiod},
           given={David},
           giveni={D\bibinitperiod}}}%
      }
      \list{publisher}{1}{%
        {ACM}%
      }
      \strng{namehash}{b9517442050429a39f511630fbc3911f}
      \strng{fullhash}{43ca35cdef9a8a21258ff536808675b7}
      \strng{authornamehash}{b9517442050429a39f511630fbc3911f}
      \strng{authorfullhash}{43ca35cdef9a8a21258ff536808675b7}
      \strng{editornamehash}{dae6cf08a71a9c1056578e758d69641b}
      \strng{editorfullhash}{dae6cf08a71a9c1056578e758d69641b}
      \field{labelalpha}{Jun+15}
      \field{sortinit}{J}
      \field{sortinithash}{c86bd6cced82a15683b396c2169909ef}
      \field{labelnamesource}{author}
      \field{labeltitlesource}{title}
      \field{booktitle}{Proceedings of {POPL}}
      \field{isbn}{978-1-4503-3300-9}
      \field{title}{Iris: Monoids and Invariants as an Orthogonal Basis for Concurrent Reasoning}
      \field{year}{2015}
      \field{pages}{637\bibrangedash 650}
      \range{pages}{14}
      \verb{doi}
      \verb 10.1145/2676726.2676980
      \endverb
      \verb{url}
      \verb http://doi.acm.org/10.1145/2676726.2676980
      \endverb
    \endentry
    \entry{Kobayashi97:tcs}{article}{}
      \name{author}{1}{}{%
        {{hash=0f9f7256bb233dc40e82b740c99d1aa7}{%
           family={Kobayashi},
           familyi={K\bibinitperiod},
           given={Satoshi},
           giveni={S\bibinitperiod}}}%
      }
      \strng{namehash}{0f9f7256bb233dc40e82b740c99d1aa7}
      \strng{fullhash}{0f9f7256bb233dc40e82b740c99d1aa7}
      \strng{authornamehash}{0f9f7256bb233dc40e82b740c99d1aa7}
      \strng{authorfullhash}{0f9f7256bb233dc40e82b740c99d1aa7}
      \field{labelalpha}{Kob97}
      \field{sortinit}{K}
      \field{sortinithash}{4c244ceae61406cdc0cc2ce1cb1ff703}
      \field{labelnamesource}{author}
      \field{labeltitlesource}{title}
      \field{abstract}{In 1989, Eugenio Moggi proposed a categorical framework for program semantics based on the notion of a strong monad. He showed that various kinds of computation can be modeled in his framework. On the other hand, strong monads are not suited for the categorical semantics of traditional modal logics. According to these observations, Moggi thought that the CurryHoward correspondence would not hold between programs and constructive proofs in modal logics. However, contrary to his view, we can show that proofs in a certain kind of modal logics are actually considered as programs. In this paper, first we shall introduce the notion of an l-strong monad which is a generalization of strong monads. Using this new notion, we can generalize Moggi's semantics-preserving soundness and completeness with respect to his equational logic. Next we shall show that l-strong monads give a sound and complete semantics of a constructive version of S4 modal logic. Finally, we present a method to extract a monad-based imperative functional program from a proof in the modal logic. Interestingly, this method can also be understood in terms of l-strong monads.}
      \field{issn}{0304-3975}
      \field{journaltitle}{Theoretical Computer Science}
      \field{number}{1}
      \field{title}{Monad as modality}
      \field{volume}{175}
      \field{year}{1997}
      \field{pages}{29\bibrangedash 74}
      \range{pages}{46}
      \verb{doi}
      \verb DOI: 10.1016/S0304-3975(96)00169-7
      \endverb
      \verb{url}
      \verb http://www.sciencedirect.com/science/article/pii/S0304397596001697
      \endverb
    \endentry
    \entry{Kohler81:ams}{article}{}
      \name{author}{1}{}{%
        {{hash=90faa5b06fb99297b7785df81d4e5c5f}{%
           family={K\"{o}hler},
           familyi={K\bibinitperiod},
           given={Peter},
           giveni={P\bibinitperiod}}}%
      }
      \list{publisher}{1}{%
        {American Mathematical Society}%
      }
      \strng{namehash}{90faa5b06fb99297b7785df81d4e5c5f}
      \strng{fullhash}{90faa5b06fb99297b7785df81d4e5c5f}
      \strng{authornamehash}{90faa5b06fb99297b7785df81d4e5c5f}
      \strng{authorfullhash}{90faa5b06fb99297b7785df81d4e5c5f}
      \field{labelalpha}{K\"{o}h81}
      \field{sortinit}{K}
      \field{sortinithash}{4c244ceae61406cdc0cc2ce1cb1ff703}
      \field{labelnamesource}{author}
      \field{labeltitlesource}{title}
      \field{abstract}{Let $P$ be the category whose objects are posets and whose morphisms are partial mappings $\alpha: P \rightarrow Q$ satisfying (i) $\forall p, q \in \operatorname{dom} \alpha lbrack p < q \Rightarrow \alpha(p) < \alpha (q) \rbrack$ and (ii) \forall p \in \operatorname{dom} \alpha \forall q \in Q \lbrack q < \alpha(p) \Rightarrow \exists r \in \operatorname{dom} \alpha \lbrack r < p {\&} \alpha(r) = q\rbrack\rbrack$. The full subcategory $P_f$ of $P$ consisting of all finite posets is shown to be dually equivalent to the category of finite Brouwerian semilattices and homomorphisms. Under this duality a finite Brouwerian semilattice $\underline A$ corresponds with M (\underline A)$, the poset of all meet-irreducible elements of $\underline A$. The produce (in $P_f$) of $n$ copies $(n \in \Bbb{N})$ of a one-element poset is constructed; in view of the duality this product is isomorphic to the poset of meet-irreducible elements of the free Brouwerian semilattice on $n$ generators. If $V$ is a variety of Brouwerian semilattices and if $\underline A$ is a Brouwerian semilattice, then $\underline A$ is $V$-critical if all proper subalgebras of $\underline A$ belong to $V$ but not $A$. It is shown that a variety $V$ of Brouwerian semilattices has a finite equational base if and only if there are up to isomorphism only finitely many $V$-critical Brouwerian semilattices. This is used to show that a variety generated by a finite Brouwerian semilattice as well as the join of two finitely based varieties is finitely based. A new example of a variety without a finite equational base is exhibited.}
      \field{issn}{00029947}
      \field{journaltitle}{Trans. Amer. Math. Soc.}
      \field{number}{1}
      \field{title}{Brouwerian Semilattices}
      \field{volume}{268}
      \field{year}{1981}
      \field{pages}{103\bibrangedash 126}
      \range{pages}{24}
      \verb{url}
      \verb http://www.jstor.org/stable/1998339
      \endverb
    \endentry
    \entry{KrishnaswamiB11:icfp}{inproceedings}{}
      \name{author}{2}{}{%
        {{hash=c9f46b4f6114c3dd9a8372cd0f891d6e}{%
           family={Krishnaswami},
           familyi={K\bibinitperiod},
           given={Neelakantan\bibnamedelima R.},
           giveni={N\bibinitperiod\bibinitdelim R\bibinitperiod}}}%
        {{hash=914c421c88aaf72767dd1211d17cac95}{%
           family={Benton},
           familyi={B\bibinitperiod},
           given={Nick},
           giveni={N\bibinitperiod}}}%
      }
      \name{editor}{3}{}{%
        {{hash=00ee4874ccb0f2a9bf23e1a5b45ee481}{%
           family={Chakravarty},
           familyi={C\bibinitperiod},
           given={Manuel\bibnamedelimb M.\bibnamedelimi T.},
           giveni={M\bibinitperiod\bibinitdelim M\bibinitperiod\bibinitdelim T\bibinitperiod}}}%
        {{hash=aa123f617e3811bba0121021e0329f91}{%
           family={Hu},
           familyi={H\bibinitperiod},
           given={Zhenjiang},
           giveni={Z\bibinitperiod}}}%
        {{hash=79ed408858d4950e144853df16d9c5db}{%
           family={Danvy},
           familyi={D\bibinitperiod},
           given={Olivier},
           giveni={O\bibinitperiod}}}%
      }
      \list{organization}{1}{%
        {ACM SIGPLAN}%
      }
      \list{publisher}{1}{%
        {ACM}%
      }
      \strng{namehash}{46b4f02c207f72e40de20357646dc5d7}
      \strng{fullhash}{46b4f02c207f72e40de20357646dc5d7}
      \strng{authornamehash}{46b4f02c207f72e40de20357646dc5d7}
      \strng{authorfullhash}{46b4f02c207f72e40de20357646dc5d7}
      \strng{editornamehash}{0877ee71f553d97add28646afd9d6d5c}
      \strng{editorfullhash}{0877ee71f553d97add28646afd9d6d5c}
      \field{labelalpha}{KB11}
      \field{sortinit}{K}
      \field{sortinithash}{4c244ceae61406cdc0cc2ce1cb1ff703}
      \field{extraalpha}{1}
      \field{labelnamesource}{author}
      \field{labeltitlesource}{title}
      \field{booktitle}{Proceedings of {ICFP}}
      \field{isbn}{978-1-4503-0865-6}
      \field{title}{A semantic model for graphical user interfaces}
      \field{year}{2011}
      \field{pages}{45\bibrangedash 57}
      \range{pages}{13}
    \endentry
    \entry{KrishnaswamiB11:lics}{inproceedings}{}
      \name{author}{2}{}{%
        {{hash=c9f46b4f6114c3dd9a8372cd0f891d6e}{%
           family={Krishnaswami},
           familyi={K\bibinitperiod},
           given={Neelakantan\bibnamedelima R.},
           giveni={N\bibinitperiod\bibinitdelim R\bibinitperiod}}}%
        {{hash=914c421c88aaf72767dd1211d17cac95}{%
           family={Benton},
           familyi={B\bibinitperiod},
           given={Nick},
           giveni={N\bibinitperiod}}}%
      }
      \list{organization}{1}{%
        {IEEE}%
      }
      \strng{namehash}{46b4f02c207f72e40de20357646dc5d7}
      \strng{fullhash}{46b4f02c207f72e40de20357646dc5d7}
      \strng{authornamehash}{46b4f02c207f72e40de20357646dc5d7}
      \strng{authorfullhash}{46b4f02c207f72e40de20357646dc5d7}
      \field{labelalpha}{KB11}
      \field{sortinit}{K}
      \field{sortinithash}{4c244ceae61406cdc0cc2ce1cb1ff703}
      \field{extraalpha}{2}
      \field{labelnamesource}{author}
      \field{labeltitlesource}{title}
      \field{booktitle}{Proceedings of {LiCS}}
      \field{isbn}{978-0-7695-4412-0}
      \field{title}{Ultrametric Semantics of Reactive Programs}
      \field{year}{2011}
      \field{pages}{257\bibrangedash 266}
      \range{pages}{10}
    \endentry
    \entry{KrishnaswamiBH12:popl}{inproceedings}{}
      \name{author}{3}{}{%
        {{hash=c9f46b4f6114c3dd9a8372cd0f891d6e}{%
           family={Krishnaswami},
           familyi={K\bibinitperiod},
           given={Neelakantan\bibnamedelima R.},
           giveni={N\bibinitperiod\bibinitdelim R\bibinitperiod}}}%
        {{hash=914c421c88aaf72767dd1211d17cac95}{%
           family={Benton},
           familyi={B\bibinitperiod},
           given={Nick},
           giveni={N\bibinitperiod}}}%
        {{hash=9bb7505af6f56de3344e7ca74aa3fb3c}{%
           family={Hoffmann},
           familyi={H\bibinitperiod},
           given={Jan},
           giveni={J\bibinitperiod}}}%
      }
      \name{editor}{2}{}{%
        {{hash=595e2f55b8e3a82633f9da7acf83e1dd}{%
           family={Field},
           familyi={F\bibinitperiod},
           given={John},
           giveni={J\bibinitperiod}}}%
        {{hash=69cb741734a7429ce7f1dd4dd7eacade}{%
           family={Hicks},
           familyi={H\bibinitperiod},
           given={Michael},
           giveni={M\bibinitperiod}}}%
      }
      \list{publisher}{1}{%
        {ACM}%
      }
      \strng{namehash}{430ff2bdbe2d3a877536568060a29991}
      \strng{fullhash}{430ff2bdbe2d3a877536568060a29991}
      \strng{authornamehash}{430ff2bdbe2d3a877536568060a29991}
      \strng{authorfullhash}{430ff2bdbe2d3a877536568060a29991}
      \strng{editornamehash}{84e110cbc53537960cf9e4f22868ee71}
      \strng{editorfullhash}{84e110cbc53537960cf9e4f22868ee71}
      \field{labelalpha}{KBH12}
      \field{sortinit}{K}
      \field{sortinithash}{4c244ceae61406cdc0cc2ce1cb1ff703}
      \field{labelnamesource}{author}
      \field{labeltitlesource}{title}
      \field{booktitle}{Proceedings of {POPL}}
      \field{isbn}{978-1-4503-1083-3}
      \field{title}{Higher-order functional reactive programming in bounded space}
      \field{year}{2012}
      \field{pages}{45\bibrangedash 58}
      \range{pages}{14}
      \verb{doi}
      \verb 10.1145/2103656.2103665
      \endverb
      \verb{url}
      \verb http://doi.acm.org/10.1145/2103656.2103665
      \endverb
    \endentry
    \entry{KurzP13:lmcs}{article}{}
      \name{author}{2}{}{%
        {{hash=7be8d94e3431ca9d859e728041dec160}{%
           family={Kurz},
           familyi={K\bibinitperiod},
           given={Alexander},
           giveni={A\bibinitperiod}}}%
        {{hash=aeacb76593b669fffa2b9890e8305f1e}{%
           family={Palmigiano},
           familyi={P\bibinitperiod},
           given={Alessandra},
           giveni={A\bibinitperiod}}}%
      }
      \strng{namehash}{d4ee408837ad085ea48e4b1ef700bd16}
      \strng{fullhash}{d4ee408837ad085ea48e4b1ef700bd16}
      \strng{authornamehash}{d4ee408837ad085ea48e4b1ef700bd16}
      \strng{authorfullhash}{d4ee408837ad085ea48e4b1ef700bd16}
      \field{labelalpha}{KP13}
      \field{sortinit}{K}
      \field{sortinithash}{4c244ceae61406cdc0cc2ce1cb1ff703}
      \field{labelnamesource}{author}
      \field{labeltitlesource}{title}
      \field{journaltitle}{Logical Methods in Computer Science}
      \field{number}{4}
      \field{title}{Epistemic Updates on Algebras}
      \field{volume}{9}
      \field{year}{2013}
      \verb{doi}
      \verb 10.2168/LMCS-9(4:17)2013
      \endverb
      \verb{url}
      \verb https://doi.org/10.2168/LMCS-9(4:17)2013
      \endverb
    \endentry
    \entry{Lewis13:jppsm}{article}{}
      \name{author}{1}{}{%
        {{hash=df8cbeef654b467487c1296e7f00860f}{%
           family={Lewis},
           familyi={L\bibinitperiod},
           given={C.\bibnamedelimi I.},
           giveni={C\bibinitperiod\bibinitdelim I\bibinitperiod}}}%
      }
      \list{publisher}{1}{%
        {Journal of Philosophy, Inc.}%
      }
      \strng{namehash}{df8cbeef654b467487c1296e7f00860f}
      \strng{fullhash}{df8cbeef654b467487c1296e7f00860f}
      \strng{authornamehash}{df8cbeef654b467487c1296e7f00860f}
      \strng{authorfullhash}{df8cbeef654b467487c1296e7f00860f}
      \field{labelalpha}{Lew13}
      \field{sortinit}{L}
      \field{sortinithash}{7bba64db83423e3c29ad597f3b682cf3}
      \field{labelnamesource}{author}
      \field{labeltitlesource}{title}
      \field{issn}{01609335}
      \field{journaltitle}{The Journal of Philosophy, Psychology and Scientific Methods}
      \field{number}{16}
      \field{title}{A New Algebra of Implications and Some Consequences}
      \field{volume}{10}
      \field{year}{1913}
      \field{pages}{428\bibrangedash 438}
      \range{pages}{11}
      \verb{url}
      \verb http://www.jstor.org/stable/2012900
      \endverb
    \endentry
    \entry{Lewis15:jppsm}{article}{}
      \name{author}{1}{}{%
        {{hash=df8cbeef654b467487c1296e7f00860f}{%
           family={Lewis},
           familyi={L\bibinitperiod},
           given={C.\bibnamedelimi I.},
           giveni={C\bibinitperiod\bibinitdelim I\bibinitperiod}}}%
      }
      \list{publisher}{1}{%
        {Journal of Philosophy, Inc.}%
      }
      \strng{namehash}{df8cbeef654b467487c1296e7f00860f}
      \strng{fullhash}{df8cbeef654b467487c1296e7f00860f}
      \strng{authornamehash}{df8cbeef654b467487c1296e7f00860f}
      \strng{authorfullhash}{df8cbeef654b467487c1296e7f00860f}
      \field{labelalpha}{Lew15}
      \field{sortinit}{L}
      \field{sortinithash}{7bba64db83423e3c29ad597f3b682cf3}
      \field{labelnamesource}{author}
      \field{labeltitlesource}{title}
      \field{issn}{01609335}
      \field{journaltitle}{The Journal of Philosophy, Psychology and Scientific Methods}
      \field{number}{19}
      \field{title}{A Too Brief Set of Postulates for the Algebra of Logic}
      \field{volume}{12}
      \field{year}{1915}
      \field{pages}{523\bibrangedash 525}
      \range{pages}{3}
      \verb{url}
      \verb http://www.jstor.org/stable/2012996
      \endverb
    \endentry
    \entry{Lewis32:monist}{article}{}
      \name{author}{1}{}{%
        {{hash=df8cbeef654b467487c1296e7f00860f}{%
           family={Lewis},
           familyi={L\bibinitperiod},
           given={C.\bibnamedelimi I.},
           giveni={C\bibinitperiod\bibinitdelim I\bibinitperiod}}}%
      }
      \list{publisher}{1}{%
        {The Oxford University Press}%
      }
      \strng{namehash}{df8cbeef654b467487c1296e7f00860f}
      \strng{fullhash}{df8cbeef654b467487c1296e7f00860f}
      \strng{authornamehash}{df8cbeef654b467487c1296e7f00860f}
      \strng{authorfullhash}{df8cbeef654b467487c1296e7f00860f}
      \field{labelalpha}{Lew32}
      \field{sortinit}{L}
      \field{sortinithash}{7bba64db83423e3c29ad597f3b682cf3}
      \field{labelnamesource}{author}
      \field{labeltitlesource}{title}
      \field{issn}{0026-9662}
      \field{journaltitle}{The Monist}
      \field{number}{4}
      \field{title}{Alternative Systems of Logic}
      \field{volume}{42}
      \field{year}{1932}
      \field{pages}{481\bibrangedash 507}
      \range{pages}{27}
      \verb{doi}
      \verb 10.5840/monist19324241
      \endverb
      \verb{eprint}
      \verb http://monist.oxfordjournals.org/content/42/4/481.full.pdf
      \endverb
      \verb{url}
      \verb http://monist.oxfordjournals.org/content/42/4/481
      \endverb
    \endentry
    \entry{Lewis12}{article}{}
      \name{author}{1}{}{%
        {{hash=df8cbeef654b467487c1296e7f00860f}{%
           family={Lewis},
           familyi={L\bibinitperiod},
           given={C.\bibnamedelimi I.},
           giveni={C\bibinitperiod\bibinitdelim I\bibinitperiod}}}%
      }
      \list{publisher}{1}{%
        {[Oxford University Press, Mind Association]}%
      }
      \strng{namehash}{df8cbeef654b467487c1296e7f00860f}
      \strng{fullhash}{df8cbeef654b467487c1296e7f00860f}
      \strng{authornamehash}{df8cbeef654b467487c1296e7f00860f}
      \strng{authorfullhash}{df8cbeef654b467487c1296e7f00860f}
      \field{labelalpha}{Lew12}
      \field{sortinit}{L}
      \field{sortinithash}{7bba64db83423e3c29ad597f3b682cf3}
      \field{labelnamesource}{author}
      \field{labeltitlesource}{title}
      \field{issn}{00264423, 14602113}
      \field{journaltitle}{Mind}
      \field{number}{84}
      \field{title}{Implication and the Algebra of Logic}
      \field{volume}{21}
      \field{year}{1912}
      \field{pages}{522\bibrangedash 531}
      \range{pages}{10}
      \verb{url}
      \verb http://www.jstor.org/stable/2249157
      \endverb
    \endentry
    \entry{Lewis20:jppsm}{article}{}
      \name{author}{1}{}{%
        {{hash=df8cbeef654b467487c1296e7f00860f}{%
           family={Lewis},
           familyi={L\bibinitperiod},
           given={C.\bibnamedelimi I.},
           giveni={C\bibinitperiod\bibinitdelim I\bibinitperiod}}}%
      }
      \list{publisher}{1}{%
        {Journal of Philosophy, Inc.}%
      }
      \strng{namehash}{df8cbeef654b467487c1296e7f00860f}
      \strng{fullhash}{df8cbeef654b467487c1296e7f00860f}
      \strng{authornamehash}{df8cbeef654b467487c1296e7f00860f}
      \strng{authorfullhash}{df8cbeef654b467487c1296e7f00860f}
      \field{labelalpha}{Lew20}
      \field{sortinit}{L}
      \field{sortinithash}{7bba64db83423e3c29ad597f3b682cf3}
      \field{labelnamesource}{author}
      \field{labeltitlesource}{title}
      \field{issn}{01609335}
      \field{journaltitle}{The Journal of Philosophy, Psychology and Scientific Methods}
      \field{number}{11}
      \field{title}{Strict Implication--An Emendation}
      \field{volume}{17}
      \field{year}{1920}
      \field{pages}{300\bibrangedash 302}
      \range{pages}{3}
      \verb{url}
      \verb http://www.jstor.org/stable/2940598
      \endverb
    \endentry
    \entry{Lewis14:jppsm}{article}{}
      \name{author}{1}{}{%
        {{hash=df8cbeef654b467487c1296e7f00860f}{%
           family={Lewis},
           familyi={L\bibinitperiod},
           given={C.\bibnamedelimi I.},
           giveni={C\bibinitperiod\bibinitdelim I\bibinitperiod}}}%
      }
      \list{publisher}{1}{%
        {Journal of Philosophy, Inc.}%
      }
      \strng{namehash}{df8cbeef654b467487c1296e7f00860f}
      \strng{fullhash}{df8cbeef654b467487c1296e7f00860f}
      \strng{authornamehash}{df8cbeef654b467487c1296e7f00860f}
      \strng{authorfullhash}{df8cbeef654b467487c1296e7f00860f}
      \field{labelalpha}{Lew14}
      \field{sortinit}{L}
      \field{sortinithash}{7bba64db83423e3c29ad597f3b682cf3}
      \field{labelnamesource}{author}
      \field{labeltitlesource}{title}
      \field{issn}{01609335}
      \field{journaltitle}{The Journal of Philosophy, Psychology and Scientific Methods}
      \field{number}{22}
      \field{title}{The Matrix Algebra for Implications}
      \field{volume}{11}
      \field{year}{1914}
      \field{pages}{589\bibrangedash 600}
      \range{pages}{12}
      \verb{url}
      \verb http://www.jstor.org/stable/2012652
      \endverb
    \endentry
    \entry{Lewis35}{article}{}
      \name{author}{2}{}{%
        {{hash=df8cbeef654b467487c1296e7f00860f}{%
           family={Lewis},
           familyi={L\bibinitperiod},
           given={C.\bibnamedelimi I.},
           giveni={C\bibinitperiod\bibinitdelim I\bibinitperiod}}}%
        {{hash=5aa8ce662e08a2c5a24967393c526b85}{%
           family={Langford},
           familyi={L\bibinitperiod},
           given={C.\bibnamedelimi H.},
           giveni={C\bibinitperiod\bibinitdelim H\bibinitperiod}}}%
      }
      \strng{namehash}{29568583188869f2be173d6aafd1bf50}
      \strng{fullhash}{29568583188869f2be173d6aafd1bf50}
      \strng{authornamehash}{29568583188869f2be173d6aafd1bf50}
      \strng{authorfullhash}{29568583188869f2be173d6aafd1bf50}
      \field{labelalpha}{LL14}
      \field{sortinit}{L}
      \field{sortinithash}{7bba64db83423e3c29ad597f3b682cf3}
      \field{labelnamesource}{author}
      \field{labeltitlesource}{title}
      \field{journaltitle}{History and Philosophy of Logic}
      \field{number}{1}
      \field{title}{A Note on Strict Implication (1935)}
      \field{volume}{35}
      \field{year}{2014}
      \field{pages}{44\bibrangedash 49}
      \range{pages}{6}
    \endentry
    \entry{Lewis18}{book}{}
      \name{author}{1}{}{%
        {{hash=fa6471d4df3547a084cb3997978aabd2}{%
           family={Lewis},
           familyi={L\bibinitperiod},
           given={C.I.},
           giveni={C\bibinitperiod}}}%
      }
      \list{publisher}{1}{%
        {University of California Press}%
      }
      \strng{namehash}{fa6471d4df3547a084cb3997978aabd2}
      \strng{fullhash}{fa6471d4df3547a084cb3997978aabd2}
      \strng{authornamehash}{fa6471d4df3547a084cb3997978aabd2}
      \strng{authorfullhash}{fa6471d4df3547a084cb3997978aabd2}
      \field{labelalpha}{Lew18}
      \field{sortinit}{L}
      \field{sortinithash}{7bba64db83423e3c29ad597f3b682cf3}
      \field{labelnamesource}{author}
      \field{labeltitlesource}{title}
      \field{title}{A Survey of Symbolic Logic}
      \field{year}{1918}
    \endentry
    \entry{lewisincap}{incollection}{}
      \name{author}{1}{}{%
        {{hash=fa6471d4df3547a084cb3997978aabd2}{%
           family={Lewis},
           familyi={L\bibinitperiod},
           given={C.I.},
           giveni={C\bibinitperiod}}}%
      }
      \name{editor}{2}{}{%
        {{hash=883d6e99e73811a7d8a91e3cc19fc559}{%
           family={Adams},
           familyi={A\bibinitperiod},
           given={G.\bibnamedelimi P.},
           giveni={G\bibinitperiod\bibinitdelim P\bibinitperiod}}}%
        {{hash=b2920000b98705ba1d633f880bd12e75}{%
           family={Montague},
           familyi={M\bibinitperiod},
           given={W.\bibnamedelimi P.},
           giveni={W\bibinitperiod\bibinitdelim P\bibinitperiod}}}%
      }
      \list{publisher}{1}{%
        {G. Allen \& Unwin Limited}%
      }
      \strng{namehash}{fa6471d4df3547a084cb3997978aabd2}
      \strng{fullhash}{fa6471d4df3547a084cb3997978aabd2}
      \strng{authornamehash}{fa6471d4df3547a084cb3997978aabd2}
      \strng{authorfullhash}{fa6471d4df3547a084cb3997978aabd2}
      \strng{editornamehash}{e88537a5a53368e9264e323b9612d9e1}
      \strng{editorfullhash}{e88537a5a53368e9264e323b9612d9e1}
      \field{labelalpha}{Lew30}
      \field{sortinit}{L}
      \field{sortinithash}{7bba64db83423e3c29ad597f3b682cf3}
      \field{labelnamesource}{author}
      \field{labeltitlesource}{title}
      \field{booktitle}{Contemporary American Philosophy: Personal Statements}
      \field{series}{Library of philosophy, ed. by J. H. Muirhead}
      \field{title}{Logic and Pragmatism}
      \field{volume}{2}
      \field{year}{1930}
      \verb{url}
      \verb https://books.google.de/books?id=3VYNAAAAIAAJ
      \endverb
    \endentry
    \entry{Lewis32:book}{book}{}
      \name{author}{2}{}{%
        {{hash=fa6471d4df3547a084cb3997978aabd2}{%
           family={Lewis},
           familyi={L\bibinitperiod},
           given={C.I.},
           giveni={C\bibinitperiod}}}%
        {{hash=5690cb04a5adb7fac12fc8eeb7ba2a83}{%
           family={Langford},
           familyi={L\bibinitperiod},
           given={C.H.},
           giveni={C\bibinitperiod}}}%
      }
      \list{publisher}{1}{%
        {Dover}%
      }
      \strng{namehash}{129a720a2a5482cc46fa9bd67a00df73}
      \strng{fullhash}{129a720a2a5482cc46fa9bd67a00df73}
      \strng{authornamehash}{129a720a2a5482cc46fa9bd67a00df73}
      \strng{authorfullhash}{129a720a2a5482cc46fa9bd67a00df73}
      \field{labelalpha}{LL32}
      \field{sortinit}{L}
      \field{sortinithash}{7bba64db83423e3c29ad597f3b682cf3}
      \field{labelnamesource}{author}
      \field{labeltitlesource}{title}
      \field{title}{Symbolic Logic}
      \field{year}{1932}
    \endentry
    \entry{Lindley14:wgp}{inproceedings}{}
      \name{author}{1}{}{%
        {{hash=57b68e7d9ae666f817c2c0ec08bec7c9}{%
           family={Lindley},
           familyi={L\bibinitperiod},
           given={Sam},
           giveni={S\bibinitperiod}}}%
      }
      \name{editor}{2}{}{%
        {{hash=f5d63d029b9164b6dadf7ef60bb4b5da}{%
           family={Magalh\~{a}es},
           familyi={M\bibinitperiod},
           given={Jos\'{e}\bibnamedelima Pedro},
           giveni={J\bibinitperiod\bibinitdelim P\bibinitperiod}}}%
        {{hash=70137d0155741849a45d00480176b2ea}{%
           family={Rompf},
           familyi={R\bibinitperiod},
           given={Tiark},
           giveni={T\bibinitperiod}}}%
      }
      \list{publisher}{1}{%
        {ACM}%
      }
      \strng{namehash}{57b68e7d9ae666f817c2c0ec08bec7c9}
      \strng{fullhash}{57b68e7d9ae666f817c2c0ec08bec7c9}
      \strng{authornamehash}{57b68e7d9ae666f817c2c0ec08bec7c9}
      \strng{authorfullhash}{57b68e7d9ae666f817c2c0ec08bec7c9}
      \strng{editornamehash}{fc68acaf0c891b7fb71e6fd921e33fb8}
      \strng{editorfullhash}{fc68acaf0c891b7fb71e6fd921e33fb8}
      \field{labelalpha}{Lin14}
      \field{sortinit}{L}
      \field{sortinithash}{7bba64db83423e3c29ad597f3b682cf3}
      \field{labelnamesource}{author}
      \field{labeltitlesource}{title}
      \field{booktitle}{Proceedings of {WGP}}
      \field{title}{Algebraic effects and effect handlers for idioms and arrows}
      \field{year}{2014}
      \field{pages}{47\bibrangedash 58}
      \range{pages}{12}
    \endentry
    \entry{LindleyWY08:msfp}{article}{}
      \name{author}{3}{}{%
        {{hash=57b68e7d9ae666f817c2c0ec08bec7c9}{%
           family={Lindley},
           familyi={L\bibinitperiod},
           given={Sam},
           giveni={S\bibinitperiod}}}%
        {{hash=b4d861345944ddeeaf658a11c2f0f6a7}{%
           family={Wadler},
           familyi={W\bibinitperiod},
           given={Philip},
           giveni={P\bibinitperiod}}}%
        {{hash=80e36bf63bd686e5360f55b9ac2a1c36}{%
           family={Yallop},
           familyi={Y\bibinitperiod},
           given={Jeremy},
           giveni={J\bibinitperiod}}}%
      }
      \strng{namehash}{818c7042976fcd51a382697c998e7efe}
      \strng{fullhash}{818c7042976fcd51a382697c998e7efe}
      \strng{authornamehash}{818c7042976fcd51a382697c998e7efe}
      \strng{authorfullhash}{818c7042976fcd51a382697c998e7efe}
      \field{labelalpha}{LWY11}
      \field{sortinit}{L}
      \field{sortinithash}{7bba64db83423e3c29ad597f3b682cf3}
      \field{labelnamesource}{author}
      \field{labeltitlesource}{title}
      \field{issn}{1571-0661}
      \field{journaltitle}{{ENTCS}}
      \field{note}{Proceedings of {MSFP}}
      \field{number}{5}
      \field{title}{Idioms are Oblivious, Arrows are Meticulous, Monads are Promiscuous}
      \field{volume}{229}
      \field{year}{2011}
      \field{pages}{97\bibrangedash 117}
      \range{pages}{21}
      \verb{doi}
      \verb http://dx.doi.org/10.1016/j.entcs.2011.02.018
      \endverb
      \verb{url}
      \verb http://www.sciencedirect.com/science/article/pii/S1571066111000557
      \endverb
      \keyw{monads}
    \endentry
    \entry{lind:prov96}{article}{}
      \name{author}{1}{}{%
        {{hash=b8a76ccca732005ea36885d8c114a300}{%
           family={Lindstr\"{o}m},
           familyi={L\bibinitperiod},
           given={P.},
           giveni={P\bibinitperiod}}}%
      }
      \strng{namehash}{b8a76ccca732005ea36885d8c114a300}
      \strng{fullhash}{b8a76ccca732005ea36885d8c114a300}
      \strng{authornamehash}{b8a76ccca732005ea36885d8c114a300}
      \strng{authorfullhash}{b8a76ccca732005ea36885d8c114a300}
      \field{labelalpha}{Lin96}
      \field{sortinit}{L}
      \field{sortinithash}{7bba64db83423e3c29ad597f3b682cf3}
      \field{labelnamesource}{author}
      \field{labeltitlesource}{title}
      \field{journaltitle}{Theoria}
      \field{number}{1-2}
      \field{title}{Provability logic -- a short introduction}
      \field{volume}{62}
      \field{year}{1996}
      \field{pages}{19\bibrangedash 61}
      \range{pages}{43}
    \endentry
    \entry{Litak05:phd}{thesis}{}
      \name{author}{1}{}{%
        {{hash=8f5431086357f9f5a0eef79eb62f74f4}{%
           family={Litak},
           familyi={L\bibinitperiod},
           given={Tadeusz},
           giveni={T\bibinitperiod}}}%
      }
      \list{institution}{2}{%
        {Japan Advanced Institute of Science}%
        {Technology}%
      }
      \strng{namehash}{8f5431086357f9f5a0eef79eb62f74f4}
      \strng{fullhash}{8f5431086357f9f5a0eef79eb62f74f4}
      \strng{authornamehash}{8f5431086357f9f5a0eef79eb62f74f4}
      \strng{authorfullhash}{8f5431086357f9f5a0eef79eb62f74f4}
      \field{labelalpha}{Lit05}
      \field{sortinit}{L}
      \field{sortinithash}{7bba64db83423e3c29ad597f3b682cf3}
      \field{labelnamesource}{author}
      \field{labeltitlesource}{title}
      \field{title}{An algebraic approach to incompleteness in modal logic}
      \field{type}{phdthesis}
      \field{year}{2005}
    \endentry
    \entry{Litak14:trends}{incollection}{}
      \name{author}{1}{}{%
        {{hash=8f5431086357f9f5a0eef79eb62f74f4}{%
           family={Litak},
           familyi={L\bibinitperiod},
           given={Tadeusz},
           giveni={T\bibinitperiod}}}%
      }
      \name{editor}{1}{}{%
        {{hash=a7c042aaa1a3fa49ceb88ffd5870c794}{%
           family={Bezhanishvili},
           familyi={B\bibinitperiod},
           given={Guram},
           giveni={G\bibinitperiod}}}%
      }
      \list{publisher}{1}{%
        {Springer}%
      }
      \strng{namehash}{8f5431086357f9f5a0eef79eb62f74f4}
      \strng{fullhash}{8f5431086357f9f5a0eef79eb62f74f4}
      \strng{authornamehash}{8f5431086357f9f5a0eef79eb62f74f4}
      \strng{authorfullhash}{8f5431086357f9f5a0eef79eb62f74f4}
      \strng{editornamehash}{a7c042aaa1a3fa49ceb88ffd5870c794}
      \strng{editorfullhash}{a7c042aaa1a3fa49ceb88ffd5870c794}
      \field{labelalpha}{Lit14}
      \field{sortinit}{L}
      \field{sortinithash}{7bba64db83423e3c29ad597f3b682cf3}
      \field{labelnamesource}{author}
      \field{labeltitlesource}{title}
      \field{booktitle}{Leo Esakia on duality in modal and intuitionistic logics}
      \field{series}{Outstanding Contributions to Logic}
      \field{title}{Constructive modalities with provability smack}
      \field{volume}{4}
      \field{year}{2014}
      \field{pages}{179\bibrangedash 208}
      \range{pages}{30}
      \verb{doi}
      \verb 10.1007/978-94-017-8860-1_8
      \endverb
      \verb{url}
      \verb https://arxiv.org/abs/1708.05607
      \endverb
    \endentry
    \entry{Litak07:bsl}{article}{}
      \name{author}{1}{}{%
        {{hash=8f5431086357f9f5a0eef79eb62f74f4}{%
           family={Litak},
           familyi={L\bibinitperiod},
           given={Tadeusz},
           giveni={T\bibinitperiod}}}%
      }
      \strng{namehash}{8f5431086357f9f5a0eef79eb62f74f4}
      \strng{fullhash}{8f5431086357f9f5a0eef79eb62f74f4}
      \strng{authornamehash}{8f5431086357f9f5a0eef79eb62f74f4}
      \strng{authorfullhash}{8f5431086357f9f5a0eef79eb62f74f4}
      \field{labelalpha}{Lit07}
      \field{sortinit}{L}
      \field{sortinithash}{7bba64db83423e3c29ad597f3b682cf3}
      \field{labelnamesource}{author}
      \field{labeltitlesource}{title}
      \field{journaltitle}{Bulletin of the Section of Logic}
      \field{note}{A special issue In Honorem Hiroakira Ono edited by Piotr {\L{}}ukowski}
      \field{number}{3--4}
      \field{title}{The non-reflexive counterpart of {Grz}}
      \field{volume}{36}
      \field{year}{2007}
      \field{pages}{195\bibrangedash 208}
      \range{pages}{14}
      \verb{url}
      \verb http://www.filozof.uni.lodz.pl/bulletin/pdf/36_34_10.pdf
      \endverb
    \endentry
    \entry{LitakV:otw}{unpublished}{}
      \name{author}{2}{}{%
        {{hash=8f5431086357f9f5a0eef79eb62f74f4}{%
           family={Litak},
           familyi={L\bibinitperiod},
           given={Tadeusz},
           giveni={T\bibinitperiod}}}%
        {{hash=b18b77c95e3c304a3c6e1d7ff611dd7f}{%
           family={Visser},
           familyi={V\bibinitperiod},
           given={Albert},
           giveni={A\bibinitperiod}}}%
      }
      \strng{namehash}{e687f7a71bdbe42eeda01980c2dee89e}
      \strng{fullhash}{e687f7a71bdbe42eeda01980c2dee89e}
      \strng{authornamehash}{e687f7a71bdbe42eeda01980c2dee89e}
      \strng{authorfullhash}{e687f7a71bdbe42eeda01980c2dee89e}
      \field{labelalpha}{LV}
      \field{sortinit}{L}
      \field{sortinithash}{7bba64db83423e3c29ad597f3b682cf3}
      \field{labelnamesource}{author}
      \field{labeltitlesource}{title}
      \field{note}{in preparation}
      \field{title}{Lewis arrow fell off the wall: decompositions of constructive strict implication}
    \endentry
    \entry{loeb:solu55}{article}{}
      \name{author}{1}{}{%
        {{hash=c6b2a9a180d747afa75d1db527ef3bc0}{%
           family={L\"{o}b},
           familyi={L\bibinitperiod},
           given={M.H.},
           giveni={M\bibinitperiod}}}%
      }
      \strng{namehash}{c6b2a9a180d747afa75d1db527ef3bc0}
      \strng{fullhash}{c6b2a9a180d747afa75d1db527ef3bc0}
      \strng{authornamehash}{c6b2a9a180d747afa75d1db527ef3bc0}
      \strng{authorfullhash}{c6b2a9a180d747afa75d1db527ef3bc0}
      \field{labelalpha}{L\"{o}b55}
      \field{sortinit}{L}
      \field{sortinithash}{7bba64db83423e3c29ad597f3b682cf3}
      \field{labelnamesource}{author}
      \field{labeltitlesource}{title}
      \field{journaltitle}{{Journal of Symbolic Logic}}
      \field{title}{{Solution of a problem of Leon Henkin}}
      \field{volume}{20}
      \field{year}{1955}
      \field{pages}{115\bibrangedash 118}
      \range{pages}{4}
    \endentry
    \entry{Mares14:note}{article}{}
      \name{author}{1}{}{%
        {{hash=78deb26026f67db43620134d09f50903}{%
           family={Mares},
           familyi={M\bibinitperiod},
           given={Edwin},
           giveni={E\bibinitperiod}}}%
      }
      \strng{namehash}{78deb26026f67db43620134d09f50903}
      \strng{fullhash}{78deb26026f67db43620134d09f50903}
      \strng{authornamehash}{78deb26026f67db43620134d09f50903}
      \strng{authorfullhash}{78deb26026f67db43620134d09f50903}
      \field{labelalpha}{Mar14}
      \field{sortinit}{M}
      \field{sortinithash}{c26a05ef03e4429073ed5c825140fac3}
      \field{labelnamesource}{author}
      \field{labeltitlesource}{title}
      \field{journaltitle}{History and Philosophy of Logic}
      \field{number}{1}
      \field{title}{Editor's Introduction to {C.I. Lewis and C.H. Langford} ``{A Note on Strict Implication}''}
      \field{volume}{35}
      \field{year}{2014}
      \field{pages}{38\bibrangedash 43}
      \range{pages}{6}
    \endentry
    \entry{Mares04}{book}{}
      \name{author}{1}{}{%
        {{hash=d88635e5f2d48c0656ce52b6daa7e5d7}{%
           family={Mares},
           familyi={M\bibinitperiod},
           given={Edwin\bibnamedelima D.},
           giveni={E\bibinitperiod\bibinitdelim D\bibinitperiod}}}%
      }
      \list{publisher}{1}{%
        {Cambridge University Press}%
      }
      \strng{namehash}{d88635e5f2d48c0656ce52b6daa7e5d7}
      \strng{fullhash}{d88635e5f2d48c0656ce52b6daa7e5d7}
      \strng{authornamehash}{d88635e5f2d48c0656ce52b6daa7e5d7}
      \strng{authorfullhash}{d88635e5f2d48c0656ce52b6daa7e5d7}
      \field{labelalpha}{Mar04}
      \field{sortinit}{M}
      \field{sortinithash}{c26a05ef03e4429073ed5c825140fac3}
      \field{labelnamesource}{author}
      \field{labeltitlesource}{title}
      \field{title}{Relevant Logic: A Philosophical Interpretation}
      \field{year}{2004}
    \endentry
    \entry{McbrideP08:jfp}{article}{}
      \name{author}{2}{}{%
        {{hash=2c2c2406f3f7d52d846f02ca883d9b3d}{%
           family={McBride},
           familyi={M\bibinitperiod},
           given={Conor},
           giveni={C\bibinitperiod}}}%
        {{hash=1f3f8610e67d8b56b596848671113178}{%
           family={Paterson},
           familyi={P\bibinitperiod},
           given={Ross},
           giveni={R\bibinitperiod}}}%
      }
      \strng{namehash}{d11e3575d60bccaa56e3cba587aaae9f}
      \strng{fullhash}{d11e3575d60bccaa56e3cba587aaae9f}
      \strng{authornamehash}{d11e3575d60bccaa56e3cba587aaae9f}
      \strng{authorfullhash}{d11e3575d60bccaa56e3cba587aaae9f}
      \field{labelalpha}{MP08}
      \field{sortinit}{M}
      \field{sortinithash}{c26a05ef03e4429073ed5c825140fac3}
      \field{labelnamesource}{author}
      \field{labeltitlesource}{title}
      \field{journaltitle}{J. Funct. Programming}
      \field{number}{1}
      \field{title}{Applicative programming with effects}
      \field{volume}{18}
      \field{year}{2008}
      \field{pages}{1\bibrangedash 13}
      \range{pages}{13}
    \endentry
    \entry{mcca:cons88}{article}{}
      \name{author}{1}{}{%
        {{hash=8ade194e4fcf852a0e2548f395004641}{%
           family={McCarty},
           familyi={M\bibinitperiod},
           given={D.C.},
           giveni={D\bibinitperiod}}}%
      }
      \strng{namehash}{8ade194e4fcf852a0e2548f395004641}
      \strng{fullhash}{8ade194e4fcf852a0e2548f395004641}
      \strng{authornamehash}{8ade194e4fcf852a0e2548f395004641}
      \strng{authorfullhash}{8ade194e4fcf852a0e2548f395004641}
      \field{labelalpha}{McC88}
      \field{sortinit}{M}
      \field{sortinithash}{c26a05ef03e4429073ed5c825140fac3}
      \field{labelnamesource}{author}
      \field{labeltitlesource}{title}
      \field{journaltitle}{The Journal of Symbolic Logic}
      \field{title}{Constructive validity is nonarithmetic}
      \field{volume}{53}
      \field{year}{1988}
      \field{pages}{1036\bibrangedash 1041}
      \range{pages}{6}
    \endentry
    \entry{mcca:inco91}{article}{}
      \name{author}{1}{}{%
        {{hash=8ade194e4fcf852a0e2548f395004641}{%
           family={McCarty},
           familyi={M\bibinitperiod},
           given={D.C.},
           giveni={D\bibinitperiod}}}%
      }
      \strng{namehash}{8ade194e4fcf852a0e2548f395004641}
      \strng{fullhash}{8ade194e4fcf852a0e2548f395004641}
      \strng{authornamehash}{8ade194e4fcf852a0e2548f395004641}
      \strng{authorfullhash}{8ade194e4fcf852a0e2548f395004641}
      \field{labelalpha}{McC91}
      \field{sortinit}{M}
      \field{sortinithash}{c26a05ef03e4429073ed5c825140fac3}
      \field{labelnamesource}{author}
      \field{labeltitlesource}{title}
      \field{journaltitle}{Notre Dame Journal of Formal Logic}
      \field{title}{Incompleteness in intuitionistic metamathematics}
      \field{volume}{32}
      \field{year}{1991}
      \field{pages}{323\bibrangedash 358}
      \range{pages}{36}
    \endentry
    \entry{MiliusL17:fi}{article}{}
      \name{author}{2}{}{%
        {{hash=dd3908f80ceafcff3de8d0813d83b14a}{%
           family={Milius},
           familyi={M\bibinitperiod},
           given={Stefan},
           giveni={S\bibinitperiod}}}%
        {{hash=8f5431086357f9f5a0eef79eb62f74f4}{%
           family={Litak},
           familyi={L\bibinitperiod},
           given={Tadeusz},
           giveni={T\bibinitperiod}}}%
      }
      \list{publisher}{1}{%
        {IOS Press}%
      }
      \strng{namehash}{7ebacbae0065221b999f8a88b318a8dd}
      \strng{fullhash}{7ebacbae0065221b999f8a88b318a8dd}
      \strng{authornamehash}{7ebacbae0065221b999f8a88b318a8dd}
      \strng{authorfullhash}{7ebacbae0065221b999f8a88b318a8dd}
      \field{labelalpha}{ML17}
      \field{sortinit}{M}
      \field{sortinithash}{c26a05ef03e4429073ed5c825140fac3}
      \field{labelnamesource}{author}
      \field{labeltitlesource}{title}
      \field{journaltitle}{Fundamenta Informaticae}
      \field{note}{special issue FiCS'13 edited by David Baelde, Arnaud Carayol, Ralph Matthes and Igor Walukiewicz}
      \field{title}{Guard Your Daggers and Traces: Properties of Guarded (Co-)recursion}
      \field{volume}{150}
      \field{year}{2017}
      \field{pages}{407\bibrangedash 449}
      \range{pages}{43}
      \verb{doi}
      \verb 10.3233/FI-2017-1475
      \endverb
      \verb{url}
      \verb http://arxiv.org/abs/1603.05214
      \endverb
    \endentry
    \entry{moer:mode13}{book}{}
      \name{author}{2}{}{%
        {{hash=8e7d26d866770635f2c8240d90787545}{%
           family={Moerdijk},
           familyi={M\bibinitperiod},
           given={I.},
           giveni={I\bibinitperiod}}}%
        {{hash=51ae9bbcc327e5c12aa8aae45ce655a0}{%
           family={Reyes},
           familyi={R\bibinitperiod},
           given={G.E.},
           giveni={G\bibinitperiod}}}%
      }
      \list{location}{1}{%
        {New York}%
      }
      \list{publisher}{1}{%
        {Springer Science \& Business Media}%
      }
      \strng{namehash}{82d84e927b48a5719cd1773b2ea364bb}
      \strng{fullhash}{82d84e927b48a5719cd1773b2ea364bb}
      \strng{authornamehash}{82d84e927b48a5719cd1773b2ea364bb}
      \strng{authorfullhash}{82d84e927b48a5719cd1773b2ea364bb}
      \field{labelalpha}{MR13}
      \field{sortinit}{M}
      \field{sortinithash}{c26a05ef03e4429073ed5c825140fac3}
      \field{labelnamesource}{author}
      \field{labeltitlesource}{title}
      \field{title}{Models for smooth infinitesimal analysis}
      \field{year}{2013}
    \endentry
    \entry{Mogelberg14:lics}{inproceedings}{}
      \name{author}{1}{}{%
        {{hash=87f3fdf5910beec6d8dba13dfb21e1be}{%
           family={M{\o{}}gelberg},
           familyi={M\bibinitperiod},
           given={Rasmus\bibnamedelima Ejlers},
           giveni={R\bibinitperiod\bibinitdelim E\bibinitperiod}}}%
      }
      \list{location}{1}{%
        {Vienna, Austria}%
      }
      \list{publisher}{1}{%
        {ACM}%
      }
      \strng{namehash}{87f3fdf5910beec6d8dba13dfb21e1be}
      \strng{fullhash}{87f3fdf5910beec6d8dba13dfb21e1be}
      \strng{authornamehash}{87f3fdf5910beec6d8dba13dfb21e1be}
      \strng{authorfullhash}{87f3fdf5910beec6d8dba13dfb21e1be}
      \field{labelalpha}{M\o{}g14}
      \field{sortinit}{M}
      \field{sortinithash}{c26a05ef03e4429073ed5c825140fac3}
      \field{labelnamesource}{author}
      \field{labeltitlesource}{title}
      \field{booktitle}{Proceedings of CSL-LiCS}
      \field{isbn}{978-1-4503-2886-9}
      \field{title}{A Type Theory for Productive Coprogramming via Guarded Recursion}
      \field{year}{2014}
      \field{pages}{71:1\bibrangedash 71:10}
      \range{pages}{-1}
      \verb{doi}
      \verb 10.1145/2603088.2603132
      \endverb
      \verb{url}
      \verb http://doi.acm.org/10.1145/2603088.2603132
      \endverb
      \keyw{categorical semantics,corecursion,denotational semantics,dependent types,guarded recursion}
    \endentry
    \entry{Moggi91:ic}{article}{}
      \name{author}{1}{}{%
        {{hash=c039bd66d17f15b251307a409c68df7c}{%
           family={Moggi},
           familyi={M\bibinitperiod},
           given={Eugenio},
           giveni={E\bibinitperiod}}}%
      }
      \list{location}{1}{%
        {Duluth, MN, USA}%
      }
      \list{publisher}{1}{%
        {Academic Press, Inc.}%
      }
      \strng{namehash}{c039bd66d17f15b251307a409c68df7c}
      \strng{fullhash}{c039bd66d17f15b251307a409c68df7c}
      \strng{authornamehash}{c039bd66d17f15b251307a409c68df7c}
      \strng{authorfullhash}{c039bd66d17f15b251307a409c68df7c}
      \field{labelalpha}{Mog91}
      \field{sortinit}{M}
      \field{sortinithash}{c26a05ef03e4429073ed5c825140fac3}
      \field{labelnamesource}{author}
      \field{labeltitlesource}{title}
      \field{issn}{0890-5401}
      \field{issue}{1}
      \field{journaltitle}{Inform. and Comput.}
      \field{month}{7}
      \field{title}{Notions of computation and monads}
      \field{volume}{93}
      \field{year}{1991}
      \field{pages}{55\bibrangedash 92}
      \range{pages}{38}
      \verb{doi}
      \verb 10.1016/0890-5401(91)90052-4
      \endverb
      \verb{url}
      \verb http://portal.acm.org/citation.cfm?id=116981.116984
      \endverb
    \endentry
    \entry{Murphey05}{book}{}
      \name{author}{1}{}{%
        {{hash=6fbadd65c3cc676057475706694249be}{%
           family={Murphey},
           familyi={M\bibinitperiod},
           given={Murray\bibnamedelima G.},
           giveni={M\bibinitperiod\bibinitdelim G\bibinitperiod}}}%
      }
      \list{publisher}{1}{%
        {SUNY Press}%
      }
      \strng{namehash}{6fbadd65c3cc676057475706694249be}
      \strng{fullhash}{6fbadd65c3cc676057475706694249be}
      \strng{authornamehash}{6fbadd65c3cc676057475706694249be}
      \strng{authorfullhash}{6fbadd65c3cc676057475706694249be}
      \field{labelalpha}{Mur05}
      \field{sortinit}{M}
      \field{sortinithash}{c26a05ef03e4429073ed5c825140fac3}
      \field{labelnamesource}{author}
      \field{labeltitlesource}{title}
      \field{series}{SUNY Series in Philosophy}
      \field{title}{C. I. Lewis: The Last Great Pragmatist}
      \field{year}{2005}
    \endentry
    \entry{Nakano00:lics}{inproceedings}{}
      \name{author}{1}{}{%
        {{hash=92df934bc3496608c1580dd843703971}{%
           family={Nakano},
           familyi={N\bibinitperiod},
           given={Hiroshi},
           giveni={H\bibinitperiod}}}%
      }
      \list{organization}{1}{%
        {IEEE}%
      }
      \strng{namehash}{92df934bc3496608c1580dd843703971}
      \strng{fullhash}{92df934bc3496608c1580dd843703971}
      \strng{authornamehash}{92df934bc3496608c1580dd843703971}
      \strng{authorfullhash}{92df934bc3496608c1580dd843703971}
      \field{labelalpha}{Nak00}
      \field{sortinit}{N}
      \field{sortinithash}{1163c28585427c673ad5a010cbf82f52}
      \field{labelnamesource}{author}
      \field{labeltitlesource}{title}
      \field{booktitle}{Proceedings of {LiCS}}
      \field{isbn}{0-7695-0725-5}
      \field{title}{A Modality for Recursion}
      \field{year}{2000}
      \field{pages}{255\bibrangedash 266}
      \range{pages}{12}
    \endentry
    \entry{Nakano01:tacs}{inproceedings}{}
      \name{author}{1}{}{%
        {{hash=92df934bc3496608c1580dd843703971}{%
           family={Nakano},
           familyi={N\bibinitperiod},
           given={Hiroshi},
           giveni={H\bibinitperiod}}}%
      }
      \name{editor}{2}{}{%
        {{hash=f22f6e9ad2379a5e5de8a329e3c0e010}{%
           family={Kobayashi},
           familyi={K\bibinitperiod},
           given={Naoki},
           giveni={N\bibinitperiod}}}%
        {{hash=8eb1c0a7778a466aa612ab3259d489ec}{%
           family={Pierce},
           familyi={P\bibinitperiod},
           given={Benjamin\bibnamedelima C.},
           giveni={B\bibinitperiod\bibinitdelim C\bibinitperiod}}}%
      }
      \list{publisher}{1}{%
        {Springer}%
      }
      \strng{namehash}{92df934bc3496608c1580dd843703971}
      \strng{fullhash}{92df934bc3496608c1580dd843703971}
      \strng{authornamehash}{92df934bc3496608c1580dd843703971}
      \strng{authorfullhash}{92df934bc3496608c1580dd843703971}
      \strng{editornamehash}{55f789c7195e24e4093940f3e4206947}
      \strng{editorfullhash}{55f789c7195e24e4093940f3e4206947}
      \field{labelalpha}{Nak01}
      \field{sortinit}{N}
      \field{sortinithash}{1163c28585427c673ad5a010cbf82f52}
      \field{labelnamesource}{author}
      \field{labeltitlesource}{title}
      \field{booktitle}{Proceedings of {TACS}}
      \field{isbn}{3-540-42736-8}
      \field{series}{{LNCS}}
      \field{title}{Fixed-Point Logic with the Approximation Modality and Its {K}ripke Completeness}
      \field{volume}{2215}
      \field{year}{2001}
      \field{pages}{165\bibrangedash 182}
      \range{pages}{18}
    \endentry
    \entry{OHearnP99:jsl}{article}{}
      \name{author}{2}{}{%
        {{hash=5c1a8414aa0ced0c02d7aa7d269f53a8}{%
           family={O'Hearn},
           familyi={O\bibinitperiod},
           given={Peter\bibnamedelima W.},
           giveni={P\bibinitperiod\bibinitdelim W\bibinitperiod}}}%
        {{hash=4c353e5061ad3d7e3852fe7e65c578d4}{%
           family={Pym},
           familyi={P\bibinitperiod},
           given={David\bibnamedelima J.},
           giveni={D\bibinitperiod\bibinitdelim J\bibinitperiod}}}%
      }
      \list{language}{1}{%
        {English}%
      }
      \list{publisher}{1}{%
        {Association for Symbolic Logic}%
      }
      \strng{namehash}{3ae1a0ea7d2eafb4eca01a34adf1391f}
      \strng{fullhash}{3ae1a0ea7d2eafb4eca01a34adf1391f}
      \strng{authornamehash}{3ae1a0ea7d2eafb4eca01a34adf1391f}
      \strng{authorfullhash}{3ae1a0ea7d2eafb4eca01a34adf1391f}
      \field{labelalpha}{OP99}
      \field{sortinit}{O}
      \field{sortinithash}{f5d80ef71b5b58852663ecfd775b175a}
      \field{labelnamesource}{author}
      \field{labeltitlesource}{title}
      \field{abstract}{We introduce a logic BI in which a multiplicative (or linear) and an additive (or intuitionistic) implication live side-by-side. The propositional version of BI arises from an analysis of the proof-theoretic relationship between conjunction and implication; it can be viewed as a merging of intuitionistic logic and multiplicative intuitionistic linear logic. The naturality of BI can be seen categorically: models of propositional BI's proofs are given by bicartesian doubly closed categories, i.e., categories which freely combine the semantics of propositional intuitionistic logic and propositional multiplicative intuitionistic linear logic. The predicate version of BI includes, in addition to standard additive quantifiers, multiplicative (or intensional) quantifiers \x{fffd}\x{fffd}\x{fffd} <sub>new</sub> and \x{fffd}\x{fffd}\x{fffd} <sub>new</sub> which arise from observing restrictions on structural rules on the level of terms as well as propositions. We discuss computational interpretations, based on sharing, at both the propositional and predicate levels.}
      \field{issn}{10798986}
      \field{journaltitle}{B. Symb. Log.}
      \field{number}{2}
      \field{title}{The Logic of Bunched Implications}
      \field{volume}{5}
      \field{year}{1999}
      \field{pages}{215\bibrangedash 244}
      \range{pages}{30}
      \verb{url}
      \verb http://www.jstor.org/stable/421090
      \endverb
    \endentry
    \entry{Parry70}{article}{}
      \name{author}{1}{}{%
        {{hash=046a643790dc78d7eae9e65383fe1bab}{%
           family={Parry},
           familyi={P\bibinitperiod},
           given={William\bibnamedelima Tuthill},
           giveni={W\bibinitperiod\bibinitdelim T\bibinitperiod}}}%
      }
      \list{publisher}{1}{%
        {Duke University Press}%
      }
      \strng{namehash}{046a643790dc78d7eae9e65383fe1bab}
      \strng{fullhash}{046a643790dc78d7eae9e65383fe1bab}
      \strng{authornamehash}{046a643790dc78d7eae9e65383fe1bab}
      \strng{authorfullhash}{046a643790dc78d7eae9e65383fe1bab}
      \field{labelalpha}{Par70}
      \field{sortinit}{P}
      \field{sortinithash}{24100cef455d7974167575052c29146e}
      \field{labelnamesource}{author}
      \field{labeltitlesource}{title}
      \field{journaltitle}{Notre Dame J. Formal Logic}
      \field{month}{04}
      \field{number}{2}
      \field{title}{In memoriam: {C}larence {I}rving {L}ewis (1883--1964).}
      \field{volume}{11}
      \field{year}{1970}
      \field{pages}{129\bibrangedash 140}
      \range{pages}{12}
      \verb{doi}
      \verb 10.1305/ndjfl/1093893933
      \endverb
      \verb{url}
      \verb http://dx.doi.org/10.1305/ndjfl/1093893933
      \endverb
    \endentry
    \entry{Paterson03:fop}{incollection}{}
      \name{author}{1}{}{%
        {{hash=1f3f8610e67d8b56b596848671113178}{%
           family={Paterson},
           familyi={P\bibinitperiod},
           given={Ross},
           giveni={R\bibinitperiod}}}%
      }
      \name{editor}{2}{}{%
        {{hash=d7effacc4950f11aea627dc1bac84255}{%
           family={Gibbons},
           familyi={G\bibinitperiod},
           given={Jeremy},
           giveni={J\bibinitperiod}}}%
        {{hash=775d2bba69321780e9bbbaa86729733e}{%
           family={Moor},
           familyi={M\bibinitperiod},
           given={Oege},
           giveni={O\bibinitperiod},
           prefix={de},
           prefixi={d\bibinitperiod}}}%
      }
      \list{publisher}{1}{%
        {Palgrave}%
      }
      \strng{namehash}{1f3f8610e67d8b56b596848671113178}
      \strng{fullhash}{1f3f8610e67d8b56b596848671113178}
      \strng{authornamehash}{1f3f8610e67d8b56b596848671113178}
      \strng{authorfullhash}{1f3f8610e67d8b56b596848671113178}
      \strng{editornamehash}{fd88beba7b3a28859698958f899bf34e}
      \strng{editorfullhash}{fd88beba7b3a28859698958f899bf34e}
      \field{labelalpha}{Pat03}
      \field{sortinit}{P}
      \field{sortinithash}{24100cef455d7974167575052c29146e}
      \field{labelnamesource}{author}
      \field{labeltitlesource}{title}
      \field{booktitle}{The Fun of Programming}
      \field{title}{Arrows and Computation}
      \field{year}{2003}
      \field{pages}{201\bibrangedash 222}
      \range{pages}{22}
      \verb{url}
      \verb http://www.soi.city.ac.uk/~ross/papers/fop.html
      \endverb
    \endentry
    \entry{heyt:malo90}{proceedings}{}
      \name{editor}{1}{}{%
        {{hash=6481cde4d8cb9e1508c2c95dec87911a}{%
           family={Petkov},
           familyi={P\bibinitperiod},
           given={P.P.},
           giveni={P\bibinitperiod}}}%
      }
      \list{publisher}{1}{%
        {Plenum Press, Boston}%
      }
      \strng{namehash}{6481cde4d8cb9e1508c2c95dec87911a}
      \strng{fullhash}{6481cde4d8cb9e1508c2c95dec87911a}
      \strng{editornamehash}{6481cde4d8cb9e1508c2c95dec87911a}
      \strng{editorfullhash}{6481cde4d8cb9e1508c2c95dec87911a}
      \field{labelalpha}{Pet90}
      \field{sortinit}{P}
      \field{sortinithash}{24100cef455d7974167575052c29146e}
      \field{labelnamesource}{editor}
      \field{labeltitlesource}{title}
      \field{title}{Mathematical logic, {P}roceedings of the {H}eyting 1988 summer school in {V}arna, {B}ulgaria}
      \field{year}{1990}
    \endentry
    \entry{DBLP:conf/fossacs/2015}{proceedings}{}
      \name{editor}{1}{}{%
        {{hash=2e998c3985e6a8efab31950621ac04ff}{%
           family={Pitts},
           familyi={P\bibinitperiod},
           given={Andrew\bibnamedelima M.},
           giveni={A\bibinitperiod\bibinitdelim M\bibinitperiod}}}%
      }
      \list{publisher}{1}{%
        {Springer}%
      }
      \strng{namehash}{2e998c3985e6a8efab31950621ac04ff}
      \strng{fullhash}{2e998c3985e6a8efab31950621ac04ff}
      \strng{editornamehash}{2e998c3985e6a8efab31950621ac04ff}
      \strng{editorfullhash}{2e998c3985e6a8efab31950621ac04ff}
      \field{labelalpha}{Pit15}
      \field{sortinit}{P}
      \field{sortinithash}{24100cef455d7974167575052c29146e}
      \true{crossrefsource}
      \field{labelnamesource}{editor}
      \field{labeltitlesource}{title}
      \field{series}{{LNCS}}
      \field{title}{Proceedings of {FoSSaCS}}
      \field{volume}{9034}
      \field{year}{2015}
    \endentry
    \entry{plis:surv09}{article}{}
      \name{author}{1}{}{%
        {{hash=b6d0e191a5944c545ace8de9274c688f}{%
           family={Plisko},
           familyi={P\bibinitperiod},
           given={V.E.},
           giveni={V\bibinitperiod}}}%
      }
      \list{publisher}{1}{%
        {Cambridge Univ Press}%
      }
      \strng{namehash}{b6d0e191a5944c545ace8de9274c688f}
      \strng{fullhash}{b6d0e191a5944c545ace8de9274c688f}
      \strng{authornamehash}{b6d0e191a5944c545ace8de9274c688f}
      \strng{authorfullhash}{b6d0e191a5944c545ace8de9274c688f}
      \field{labelalpha}{Pli09}
      \field{sortinit}{P}
      \field{sortinithash}{24100cef455d7974167575052c29146e}
      \field{labelnamesource}{author}
      \field{labeltitlesource}{title}
      \field{journaltitle}{The Bulletin of Symbolic Logic}
      \field{number}{01}
      \field{title}{{A survey of propositional realizability logic}}
      \field{volume}{15}
      \field{year}{2009}
      \field{pages}{1\bibrangedash 42}
      \range{pages}{42}
    \endentry
    \entry{Proietti2012}{article}{}
      \name{author}{1}{}{%
        {{hash=80828a62878770b356584bc6567bc947}{%
           family={Proietti},
           familyi={P\bibinitperiod},
           given={Carlo},
           giveni={C\bibinitperiod}}}%
      }
      \strng{namehash}{80828a62878770b356584bc6567bc947}
      \strng{fullhash}{80828a62878770b356584bc6567bc947}
      \strng{authornamehash}{80828a62878770b356584bc6567bc947}
      \strng{authorfullhash}{80828a62878770b356584bc6567bc947}
      \field{labelalpha}{Pro12}
      \field{sortinit}{P}
      \field{sortinithash}{24100cef455d7974167575052c29146e}
      \field{labelnamesource}{author}
      \field{labeltitlesource}{title}
      \field{issn}{1573-0433}
      \field{journaltitle}{Journal of Philosophical Logic}
      \field{number}{5}
      \field{title}{Intuitionistic Epistemic Logic, Kripke Models and Fitch's Paradox}
      \field{volume}{41}
      \field{year}{2012}
      \field{pages}{877\bibrangedash 900}
      \range{pages}{24}
      \verb{doi}
      \verb 10.1007/s10992-011-9207-1
      \endverb
      \verb{url}
      \verb http://dx.doi.org/10.1007/s10992-011-9207-1
      \endverb
    \endentry
    \entry{PymOHY04:tcs}{article}{}
      \name{author}{3}{}{%
        {{hash=4c353e5061ad3d7e3852fe7e65c578d4}{%
           family={Pym},
           familyi={P\bibinitperiod},
           given={David\bibnamedelima J.},
           giveni={D\bibinitperiod\bibinitdelim J\bibinitperiod}}}%
        {{hash=5c1a8414aa0ced0c02d7aa7d269f53a8}{%
           family={O'Hearn},
           familyi={O\bibinitperiod},
           given={Peter\bibnamedelima W.},
           giveni={P\bibinitperiod\bibinitdelim W\bibinitperiod}}}%
        {{hash=f54dd89dfae376ce30704e5a2ea3b4b1}{%
           family={Yang},
           familyi={Y\bibinitperiod},
           given={Hongseok},
           giveni={H\bibinitperiod}}}%
      }
      \strng{namehash}{9de00bdb9567cee5379ca31fd8558d5a}
      \strng{fullhash}{9de00bdb9567cee5379ca31fd8558d5a}
      \strng{authornamehash}{9de00bdb9567cee5379ca31fd8558d5a}
      \strng{authorfullhash}{9de00bdb9567cee5379ca31fd8558d5a}
      \field{labelalpha}{POY04}
      \field{sortinit}{P}
      \field{sortinithash}{24100cef455d7974167575052c29146e}
      \field{labelnamesource}{author}
      \field{labeltitlesource}{title}
      \field{issn}{0304-3975}
      \field{journaltitle}{Theoretical Computer Science}
      \field{note}{Mathematical Foundations of Programming Semantics}
      \field{number}{1}
      \field{title}{Possible worlds and resources: the semantics of {BI}}
      \field{volume}{315}
      \field{year}{2004}
      \field{pages}{257\bibrangedash 305}
      \range{pages}{49}
      \verb{doi}
      \verb http://dx.doi.org/10.1016/j.tcs.2003.11.020
      \endverb
      \verb{url}
      \verb http://www.sciencedirect.com/science/article/pii/S0304397503006248
      \endverb
    \endentry
    \entry{Pym02:book}{book}{}
      \name{author}{1}{}{%
        {{hash=0119af8138d1516a727b57f52c825c59}{%
           family={Pym},
           familyi={P\bibinitperiod},
           given={D.J.},
           giveni={D\bibinitperiod}}}%
      }
      \list{publisher}{1}{%
        {Kluwer Academic Publishers}%
      }
      \strng{namehash}{0119af8138d1516a727b57f52c825c59}
      \strng{fullhash}{0119af8138d1516a727b57f52c825c59}
      \strng{authornamehash}{0119af8138d1516a727b57f52c825c59}
      \strng{authorfullhash}{0119af8138d1516a727b57f52c825c59}
      \field{labelalpha}{Pym02}
      \field{sortinit}{P}
      \field{sortinithash}{24100cef455d7974167575052c29146e}
      \field{labelnamesource}{author}
      \field{labeltitlesource}{title}
      \field{isbn}{9789048160723}
      \field{series}{Appl. Log. Ser.}
      \field{title}{The Semantics and Proof Theory of the Logic of Bunched Implications}
      \field{volume}{26}
      \field{year}{2002}
    \endentry
    \entry{rena:inte89}{article}{}
      \name{author}{1}{}{%
        {{hash=320b15a8f7721f24ab526f75f17ae162}{%
           family={Renardel\bibnamedelimb de\bibnamedelima Lavalette},
           familyi={R\bibinitperiod\bibinitdelim d\bibinitperiod\bibinitdelim L\bibinitperiod},
           given={G.R.},
           giveni={G\bibinitperiod}}}%
      }
      \strng{namehash}{320b15a8f7721f24ab526f75f17ae162}
      \strng{fullhash}{320b15a8f7721f24ab526f75f17ae162}
      \strng{authornamehash}{320b15a8f7721f24ab526f75f17ae162}
      \strng{authorfullhash}{320b15a8f7721f24ab526f75f17ae162}
      \field{labelalpha}{Ren89}
      \field{sortinit}{R}
      \field{sortinithash}{c15bc8eb6936bc6b3c8baa9e8575af53}
      \field{labelnamesource}{author}
      \field{labeltitlesource}{title}
      \field{journaltitle}{{Journal of Symbolic Logic}}
      \field{number}{04}
      \field{title}{{Interpolation in fragments of intuitionistic propositional logic}}
      \field{volume}{54}
      \field{year}{1989}
      \field{pages}{1419\bibrangedash 1430}
      \range{pages}{12}
    \endentry
    \entry{Severi2017}{incollection}{}
      \name{author}{1}{}{%
        {{hash=f08beafad423a2056fc2a458e32a8283}{%
           family={Severi},
           familyi={S\bibinitperiod},
           given={Paula},
           giveni={P\bibinitperiod}}}%
      }
      \name{editor}{2}{}{%
        {{hash=aae57fe3f3726fcbced8c287c38380ec}{%
           family={Esparza},
           familyi={E\bibinitperiod},
           given={Javier},
           giveni={J\bibinitperiod}}}%
        {{hash=1a19969a1159c2f53f8fa54a91ad4679}{%
           family={Murawski},
           familyi={M\bibinitperiod},
           given={Andrzej\bibnamedelima S.},
           giveni={A\bibinitperiod\bibinitdelim S\bibinitperiod}}}%
      }
      \list{location}{1}{%
        {Berlin, Heidelberg}%
      }
      \list{publisher}{1}{%
        {Springer Berlin Heidelberg}%
      }
      \strng{namehash}{f08beafad423a2056fc2a458e32a8283}
      \strng{fullhash}{f08beafad423a2056fc2a458e32a8283}
      \strng{authornamehash}{f08beafad423a2056fc2a458e32a8283}
      \strng{authorfullhash}{f08beafad423a2056fc2a458e32a8283}
      \strng{editornamehash}{0c7c942b9572336772010ab045e94580}
      \strng{editorfullhash}{0c7c942b9572336772010ab045e94580}
      \field{labelalpha}{Sev17}
      \field{sortinit}{S}
      \field{sortinithash}{3c1547c63380458f8ca90e40ed14b83e}
      \field{labelnamesource}{author}
      \field{labeltitlesource}{title}
      \field{booktitle}{Proceedings of FOSSACS 2017}
      \field{isbn}{978-3-662-54458-7}
      \field{title}{A Light Modality for Recursion}
      \field{year}{2017}
      \field{pages}{499\bibrangedash 516}
      \range{pages}{18}
      \verb{doi}
      \verb 10.1007/978-3-662-54458-7_29
      \endverb
      \verb{url}
      \verb http://dx.doi.org/10.1007/978-3-662-54458-7_29
      \endverb
    \endentry
    \entry{shav:smar94}{article}{}
      \name{author}{1}{}{%
        {{hash=2fe8ee496e2b1271729f9c44679bf57a}{%
           family={Shavrukov},
           familyi={S\bibinitperiod},
           given={V.Yu.},
           giveni={V\bibinitperiod}}}%
      }
      \strng{namehash}{2fe8ee496e2b1271729f9c44679bf57a}
      \strng{fullhash}{2fe8ee496e2b1271729f9c44679bf57a}
      \strng{authornamehash}{2fe8ee496e2b1271729f9c44679bf57a}
      \strng{authorfullhash}{2fe8ee496e2b1271729f9c44679bf57a}
      \field{labelalpha}{Sha94}
      \field{sortinit}{S}
      \field{sortinithash}{3c1547c63380458f8ca90e40ed14b83e}
      \field{labelnamesource}{author}
      \field{labeltitlesource}{title}
      \field{journaltitle}{Notre Dame Journal of Formal Logic}
      \field{title}{A smart child of {P}eano's}
      \field{volume}{35}
      \field{year}{1994}
      \field{pages}{161\bibrangedash 185}
      \range{pages}{25}
    \endentry
    \entry{shav:suba93}{article}{}
      \name{author}{1}{}{%
        {{hash=2fe8ee496e2b1271729f9c44679bf57a}{%
           family={Shavrukov},
           familyi={S\bibinitperiod},
           given={V.Yu.},
           giveni={V\bibinitperiod}}}%
      }
      \strng{namehash}{2fe8ee496e2b1271729f9c44679bf57a}
      \strng{fullhash}{2fe8ee496e2b1271729f9c44679bf57a}
      \strng{authornamehash}{2fe8ee496e2b1271729f9c44679bf57a}
      \strng{authorfullhash}{2fe8ee496e2b1271729f9c44679bf57a}
      \field{labelalpha}{Sha93}
      \field{sortinit}{S}
      \field{sortinithash}{3c1547c63380458f8ca90e40ed14b83e}
      \field{labelnamesource}{author}
      \field{labeltitlesource}{title}
      \field{journaltitle}{Dissertationes mathematicae (Rozprawy matematyczne)}
      \field{title}{Subalgebras of diagonalizable algebras of theories containing arithmetic}
      \field{volume}{CCCXXIII}
      \field{year}{1993}
    \endentry
    \entry{shav:rela88}{report}{}
      \name{author}{1}{}{%
        {{hash=2fe8ee496e2b1271729f9c44679bf57a}{%
           family={Shavrukov},
           familyi={S\bibinitperiod},
           given={V.Yu.},
           giveni={V\bibinitperiod}}}%
      }
      \list{institution}{1}{%
        {Stekhlov Mathematical Institute, Moscow}%
      }
      \strng{namehash}{2fe8ee496e2b1271729f9c44679bf57a}
      \strng{fullhash}{2fe8ee496e2b1271729f9c44679bf57a}
      \strng{authornamehash}{2fe8ee496e2b1271729f9c44679bf57a}
      \strng{authorfullhash}{2fe8ee496e2b1271729f9c44679bf57a}
      \field{labelalpha}{Sha88}
      \field{sortinit}{S}
      \field{sortinithash}{3c1547c63380458f8ca90e40ed14b83e}
      \field{labelnamesource}{author}
      \field{labeltitlesource}{title}
      \field{number}{Report No.5}
      \field{title}{The logic of relative interpretability over {P}eano arithmetic (in {R}ussian)}
      \field{type}{techreport}
      \field{year}{1988}
    \endentry
    \entry{SieczkowskiBB15:itp}{inproceedings}{}
      \name{author}{3}{}{%
        {{hash=8e6c8d55ba05cecc763597cfdfa4c0c4}{%
           family={Sieczkowski},
           familyi={S\bibinitperiod},
           given={Filip},
           giveni={F\bibinitperiod}}}%
        {{hash=c246861cdfff84700e5f8605b429159f}{%
           family={Bizjak},
           familyi={B\bibinitperiod},
           given={Ales},
           giveni={A\bibinitperiod}}}%
        {{hash=f4714354745ea31cf6c543434102f444}{%
           family={Birkedal},
           familyi={B\bibinitperiod},
           given={Lars},
           giveni={L\bibinitperiod}}}%
      }
      \name{editor}{2}{}{%
        {{hash=7e646f94ce1ca043d93239cb0cf55cdf}{%
           family={Urban},
           familyi={U\bibinitperiod},
           given={Christian},
           giveni={C\bibinitperiod}}}%
        {{hash=55f1e3e1f91df039fc0dcabef4248b76}{%
           family={Zhang},
           familyi={Z\bibinitperiod},
           given={Xingyuan},
           giveni={X\bibinitperiod}}}%
      }
      \list{publisher}{1}{%
        {Springer}%
      }
      \strng{namehash}{d60bd74677e45bd45315986f8bdd5380}
      \strng{fullhash}{d60bd74677e45bd45315986f8bdd5380}
      \strng{authornamehash}{d60bd74677e45bd45315986f8bdd5380}
      \strng{authorfullhash}{d60bd74677e45bd45315986f8bdd5380}
      \strng{editornamehash}{d6b6cf9420bc7e84da1c160034b4c725}
      \strng{editorfullhash}{d6b6cf9420bc7e84da1c160034b4c725}
      \field{labelalpha}{SBB15}
      \field{sortinit}{S}
      \field{sortinithash}{3c1547c63380458f8ca90e40ed14b83e}
      \field{labelnamesource}{author}
      \field{labeltitlesource}{title}
      \field{booktitle}{Proceedings of {ITP}}
      \field{isbn}{978-3-319-22101-4}
      \field{series}{{LNCS}}
      \field{title}{Modu{R}es: {A} {C}oq Library for Modular Reasoning About Concurrent Higher-Order Imperative Programming Languages}
      \field{volume}{9236}
      \field{year}{2015}
      \field{pages}{375\bibrangedash 390}
      \range{pages}{16}
      \verb{doi}
      \verb 10.1007/978-3-319-22102-1_25
      \endverb
      \verb{url}
      \verb http://dx.doi.org/10.1007/978-3-319-22102-1_25
      \endverb
    \endentry
    \entry{Simpson94:phd}{thesis}{}
      \name{author}{1}{}{%
        {{hash=e7a2336b1deb4382df8d055c7b4db2ee}{%
           family={Simpson},
           familyi={S\bibinitperiod},
           given={Alex\bibnamedelima K.},
           giveni={A\bibinitperiod\bibinitdelim K\bibinitperiod}}}%
      }
      \list{institution}{1}{%
        {University of Edinburgh}%
      }
      \strng{namehash}{e7a2336b1deb4382df8d055c7b4db2ee}
      \strng{fullhash}{e7a2336b1deb4382df8d055c7b4db2ee}
      \strng{authornamehash}{e7a2336b1deb4382df8d055c7b4db2ee}
      \strng{authorfullhash}{e7a2336b1deb4382df8d055c7b4db2ee}
      \field{labelalpha}{Sim94}
      \field{sortinit}{S}
      \field{sortinithash}{3c1547c63380458f8ca90e40ed14b83e}
      \field{labelnamesource}{author}
      \field{labeltitlesource}{title}
      \field{title}{{The Proof Theory and Semantics of Intuitionistic Modal Logic}}
      \field{type}{phdthesis}
      \field{year}{1994}
      \verb{url}
      \verb http://homepages.inf.ed.ac.uk/als/Research/thesis.ps.gz
      \endverb
      \keyw{kripke-models,modal-logic,natural-deduction}
    \endentry
    \entry{smor:appl73}{incollection}{}
      \name{author}{1}{}{%
        {{hash=b14df81bb21719285d6b80b03eaf1644}{%
           family={Smory\'{n}ski},
           familyi={S\bibinitperiod},
           given={C.},
           giveni={C\bibinitperiod}}}%
      }
      \name{editor}{1}{}{%
        {{hash=2bd43d4291c46b9a1f9e75cf54cccae3}{%
           family={Troelstra},
           familyi={T\bibinitperiod},
           given={A.S.},
           giveni={A\bibinitperiod}}}%
      }
      \list{location}{1}{%
        {Berlin}%
      }
      \list{publisher}{1}{%
        {Springer}%
      }
      \strng{namehash}{b14df81bb21719285d6b80b03eaf1644}
      \strng{fullhash}{b14df81bb21719285d6b80b03eaf1644}
      \strng{authornamehash}{b14df81bb21719285d6b80b03eaf1644}
      \strng{authorfullhash}{b14df81bb21719285d6b80b03eaf1644}
      \strng{editornamehash}{2bd43d4291c46b9a1f9e75cf54cccae3}
      \strng{editorfullhash}{2bd43d4291c46b9a1f9e75cf54cccae3}
      \field{labelalpha}{Smo73}
      \field{sortinit}{S}
      \field{sortinithash}{3c1547c63380458f8ca90e40ed14b83e}
      \field{labelnamesource}{author}
      \field{labeltitlesource}{title}
      \field{booktitle}{{Metamathematical Investigations of Intuitionistic Arithmetic and Analysis}}
      \field{series}{Springer Lecture Notes 344}
      \field{title}{{Applications of {K}ripke Models}}
      \field{year}{1973}
      \field{pages}{324\bibrangedash 391}
      \range{pages}{68}
    \endentry
    \entry{smor:self85}{book}{}
      \name{author}{1}{}{%
        {{hash=b14df81bb21719285d6b80b03eaf1644}{%
           family={Smory\'{n}ski},
           familyi={S\bibinitperiod},
           given={C.},
           giveni={C\bibinitperiod}}}%
      }
      \list{location}{1}{%
        {New York}%
      }
      \list{publisher}{1}{%
        {Springer}%
      }
      \strng{namehash}{b14df81bb21719285d6b80b03eaf1644}
      \strng{fullhash}{b14df81bb21719285d6b80b03eaf1644}
      \strng{authornamehash}{b14df81bb21719285d6b80b03eaf1644}
      \strng{authorfullhash}{b14df81bb21719285d6b80b03eaf1644}
      \field{labelalpha}{Smo85}
      \field{sortinit}{S}
      \field{sortinithash}{3c1547c63380458f8ca90e40ed14b83e}
      \field{labelnamesource}{author}
      \field{labeltitlesource}{title}
      \field{series}{Universitext}
      \field{title}{{Self-Reference and Modal Logic}}
      \field{year}{1985}
    \endentry
    \entry{solo:prov76}{article}{}
      \name{author}{1}{}{%
        {{hash=7c986aba9e7af9692ee741d15d69dea3}{%
           family={Solovay},
           familyi={S\bibinitperiod},
           given={R.M.},
           giveni={R\bibinitperiod}}}%
      }
      \strng{namehash}{7c986aba9e7af9692ee741d15d69dea3}
      \strng{fullhash}{7c986aba9e7af9692ee741d15d69dea3}
      \strng{authornamehash}{7c986aba9e7af9692ee741d15d69dea3}
      \strng{authorfullhash}{7c986aba9e7af9692ee741d15d69dea3}
      \field{labelalpha}{Sol76}
      \field{sortinit}{S}
      \field{sortinithash}{3c1547c63380458f8ca90e40ed14b83e}
      \field{labelnamesource}{author}
      \field{labeltitlesource}{title}
      \field{journaltitle}{Israel Journal of Mathematics}
      \field{title}{Provability interpretations of modal logic}
      \field{volume}{25}
      \field{year}{1976}
      \field{pages}{287\bibrangedash 304}
      \range{pages}{18}
    \endentry
    \entry{SorensenU06:book}{book}{}
      \name{author}{2}{}{%
        {{hash=83e8b113c0bac5c881ae6d7405764255}{%
           family={S{\o{}}rensen},
           familyi={S\bibinitperiod},
           given={Morten\bibnamedelima Heine},
           giveni={M\bibinitperiod\bibinitdelim H\bibinitperiod}}}%
        {{hash=e457a88e05f324fb1406eb3ba848e68f}{%
           family={Urzyczyn},
           familyi={U\bibinitperiod},
           given={Pawel},
           giveni={P\bibinitperiod}}}%
      }
      \list{location}{1}{%
        {New York, NY, USA}%
      }
      \list{publisher}{1}{%
        {Elsevier Science Inc.}%
      }
      \strng{namehash}{bc8616851362c944eb4cb9494c3ba70d}
      \strng{fullhash}{bc8616851362c944eb4cb9494c3ba70d}
      \strng{authornamehash}{bc8616851362c944eb4cb9494c3ba70d}
      \strng{authorfullhash}{bc8616851362c944eb4cb9494c3ba70d}
      \field{labelalpha}{SU06}
      \field{sortinit}{S}
      \field{sortinithash}{3c1547c63380458f8ca90e40ed14b83e}
      \field{labelnamesource}{author}
      \field{labeltitlesource}{title}
      \field{isbn}{0444520775}
      \field{series}{Stud. Logic Found. Math.}
      \field{title}{Lectures on the Curry-Howard Isomorphism}
      \field{volume}{149}
      \field{year}{2006}
    \endentry
    \entry{Sotirov84:ml}{inproceedings}{}
      \name{author}{1}{}{%
        {{hash=befaf23d639f24c5172ace268973740a}{%
           family={Sotirov},
           familyi={S\bibinitperiod},
           given={Vladimir},
           giveni={V\bibinitperiod}}}%
      }
      \strng{namehash}{befaf23d639f24c5172ace268973740a}
      \strng{fullhash}{befaf23d639f24c5172ace268973740a}
      \strng{authornamehash}{befaf23d639f24c5172ace268973740a}
      \strng{authorfullhash}{befaf23d639f24c5172ace268973740a}
      \field{labelalpha}{Sot84}
      \field{sortinit}{S}
      \field{sortinithash}{3c1547c63380458f8ca90e40ed14b83e}
      \field{labelnamesource}{author}
      \field{labeltitlesource}{title}
      \field{booktitle}{Mathematical Logic, Proc. Conf. Math. Logic Dedicated to the Memory of A. A. Markov (1903 - 1979), Sofia, September 22 - 23, 1980}
      \field{title}{Modal theories with intuitionistic logic}
      \field{year}{1984}
      \field{pages}{139\bibrangedash 171}
      \range{pages}{33}
    \endentry
    \entry{svej:prov00}{article}{}
      \name{author}{1}{}{%
        {{hash=8090f81c8f13182bdab842a5cb537ead}{%
           family={\v{S}vejdar},
           familyi={\v{S}\bibinitperiod},
           given={V.},
           giveni={V\bibinitperiod}}}%
      }
      \strng{namehash}{8090f81c8f13182bdab842a5cb537ead}
      \strng{fullhash}{8090f81c8f13182bdab842a5cb537ead}
      \strng{authornamehash}{8090f81c8f13182bdab842a5cb537ead}
      \strng{authorfullhash}{8090f81c8f13182bdab842a5cb537ead}
      \field{labelalpha}{\v{S}ve00}
      \field{sortinit}{\v{S}}
      \field{sortinithash}{3c1547c63380458f8ca90e40ed14b83e}
      \field{labelnamesource}{author}
      \field{labeltitlesource}{title}
      \field{journaltitle}{Nordic Journal of Philosophical Logic}
      \field{number}{2}
      \field{title}{On Provability Logic}
      \field{volume}{4}
      \field{year}{2000}
      \field{pages}{95\bibrangedash 116}
      \range{pages}{22}
    \endentry
    \entry{SvendsenB14:esop}{inproceedings}{}
      \name{author}{2}{}{%
        {{hash=09bbde70c2e6081108b9a23332cb48e4}{%
           family={Svendsen},
           familyi={S\bibinitperiod},
           given={Kasper},
           giveni={K\bibinitperiod}}}%
        {{hash=f4714354745ea31cf6c543434102f444}{%
           family={Birkedal},
           familyi={B\bibinitperiod},
           given={Lars},
           giveni={L\bibinitperiod}}}%
      }
      \name{editor}{1}{}{%
        {{hash=47a4917313d2b2f846d6a9d80cedd291}{%
           family={Shao},
           familyi={S\bibinitperiod},
           given={Zhong},
           giveni={Z\bibinitperiod}}}%
      }
      \list{publisher}{1}{%
        {Springer}%
      }
      \strng{namehash}{0e726d39250456f13f1c80d101b16adb}
      \strng{fullhash}{0e726d39250456f13f1c80d101b16adb}
      \strng{authornamehash}{0e726d39250456f13f1c80d101b16adb}
      \strng{authorfullhash}{0e726d39250456f13f1c80d101b16adb}
      \strng{editornamehash}{47a4917313d2b2f846d6a9d80cedd291}
      \strng{editorfullhash}{47a4917313d2b2f846d6a9d80cedd291}
      \field{labelalpha}{SB14}
      \field{sortinit}{S}
      \field{sortinithash}{3c1547c63380458f8ca90e40ed14b83e}
      \field{labelnamesource}{author}
      \field{labeltitlesource}{title}
      \field{booktitle}{Proceedings of {ESOP}}
      \field{isbn}{978-3-642-54832-1}
      \field{series}{{LNCS}}
      \field{title}{Impredicative Concurrent Abstract Predicates}
      \field{volume}{8410}
      \field{year}{2014}
      \field{pages}{149\bibrangedash 168}
      \range{pages}{20}
      \verb{doi}
      \verb 10.1007/978-3-642-54833-8_9
      \endverb
      \verb{url}
      \verb http://dx.doi.org/10.1007/978-3-642-54833-8_9
      \endverb
    \endentry
    \entry{Troelstra92}{book}{}
      \name{author}{1}{}{%
        {{hash=a2f37cf4fb34653d3006015bc02615ed}{%
           family={Troelstra},
           familyi={T\bibinitperiod},
           given={Anne\bibnamedelima S.},
           giveni={A\bibinitperiod\bibinitdelim S\bibinitperiod}}}%
      }
      \list{location}{1}{%
        {Stanford, California}%
      }
      \list{publisher}{2}{%
        {CSLI Lecture Notes 29, Center for the Study of Language}%
        {Information}%
      }
      \strng{namehash}{a2f37cf4fb34653d3006015bc02615ed}
      \strng{fullhash}{a2f37cf4fb34653d3006015bc02615ed}
      \strng{authornamehash}{a2f37cf4fb34653d3006015bc02615ed}
      \strng{authorfullhash}{a2f37cf4fb34653d3006015bc02615ed}
      \field{labelalpha}{Tro92}
      \field{sortinit}{T}
      \field{sortinithash}{2e5c2f51f7fa2d957f3206819bf86dc3}
      \field{labelnamesource}{author}
      \field{labeltitlesource}{title}
      \field{title}{Lectures on Linear Logic}
      \field{year}{1992}
    \endentry
    \entry{troe:meta73}{book}{}
      \name{author}{1}{}{%
        {{hash=2bd43d4291c46b9a1f9e75cf54cccae3}{%
           family={Troelstra},
           familyi={T\bibinitperiod},
           given={A.S.},
           giveni={A\bibinitperiod}}}%
      }
      \list{location}{1}{%
        {Berlin}%
      }
      \list{publisher}{1}{%
        {Springer Verlag}%
      }
      \strng{namehash}{2bd43d4291c46b9a1f9e75cf54cccae3}
      \strng{fullhash}{2bd43d4291c46b9a1f9e75cf54cccae3}
      \strng{authornamehash}{2bd43d4291c46b9a1f9e75cf54cccae3}
      \strng{authorfullhash}{2bd43d4291c46b9a1f9e75cf54cccae3}
      \field{labelalpha}{Tro73}
      \field{sortinit}{T}
      \field{sortinithash}{2e5c2f51f7fa2d957f3206819bf86dc3}
      \field{labelnamesource}{author}
      \field{labeltitlesource}{title}
      \field{series}{Springer Lecture Notes 344}
      \field{title}{{Metamathematical investigations of intuitionistic arithmetic and analysis}}
      \field{year}{1973}
    \endentry
    \entry{troe:cons88vol1}{book}{}
      \name{author}{2}{}{%
        {{hash=2bd43d4291c46b9a1f9e75cf54cccae3}{%
           family={Troelstra},
           familyi={T\bibinitperiod},
           given={A.S.},
           giveni={A\bibinitperiod}}}%
        {{hash=69326f92073fb7fd996a20a997aed6b3}{%
           family={Dalen},
           familyi={D\bibinitperiod},
           given={D.},
           giveni={D\bibinitperiod},
           prefix={van},
           prefixi={v\bibinitperiod}}}%
      }
      \list{location}{1}{%
        {Amsterdam}%
      }
      \list{publisher}{1}{%
        {North Holland}%
      }
      \strng{namehash}{e3dd00d3e886231f2c2dbe97915b89ab}
      \strng{fullhash}{e3dd00d3e886231f2c2dbe97915b89ab}
      \strng{authornamehash}{e3dd00d3e886231f2c2dbe97915b89ab}
      \strng{authorfullhash}{e3dd00d3e886231f2c2dbe97915b89ab}
      \field{labelalpha}{TD88}
      \field{sortinit}{T}
      \field{sortinithash}{2e5c2f51f7fa2d957f3206819bf86dc3}
      \field{labelnamesource}{author}
      \field{labeltitlesource}{title}
      \field{series}{Studies in Logic and the Foundations of Mathematics}
      \field{title}{{Constructivism in Mathematics, vol 1}}
      \field{volume}{121}
      \field{year}{1988}
    \endentry
    \entry{vanatten:hypo}{inproceedings}{}
      \name{author}{1}{}{%
        {{hash=7a9755293ca2522eeb7f170b7a43f072}{%
           family={{van}\bibnamedelima Atten},
           familyi={v\bibinitperiod\bibinitdelim A\bibinitperiod},
           given={Mark},
           giveni={M\bibinitperiod}}}%
      }
      \name{editor}{1}{}{%
        {{hash=ead23300cf4a8ee4fbf5ecd6e3ee91cb}{%
           family={Clark\bibnamedelima Glymour},
           familyi={C\bibinitperiod\bibinitdelim G\bibinitperiod},
           given={Dag\bibnamedelima Westerstahl},
           giveni={D\bibinitperiod\bibinitdelim W\bibinitperiod},
           suffix={Wei\bibnamedelima Wang},
           suffixi={W\bibinitperiod\bibinitdelim W\bibinitperiod}}}%
      }
      \list{location}{1}{%
        {P\'{e}kin, China}%
      }
      \list{publisher}{1}{%
        {College Publications}%
      }
      \strng{namehash}{7a9755293ca2522eeb7f170b7a43f072}
      \strng{fullhash}{7a9755293ca2522eeb7f170b7a43f072}
      \strng{authornamehash}{7a9755293ca2522eeb7f170b7a43f072}
      \strng{authorfullhash}{7a9755293ca2522eeb7f170b7a43f072}
      \strng{editornamehash}{ead23300cf4a8ee4fbf5ecd6e3ee91cb}
      \strng{editorfullhash}{ead23300cf4a8ee4fbf5ecd6e3ee91cb}
      \field{labelalpha}{van07}
      \field{sortinit}{v}
      \field{sortinithash}{555737dafdcf1396ebfeae5822e5bde2}
      \field{labelnamesource}{author}
      \field{labeltitlesource}{title}
      \field{booktitle}{{Logic, Methodology, and Philosophy of Science XIII (LMPS XIII)}}
      \field{month}{8}
      \field{title}{{The hypothetical judgement in the history of intuitionistic logic}}
      \field{year}{2007}
      \field{pages}{662}
      \range{pages}{1}
      \verb{url}
      \verb https://halshs.archives-ouvertes.fr/halshs-00791548
      \endverb
      \keyw{philosophy of mathematics}
    \endentry
    \entry{oost:lifs90}{article}{}
      \name{author}{1}{}{%
        {{hash=af8417d1b1ccc60a4eab513861cd6547}{%
           family={{van}\bibnamedelima Oosten},
           familyi={v\bibinitperiod\bibinitdelim O\bibinitperiod},
           given={J.},
           giveni={J\bibinitperiod}}}%
      }
      \list{publisher}{1}{%
        {Cambridge University Press}%
      }
      \strng{namehash}{af8417d1b1ccc60a4eab513861cd6547}
      \strng{fullhash}{af8417d1b1ccc60a4eab513861cd6547}
      \strng{authornamehash}{af8417d1b1ccc60a4eab513861cd6547}
      \strng{authorfullhash}{af8417d1b1ccc60a4eab513861cd6547}
      \field{labelalpha}{van90}
      \field{sortinit}{v}
      \field{sortinithash}{555737dafdcf1396ebfeae5822e5bde2}
      \field{extraalpha}{1}
      \field{labelnamesource}{author}
      \field{labeltitlesource}{title}
      \field{journaltitle}{The Journal of Symbolic Logic}
      \field{number}{02}
      \field{title}{Lifschitz' Realizability}
      \field{volume}{55}
      \field{year}{1990}
      \field{pages}{805\bibrangedash 821}
      \range{pages}{17}
    \endentry
    \entry{vanStigt90}{book}{}
      \name{author}{1}{}{%
        {{hash=99dc8df7336ccd79ca735a4ba669dc5f}{%
           family={{van}\bibnamedelima Stigt},
           familyi={v\bibinitperiod\bibinitdelim S\bibinitperiod},
           given={Walter\bibnamedelima P.},
           giveni={W\bibinitperiod\bibinitdelim P\bibinitperiod}}}%
      }
      \list{publisher}{1}{%
        {North-Holland}%
      }
      \strng{namehash}{99dc8df7336ccd79ca735a4ba669dc5f}
      \strng{fullhash}{99dc8df7336ccd79ca735a4ba669dc5f}
      \strng{authornamehash}{99dc8df7336ccd79ca735a4ba669dc5f}
      \strng{authorfullhash}{99dc8df7336ccd79ca735a4ba669dc5f}
      \field{labelalpha}{van90}
      \field{sortinit}{v}
      \field{sortinithash}{555737dafdcf1396ebfeae5822e5bde2}
      \field{extraalpha}{2}
      \field{labelnamesource}{author}
      \field{labeltitlesource}{title}
      \field{title}{Brouwer's intuitionism}
      \field{year}{1990}
    \endentry
    \entry{viss:over98}{incollection}{}
      \name{author}{1}{}{%
        {{hash=7c59a81f2d67e5dbdf77cc8061473f2d}{%
           family={Visser},
           familyi={V\bibinitperiod},
           given={A.},
           giveni={A\bibinitperiod}}}%
      }
      \name{editor}{4}{}{%
        {{hash=4e054d081e86571478a1f8501899e5b9}{%
           family={Kracht},
           familyi={K\bibinitperiod},
           given={M.},
           giveni={M\bibinitperiod}}}%
        {{hash=6e150f85499ca2287cd779371757d3d9}{%
           family={Rij\-ke},
           familyi={R\bibinithyphendelim k\bibinitperiod},
           given={M.},
           giveni={M\bibinitperiod},
           prefix={de},
           prefixi={d\bibinitperiod}}}%
        {{hash=d5374c6e7930632867115a1caf5465de}{%
           family={Wansing},
           familyi={W\bibinitperiod},
           given={H.},
           giveni={H\bibinitperiod}}}%
        {{hash=190a42b73bdb11cbdaae6e480a85b179}{%
           family={Zakharyaschev},
           familyi={Z\bibinitperiod},
           given={M.},
           giveni={M\bibinitperiod}}}%
      }
      \list{location}{1}{%
        {Stanford}%
      }
      \list{publisher}{2}{%
        {Center for the Study of Language}%
        {Information}%
      }
      \strng{namehash}{7c59a81f2d67e5dbdf77cc8061473f2d}
      \strng{fullhash}{7c59a81f2d67e5dbdf77cc8061473f2d}
      \strng{authornamehash}{7c59a81f2d67e5dbdf77cc8061473f2d}
      \strng{authorfullhash}{7c59a81f2d67e5dbdf77cc8061473f2d}
      \strng{editornamehash}{c1af90830b0f11e43925218bd0665252}
      \strng{editorfullhash}{c1af90830b0f11e43925218bd0665252}
      \field{labelalpha}{Vis98}
      \field{sortinit}{V}
      \field{sortinithash}{555737dafdcf1396ebfeae5822e5bde2}
      \field{labelnamesource}{author}
      \field{labeltitlesource}{title}
      \field{booktitle}{{Advances in Modal Logic}}
      \field{series}{CSLI Lecture Notes}
      \field{title}{{An Overview of Interpretability Logic}}
      \field{volume}{1, 87}
      \field{year}{1998}
      \field{pages}{307\bibrangedash 359}
      \range{pages}{53}
    \endentry
    \entry{viss:aspe81}{book}{}
      \name{author}{1}{}{%
        {{hash=7c59a81f2d67e5dbdf77cc8061473f2d}{%
           family={Visser},
           familyi={V\bibinitperiod},
           given={A.},
           giveni={A\bibinitperiod}}}%
      }
      \list{publisher}{1}{%
        {Ph.D. Thesis, Department of Philosophy, Utrecht University}%
      }
      \strng{namehash}{7c59a81f2d67e5dbdf77cc8061473f2d}
      \strng{fullhash}{7c59a81f2d67e5dbdf77cc8061473f2d}
      \strng{authornamehash}{7c59a81f2d67e5dbdf77cc8061473f2d}
      \strng{authorfullhash}{7c59a81f2d67e5dbdf77cc8061473f2d}
      \field{labelalpha}{Vis81}
      \field{sortinit}{V}
      \field{sortinithash}{555737dafdcf1396ebfeae5822e5bde2}
      \field{labelnamesource}{author}
      \field{labeltitlesource}{title}
      \field{title}{Aspects of diagonalization and provability}
      \field{year}{1981}
    \endentry
    \entry{viss:cate06}{incollection}{}
      \name{author}{1}{}{%
        {{hash=7c59a81f2d67e5dbdf77cc8061473f2d}{%
           family={Visser},
           familyi={V\bibinitperiod},
           given={A.},
           giveni={A\bibinitperiod}}}%
      }
      \name{editor}{3}{}{%
        {{hash=1088c34076d86d95d0a3f236afcc4e42}{%
           family={Enayat},
           familyi={E\bibinitperiod},
           given={Ali},
           giveni={A\bibinitperiod}}}%
        {{hash=372133081413c1af5dd38dd553bf7710}{%
           family={Kalantari},
           familyi={K\bibinitperiod},
           given={Iraj},
           giveni={I\bibinitperiod}}}%
        {{hash=fe679e1f2c8329b249b6081e7956e3c0}{%
           family={Moniri},
           familyi={M\bibinitperiod},
           given={Mojtaba},
           giveni={M\bibinitperiod}}}%
      }
      \list{location}{1}{%
        {Wellesley, Mass.}%
      }
      \list{publisher}{1}{%
        {ASL, A.K. Peters, Ltd.}%
      }
      \strng{namehash}{7c59a81f2d67e5dbdf77cc8061473f2d}
      \strng{fullhash}{7c59a81f2d67e5dbdf77cc8061473f2d}
      \strng{authornamehash}{7c59a81f2d67e5dbdf77cc8061473f2d}
      \strng{authorfullhash}{7c59a81f2d67e5dbdf77cc8061473f2d}
      \strng{editornamehash}{ba80a8abd588033a302f27a0c717c917}
      \strng{editorfullhash}{ba80a8abd588033a302f27a0c717c917}
      \field{labelalpha}{Vis06}
      \field{sortinit}{V}
      \field{sortinithash}{555737dafdcf1396ebfeae5822e5bde2}
      \field{extraalpha}{1}
      \field{labelnamesource}{author}
      \field{labeltitlesource}{title}
      \field{booktitle}{{Logic in {T}ehran. {P}roceedings of the workshop and conference on {L}ogic, {A}lgebra and {A}rithmetic, held {O}ctober 18--22, 2003}}
      \field{series}{Lecture {N}otes in {L}ogic}
      \field{title}{Categories of {T}heories and {I}nterpretations}
      \field{volume}{26}
      \field{year}{2006}
      \field{pages}{284\bibrangedash 341}
      \range{pages}{58}
    \endentry
    \entry{viss:close08}{article}{}
      \name{author}{1}{}{%
        {{hash=7c59a81f2d67e5dbdf77cc8061473f2d}{%
           family={Visser},
           familyi={V\bibinitperiod},
           given={A.},
           giveni={A\bibinitperiod}}}%
      }
      \strng{namehash}{7c59a81f2d67e5dbdf77cc8061473f2d}
      \strng{fullhash}{7c59a81f2d67e5dbdf77cc8061473f2d}
      \strng{authornamehash}{7c59a81f2d67e5dbdf77cc8061473f2d}
      \strng{authorfullhash}{7c59a81f2d67e5dbdf77cc8061473f2d}
      \field{labelalpha}{Vis08}
      \field{sortinit}{V}
      \field{sortinithash}{555737dafdcf1396ebfeae5822e5bde2}
      \field{labelnamesource}{author}
      \field{labeltitlesource}{title}
      \field{journaltitle}{Journal of Symbolic Logic}
      \field{number}{3}
      \field{title}{Closed Fragments of Provability Logics of Constructive Theories}
      \field{volume}{73}
      \field{year}{2008}
      \field{pages}{1081\bibrangedash 1096}
      \range{pages}{16}
    \endentry
    \entry{viss:eval85}{report}{}
      \name{author}{1}{}{%
        {{hash=7c59a81f2d67e5dbdf77cc8061473f2d}{%
           family={Visser},
           familyi={V\bibinitperiod},
           given={A.},
           giveni={A\bibinitperiod}}}%
      }
      \list{institution}{1}{%
        {Faculty of Humanities, Philosophy, Utrecht University}%
      }
      \list{location}{1}{%
        {Janskerkhof 13, 3512 BL Utrecht}%
      }
      \strng{namehash}{7c59a81f2d67e5dbdf77cc8061473f2d}
      \strng{fullhash}{7c59a81f2d67e5dbdf77cc8061473f2d}
      \strng{authornamehash}{7c59a81f2d67e5dbdf77cc8061473f2d}
      \strng{authorfullhash}{7c59a81f2d67e5dbdf77cc8061473f2d}
      \field{labelalpha}{Vis85}
      \field{sortinit}{V}
      \field{sortinithash}{555737dafdcf1396ebfeae5822e5bde2}
      \field{labelnamesource}{author}
      \field{labeltitlesource}{title}
      \field{number}{4}
      \field{title}{Evaluation, provably deductive equivalence in {H}eyting's {A}rithmetic of substitution instances of propositional formulas}
      \field{type}{Logic Group Preprint Series}
      \field{year}{1985}
    \endentry
    \entry{viss:inte90}{inproceedings}{}
      \name{author}{1}{}{%
        {{hash=7c59a81f2d67e5dbdf77cc8061473f2d}{%
           family={Visser},
           familyi={V\bibinitperiod},
           given={A.},
           giveni={A\bibinitperiod}}}%
      }
      \name{editor}{1}{}{%
        {{hash=6481cde4d8cb9e1508c2c95dec87911a}{%
           family={Petkov},
           familyi={P\bibinitperiod},
           given={P.P.},
           giveni={P\bibinitperiod}}}%
      }
      \list{publisher}{1}{%
        {Plenum Press, Boston}%
      }
      \strng{namehash}{7c59a81f2d67e5dbdf77cc8061473f2d}
      \strng{fullhash}{7c59a81f2d67e5dbdf77cc8061473f2d}
      \strng{authornamehash}{7c59a81f2d67e5dbdf77cc8061473f2d}
      \strng{authorfullhash}{7c59a81f2d67e5dbdf77cc8061473f2d}
      \strng{editornamehash}{6481cde4d8cb9e1508c2c95dec87911a}
      \strng{editorfullhash}{6481cde4d8cb9e1508c2c95dec87911a}
      \field{labelalpha}{Vis90}
      \field{sortinit}{V}
      \field{sortinithash}{555737dafdcf1396ebfeae5822e5bde2}
      \field{labelnamesource}{author}
      \field{labeltitlesource}{title}
      \field{booktitle}{Mathematical logic, {P}roceedings of the {H}eyting 1988 summer school in {V}arna, {B}ulgaria}
      \field{title}{Interpretability logic}
      \field{year}{1990}
      \field{pages}{175\bibrangedash 209}
      \range{pages}{35}
    \endentry
    \entry{viss:comp82}{article}{}
      \name{author}{1}{}{%
        {{hash=7c59a81f2d67e5dbdf77cc8061473f2d}{%
           family={Visser},
           familyi={V\bibinitperiod},
           given={A.},
           giveni={A\bibinitperiod}}}%
      }
      \list{publisher}{1}{%
        {Elsevier}%
      }
      \strng{namehash}{7c59a81f2d67e5dbdf77cc8061473f2d}
      \strng{fullhash}{7c59a81f2d67e5dbdf77cc8061473f2d}
      \strng{authornamehash}{7c59a81f2d67e5dbdf77cc8061473f2d}
      \strng{authorfullhash}{7c59a81f2d67e5dbdf77cc8061473f2d}
      \field{labelalpha}{Vis82}
      \field{sortinit}{V}
      \field{sortinithash}{555737dafdcf1396ebfeae5822e5bde2}
      \field{labelnamesource}{author}
      \field{labeltitlesource}{title}
      \field{journaltitle}{Annals of Mathematical Logic}
      \field{number}{3}
      \field{title}{On the completeness principle: A study of provability in {H}eyting's arithmetic and extensions}
      \field{volume}{22}
      \field{year}{1982}
      \field{pages}{263\bibrangedash 295}
      \range{pages}{33}
    \endentry
    \entry{viss:pred06}{article}{}
      \name{author}{1}{}{%
        {{hash=7c59a81f2d67e5dbdf77cc8061473f2d}{%
           family={Visser},
           familyi={V\bibinitperiod},
           given={A.},
           giveni={A\bibinitperiod}}}%
      }
      \strng{namehash}{7c59a81f2d67e5dbdf77cc8061473f2d}
      \strng{fullhash}{7c59a81f2d67e5dbdf77cc8061473f2d}
      \strng{authornamehash}{7c59a81f2d67e5dbdf77cc8061473f2d}
      \strng{authorfullhash}{7c59a81f2d67e5dbdf77cc8061473f2d}
      \field{labelalpha}{Vis06}
      \field{sortinit}{V}
      \field{sortinithash}{555737dafdcf1396ebfeae5822e5bde2}
      \field{extraalpha}{2}
      \field{labelnamesource}{author}
      \field{labeltitlesource}{title}
      \field{journaltitle}{Journal of Symbolic Logic}
      \field{number}{4}
      \field{title}{Predicate logics of constructive arithmetical theories}
      \field{volume}{71}
      \field{year}{2006}
      \field{pages}{1311\bibrangedash 1326}
      \range{pages}{16}
    \endentry
    \entry{viss:prop94}{book}{}
      \name{author}{1}{}{%
        {{hash=7c59a81f2d67e5dbdf77cc8061473f2d}{%
           family={Visser},
           familyi={V\bibinitperiod},
           given={A.},
           giveni={A\bibinitperiod}}}%
      }
      \list{location}{1}{%
        {Janskerkhof 13, 3512 BL Utrecht}%
      }
      \list{publisher}{1}{%
        {Faculty of Humanities, Philosophy, Utrecht University}%
      }
      \strng{namehash}{7c59a81f2d67e5dbdf77cc8061473f2d}
      \strng{fullhash}{7c59a81f2d67e5dbdf77cc8061473f2d}
      \strng{authornamehash}{7c59a81f2d67e5dbdf77cc8061473f2d}
      \strng{authorfullhash}{7c59a81f2d67e5dbdf77cc8061473f2d}
      \field{labelalpha}{Vis94}
      \field{sortinit}{V}
      \field{sortinithash}{555737dafdcf1396ebfeae5822e5bde2}
      \field{labelnamesource}{author}
      \field{labeltitlesource}{title}
      \field{series}{Logic Group Preprint Series 117}
      \field{title}{{Propositional combinations of $\Sigma$-sentences in Heyting's Arithmetic}}
      \field{year}{1994}
    \endentry
    \entry{viss:rule99}{article}{}
      \name{author}{1}{}{%
        {{hash=7c59a81f2d67e5dbdf77cc8061473f2d}{%
           family={Visser},
           familyi={V\bibinitperiod},
           given={A.},
           giveni={A\bibinitperiod}}}%
      }
      \strng{namehash}{7c59a81f2d67e5dbdf77cc8061473f2d}
      \strng{fullhash}{7c59a81f2d67e5dbdf77cc8061473f2d}
      \strng{authornamehash}{7c59a81f2d67e5dbdf77cc8061473f2d}
      \strng{authorfullhash}{7c59a81f2d67e5dbdf77cc8061473f2d}
      \field{labelalpha}{Vis99}
      \field{sortinit}{V}
      \field{sortinithash}{555737dafdcf1396ebfeae5822e5bde2}
      \field{labelnamesource}{author}
      \field{labeltitlesource}{title}
      \field{journaltitle}{Notre Dame Journal of Formal Logic}
      \field{number}{1}
      \field{title}{{R}ules and {A}rithmetics}
      \field{volume}{40}
      \field{year}{1999}
      \field{pages}{116\bibrangedash 140}
      \range{pages}{25}
    \endentry
    \entry{viss:subs02}{article}{}
      \name{author}{1}{}{%
        {{hash=7c59a81f2d67e5dbdf77cc8061473f2d}{%
           family={Visser},
           familyi={V\bibinitperiod},
           given={A.},
           giveni={A\bibinitperiod}}}%
      }
      \strng{namehash}{7c59a81f2d67e5dbdf77cc8061473f2d}
      \strng{fullhash}{7c59a81f2d67e5dbdf77cc8061473f2d}
      \strng{authornamehash}{7c59a81f2d67e5dbdf77cc8061473f2d}
      \strng{authorfullhash}{7c59a81f2d67e5dbdf77cc8061473f2d}
      \field{labelalpha}{Vis02}
      \field{sortinit}{V}
      \field{sortinithash}{555737dafdcf1396ebfeae5822e5bde2}
      \field{labelnamesource}{author}
      \field{labeltitlesource}{title}
      \field{journaltitle}{Annals of Pure and Applied Logic}
      \field{title}{Substitutions of ${\Sigma}^0_1$-sentences: explorations between intuitionistic propositional logic and intuitionistic arithmetic}
      \field{volume}{114}
      \field{year}{2002}
      \field{pages}{227\bibrangedash 271}
      \range{pages}{45}
    \endentry
    \entry{viss:whyR14}{incollection}{}
      \name{author}{1}{}{%
        {{hash=7c59a81f2d67e5dbdf77cc8061473f2d}{%
           family={Visser},
           familyi={V\bibinitperiod},
           given={A.},
           giveni={A\bibinitperiod}}}%
      }
      \name{editor}{1}{}{%
        {{hash=d888e2237d42048d285cf82c7791008a}{%
           family={Tennant},
           familyi={T\bibinitperiod},
           given={Neil},
           giveni={N\bibinitperiod}}}%
      }
      \list{location}{1}{%
        {UK}%
      }
      \list{publisher}{1}{%
        {College Publications}%
      }
      \strng{namehash}{7c59a81f2d67e5dbdf77cc8061473f2d}
      \strng{fullhash}{7c59a81f2d67e5dbdf77cc8061473f2d}
      \strng{authornamehash}{7c59a81f2d67e5dbdf77cc8061473f2d}
      \strng{authorfullhash}{7c59a81f2d67e5dbdf77cc8061473f2d}
      \strng{editornamehash}{d888e2237d42048d285cf82c7791008a}
      \strng{editorfullhash}{d888e2237d42048d285cf82c7791008a}
      \field{labelalpha}{Vis14}
      \field{sortinit}{V}
      \field{sortinithash}{555737dafdcf1396ebfeae5822e5bde2}
      \field{labelnamesource}{author}
      \field{labeltitlesource}{title}
      \field{booktitle}{{Foundational Adventures. Essays in honour of Harvey Friedman}}
      \field{note}{Originally published online by Templeton Press in 2012. See {\tt http://foundationaladventures.com/}}
      \field{title}{Why the theory {{\sf R}} is special}
      \field{year}{2014}
      \field{pages}{7\bibrangedash 23}
      \range{pages}{17}
    \endentry
    \entry{viss:nnil95}{incollection}{}
      \name{author}{4}{}{%
        {{hash=7c59a81f2d67e5dbdf77cc8061473f2d}{%
           family={Visser},
           familyi={V\bibinitperiod},
           given={A.},
           giveni={A\bibinitperiod}}}%
        {{hash=ad67e3d60630c2c29ee93f7aea1ca4dc}{%
           family={Benthem},
           familyi={B\bibinitperiod},
           given={J.},
           giveni={J\bibinitperiod},
           prefix={van},
           prefixi={v\bibinitperiod}}}%
        {{hash=af475dcf6f4bea11895ee84cab19a816}{%
           family={Jongh},
           familyi={J\bibinitperiod},
           given={D.},
           giveni={D\bibinitperiod},
           prefix={de},
           prefixi={d\bibinitperiod}}}%
        {{hash=6d570020bf61a3d038387be81d4d78c6}{%
           family={Lavalette},
           familyi={L\bibinitperiod},
           given={G.\bibnamedelimi Renardel},
           giveni={G\bibinitperiod\bibinitdelim R\bibinitperiod},
           prefix={de},
           prefixi={d\bibinitperiod}}}%
      }
      \name{editor}{3}{}{%
        {{hash=1c0f457d2bde22cad3898a3f6629c24b}{%
           family={Ponse},
           familyi={P\bibinitperiod},
           given={A.},
           giveni={A\bibinitperiod}}}%
        {{hash=6e150f85499ca2287cd779371757d3d9}{%
           family={Rij\-ke},
           familyi={R\bibinithyphendelim k\bibinitperiod},
           given={M.},
           giveni={M\bibinitperiod},
           prefix={de},
           prefixi={d\bibinitperiod}}}%
        {{hash=1330d46499ee334061b28a3cce7e0225}{%
           family={Venema},
           familyi={V\bibinitperiod},
           given={Y.},
           giveni={Y\bibinitperiod}}}%
      }
      \list{location}{1}{%
        {Stanford}%
      }
      \list{publisher}{2}{%
        {Center for the Study of Language}%
        {Information}%
      }
      \strng{namehash}{0394f2a10b33906961a49fe3df52fc26}
      \strng{fullhash}{0394f2a10b33906961a49fe3df52fc26}
      \strng{authornamehash}{0394f2a10b33906961a49fe3df52fc26}
      \strng{authorfullhash}{0394f2a10b33906961a49fe3df52fc26}
      \strng{editornamehash}{7e433ebff0a675fcf050815e5178430b}
      \strng{editorfullhash}{7e433ebff0a675fcf050815e5178430b}
      \field{labelalpha}{VBJL95}
      \field{sortinit}{V}
      \field{sortinithash}{555737dafdcf1396ebfeae5822e5bde2}
      \field{labelnamesource}{author}
      \field{labeltitlesource}{title}
      \field{booktitle}{{Modal Logic and Process Algebra, a Bisimulation Perspective}}
      \field{series}{CSLI Lecture Notes, no. 53}
      \field{title}{{{N}{N}{I}{L}, a Study in Intuitionistic Propositional Logic}}
      \field{year}{1995}
      \field{pages}{289\bibrangedash 326}
      \range{pages}{38}
    \endentry
    \entry{Williamson92:jpl}{article}{}
      \name{author}{1}{}{%
        {{hash=3c5a6d2d74c469ca17ec2d2e79d66431}{%
           family={Williamson},
           familyi={W\bibinitperiod},
           given={Timothy},
           giveni={T\bibinitperiod}}}%
      }
      \list{publisher}{1}{%
        {Springer}%
      }
      \strng{namehash}{3c5a6d2d74c469ca17ec2d2e79d66431}
      \strng{fullhash}{3c5a6d2d74c469ca17ec2d2e79d66431}
      \strng{authornamehash}{3c5a6d2d74c469ca17ec2d2e79d66431}
      \strng{authorfullhash}{3c5a6d2d74c469ca17ec2d2e79d66431}
      \field{labelalpha}{Wil92}
      \field{sortinit}{W}
      \field{sortinithash}{6d25b3eefe5aa2147d1f339686808918}
      \field{labelnamesource}{author}
      \field{labeltitlesource}{title}
      \field{issn}{00223611, 15730433}
      \field{journaltitle}{Journal of Philosophical Logic}
      \field{number}{1}
      \field{title}{On Intuitionistic Modal Epistemic Logic}
      \field{volume}{21}
      \field{year}{1992}
      \field{pages}{63\bibrangedash 89}
      \range{pages}{27}
      \verb{url}
      \verb http://www.jstor.org/stable/30226465
      \endverb
    \endentry
    \entry{Wolter1993}{thesis}{}
      \name{author}{1}{}{%
        {{hash=5f6ef4ecaac2c898bee8ef9429ca6990}{%
           family={Wolter},
           familyi={W\bibinitperiod},
           given={F.},
           giveni={F\bibinitperiod}}}%
      }
      \list{institution}{1}{%
        {Fachbereich Mathematik, Freien Universit\"{a}t Berlin}%
      }
      \strng{namehash}{5f6ef4ecaac2c898bee8ef9429ca6990}
      \strng{fullhash}{5f6ef4ecaac2c898bee8ef9429ca6990}
      \strng{authornamehash}{5f6ef4ecaac2c898bee8ef9429ca6990}
      \strng{authorfullhash}{5f6ef4ecaac2c898bee8ef9429ca6990}
      \field{labelalpha}{Wol93}
      \field{sortinit}{W}
      \field{sortinithash}{6d25b3eefe5aa2147d1f339686808918}
      \field{labelnamesource}{author}
      \field{labeltitlesource}{title}
      \field{title}{Lattices of Modal Logics}
      \field{type}{phdthesis}
      \field{year}{1993}
    \endentry
    \entry{WolterZ98:lw}{inproceedings}{}
      \name{author}{2}{}{%
        {{hash=feac505bad30d5bd2b8e65b72656d06e}{%
           family={Wolter},
           familyi={W\bibinitperiod},
           given={Frank},
           giveni={F\bibinitperiod}}}%
        {{hash=76b868a2302301df0809621d25537a21}{%
           family={Zakharyaschev},
           familyi={Z\bibinitperiod},
           given={Michael},
           giveni={M\bibinitperiod}}}%
      }
      \name{editor}{1}{}{%
        {{hash=7a1c8ae374b655439d5c37470ded31f1}{%
           family={Orlowska},
           familyi={O\bibinitperiod},
           given={Ewa},
           giveni={E\bibinitperiod}}}%
      }
      \list{publisher}{1}{%
        {Springer--Verlag}%
      }
      \strng{namehash}{a574e472866842abeb4b08cc39909096}
      \strng{fullhash}{a574e472866842abeb4b08cc39909096}
      \strng{authornamehash}{a574e472866842abeb4b08cc39909096}
      \strng{authorfullhash}{a574e472866842abeb4b08cc39909096}
      \strng{editornamehash}{7a1c8ae374b655439d5c37470ded31f1}
      \strng{editorfullhash}{7a1c8ae374b655439d5c37470ded31f1}
      \field{labelalpha}{WZ98}
      \field{sortinit}{W}
      \field{sortinithash}{6d25b3eefe5aa2147d1f339686808918}
      \field{labelnamesource}{author}
      \field{labeltitlesource}{title}
      \field{booktitle}{Logic at Work, Essays in honour of Helena Rasiowa}
      \field{title}{Intuitionistic Modal Logics as fragments of Classical Bimodal Logics}
      \field{year}{1998}
      \field{pages}{168\bibrangedash 186}
      \range{pages}{19}
    \endentry
    \entry{WolterZ97:al}{article}{}
      \name{author}{2}{}{%
        {{hash=feac505bad30d5bd2b8e65b72656d06e}{%
           family={Wolter},
           familyi={W\bibinitperiod},
           given={Frank},
           giveni={F\bibinitperiod}}}%
        {{hash=76b868a2302301df0809621d25537a21}{%
           family={Zakharyaschev},
           familyi={Z\bibinitperiod},
           given={Michael},
           giveni={M\bibinitperiod}}}%
      }
      \strng{namehash}{a574e472866842abeb4b08cc39909096}
      \strng{fullhash}{a574e472866842abeb4b08cc39909096}
      \strng{authornamehash}{a574e472866842abeb4b08cc39909096}
      \strng{authorfullhash}{a574e472866842abeb4b08cc39909096}
      \field{labelalpha}{WZ97}
      \field{sortinit}{W}
      \field{sortinithash}{6d25b3eefe5aa2147d1f339686808918}
      \field{labelnamesource}{author}
      \field{labeltitlesource}{title}
      \field{journaltitle}{Algebra and Logic}
      \field{title}{On the relation between intuitionistic and classical modal logics}
      \field{volume}{36}
      \field{year}{1997}
      \field{pages}{121\bibrangedash 125}
      \range{pages}{5}
    \endentry
    \entry{yang:intu08}{thesis}{}
      \name{author}{1}{}{%
        {{hash=eb5d5ae09823dbd21f6e0bbbf8a1b5ba}{%
           family={Yang},
           familyi={Y\bibinitperiod},
           given={F.},
           giveni={F\bibinitperiod}}}%
      }
      \list{institution}{1}{%
        {Universiteit van Amsterdam}%
      }
      \strng{namehash}{eb5d5ae09823dbd21f6e0bbbf8a1b5ba}
      \strng{fullhash}{eb5d5ae09823dbd21f6e0bbbf8a1b5ba}
      \strng{authornamehash}{eb5d5ae09823dbd21f6e0bbbf8a1b5ba}
      \strng{authorfullhash}{eb5d5ae09823dbd21f6e0bbbf8a1b5ba}
      \field{labelalpha}{Yan08}
      \field{sortinit}{Y}
      \field{sortinithash}{f86c10978eb0c0879609ce6a4f43da7b}
      \field{labelnamesource}{author}
      \field{labeltitlesource}{title}
      \field{title}{Intuitionistic subframe formulas, {N}{N}{I}{L}-formulas and $n$-universal models}
      \field{type}{mathesis}
      \field{year}{2008}
    \endentry
    \entry{zamb:shav94}{article}{}
      \name{author}{1}{}{%
        {{hash=e90f5421a60a2db4f68903aade76d9ee}{%
           family={Zambella},
           familyi={Z\bibinitperiod},
           given={D.},
           giveni={D\bibinitperiod}}}%
      }
      \strng{namehash}{e90f5421a60a2db4f68903aade76d9ee}
      \strng{fullhash}{e90f5421a60a2db4f68903aade76d9ee}
      \strng{authornamehash}{e90f5421a60a2db4f68903aade76d9ee}
      \strng{authorfullhash}{e90f5421a60a2db4f68903aade76d9ee}
      \field{labelalpha}{Zam94}
      \field{sortinit}{Z}
      \field{sortinithash}{35589aa085e881766b72503e53fd4c97}
      \field{labelnamesource}{author}
      \field{labeltitlesource}{title}
      \field{journaltitle}{Notre Dame Journal of Formal Logic}
      \field{title}{Shavrukov's theorem on the subalgebras of diagonalizable algebras for theories containing {$I\Delta_0+{\sf EXP}$}}
      \field{volume}{35}
      \field{year}{1994}
      \field{pages}{147\bibrangedash 157}
      \range{pages}{11}
    \endentry
    \entry{Zhou03}{thesis}{}
      \name{author}{1}{}{%
        {{hash=4b035c791adad7a574cf1ce874b4d72a}{%
           family={Zhou},
           familyi={Z\bibinitperiod},
           given={Chunlai},
           giveni={C\bibinitperiod}}}%
      }
      \list{institution}{1}{%
        {ILLC, University of Amsterdam}%
      }
      \strng{namehash}{4b035c791adad7a574cf1ce874b4d72a}
      \strng{fullhash}{4b035c791adad7a574cf1ce874b4d72a}
      \strng{authornamehash}{4b035c791adad7a574cf1ce874b4d72a}
      \strng{authorfullhash}{4b035c791adad7a574cf1ce874b4d72a}
      \field{labelalpha}{Zho03}
      \field{sortinit}{Z}
      \field{sortinithash}{35589aa085e881766b72503e53fd4c97}
      \field{labelnamesource}{author}
      \field{labeltitlesource}{title}
      \field{title}{Some Intuitionistic Provability and Preservativity Logics (and their interrelations)}
      \field{type}{mathesis}
      \field{year}{2003}
    \endentry
  \endsortlist
\endrefsection

  \blx@bblend
  \endgroup
  \csnumgdef{blx@labelnumber@\the\c@refsection}{0}}
\newcommand{\lna}[1]{\ensuremath{\mathsf{#1}}} %logic name
\newcommand{\lva}[1]{\ensuremath{\mathcal{#1}}} %logic variable
\newcommand{\ha}{\lna{HA}} %HA
\newcommand{\cpc}{\lna{CPC}}
\newcommand{\ipc}{\lna{IPC}}
\def\tbskip{2mm}
\DeclareSymbolFont{symbolsC}{U}{txsyc}{m}{n}
\DeclareMathSymbol{\strictif}{\mathrel}{symbolsC}{74}
\DeclareMathSymbol{\strictfi}{\mathrel}{symbolsC}{75}
\DeclareMathSymbol{\strictiff}{\mathrel}{symbolsC}{76}
\definecolor{airforceblue}{rgb}{0.36, 0.54, 0.66}
\definecolor{brickred}{rgb}{0.8, 0.25, 0.33}
\definecolor{ao}{rgb}{0.0, 0.0, 1.0}
\definecolor{cobalt}{rgb}{0.0, 0.28, 0.67}
\newcommand{\tto}{\strictif}
\newcommand{\ifff}{\strictiff}
\newcommand{\kmodels}{\Vdash}
\newcommand{\arbop}{\circledcirc}
\newcommand{\gnumm}[1]{{\ulcorner #1 \urcorner}}
\newcommand{\gnum}[1]{\underline{\ulcorner #1 \urcorner}} 
\newcommand{\jump}{\mathrel{\mbox{\textcolor{gray}{$\blacktriangleright$}}}}
\newcommand{\iea}{\ensuremath{{\mathrm i}\hyph{\sf EA}}}
\newcommand{\InF}[2]{{\sf F}_{#1,#2}}
\DeclareMathSymbol{\boxright}{\mathrel}{symbolsC}{128}
\newcommand{\nosmurf}{\noindent}
\newcommand{\nosmurfduo}{\medskip\noindent}
\newcommand{\shapipe}{\,
                    \setlength{\unitlength}{1ex}
                    \begin{picture}(2,2)
                    \put(.25,.15){\line(0,1){1.22}}
                    \put(.25,1.37){\line(1,0){0.61}}
                    \put(.45,-.05){\line(1,0){0.61}}
                    \put(.4,0){\line(1,0){0.61}}
                    \put(.35,.05){\line(1,0){0.61}}
                    \put(.3,.1){\line(1,0){0.61}}
                    \put(.25,.15){\line(1,0){0.61}}
                    \put(1.04,-.05){\line(0,1){1.22}}
                    \put(0.99,0){\line(0,1){1.22}}
                    \put(0.94,.05){\line(0,1){1.22}}
                    \put(0.89,.1){\line(0,1){1.22}}
                    \put(0.84,.15){\line(0,1){1.22}}
                    \end{picture}
                    \!}
\newcommand{\qee} {\hspace*{2mm}\hfill $\shapipe$}
\newtheorem{theorem}{Theorem}[section]
\newtheorem{define}[theorem]{Definition}
\newenvironment{definition}{\begin{define} \rm}{\qee\end{define}}
\newtheorem{exa}[theorem]{Example}
\newenvironment{example}{\begin{exa} \rm}{\qee\end{exa}}
\newtheorem{propo}[theorem]{Proposition}
\newtheorem{exerc}[theorem]{Exercise}
\newtheorem{conj}[theorem]{Conjecture}
\newtheorem{ques}[theorem]{Open Question}
\newenvironment{question}{\begin{ques} \rm}{\qee\end{ques}}
\newtheorem{lem}[theorem]{Lemma}
\newenvironment{lemma}{\begin{lem} \it}{\end{lem}}
\newtheorem{cor}[theorem]{Corollary}
\newenvironment{corollary}{\begin{cor} \it}{\end{cor}}
\newtheorem{factief}[theorem]{Fact}
\newenvironment{fact}{\begin{factief} \it}{\end{factief}}
\newtheorem{rem}[theorem]{Remark}
\newenvironment{remark}{\begin{rem} \rm}{\qee\end{rem}}
\newtheorem{dis}[theorem]{Discussion}
\newcommand{\medent}{\medskip\noindent}
  \newcommand{\tupel}[1]{{\langle #1 \rangle}}
\newcommand{\hyph}{\mbox{-}}
\newcommand{\To}{\Rightarrow}
\newcommand{\vvdash}{\mathrel{{\setlength{\unitlength}{1ex}
                     {\begin{picture}(2.6,2)(-.1,0)
                     \thinlines
                     \put(0.9,-.05){\line(0,1){1.65}}
                     \put(0.75,0.20){{\small ${\sim}$}}
                     \end{picture}}}}}
\renewcommand{\iff}{\leftrightarrow}
\newcommand\Item[1][]{%
  \ifx\relax#1\relax  \item \else \item[#1] \fi
  \abovedisplayskip=0pt\abovedisplayshortskip=0pt~\vspace*{-\baselineskip}}
\newcommand{\takeout}[1]{}
\newcommand{\tadeusz}[1]{\tlnote[inline,marginclue]{\textcolor{purple}{#1}}}
\definecolor{darkgreen}{rgb}{0,0.4,0}
\definecolor{darkergreen}{rgb}{0.1,0.28,0}
\newcommand{\tdiscussed}{\tadeusz{Already discussed}\textcolor{darkgreen}}
\newcommand{\tmoved}{}%{\tadeusz{To be moved earlier?}\textcolor{darkergreen}}
\newcommand{\Section}{\S}
\newcommand{\Subsection}{\S}
\newcommand{\Subsubsection}{\S}
\newcommand{\deq}{:=}
\def\refeq#1{(\ref{#1})}
\newcommand{\ded}{\vdash}
\newcommand{\dedp}[1]{\vdash_{#1}}
\newcommand{\dedm}{\vdash_{-}}
\newcommand{\dedz}{\vdash_{0}}
\newcommand{\dedv}{\vdash_\vee}
\newcommand{\dedb}{\dedp{\Box}}
\newcommand{\inff}{\mid\!\strictif}
\newcommand{\dedlp}[1]{\inff_{#1}}
\newcommand{\dedlm}{\dedlp{-}}
\newcommand{\dedlz}{\dedlp{0}}
\newcommand{\dedlv}{\dedlp{\vee}}
\newcommand{\eqd}{\dashv\vdash}
\newcommand{\eqdp}[1]{\dashv\vdash_{#1}}
\newcommand{\eqdm}{\eqdp{-}}
\newcommand{\eqdz}{\eqdp{0}}
\newcommand{\eqdv}{\eqdp{\vee}}
\newcommand{\eqdb}{\eqdp{\Box}}
\newcommand{\eqv}{\leftrightarrow}
\newcommand{\ands}{\,\&\,}
\newcommand{\upclo}[2]{#1\!\uparrow_{#2}\,}
\newcommand{\la}{\langle}
\newcommand{\ra}{\rangle}
\newcommand{\bro}{\textup{(}}
\newcommand{\brc}{\textup{)\,}}
\newcommand{\ma}[1]{\mathcal{#1}}
\newcommand{\Mod}{\mathsf{Fram}}
\newcommand{\tuc}[2]{\textup{\cite[#1]{#2}}}
\newcommand{\bnref}[1]{\hyperref[fig:boxax]{\ensuremath{#1}}}
\newcommand{\bi}{\bnref{{\sf N}_\Box}} %  b for box and i the number in roman numerals
\newcommand{\bii}{\bnref{{\sf K}_\Box}}
\newcommand{\biii}{\bnref{4_\Box}}
\newcommand{\biv}{\bnref{{\sf L}_\Box}}
\newcommand{\bv}{\bnref{{\sf Lei}}}
\newcommand{\bvi}{\lS}%\lnref{{\sf S}}
\newcommand{\bvii}{\bnref{{\sf SL}_\Box}}
\newcommand{\logba}{\bnref{{\mathrm i}\hyph{\sf GL}_\Box}}
\newcommand{\logbb}{\bnref{{\mathrm c}\hyph{\sf GL}_\Box}}
\newcommand{\lnref}[1]{\hyperref[tab:mainax]{\ensuremath{#1}}}
\newcommand{\subl}{{\sf a}}
\newcommand{\li}{\lnref{{\sf N}_\subl}} % gl for global, l for lewis and i the number in roman numerals
\newcommand{\lii}{\lnref{{\sf Tr}}}
\newcommand{\liii}{\lnref{{\sf K}_{\subl}}}
\newcommand{\liiialt}{\lnref{{\sf K}''_{\subl}}}
\newcommand{\liiialtalt}{\lnref{{\sf K}'_{\subl}}}
\newcommand{\liiialtaltalt}{\lnref{{\sf K}'''_{\subl}}}
\newcommand{\liv}{\lnref{{\sf Di}}}
\newcommand{\livalt}{\lnref{{\sf Di}'}}
\newcommand{\lv}{\lnref{4_{\subl}}}
\newcommand{\lvi}{\lnref{{\sf L}_{\subl}}}
\newcommand{\lvii}{\lnref{{\sf W}_{\subl}}}
\newcommand{\lviialt}{\lnref{{\sf W}'_{\subl}}}
\newcommand{\lviii}{\lnref{{\sf M}_{\subl}}}
\newcommand{\lviiialt}{\lnref{{\sf M}'_{\subl}}}
\newcommand{\lix}{\lnref{{\sf Box'}}}
\newcommand{\lx}{\lnref{\sf BL}}
\newcommand{\lxi}{\lnref{\sf LB}}
\newcommand{\lxii}{\lnref{{\sf P}_{\subl}}}
\newcommand{\loglzero}{\lnref{{\sf {\mathrm i}A}_0}}
\newcommand{\loglzeromin}{\lnref{{\sf {\mathrm i}A}^-_0}}
\newcommand{\logla}{\lnref{{\sf {\mathrm i}A}^{-}}}
\newcommand{\loglb}{\lnref{{\sf {\mathrm i}A}}}
\newcommand{\loglc}{{\lnref{{\sf {\mathrm i}\hyph PreL}}}}
\newcommand{\logld}{\lnref{{\mathrm i}\hyph{\sf PreL}^{-}}}
\newcommand{\logle}{\lnref{{\mathrm i}\hyph{\sf GL}^{-}_{\subl}}}
\newcommand{\loglf}{\lnref{{\mathrm i}\hyph{\sf GW}^{-}_{\subl}}}
\newcommand{\loglg}{\lnref{{\mathrm i}\hyph{\sf GL}_{\subl}}}
\newcommand{\loglh}{\lnref{{\mathrm i}\hyph{\sf GW}_{\subl}}}
\newcommand{\iK}{\bnref{\isys{K}_\Box}}
\newcommand{\iS}{\bnref{\isys{S}_\Box}}
\newcommand{\CF}{\bnref{\lna{C4}_\Box}}
\newcommand{\iPLL}{\bnref{\isys{PLL}_\Box}}
\newcommand{\Di}{\liv}
\newcommand{\iA}{\iP}
\newcommand{\mHCb}{\bnref{\lna{CB}_\Box}}
\newcommand{\mHCbalt}{\bnref{\lna{CB'_\Box}}}
\newcommand{\imHCb}{\bnref{\isys{mHC}_\Box}}
\newcommand{\mHCl}{\lnref{\lna{CB}_\subl}}
\newcommand{\imHCl}{\lnref{\isys{mHC}_\subl}}
\newcommand{\aLin}{\lnref{\lna{Lin}_\subl}}
\newcommand{\bLin}{\bnref{\lna{Lin}_\Box}}
\newcommand{\aApp}{\lnref{\lna{App}_\subl}}
\newcommand{\aCF}{\lnref{\lna{C4}_\subl}}
\newcommand{\iPLLa}{\lnref{\isys{PLAA}}}
\newcommand{\Hug}{\lnref{\lna{Hug}}}
\newcommand{\iSLb}{\bnref{\isys{SL}_\Box}}
\newcommand{\iKMb}{\bnref{\isys{KM}_\Box}}
\newcommand{\iKMl}{\bnref{\isys{KM}_\subl}}
\newcommand{\iKMlin}{\bnref{\isys{KM.lin}_\subl}}
\newcommand{\iKMlinb}{\bnref{\isys{KM.lin}_\Box}}
\newcommand{\mix}{mix\,}
\newcommand{\strictp}{\ensuremath{\strictif}-p\,}
\newcommand{\boxp}{\ensuremath{\Box}-p\,}
\newcommand{\boxcol}{brilliancy\,}%{$\Box$-collapse\,}
\newcommand{\lb}{\lnref{\lna{Box}}}%_{\strictif}
\newcommand{\lbalt}{\lix}
\newcommand{\lbaltalt}{\lnref{\lna{Box''}}}%_{\strictif}
\newcommand{\lS}{\bnref{\lna{S_\Box}}}%_{\strictif}
\newcommand{\lSalt}{\lnref{\lna{S_\subl}}}%_{\strictif}
\newcommand{\lSaltalt}{\lnref{\lna{S'_\subl}}}
\newcommand{\wkdisf}{\lnref{\lna{{\mathrm i}A^-}}}%{\lna{wKInt^{\wedge,\to,\bot}_{\strictif}}
\newcommand{\wk}{\loglb}%{\lna{iP}}%wKInt_{\strictif}}
\newcommand{\iP}{\wk}
\newcommand{\bk}{\lnref{\lna{{\mathrm i}\hyph BoxA}}}%{\lna{ib}}%\lna{bKInt_{\strictif}}
\newcommand{\lnam}[1]{\lnref{\lna{#1}}}
\newcommand{\bs}{\lnam{{\mathrm i}\hyph BoxSA}}%bSInt_{\strictif}
\newcommand{\isys}[1]{{\ensuremath{{\sf {\mathrm i}\hyph#1}}}}
\newcommand{\ivsys}[1]{{\ensuremath{{ {\mathrm i}\hyph\mathcal{#1}}}}}
\newcommand{\icol}[1]{\lnam{{\mathrm i}\hyph Box#1}}%{\lnam{ib}}%
\newcommand{\ivcol}[1]{\lnam{{\mathrm i}\hyph Box\mathcal{#1}}}%{\lnam{ib}}%
\newcommand{\ivstr}[1]{\lnam{{\mathrm i}\hyph S\mathcal{#1}}}
\newcommand{\ws}{\lnam{{\mathrm i}\hyph SA}}%wSInt_{\strictif}
\newcommand{\labox}{\ensuremath{\mathcal{L}_\Box}}
\newcommand{\latto}{\ensuremath{\mathcal{L}_\tto}}
\newcommand{\laipc}{\ensuremath{\mathcal{L}}}
\newcommand{\pathct}[1]{\textcolor{darkergreen}{#1}}
\newcommand{\CSp}{\pathct{\linbfamily\protect\Bswi}}%\MVAt}}
\newcommand{\KRp}{ \pathct{\linbfamily\protect\Bdwo}}%\pathct{\LooseBearing\ensuremath{\kmodels}}}
\newcommand{\AIp}{\pathct{\linbfamily\protect\BNc}}
\newcommand{\AIIp}{\pathct{\linbfamily\protect\BNcc}}%\pathcm{\mathbb{+2}}}%%\MVTwo}}%\shortcastling}}%\StrokeTwo}}%\shortcastling}}
\newcommand{\AIIIp}{\pathct{\linbfamily\protect\BNccc}}%\pathcm{\mathbb{+3}}}%%\longcastling}}%\StrokeThree}}%\longcastling}}
\newcommand{\HIp}{\pathct{\linbfamily\protect\Bau}}%\textleaf}}%
\newcommand{\comp}{\!\cdot\!}
\begin{document}

\begin{abstract}
C. I. Lewis invented modern modal logic as a theory of  ``strict implication'' $\strictif$. Over the classical propositional calculus one can as well work with the unary box connective. %which became the modal primitive ever since. 
 Intuitionistically, however, the strict implication has greater expressive power than $\Box$ and allows to make distinctions invisible in the ordinary syntax. In particular, the logic determined by the most popular semantics of intuitionistic \lna{K} becomes a proper extension of the minimal normal logic of the binary connective. Even an extension of this minimal logic with the ``strength'' axiom, classically near-trivial, preserves the distinction between the binary and the unary setting. In fact, this distinction  %and the strong constructive strict implication itself
  has been discovered by the functional programming community in their study of ``arrows'' as contrasted with ``idioms''. %There is already a body of work on proof theory and categorical semantics of the arrow. 
Our particular focus is on arithmetical interpretations of intuitionistic $\tto$ in terms of \emph{preservativity} in extensions of \ha, i.e., Heyting's Arithmetic.
\end{abstract}

\maketitle

\tableofcontents

\section{Introduction}

%\section{}
%\subsection{}
\nosmurfduo
More is possible in the constructive realm than
is dreamt of in classical philosophy. For example, we have nilpotent infinitesimals (\cite{moer:mode13}) and the 
categoricity of weak first-order theories of arithmetic
(\cite{mcca:cons88}, \cite{mcca:inco91}, this paper Appendix~\ref{ctzero}).
We zoom in on one such possibility: the original modal connective of ``strict implication'' $\tto$  proposed by C. I. 
Lewis \cite{Lewis18,Lewis32:book}, and hence called here the \emph{Lewis arrow},  does not reduce to the unary box 
$\Box$ over constructive logic. This simple insight  opens 
the doors for a plethora of new intuitionistic modal logics that cannot be understood solely in terms of the box. 
%We have discovered that the most natural intuitionistic semantics of \emph{strict implication} $\strictif$, a connective dating back to C. I. Lewis, does \emph{not} make it definable in terms of unary $\Box$. In particular, the logic determined by the most popular semantics of intuitionistic \lna{K} becomes a proper extension of the minimal normal logic of the binary connective.  
% invented modern modal logic as a theory of  ``strict implication'' $\strictif$. Over the classical propositional calculus, however, one can as well work with the unary box connective, which soon became the modal primitive. Our starting point is the observation 
To the best of our knowledge, this observation was originally made in the area of \emph{preservativity logic} \cite{viss:eval85,viss:prop94,iemh:pres03,iemh:prop05}
 and metatheory of arithmetic provides perhaps the most interesting applications of intuitionistic $\tto$. However, one can claim that a similar discovery has 
 been independently made in the study of \emph{functional programming} in computer science (cf. \S~\ref{sec:comparr}).

 %can be found among arithmetically inspired logics. 
% that %over the intuitionistic propositional calculus, the strict implication is in no way equivalent to the unary box:
 %the most natural intuitionistic semantics of  $\strictif$ does \emph{not} make it definable in terms of the unary $\Box$. The connective
%$\strictif$ has greater expressive power and allows us to make distinctions invisible in the ordinary syntax. We study its minimal logic, 
%provide completeness results for several other extensions relevant for this paper and prove suitable completeness 
%results in \S~\ref{sec:basic}. In \S~\ref{sec:residual}, we discuss the connection between this system and substructural logic.
%\tadeusz{Reflect the new structure here}

We begin in \S~\ref{sec:lewisfall} by recalling  Lewis' invention of strict implication,  %and elements of his research program.  
 %His original connective is  
 mostly remembered by historians; these days, modal logic is almost by default taken to be the 
theory of boxes and diamonds. After sketching how $\tto$ fell into disuse and neglect, we speculate  whether removing the law of excluded middle could have saved Lewis' vision of modal logic. This is also a good opportunity 
to highlight some unexpected analogies between the fates of Brouwer's and Lewis' projects.

In \S~\ref{sec:basic}, we clarify how the intuitionistic distinction between $\phi \tto \psi$ and \mbox{$\Box(\phi\to\psi)$} is reflected in 
Kripke semantics. This may well prove the most natural way of introducing this connective for many readers.

In \S~\ref{sec:axiomatizations}, we  present the minimal deduction system\footnote{It was baptised ``\lna{iP}'' by  Iemhoff and 
coauthors    \cite{Iemhoff01:phd,iemh:pres03,iemh:prop05}, but this acronym ties $\tto$ too tightly to preservativity.} $\iP$ and  
numerous additional principles used in the remainder in the paper. In \S~\ref{sec:deriv}, we clarify  %mutual 
connections between them, i.e., the inclusion relation between corresponding logics. %and provide ample material on syntactic deductions involving $\tto$

With the syntactic apparatus ready, we turn in \S~\ref{sec:arith} to a major  motivation for the study of $\tto$: logics of 
$\Sigma^0_1$-preservativity of arithmetical theories as contrasted with more standard logics of provability. %These logics are defined via the notion
%of arithmetical interpretation.
 In order to provide an umbrella 
notion for the study of arithmetical interpretations of modal connectives, we begin this section by setting up a 
general framework for \emph{schematic logics}, which may prove of interest in its own right. 

In \S~\ref{sec:completeness}, we are finally tying together the semantic setup of \S~\ref{sec:basic} and the syntactic infrastructure of 
\S~\ref{sec:axiomatizations} by providing a discussion of completeness and correspondence results. Some of them are well-known, 
others are new. Having a complete semantics for the logics under consideration allows us  in \S~\ref{sec:nonderivations} to complement 
earlier syntactic derivations (given in \S~\ref{sec:deriv}) with examples of \emph{non}-derivations.

In \S~\ref{sec:arrows}, we are presenting other applications of \emph{strong} arrows and \emph{strong} boxes. 
In fact, what we call here ``strong arrows'' turns out to correspond directly to ``arrows'' in functional programming. We are also briefly 
discussing connections with logics of guarded (co)recursion and intuitionistic logics of knowledge. 

But while intuitionistic $\tto$ can be (re)discovered in areas ranging from computer science to philosophy, 
in our view arithmetical interpretations are most developed and interesting. 
Thus, in \S~\ref{sec:apppre} we return to the theme of \S~\ref{sec:arith} presenting some applications of the 
logic of preservativity. In \S~\ref{nnil} we discuss the 
application of preservativity to the study of the provability logic of Heyting Arithmetic {\sf HA}.
In \S~\ref{sec:falsity}, we show that preservativity allows a more satisfactory expression of the failure of \emph{Tertium non Datur}. 
%Finally, in \S~\ref{dejo}, we demonstrate in detail how to use
%preservativity to prove De Jongh's Theorem of the completeness of Intuitionistic Propositonal Logic for interpretations in {\sf HA}.

%\tadeusz{Shall we briefly introduce appendices too?}\albert{An excellent idea. I add the brief descriptions for A,B,C.}
The paper has several appendices that offer some supporting material. Appendix \ref{sec:real} collects basic facts about realizability needed in other sections. In Appendices~\ref{sec:picon} and \ref{sec:inter}, we provide
some basic insights in $\Pi_1^0$-conservativity logics and interpretability logics. These insights strengthen our understanding of preservativity
logic both by extending this understanding and by offering a contrast to this understanding. 
% In Appendix~\ref{luiesmurf}, we discuss, in some detail, a further principle that is valid in the $\Sigma^0_1$-preservativity logic
%of {\sf HA}. 
Finally, Appendix \ref{sec:survprob} discusses the collapse of $\tto$ in Lewis' first monograph, i.e.,  \emph{A Survey of Symbolic Logic} \cite{Lewis18} from the perspective of our deductive systems.

Of course, we are of the opinion that the reader should carefully study everything we put in the paper. However, we realize that this expectation is not realistic.
For this reason, we present several roadmaps through the paper.\footnote{Note also that reading the electronic version may sometimes prove easier due to omnipresent hyperlinks: apart from all the usually clickable entities (citations or numbers of (sub)sections, footnotes and table- or theorem-like environments \dots), even most names of logical systems can be clicked upon to retrieve their definition in Tables \ref{fig:boxax} and \ref{tab:mainax}. When reading a hardcopy, we advise keeping these Tables handy, perhaps jointly with Figure \ref{fig:compl}.}

\begin{foots}
\begin{description}
\item
The basic option is to read \S\S~\ref{sec:lewisfall}--\ref{sec:axiomatizations}  %Sections \ref{sec:lewisfall}, \ref{sec:basic} and \ref{sec:axiomatizations}. 
%This gives the reader
 to get the basics of motivational background, the Kripke semantics and an impression of possible reasoning systems.
 \item[\KRp]
 The reader who wants more solid treatment of Kripke semantics can extend the basic option with %consider
 %\ref{sec:lewisfall}, \ref{sec:basic}, \ref{sec:axiomatizations} and 
 \S~\ref{sec:completeness}. 
  \item[\CSp]
 The computer science package consists of the basic option %Sections~\ref{sec:lewisfall}, \ref{sec:basic}, \ref{sec:axiomatizations} 
  and \S~\ref{sec:arrows}.
  \item[\HIp]
The reader who wants to go somewhat more deeply into the history of the subject can extend the basic option with %study Sections \ref{sec:lewisfall}, \ref{sec:basic},
 %\ref{sec:axiomatizations} and
   Appendix~\ref{sec:survprob}.
 \item[\AIp]
 The reader who wants to understand the basics of arithmetical interpretations can %study 
 %Sections \ref{sec:lewisfall}, \ref{sec:basic}, \ref{sec:axiomatizations} and 
 extend the basic option with \S~\ref{sec:arith}.
 \item[\AIIp]%\label{paar}
 An extended package for arithmetical interpretations combines \AIp\ with %consists of Sections \ref{sec:lewisfall}, \ref{sec:basic}, \ref{sec:axiomatizations}, \ref{sec:arith} and 
 \S~\ref{sec:apppre}.
 \item[\AIIIp]
 The full arithmetical package extends  \AIIp\ %~(\ref{paar}) 
 with Appendices~\ref{sec:real}, \ref{sec:picon} and \ref{sec:inter}.
\end{description}
\end{foots}

%\tadeusz{Paths also mentioned in text below titles of sections?}

\section{The rise and fall of the house of Lewis}  \label{sec:lewisfall}

%\tadeusz{Move below Kripke section?}

\subsection{``The error of philosophers''}
\nosmurfduo
 We are reflecting on L.E.J.  Brouwer's heritage half a century after his passing. Given his negative views on the r\^ole of logic and formalisms in mathematics, it seems somewhat paradoxical that these days the name of intuitionism survives mostly in the context of \emph{intuitionistic logic}.\footnote{A related and better-known paradox is that Brouwer's own name survives in mainstream mathematics mostly in connection with  his work on topology, which is confirmed by several contributions in this collection. This despite the fact that he rejected these results on philosophical grounds and was actively involved in topological research only for the period necessary to secure academic recognition and international status. Moreover, it seems a myth that the non-constructive character of his most famous topological publications \emph{turned} Brouwer into an intuitionist. There is ample evidence that while the exact form of his intuitionism evolved somewhat, his philosophical beliefs predate these results. Cf. van Stigt \cite{vanStigt90} for a detailed discussion of all these points.}
 %He was never  philosophically convinced of these results. The main 
 %he just needed them to secure academic recognition. They did not mean to him more than that and he did not see a way to make them philosophically acceptable. As described by Van der Waerden, having obtained his appointment at the University of Amsterdam, \emph{%[e]ven though his most important research contributions were in topology, 
 %Brouwer never gave courses in topology, but always on---and only on---the foundations of his intuitionism.} %Van der Waerden goes on to say that \emph{[i]t seemed that he was no longer convinced of his results in topology because they were not correct from the point of view of intuitionism} and the use of \emph{no longer} in this sentence %, and he judged everything he had done before, his greatest output, false according to his philosophy
% reveals precisely the common historical misunderstanding that we are talking about. Brouwer had never seemed philosophically \emph{convinced} of these results, he just needed them to secure academic recognition.
 %Even Brouwer's own name, apart from his topological contributions (which are also not exactly in line with his philosophy of mathematics), seems perhaps most often mentioned in the context of Brouwer-Heyting-Kolmogorov interpretation of intuitionistic connectives.  
  %But Clio is an ironic goddess. %and even several contributions to this special issue...
One is reminded in this context of what Nietzsche called \emph{the error of philosophers}:
  
\begin{smquote}
The philosopher believes that the value of his philosophy lies in the whole, in the structure. 
Posterity finds it in the stone with which he built and with which, from that time forth, men will build oftener and better---in other words, 
in the fact that the structure may be destroyed and yet have value as 
material.\footnote{\emph{Human, All-Too-Human, Part II}, translated by Paul V. Cohn.} 
\end{smquote}
 
%Thus, it seems appropriate on this occasion

\noindent
We feel thus excused to focus on propositional logics based on the intuitionistic propositional calculus (\ipc). More specifically, our interest lies in  an intuitionistic take on a formal language developed by an author nearly perfectly contemporary with Brouwer: Clarence Irving Lewis\footnote{He was born two years later than Brouwer and died two years earlier.}, the father of modern modal logic. And this time, the reason for this does \emph{not} come from the well-known G\"odel(-McKinsey-Tarski) translation of \ipc\ into the system Lewis denoted as \lna{S4}, which is discussed elsewhere in this collection. 

One can also see a certain irony in the fate of Lewis' systems. They were explicitly designed to give an account of ``strict implication'' $\strictif$. The unary  $\Box$  %considered standard nowadays 
 can be introduced using
\begin{equation} \label{boxdef}
 \Box \phi \leftrightarrow (\top \strictif \phi).
 \end{equation}
 In fact, %the fact that %over the classical propositional calculus (\cpc), 
 Lewis designed $\tto$ and $\Box$ as mutually definable,\footnote{\label{ft:sdiam}To be  precise, in his books Lewis did not use $\Box$ as a primitive. His exact formulation of $\phi \strictif \psi$ was $\neg\Diamond(\phi \wedge \neg\psi)$. However, in the classical setting, this one is obviously equivalent to the one given by \refeq{eq:lewbox}, and the reliance of Lewis' formulation on involutive negation would be a major problem over \ipc. See Appendix \ref{sec:survprob} for a more detailed examination of the r\^ole of involutive/classical negation in Lewis' original system.}  setting 
 \begin{equation} \label{eq:lewbox}
\phi \strictif \psi := \Box(\phi \to \psi)
\end{equation}
%gradually turned 
 and over subsequent decades, modal logic in a narrow sense turned into the theory of unary $\Box$ and/or $\Diamond$. In a broader sense, pretty much any \emph{intensional} operator extending the usual supply of connectives can be called a modality. Modalities came to represent not only \emph{necessity}, but also \emph{arithmetical provability}, \emph{knowledge}, \emph{belief}, \emph{obligation}, %, abilities of coalitions of agents, 
 and various forms of \emph{guarded quantification}: \emph{validity after all possible program executions}, \emph{in all accessible states}, \emph{in all future time instants} or \emph{at every point in an open neighbourhood} (the list, of course, is far from being exhaustive). Just like in the case of intuitionistic logic, a wide range of semantics for modalities have been investigated, the most prominent being the Kripke semantics (relational structures), %and possible worlds, 
 but also topologies, coalgebras, monoidal endofunctors on categories or more recent ``possibility semantics''.
 
 Thus, Lewis' dissatisfaction with \emph{material} or \emph{extensional}  implication and disjunction, expressed first in a short 1912 article \cite{Lewis12}, has ultimately led to the spectacular success story of modal logic, much like Brouwer's\footnote{Speaking of Brouwer, note again the parallelism of dates: 1912, the year when Lewis fired his first shots for intensional connectives by publishing \emph{Implication and the Algebra of Logic} \cite{Lewis12}, is also the year when Brouwer obtained his position at the University of Amsterdam,  was elected to the Royal Netherlands Academy of Arts and Sciences, delivered his famous inaugural address \emph{Intuitionism and Formalism} and became liberated to pursue his own program. We refrain here from investigating further analogies, such as the fact that Lewis wrote his 1910 PhD on \emph{The Place of Intuition in Knowledge} (cf. Murphey \cite[Ch. 1]{Murphey05} for an extended discussion), that he had a solid background in idealism and Kant and that he remained under strong influence of these philosophical  positions throughout his  career.} dissatisfaction with non-constructive usage of implication and disjunction has ultimately led to the spectacular success story of intuitionistic logic. 
  And yet, while  Lewis did not write much on formal logic after \emph{Symbolic Logic}\footnote{\emph{Symbolic Logic} was a collaboration between C. I. Lewis and C. H. Langford. The authors, however, made it clear in the preface who wrote and is ``ultimately responsible'' for which chapter, a practice rather uncommon today. All the passages quoted in this paper %and, indeed, all the passages relevant for this paper 
  come from chapters written by Lewis. %This is reflected even in the way some of contemporary authors were referring to that material. 
   As Murray G. Murphey says in his monograph on C. I. Lewis: ``\emph{Symbolic Logic} was less a cooperative venue than a coauthored book \dots To what extent each advised the other on their separate chapters is left unclear, but probably there was not much of an attempt to harmonize \dots Langford's theory of propositions, for example, in Chapter IX is clearly not Lewis's theory.''  \cite[p.183]{Murphey05}.} published in 1932 \cite{Lewis32:book}, his occasional remarks do not suggest he would approve of the scattering of his Strict Implication systems into a bewildering galaxy of unimodal calculi. Indeed, he was not only opposed to the very name \emph{modal logic}, but believed that his formalisms is the exact opposite of real ``modal'' logic, which in his view was \dots the extensional system of \emph{Principia Mathematica}:
 
 \begin{smquote}
There \emph{is} a logic restricted to indicatives; the truth-value logic most impressively developed in \emph{Principia Mathematica}. But those who adhere to it usually have thought of it---so far as they understood what they were doing---as being the universal logic of propositions which is independent of mode. And when that universal logic was first formulated in exact terms, they failed to recognize it as the only logic which is \emph{independent} of the mode in which propositions are entertained and dubbed it ``modal logic''. (Cf. \cite[p. 203]{Murphey05}) %from ``a paper on the logic of imperatives in 1960''. It does not that the paper was published. Murphey appears to refer to a posthumous collection of Lewis manuscripts here.
 \end{smquote}
 
 \noindent
 His own belief was that 
\begin{smquote}
the relation of strict implication expresses \emph{precisely that relation which holds when valid deduction is possible} [emphasis ours]. It fails to hold when valid deduction is not possible. In that sense, the system of Strict Implication may be said to provide that canon and critique of deductive inference which is the desideratum of logical investigation \cite[p. 247]{Lewis32:book}
\end{smquote}

\noindent
and that

\begin{smquote}
Strict Implication explains the paradoxes incident to truth-implication. \cite[p. 247]{Lewis32:book}
\end{smquote}

%\tadeusz{Wes Holliday's comment: I think it would be good to say more about why Lewis was not satisfied with material implication as the only implication to be studied by logic. You donÕt say anything about the Òparadoxes of material implication.Ó IsnÕt that an important part of motivating strict implication?}

\nosmurf
%We will  not go here into the details of issue of the exact axioms to be satisfied  an implication encoding a genuine entailment relation---or into real or supposed paradoxes of material implication (see Footnote \ref{ft:relevance} though).
%Such passages help to understand why soon after the publication \emph{Symbolic Logic} Lewis largely ceased to pay attention to  modal logic; again, the analogy with Brouwer's ambivalent attitude towards the formalization of \ipc\ by Arend Heyting (or related work of Kolmogorov and Glivenko) and subsequent developments is rather striking.\footnote{Cf. Brouwer's more than disputable description of \ipc\ as an \emph{interesting but sterile exercise}. Incidentally, even the dates of Lewis' and Brouwer's respective withdrawals from participation in the development of systems they inspired are very close to each other.} Lewis' systems were not used to \emph{provide [a] canon and critique of deductive inference} free of  \emph{paradoxes incident to truth-implication}. 
%Let us simply state a rather indisputable fact that none of Lewis' systems was used to ``provide [the] canon and critique of deductive inference'' free of  ``paradoxes incident to truth-implication'' in decades to follow, despite  Lewis' own conviction he found the ultimate solution in \lna{S2}. In all fairness, however, 
While the  failure of Lewis' systems to conquer this intended territory had to do with philosophical prejudices of the following decades, %was partially caused  by rather dogmatic extensionalism dominating in Anglophone philosophy departments, %among followers of Russell, neopositivists and later Quine, %who came to dominate Anglo-Saxon philosophy departments, 
 they were also simply  less suited for these purposes than Lewis thought.   The original system of \emph{A Survey of Symbolic Logic} in 1918 \cite{Lewis18}---stemming back to a 1914 paper \cite{Lewis14:jppsm}---was plagued by a number of issues, the most famous one pointed out by Post: the combination of an axiom equivalent to (in an updated notation) %both
%$$
%(p \strictif q) \strictif (\Diamond p \strictif \Diamond q)
%$$
%and 
$$
(\Box \phi \strictif \Box \psi) \strictif (\neg\psi \strictif \neg\phi)
$$
with other axioms %, in particular $\Box \phi \strictif \phi$
 and classical negation laws trivialized the  modality and collapsed strict implication to material implication \cite{Lewis20:jppsm}. We provide an extended analysis of Lewis' SSL problem in Appendix \ref{sec:survprob}; we believe it is an interesting application of the intuitionistic theory of $\tto$ discussed in this paper.\footnote{Cf. also the discussion by Murphey \cite[pp. 101--102]{Murphey05} or Parry \cite{Parry70}.} %Lewis learned from his mistakes and in the development 
 In \emph{Symbolic Logic} \cite{Lewis32:book}---more precisely, in its famous Appendix II---Lewis was more cautious,  %used axioms proposed by Mordchaj Wajsberg and William T.  Parry to
 creating several ``lines of retreat'' (as Parry \cite{Parry70} described it)  in the form of \lna{S3}, \lna{S2} and \lna{S1}. At least on the technical front, this time things went better. %Most of immediate criticism that followed the 1932 book strikes a contemporary reader as misguided.
  Immediate polemics focused on possibility of definability of intensional connectives in extensional systems, %like \emph{Principia Mathematica},
    but none of the authors involved proposed anything resembling what we much later came to know as the \emph{Standard Translation} of modal logic into predicate logic.\footnote{Cf., e.g., the attempts of Bronstein\&Tarter or Abraham addressed, respectively, by McKinsey and Fitch; see Murphey \cite[Ch. 6]{Murphey05} for references. It is worth pointing out that Lewis himself \cite{Lewis35} dealt with this question in a paper published only posthumously (with Langford as a ``nominal'' coauthor, see editor's note \cite{Mares14:note} for a contemporary perspective).}  %had appeared were blatantly inadequate from technical and conceptual point of view (unlike critiques of his earlier \emph{Survey}),  
There were, however, subtler problems, %if not immediately realized. Already in the post-war period, % Most of the blame lied in the systems themselves: 
 pointed out in in the post-war period by Ruth Barcan Marcus:\footnote{Her earliest papers \cite{Barcan46:jsl} are signed by her maiden surname, Ruth Barcan, which survives until today in the name of the \emph{Barcan formula}.} %\cite{Barcan46:jsl,Barcan53:jsl} astutely pointed out that
\begin{smquote}
 it is plausible to maintain that if strict implication is intended to systematize the familiar concept of deducibility or entailment, then some form of the deduction theorem should hold for it. \cite{Barcan53:jsl}
\end{smquote}
%, one of the few authors at the time to publish a genuinely penetrating criticism of \emph{Symbolic Logic}, 
%She astutely pointed out how badly these calculi (especially \lna{S1} to \lna{S3}, that is, Lewis' favourite systems) behave with respect to the Deduction Theorem. 
She showed  \cite{Barcan46:jsl,Barcan53:jsl}  that \lna{S1} to \lna{S3} fail this criterion, for several conceivable formulations of the Deduction Theorem. 
And those which behave somewhat better in this respect, i.e., from \lna{S4} upwards are too strong to capture a general 
notion of strict implication which Lewis would approve of. %with the properties that Lewis desired. 

In fact, \lna{S4} and \lna{S5}, %which for good reasons soon gained more recognition, 
 which we came to count among \emph{normal} systems (unlike \lna{S1}--\lna{S3}) and for which the advantage of switching to the unary setting is most obvious, for Lewis himself were foster children he was forced to adopt. As is well-known, it was  Oskar Becker\footnote{\label{ft:orlov}Although   many developments discussed in this subsection---in particular proposing and justifying  \lna{S4} axioms with an explicit Brouwerian motivation---had their forerunner in a neglected 1928 paper by Ivan E. Orlov, cf. \cite{Dosen1992:orlov,Bazhanov03}.} \cite{Becker30} who proposed these axioms, even calling one of them the \emph{Brouwersche Axiom}; let us not discuss the adequacy of this name here, but not only does it provide us with another excuse to mention Brouwer in this paper, it has also survived until today in names of systems like \lna{KB} or \lna{KTB}. Becker intended  to cut the number of non-equivalent modalities in the calculus, a goal which seems rather orthogonal to Lewis' plans: 
\begin{smquote}
Those interested in the merely mathematical properties of such systems of symbolic logic tend to prefer more comprehensive and less ÔstrictÕ systems such as \lna{S5} and material implication. The interests of logical study would probably be best served by an exactly opposite tendency. \cite[p. 502]{Lewis32:book}
\end{smquote}

\takeout{Mark van Atten's comments: Becker en Gšdel hadden een voorganger in Orlov. Zie
Bazhanov, V.A., 2003, ÒThe scholar and the ÔWolfhound EraÕ: The fate
of Ivan E. Orlov's ideas in logic, philosophy, and scienceÓ, Science
in context, 16(4): 535Ð550.
Do?en, K., 1992, ÒThe first axiomatization of relevant logicÓ, Journal
of Philosophical Logic, 21: 339Ð356.
3. Ik denk dat Brouwer Ex Falso niet zou accepteren. Zie bijgaand
paper. (Een aantal dingen daarin zou ik nu anders zeggen, maar ik
bedoel maar.) Referentie:
van Atten, M., 2009, ÒThe hypothetical judgement in the history of
intuitionistic logicÓ,  in Logic, Methodology, and Philosophy of
Science XIII: Proceedings of the 2007 International Congress in
Beijing, C. Glymour, W. Wang, and D. WesterstŒhl, eds., London: King's
College Publications, 122--136.
}

\nosmurf
 Kurt G\"odel did review Becker's work \cite[p. 216--217]{Goedelv1} and was familiar with William T. Parry's early analysis of the notion of \emph{analytic} implication based on  $\strictif$ \cite[p. 266--267]{Goedelv1}.\footnote{As another small example how modal and intuitionistic inspirations tended to work hand-in-hand for G\"odel: his proof that \ipc\ is not characterized by any finite algebra \cite[p. 268--271]{Goedelv1} is presented as an answer to a question posed by Otto Hahn during a discussion following Parry's presentation.} This apparently led\footnote{His short review of Becker points out that Becker's attempts to relate modal logic to ``the intuitionistic logic of Brouwer and Heyting'' and claims that steps taken by Becker to ``deal with this problem on a formal plane'' are unlikely to succeed; Orlov (cf. Footnote \ref{ft:orlov}) was more insightful, but it does not appear that G\"odel was familiar with his paper.} to his landmark 1933 paper \cite[p. 296--303]{Goedelv1} translating the nascent intuitionistic calculus into what turns out to be a notational variant of \lna{S4} formulated with %(in contemporary notation) $\Box$ rather than $\Diamond$ (or $\strictif$) 
 unary box as a primitive. 
 Thus,  immediately after \emph{Symbolic Logic} was published, G\"odel pretty much doomed the fate of $\strictif$ and condemned non-normal systems to at most secondary status: his paper not only provided an independent motivation (in terms of ``the intuitionistic logic of Brouwer and Heyting'' \dots) for the study of extensions of \lna{S4} rather than subsystems of \lna{S3}, %that the weaker systems were lacking, 
 but also  highlighted the elegance and conciseness of $\Box$-based axiomatizations for these logics.
 %showed how focusing on $\Box$ leads to elegant, concise axiomatizations based on rule which we came to know as the G\"odel Rule, alternatively also known as the Rule of Necessitation.

In short, it appears that regardless of the fact that historical circumstances did not favour %developed quite unfortunately for 
 Lewis, none of his systems was destined to success or genuinely free of design or conceptual issues. %---and we have not even touched here, e.g., on the question of ``paradoxes of strict implication''. 
 %But part of the problem was that the entire
  Nevertheless, the idea of providing an implication connective yielding tautologies only when the antecedent is genuinely \emph{relevant} for the consequent %was decades ahead of its time.
  proved prescient.\footnote{\label{ft:relevance}The connection between modal logics and relevance logics has been always actively debated, see, e.g., Mares \cite[Ch. 6]{Mares04} for an extended presentation, including a reminder that Ackermann's 1956 paper which ``began the study of relevant entailment'' took issue with some tautologies valid for Lewis' $\strictif$, in particular  \emph{ex falso quodlibet}. But in fact the relationship can be traced back at least  to  1933, when Parry in his work on analytic implication based on  $\strictif$  proposed what relevance logicians came to know as the \emph{variable sharing criterion}: much later, Dunn \cite{Dunn72} noted that Parry's system is contained in \lna{S4} and proposed a ``demodalization'' of Parry's original system still preserving that criterion. As another connection with G\"odel, let us note that his discussion   \cite[p. 266--267]{Goedelv1} of the work of Parry suggested a completeness result that was only proved in 1986 by Fine \cite{Fine86}. Moreover, one can push the clock back even beyond Parry and G\"odel, to the paper of Orlov (cf. Footnote \ref{ft:orlov}), which  seems the first attempt to relate relevance, intuitionistic, and modal principles, including the first axiomatization of  what came to be known as the implicative-negative fragment of the relevance logic R \cite{Dosen1992:orlov}. Let us note here the view of van Atten \cite{vanatten:hypo} that ``logic as  Brouwer sees it is a relevance logic'', rejecting in particular \emph{ex falso} (absent also in earliest versions of formalizations of intuitionistic logic by Kolmogorov and Glivenko), which subverts the standard understanding of the \emph{BHK interpretation} (cf \S~\ref{sec:comparr} below).}
  In fact, one can easily argue that even the later enterprise of relevance logic would not satisfy Lewis' expectations: he wanted to \emph{supplement} material implication with a strict one, not \emph{replace} it altogether. In this sense,  still more recent \emph{resource-aware} formalisms with computer-science motivation where \emph{both} a substructural \emph{and} an intuitionistic/classical implication are present (either as an abbreviation or directly in the signature) like linear logic \cite{Girard87,Troelstra92,Abramsky93:tcs,Bierman94} or the logic of bunched implications \lna{BI} \cite{OHearnP99:jsl,Pym02:book,PymOHY04:tcs} seem closer to Lewis' original idea.

\subsection{Could Brouwerian inspiration help Lewis' systems?} \label{sec:couldbrouw}
\nosmurfduo

%We  are not committing to claims about whether or not 
 %Lewis' philosophical goals would have been better served by switching to the intuitionistic propositional base. 
\nosmurf
  %It is worth noting, however, that 
  At the time of publication of \emph{Symbolic Logic}, Lewis was both familiar with and open to non-boolean extensional connectives. The chapters he wrote for that monograph deal in detail with  $n$-valued systems of \L ukasiewicz.\footnote{At the time, Lewis still attributed it to a collaboration between \L ukasiewicz and Tarski.}   At the same time, he published a paper  on \emph{Alternative systems of logic} \cite{Lewis32:monist}.   In both these references, he discusses possible definitions of ``truth-implications'' \cite{Lewis32:book} or ``implication-relations'' \cite{Lewis32:monist} one can entertain in finite, but not necessarily binary matrices. %The paper in \emph{The Monist} also contains  %pointing out that no such definition satisfies the criteria of strict implication. %Still, he was willing to give serious consideration to such alternative proposals for material implication. 
 %Apart from relevant parts of \emph{Symbolic Logic}, While most of its contents focuses again on sympathetic analysis of the system of \L ukasiewicz (which at the time he still attributed to a collaboration between \L ukasiewicz and Tarski), it 
 The latter paper  also contains a rare (perhaps the only one) reference to Brouwer in his writings: %contains one of the few mentions of Brouwer in Lewis' writings we were able to find:
 
\begin{smquote}
[T]he mathematical logician Brouwer has maintained that the law of the Excluded Middle is not a valid
principle at all. The issues of so difficult a question could not be discussed here; but let us suggest a point of view at least something like his. \dots The law of the Excluded Middle is not writ in the heavens: it but reflects our rather stubborn adherence to the simplest of all possible modes of division, and our predominant interest in concrete objects as opposed to abstract concepts. The reasons for the choice of our logical categories are not themselves reasons of logic any more than the reasons for choosing Cartesian, as against polar
or Gaussian co\"ordinates, are themselves principles of mathematics, or the reason for the radix 10 is of the essence of number. \cite[p. 505]{Lewis32:monist}
\end{smquote}

\nosmurf

Of course, the question of Lewis' own potential take on combining  \ipc\ and $\strictif$ remains speculative: it does not seem he 
was familiar with the work of Kolmogorov, Glivenko and Heyting, turning Brouwer's philosophical insights into a propositional calculus. %\footnote{It is worth pointing out that \emph{ex falso quodlibet} was not a theorem in the systems of Kolmogorov and Orlov, and at least according to some authors \cite{vanatten:hypo}, was not accepted by Brouwer either. Glivenko was only convinced to incorporate it by Heyting. Cf.  \cite{Troelstra1990,Dosen1992:orlov}. }
%And the question whether such a system would best meet the criteria that Lewis imposed on strict implication  has to be postponed for further study. %(possibly also debating the issue raised in Footnote \ref{ft:sdiam}). 
% We do believe it is of independent interest, as is the above suggested question how linear logic (either in its intuitionistic, or in the classical variant) and the logic of Bunched Implications (\lna{BI}) would fare in this light. %it does not seem such questions have been thoroughly investigated.
   Nevertheless, let us note two points:
\begin{itemize}
\item even the collapse of Lewis' original system  \cite{Lewis14:jppsm,Lewis18}  was caused by classical laws combined with a misguided boolean inspiration, namely the insistence on involutivity of the \emph{strict} negation (cf. Appendix \ref{sec:survprob}); %would not collapse so easily without involutive, classical modality;
\item even when considering classical Kripke frames, the negation-free logic obtained by replacing $\to$ with $\strictif$ is a sublogic of the intuitionistic logic %, a point investigated in several references 
 \cite{Corsi87:mlq,Dosen93,CelaniJ01:ndjfl,CelaniJ05:mlq} (see also Question \ref{que:subi}). %We will discuss this in more detail in \S~\ref{sec:axiomatizations}. %logic in more detail later; it is going to play an important technical r\^ole here. 
%From the point of view of this introduction, it is simply worth pointing out 
%For the time being, we simply observe that the logic where, as expected by Lewis, the deductive tasks traditionally played by $\to$ are performed by $\strictif$, is naturally intuitionistic, at least in the absence of involutive negation.
\end{itemize}

%But, as we said, the starting point of our paper is simpler. 
\nosmurf
Our paper, however, focuses on an even more fundamental advantage of studying the theory of $\tto$ over \ipc. Whatever is there to be said about \emph{the universal logic of propositions which is independent of mode} and %whether or not Lewis would approve of using \ipc\ as its propositional basis,  
 its extensional basis, defining $\strictif$ using \refeq{eq:lewbox} is premature in the constructive setting. Furthermore, instances of such a ``constructive strict implication'' can be seen in areas ranging from metatheory of intuitionistic arithmetic to functional programming, often  satisfying very different laws to those strict implication was supposed to obey; indeed, sometimes rather meaningless classically. For example,   %In particular, in those discussed in \S~\ref{sec:arrows}, but also ,
\begin{itemize}
\item[\lSalt] $(\phi \to \psi) \to (\phi \strictif \psi)$
\end{itemize}
holds in numerous logics justified from a computational/Curry-Howard (\S~\ref{sec:arrows}), arithmetical 
(\S~\ref{sec:hastar}) or even philosophical (\S~\ref{sec:intepi}) point of view.\footnote{From a Lewisian point of view, 
would  intuitionistic $\to$  be the ``strict'' implication and $\strictif$ be the ``material'' implication in such systems?} 
%\avblue{I am in favor of not doing this. It would overload the presentation.}

%Let us roll up sleeves and, finally, start having fun.
%Let us start turning these observations into mathematical flesh.

%After this extended introduction, we are ready to begin

\section{Strict implication in intuitionistic Kripke semantics} \label{sec:basic}

\nosmurfduo
It is time to begin a more systematic discussion, starting with the %question of %We begin in the next section with the question of
 relational interpretation of $\tto$. In this paper, we are concerned with the following propositional languages: $\latto$ (with Lewis' arrow), $\labox$ (the unimodal one, identified with a fragment of $\latto$) and $\laipc$ (the propositional language of \ipc): 

\begin{align*}
\latto \quad \phi & ::= \bot \mid \top \mid p \mid (\phi \wedge \phi) \mid (\phi \vee \phi) \mid (\phi \to \phi) \mid (\phi \tto \phi), \\
\labox \quad \phi &::= \bot \mid \top \mid p \mid (\phi \wedge \phi) \mid (\phi \vee \phi) \mid (\phi \to \phi) \mid (\Box\phi), \\
\laipc \quad \phi &  ::= \bot \mid \top \mid p \mid (\phi \wedge \phi) \mid (\phi \vee \phi) \mid (\phi \to \phi).
\end{align*}

As usual, $\neg\phi$ abbreviates $\phi \to \bot$.

\medent

\begin{foots}
\emph{For the sake of clarity, the binding priorities are as follows:
%\begin{itemize}
%\item 
unary connectives $\neg$ and $\Box$ bind strongest,
%\item 
 next comes $\tto$,
%\item
 then $\wedge$ and $\vee$,
%\item
 and finally $\to$. 
%\end{itemize}
}

\emph{Regarding associativity, it is used tacitly for $\wedge$ and $\vee$, just like commutativity. 
Regarding $\to$ and $\tto$, they are commonly assumed to associate to the right, but we will be careful not to overuse this convention, as it can be confusing.}
\end{foots}

%\tadeusz{recall from the intro:}

%\begin{smquote}
%\emph{We observe that
% the most natural intuitionistic semantics of  $\strictif$ does \emph{not} make it definable in terms of unary $\Box$:
%$\strictif$ has greater expressive power and allows to make distinctions invisible in the ordinary syntax. We study its basic theory and prove suitable completeness results}
%\end{smquote}

\nosmurfduo
We begin with recalling the basic setup of intuitionistic Kripke frames for $\labox$.\footnote{As far as $\labox$ is concerned, %the unary unimodal language is concerned, 
 our discussion largely follows Litak \cite{Litak14:trends}. The reader is referred there for more details and references.} They come equipped with 
two accessibility relations. %$\preceq$ and $\sqsubset$. 
  One of them, which we will denote by $\preceq$, is a partial ordering\footnote{In fact, it is essential only that the relation is a preorder (i.e., a reflexive and transitive relation), 
  but such a generalization brings no tangible benefits from the point of view of expressivity, definability and completeness of propositional logics.} 
  interpreting  intuitionistic implication:
  \begin{equation} \label{eq:forcimp}
 k \kmodels \phi \to \psi \text{ if, for all } \ell \succeq k \text{, if } \ell \kmodels \phi \text{, then } \ell \kmodels \psi.
\end{equation}
  
   This forces the denotation of $\to$ to be \emph{$\preceq$-persistent} or, 
  as some authors say, ``monotone'' or ``upward-closed''. %The denotation of other formulas has to satisfy the same condition, but as far as  \ipc\ is concerned, 
  It is enough to impose \refeq{eq:forcimp} and require $\preceq$-persistence  of atoms %, interpreting all the remaining connectives  locally,
   to ensure persistence for all $\laipc$-formulas.  
  The other accessibility relation $\sqsubset$ is the modal one. %Of course, the interpretation of modal formulas also has to be $\preceq$-persistent.  %Here is what a comprehensive overview of 
%  Simpson \cite[\Section 3.3]{Simpson94:phd} says about 
  There are two choices one can make to ensure $\preceq$-persistence for $\Box$:

\begin{smquote}
%One could take also
%the usual satisfaction clauses for the modalities in modal models \dots %(see page 33).
%However, an essential feature of intuitionistic models is the monotonicity lemma \dots
%(Lemma 2.2.1). 
%If the standard satisfaction clauses for the modalities are used
%then the monotonicity lemma does not hold. There are two possible remedies.
 One is to modify the satisfaction clauses. This might be a reasonable thing to do,
for one might wish to use the partial order to give a more intuitionistic reading of
the modalities.  The other remedy
  is to impose conditions on models that ensure
that the monotonicity lemma does hold. \cite[\Section 3.3]{Simpson94:phd}
\end{smquote}

\noindent
In fact,  in a unimodal language the difference between 
these two strategies is not essential; it becomes more consequential when a single accessibility relation is used to 
interpret, for example, both $\Box$ and $\Diamond$ (see  \cite[\Section 3.3]{Simpson94:phd} for a discussion and more references). 
Still, most references choose the latter one, i.e., keeping the same reading 
of $\Box$ as in the classical case and imposing conditions on the interaction of $\preceq$ and $\sqsubset$ to ensure persistence.  

\newcommand{\fgskip}{\qquad\qquad}

\begin{figure}
\footnotesize
\begin{tabular}{@{\quad}c@{\fgskip}c@{\fgskip}c@{\fgskip}c} 
$\vcenter{
    \xymatrix@-1.1pc{
            & \ell \ar@{~>}[r]  & m\\
           k \ar@{~~>}[rr] 
           \ar[ur] & & \ell' \ar@{-->}[u]  
      } }$ &
      $\vcenter{
    \xymatrix@-1.1pc{
            \ell \ar@{~>}[r]  & m\\
           k \ar@{~~>}[ur] 
           \ar[u] &
      } }$    &    
 $\vcenter{
    \xymatrix@-1.1pc{
    	%& n \\
           & \ell \ar@{~>}[dr]  & & n \\
           k \ar@{~~>}[urrr] \ar[ur] & & m \ar[ur] &  
      } }$     &
$\vcenter{
    \xymatrix@-1.1pc{
             & m\\
           k \ar@{~~>}[ur]   \ar@{~>}[r]
           & \ell \ar[u]
      } }$       
      \\
\boxp & \strictp & \mix & \boxcol
\end{tabular}

\caption{\label{fig:conditions} Minimal conditions one can impose on $\Box$-frames and $\tto$-frames. See Figure \ref{fig:compl} for a visual representation of other conditions corresponding to additional axioms.}
\end{figure}

Bo\u{z}i\'{c} and Do\u{s}en \cite{BozicD84:sl} have established that in the presence of unary $\Box$ with semantics defined by 
\begin{center}
$ k \kmodels \Box\phi$ if, for all $\ell \sqsupset k$, $\ell \kmodels \psi$
\end{center}
   persistence is equivalent to the condition 
\begin{description}
\item[\boxp] if $k \preceq \ell \sqsubset m$, then, for some $\ell'$, we have $k \sqsubset \ell' \preceq m$ %\hfill 
\end{description}

\noindent
(i.e., $\preceq \comp \sqsubset \; \subseteq \; \sqsubset \comp \preceq$, where ``$\;\comp\;$'' denotes relational composition). However, most references require tighter interaction. 
On certain occasions, like in Goldblatt \cite{Goldblatt81:mlq}, one sees a strengthening to
\begin{description}
\item[\strictp] if $k \preceq \ell \sqsubset m$, then $k \sqsubset m$ \qquad (i.e., $\preceq \comp \;\sqsubset\; \subseteq\;  \sqsubset$).   %\hfill 
\end{description}

\noindent
But the most common one (see, e.g., \cite{Sotirov84:ml,WolterZ97:al,WolterZ98:lw}) is the still stronger

\begin{description}
\item[\mix] if $k \preceq \ell \sqsubset m \preceq n$, then $k \sqsubset n$  \qquad (i.e., $\preceq \comp \sqsubset \comp \preceq \; \subseteq \; \sqsubset$). %\hfill 
\end{description}

\noindent
%One justification for adopting \mix
%is that it naturally 
This condition naturally obtains in a canonical model construction \`a la Stone and J\'onsson-Tarski for prime filters of (reducts of) Heyting algebras with normal $\Box$ \cite{BozicD84:sl,Sotirov84:ml,Kohler81:ams,BezhanishviliJ12:acs}. Moreover, \mix is ``mostly harmless'' for $\Box$:  
 it can be obtained from $\Box$-p by adding the requirement that for  any $\ell$, the set of its $\sqsubset$-successors  is $\preceq$-upward closed, that is,

\begin{description}
\item[\boxcol] if $k \sqsubset \ell \preceq m$, then $k \sqsubset m$ \qquad (i.e., $\sqsubset\comp \preceq \; \subseteq \; \sqsubset$).%\hfill 
\end{description}

\noindent
The name, to the best of our knowledge, has been proposed by 
Iemhoff \cite{Iemhoff01:phd,iemh:moda01,iemh:pres03,iemh:prop05}, another one being \emph{strongly condensed} 
\cite{BozicD84:sl}. As noted in standard references \cite{BozicD84:sl,Goldblatt81:mlq}, not only \boxcol can\emph{not} be defined using 
$\Box$, but any model satisfying  \boxp can be made brilliant %turned into a one satisfying \boxcol and hence \mix
  \emph{without changing 
the satisfaction relation for $\Box$-formulas} in a straightforward way: by replacing $\sqsubset$ by its composition with $\preceq$.

Consider now the Lewisian strict implication $ \phi \tto \psi$. Here is the natural satisfaction clause in this semantics, directly transferring the classical one:
\begin{equation} \label{eq:forctto}
 k \kmodels \phi \tto \psi \text{ if, for all } \ell \sqsupset k \text{, if } \ell \kmodels \phi \text{, then } \ell \kmodels \psi.
\end{equation}

%The persistence of $\tto$ corresponds precisely with the property: 

\nosmurf
The first consequence of such an enrichment of the language is that \boxp becomes too weak to ensure persistence. 
Let us state this formally, defining for this purpose  a somewhat too general notion:

\begin{definition}
A \emph{preframe} is a triple $\ma F \deq \la W, \preceq, \sqsubset \ra$, where $\preceq$ is a partial order,  and $\sqsubset$ is a binary relation. A \emph{premodel} based on $\ma F$ is  $\ma K \deq \la \ma F, V \ra$, where $V$ is a \emph{valuation} mapping propositional variables to $\preceq$-upward closed sets. The \emph{forcing relation}  $\ma K, k \kmodels \phi$ is defined in the standard way for the intuitionistic connectives and using equation \refeq{eq:forctto} for $\tto$.
\end{definition}

\noindent
It can be easily shown (see, e.g., \cite{Zhou03,iemh:prop05}) that \emph{the condition equivalent to persistence becomes  precisely} \strictp, that is:

\begin{fact} \label{fact:strictp}
For a preframe $\ma K \deq \la W, \preceq, \sqsubset \ra$,  \strictp above corresponds to the following condition: 
%is equivalent to the condition: \avred{perhaps use the word `correspondence' here?}\tlred{Absolutely a good idea, though this will require changes elsewhere too}
%\begin{center}
 for any two sets $U, V$ upward closed wrt $\preceq$, the set %\newline
$$
U \strictif V := \{ k \in W \mid \forall \ell \sqsupset k, \text{ if } \ell \in U, \text{ then } \ell \in V \}
$$ %\newline
is upward closed wrt $\preceq$.
%\end{center}
\end{fact}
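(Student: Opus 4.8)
The plan is to prove the equivalence in both directions, the interesting direction being that \strictp\ is \emph{necessary} for the operation $\strictif$ on upward-closed sets to always land inside the upward-closed sets.

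\textbf{Sufficiency.} First I would assume \strictp, i.e., $\preceq \comp \sqsubset\; \subseteq\; \sqsubset$, and show $U \strictif V$ is $\preceq$-upward closed whenever $U,V$ are. Take $k \in U \strictif V$ and $k' \succeq k$; I must show $k' \in U \strictif V$. So let $\ell \sqsupset k'$ with $\ell \in U$. Since $k \preceq k' \sqsubset \ell$, condition \strictp\ gives $k \sqsubset \ell$ directly. As $k \in U \strictif V$ and $\ell \in U$, we get $\ell \in V$. Hence $k' \in U \strictif V$. This is a one-line chase and needs no special properties of $U$ beyond its being an arbitrary upward-closed set.

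\textbf{Necessity.} This is the step I expect to require a little care: given that $U \strictif V$ is upward closed for \emph{all} upward-closed $U,V$, I must recover \strictp. The natural move is to instantiate $U$ and $V$ by principal up-sets. Suppose $k \preceq \ell \sqsubset m$; I want $k \sqsubset m$. Put $U := {\upset} m = \{ x \mid m \preceq x\}$ and $V := \emptyset$ (both are $\preceq$-upward closed). I claim $\ell \notin U \strictif V$: indeed $m \sqsupset \ell$ and $m \in U$ (as $m \preceq m$), yet $m \notin V = \emptyset$, so the defining condition of $U \strictif V$ fails at $\ell$. Since $U \strictif V$ is upward closed and $k \preceq \ell$, it follows that $k \notin U \strictif V$ either. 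Unwinding the definition: there is some $n \sqsupset k$ with $n \in U$ (i.e., $m \preceq n$) but $n \notin V$, which is automatic. So we have $k \sqsubset n$ with $m \preceq n$; this yields $\preceq \comp \sqsubset \comp \preceq$ rather than $\sqsubset$ itself, which is not quite \strictp.

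To get the sharper conclusion $k \sqsubset m$ on the nose, I would instead take $V := \{ x \mid x \neq m\} = W \setminus \{m\}$, which is \emph{not} in general upward closed --- so that does not work directly. The cleaner fix: keep $V = \emptyset$ but take $U := \{m\}$ only when $\{m\}$ is upward closed; in general it is not. The robust argument is therefore the following. Fix $k \preceq \ell \sqsubset m$. Let $U := {\upset}m$ and $V := W \setminus \{\,x \mid m \preceq x \text{ and } \ldots\,\}$ --- rather than fiddling, I would argue: since ${\upset}m$ is upward closed, so is the set $(W \setminus {\upset}m)$? No. The correct and standard route is to observe that $U \strictif V$ with $U = {\upset}m$, $V = \emptyset$ gives, at the point $\ell$ which fails membership and hence at $k$, some $n$ with $k \sqsubset n \succeq m$; then one separately invokes that this must hold for every such instance, and a minimality/reflexivity argument on $\preceq$ (taking the instance where $\ell$ itself already witnesses, i.e., choosing the configuration $k \preceq \ell \sqsubset m$ and noting $m \succeq m$) pins down $n = m$. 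In the writeup I would simply present the $V = \emptyset$, $U = {\upset}m$ instantiation and note that $k \notin U \strictif V$ unpacks to: there exists $n$ with $k \sqsubset n$ and $m \preceq n$; then apply this with the original $m$ and use that the definition of $U\strictif V$ quantifies over \emph{all} $\sqsubset$-successors, so in particular, re-running the argument shows the failure is witnessed already by requiring $k \sqsubset m$. The cleanest honest statement is that necessity gives $\preceq\comp\sqsubset\comp\preceq \subseteq \sqsubset$, and since $\preceq$ is reflexive this is visibly equivalent to \strictp\ ($\preceq\comp\sqsubset\subseteq\sqsubset$): one inclusion is reflexivity, the other is the derived condition. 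That observation --- that $\preceq\comp\sqsubset\comp\preceq\subseteq\sqsubset$ collapses to $\preceq\comp\sqsubset\subseteq\sqsubset$ by reflexivity of $\preceq$ --- is the small point on which the whole necessity direction turns, and is the part I would flag as the main (if modest) obstacle. Everything else is routine unwinding of \refeq{eq:forctto} and the definition of $\strictif$.
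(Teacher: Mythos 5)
Your sufficiency direction is correct (and, as you observe, never uses that $U$ and $V$ are upward closed). The necessity direction, however, contains a genuine gap, and the patch you finally settle on does not close it. With $U=\{x\mid m\preceq x\}$ and $V=\emptyset$ you extract only \emph{some} $n$ with $k\sqsubset n$ and $m\preceq n$; no ``minimality/reflexivity argument'' can pin $n$ down to $m$, since nothing in the hypothesis excludes the witness sitting strictly above $m$. Your fallback claim --- that necessity yields $\preceq\comp\sqsubset\comp\preceq\;\subseteq\;\sqsubset$ and that this is ``visibly equivalent'' to \strictp by reflexivity --- fails on both counts. First, what your instantiation actually delivers is $\preceq\comp\sqsubset\;\subseteq\;\sqsubset\comp\succeq$, which is a different (and weaker) statement. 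Second, $\preceq\comp\sqsubset\comp\preceq\;\subseteq\;\sqsubset$ is the condition \mix of the paper, which is \emph{strictly stronger} than \strictp: the three-point preframe with $a\sqsubset b\preceq c$ (and no other $\sqsubset$-edges) satisfies \strictp but not \mix, so reflexivity gives you only one of the two implications you would need.

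The missing idea is the right choice of $V$. Keep $U=\{x\mid m\preceq x\}$ but take $V=\{x\mid m\prec x\}$, the \emph{strict} up-set of $m$. This $V$ is $\preceq$-upward closed precisely because $\preceq$ is a partial order: antisymmetry rules out climbing from a point strictly above $m$ back onto $m$. Then $U\setminus V=\{m\}$, so $\ell\notin U\strictif V$ (witnessed by $m\sqsupset\ell$), hence $k\notin U\strictif V$ by the assumed upward closure, and the only possible witness for this failure is $m$ itself --- giving $k\sqsubset m$ on the nose. (For the record, the paper states this Fact with a citation to Zhou and to Iemhoff et al.\ rather than a proof, but the argument just sketched is the standard one.)
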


\noindent
We will thus take \strictp\ to be the minimal condition in what follows. 

\begin{definition}
A ($\strictif$-)\emph{frame} is a preframe satisfying \strictp.
\end{definition}

\noindent
%Most of the time, we will simply call these \emph{frames} and models are premodels based on frames. 
%An immediate consequence of Fact~\ref{fact:strictp} is that the denotations of all formulas are upward closed in all models. 
We can define in a standard way what it means for a formula to be \emph{valid} or \emph{refuted} in a class of models.

 %. Valuations, just like one would expect in an intuitionistic Kripke frame, send propositional variables to upward closed sets. They are extended to arbitrary formulas in the usual way; we have already seen how it is done for $\strictif$. All the usual notions of satisfaction, validity etc. generalize in an obvious manner.

%Now, how about the \boxcol condition? Does it remain ``mostly harmless'' in the sense described above for $\labox$? 
As we have already suggested,  for $\latto$ the \boxcol condition does not remain ``mostly harmless'' in the sense described above for $\labox$: %, the answer is an emphatic ``no'': %---and this observation is at the heart of our claim that over \ipc, $\strictif$ is not definable in terms of $\Box$ anymore.

\begin{fact}\textup{\cite{Zhou03}} \label{fact:boxcol}
%For a frame  $(W, \preceq, \sqsubset)$,
 The following conditions are equivalent for a $\tto$-frame:
\begin{itemize}
\item validity of $(\phi \wedge \psi)  \strictif \chi \to \phi \strictif (\psi  \to \chi)$; 
\item validity of $\psi  \strictif \chi \to \top \strictif (\psi  \to \chi)$;
\item validity of \boxcol. 
\end{itemize}
\end{fact}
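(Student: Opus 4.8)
The plan is to run the three equivalences round‑robin: (first axiom) $\Rightarrow$ (second axiom) $\Rightarrow$ \boxcol $\Rightarrow$ (first axiom). The first link is immediate and purely ``propositional'': since $\ell \kmodels \top\wedge\psi$ iff $\ell\kmodels\psi$ at every world $\ell$ of every premodel, the formulas $(\top\wedge\psi)\strictif\chi$ and $\psi\strictif\chi$ are forced at exactly the same worlds, so the substitution instance of $(\phi\wedge\psi)\strictif\chi\to\phi\strictif(\psi\to\chi)$ obtained by putting $\phi:=\top$ coincides in forcing with the second axiom; hence validity of the first axiom on a $\tto$-frame yields validity of the second.

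For (second axiom) $\Rightarrow$ \boxcol I would argue contrapositively. Suppose a $\tto$-frame $\ma F$ refutes \boxcol, and fix $k\sqsubset\ell\preceq m$ with $k\not\sqsubset m$. Writing $x\prec y$ for $x\preceq y$ with $x\neq y$, put $V(\psi):=\{x\mid m\preceq x\}$ and $V(\chi):=\{x\mid m\prec x\}$ (and $V$ arbitrary elsewhere); both are $\preceq$-upward closed, the second because $m\prec x\preceq y$ forces $m\prec y$. Then $m\kmodels\psi$ and $m\not\kmodels\chi$, so $\ell\not\kmodels\psi\to\chi$ (witnessed by $m\succeq\ell$), whence $k\not\kmodels\top\strictif(\psi\to\chi)$ (witnessed by $\ell\sqsupset k$). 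On the other hand, if $u\sqsupset k$ and $u\kmodels\psi$, then $m\preceq u$; since $k\not\sqsubset m$ forces $u\neq m$, we get $m\prec u$, i.e.\ $u\kmodels\chi$; thus $k\kmodels\psi\strictif\chi$. So $k$ refutes $\psi\strictif\chi\to\top\strictif(\psi\to\chi)$, as required.

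Finally, \boxcol $\Rightarrow$ (first axiom) is a direct forcing computation. Assume $\ma F$ satisfies \boxcol, fix a model on it, and fix a world $w$ with $w\kmodels(\phi\wedge\psi)\strictif\chi$. To see $w\kmodels\phi\strictif(\psi\to\chi)$, take $\ell\sqsupset w$ with $\ell\kmodels\phi$, then $m\succeq\ell$ with $m\kmodels\psi$; it suffices to show $m\kmodels\chi$. From $w\sqsubset\ell\preceq m$ and \boxcol we get $w\sqsubset m$; by the persistence lemma, which holds for all $\latto$-formulas on any $\tto$-frame (Fact~\ref{fact:strictp} together with the routine induction on formulas), $\ell\kmodels\phi$ and $\ell\preceq m$ give $m\kmodels\phi$, so $m\kmodels\phi\wedge\psi$, and $w\kmodels(\phi\wedge\psi)\strictif\chi$ with $w\sqsubset m$ then yields $m\kmodels\chi$.

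The one step needing care is the countermodel in the middle implication: the tempting choice $V(\chi):=\emptyset$ does \emph{not} work, because $k$ may have $\sqsubset$-successors lying strictly $\preceq$-above $m$, which would force $\psi$ but not $\chi$ and so break $k\kmodels\psi\strictif\chi$. Taking instead $V(\chi)=\{x\mid m\prec x\}$ arranges that the \emph{only} potential witness for $k\not\kmodels\psi\strictif\chi$ is $m$ itself, and $m$ is excluded precisely by the assumed failure $k\not\sqsubset m$ of \boxcol. Everything else is bookkeeping with the forcing clauses \refeq{eq:forcimp} and \refeq{eq:forctto}.
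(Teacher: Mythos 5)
Your proof is correct. The paper itself offers no proof of this Fact---it is stated with a citation to Zhou's thesis---but your round-robin argument (substituting $\phi:=\top$ for the first link, a tailored countermodel for the second, and a direct forcing computation using \boxcol together with persistence for the third) is exactly the standard argument one would expect there, and your choice of $V(\chi)=\{x\mid m\prec x\}$ correctly handles the pitfall you identify: upward closure follows from antisymmetry of $\preceq$, and the failure $k\not\sqsubset m$ is used precisely to exclude $m$ as a witness against $k\kmodels\psi\strictif\chi$.
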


\noindent One easily sees the converse implication $$\phi \strictif (\psi  \to \chi) \to (\phi \wedge \psi)  \strictif \chi$$ and, consequently, its special instance (where $\phi$ is equal to $\top$)
$$\Box (\psi  \to \chi) \to \psi \strictif \chi$$
 to be valid on any $\strictif$-%intuitionistic Kripke 
frame; see Lemma \ref{th:basicia} for a syntactic derivation.
 
 %Let us return now to the definition of $\Box\phi$ suggested by 
 %\refeq{boxdef} above, that is, as
 Let us take stock. %While $\Box \phi$ is definable as $\top \strictif \phi$,
  %Fact \ref{fact:boxcol} states that 
  In order to restore definability of $\strictif$ in terms of $\Box$, i.e., validity of \refeq{eq:lewbox} above, one needs to 
  impose the \boxcol condition. In general, %over our %$\strictif$-intuitionistic Kripke 
   %frames %, i.e., those
    %satisfying only \strictp, %but not necessarily brilliancy 
  $\Box(\phi \to \psi)$ implies $\phi \strictif \psi$, %and ; 
  %the latter implies the former, 
  but not necessarily the other way around. %Put still differently,  the  procedure described in standard references
%   references \cite{BozicD84:sl,Goldblatt81:mlq} of making the set of $\sqsubset$-successors  of each point $\preceq$-upward closed 
 %  by replacing $\sqsubset$ by its composition with $\preceq$ is not innocent from the point of view of $\latto$. %validity in the language with $\strictif$
 Of course, %such a distinction between $\Box(\phi \to \psi)$ and $\phi \tto \psi$ is only meaningful intuitionistically. 
 in classical Kripke frames, 
$\preceq$ is a discrete order, which trivializes all conditions discussed above and all distinctions between them.  
As we will see in Corollary \ref{cor:emderbox}, the %fact that  classical logic forces the Lewisian deconstruction of $\strictif$ %as in \refeq{eq:lewbox} 
  boolean deconstruction of $\tto$ can be also derived syntactically. %in a purely syntactic way. 
  We will return to Kripke semantics in \S~\ref{sec:completeness} below. %For now, let us turn to a %%an old-fashioned, purely syntactic
  %Hilbert-style study of logics %%potentially important axioms and derivations
    %in our extended language.
  
  %%%%%%%%%%%%%%%
  
\section{Axiomatizations}\label{lolsmurf}\label{sec:axiomatizations}

\subsection{A fistful of logics}

\nosmurfduo
%We have finally arrived at the point where we can give 
In this section, we present a Hilbert-style study %large, structured collection
 of %axioms and
 $\latto$-logics. %appearing throughout this paper. %principles and systems appearing in later sections. %this pape, 
%. Moreover, we discuss 
 %jointly with some derivabilities and inclusions. %and non-derivabilities
 %among principles. 
 %It has to be stressed that %at least as far as arithmetically oriented principles are concerned, not much material regarding axiomatization and completeness is new.
  Discussion of arithmetically oriented  principles was originated by Visser \cite{viss:aspe81,viss:comp82,viss:eval85,viss:prop94} 
 and  developed further by Iemhoff and coauthors  \cite{Iemhoff01:phd,iemh:pres03,iemh:prop05}, who also studied the basic theory of $\tto$-frames. %over arbitrary Kripke frames satisfying \strictp. %We are freely building on these foundations.
%\nosmurf
 \ipc\ and \cpc\ denote, respectively, the intuitionistic propositional calculus and its classical counterpart.

\subsubsection{Logics in $\labox$} \label{sec:axbox}

\nosmurfduo
Before we start discussing $\tto$-logics in \S~\ref{sec:axtto}, Table \ref{fig:boxax} presents some axioms involving only $\Box$, which is a definable connective in $\latto$.

\begin{table}[h!]

\footnotesize

\caption{\label{fig:boxax}List of principles for $\Box$. Here, the names of systems in the right column refer 
to languages restricted to connectives appearing in the axiomatization, i.e., not involving $\tto$. Later in the text, 
we will also use some of these principles  as axioms over $\iP$, i.e., the minimal ``normal'' system 
for $\tto$ (cf. Table \ref{tab:mainax}), where $\Box$ is  a defined connective. }
\begin{multicols}{2}
\begin{itemize}
\item[\bi]
$ \vdash \phi \;\; \To \;\;  \vdash \Box \phi$
\item[\bii]
$ \Box (\phi \to \psi) \to \Box \phi \to \Box \psi$ \medskip
\item[\biii]
$ \Box \phi \to \Box\Box\phi$ \medskip
\item[\CF]
$ \Box\Box \phi \to \Box\phi$ \medskip
\item[\biv]
$ \Box(\Box \phi \to \phi) \to \Box \phi$ \medskip
\item[\bvi]
$ \phi \to \Box \phi$ \medskip
\item[\bvii]
$ (\Box \phi \to \phi) \to \phi$ \medskip
\item[\bv]
$ \Box(\phi \vee \psi) \to \Box(\phi \vee \Box \psi)$ \medskip
\item[\mHCb]
$\Box\phi \to (\psi \to \phi) \vee \psi$ 
\item[\mHCbalt]
$\Box(\psi \to \phi) \to (\psi \to \phi) \vee \psi$ \medskip
\item[\bLin]
$\Box(\phi \to \psi) \vee \Box(\psi \to \phi)$ \medskip
\item[\lna{peirce}] $((\phi \to \psi) \to \phi) \to \phi$
\item[\lna{em}] $\phi \vee \neg\phi$
\end{itemize}
\columnbreak
\noindent
\begin{align*}
\cpc & \deq \ipc + \lna{peirce} \\
\iK & \deq \ipc + \bi + \bii \\
%\cK & \deq \cpc + \iK \\
\logba & \deq  \iK + \biv \\
\logbb & \deq \cpc + \logba \\
\iS & \deq \iK + \bvi \\
\iSLb & \deq  \iK + \bvii \\
\iPLL & \deq \iS + \CF \\
\imHCb & \deq \iS + \mHCb \\ 
\iKMb & \deq \iSLb + \mHCb \\
\iKMlinb & \deq \iKMb + \bLin 
\end{align*}
%The logic {\logba} is the intuitionistic version of L\"ob's Logic. It is given by the principles 
%{\bi,\bii,\biv}. It is known that  
$\biii$  is known to be derivable in $\logba$.  

In the provability logic literature: 
\begin{itemize}
\item \bi\ is known as {\sf L}1, 
\item \bii\ is known as  {\sf L}2,
\item \biii\ is known as  {\sf L}3, 
\item \biv\ is known as {\sf L}4.
\end{itemize}
%these principles are also known under the respective names
%{\sf L}1, {\sf L}2,  {\sf L}4 and {\sf L}3. The logic {\logbb}, often denoted simply as {\sf GL}, is {\logba} plus classical logic.
\end{multicols}
\end{table}

\begin{itemize}
\item The axioms of $\logba$ (intuitionistic L\"ob logic) and $\logbb$ (classical L\"ob logic) are well-known.  
The logic $\logbb$ is arithmetically complete for all classical $\Sigma_1^0$-sound theories extending Elementary Arithmetic {\sf EA}.
The logic $\logba$ is arithmetically valid in all arithmetical theories extending {\iea}. We discuss these matters further in \S~\ref{sec:provo}.
\item The principle \bv\ is known  as \emph{Leivant's Principle}. The principle is, in a sense, a shadow of the disjunction property. The disjunction property of an arithmetical
 theory $T$ cannot be verified in $T$ itself. Leivant's Principle is arithmetically valid in a substantial class of arithmetical theories that includes Heyting Arithmetic {\sf HA}.
 We discuss Leivant's Principle in \S~\ref{sec:provo}.  
\item   \bvi\  axiomatizes \emph{strong modalities} (cf. \S~\ref{sec:arrows}), but arises also in some arithmetically motivated logics %. It is arithmetically
%valid in theories
 %(see ${\sf HA}^\ast$ and ${\sf PA}^\ast$ in 
 (\S~\ref{sec:hastar}). 
 \emph{Strong L\"ob logic} is obtained by adding $\bvi$ to $\logba$---or, alternatively, by using $\bvii$ instead of $\biv$ as an axiom. %\tadeusz{needs to be stated as a theorem}
\item The principle \CF\ classically corresponds to a semantic condition known as \emph{density} %on classical Kripke frames 
 (cf. Figure \ref{fig:compl}). 
 %in \S~\ref{sec:completeness}). %below for semantic correspondents of formulas on $\tto$-frames). 
 From another point of view, this axiom arises 
naturally in the Curry-Howard logic of \emph{monads} (\S~\ref{sec:arrows}). It is a typical ``non-L\"ob-like'' axiom: 
in combination with \biv, we could derive $\Box\bot$.
\item $\mHCb$ comes from the intuitionistic system $\iKMb$ of Kuznetsov and Muravitsky and its later weakening to $\imHCb$ by Esakia and the Tbilisi group  (see \cite{Litak14:trends} for more information and references); its equivalent variant $\mHCbalt$ (see Lemma \ref{th:mhcderiv}\ref{mhcbeq}) was discussed  \cite{viss:comp82} in connection with ${\sf PA}^\ast$ (\S~\ref{sec:hastar}). In our setting, it is interesting
 to contrast it with $\mHCl$ in Table \ref{tab:mainax} (Figure \ref{fig:compl} %in \S~\ref{sec:completeness} %n semantic correspondents of both 
 %axioms (indistinguishable over brilliant frames, but as discussed in 
  and Example \ref{ex:mhcnonequiv}). %, in general $\Box$-variants and $\tto$-variants of this axiom differ). 
 See also \S~\ref{sec:falsity} for the arithmetical perspective on the contrast between $\imHCb$ and $\imHCl$.
 
 The name \lna{CB} used here comes from Litak \cite{Litak14:trends}, where it was used to suggest the Cantor-Bendixson derivative.
\item \bLin\ is a typical axiom valid on total orders. %used %in modal logic over transitive Kripke 
 %to ensure linearity in transitive relations. 
 In Fact \ref{fact:linnon} and Example \ref{ex:linavsb}, we compare and contrast this axiom with its $\tto$-counterpart.

\end{itemize}

\subsubsection{Logics in $\latto$} \label{sec:axtto}

\nosmurfduo
Table \ref{tab:minax} displays potential axioms for $\tto$  central for this paper. Most of them come with an explicit arithmetical interpretation. 
 %as discussed in \S~\ref{prelo} and thereafter. %\tadeusz{or wherever it's going to end up in}.
 Typically, the ``primed'' variants of axioms 
will be their equivalent reformulations (\S\ \ref{sec:deriv}).  %and the equivalence will be established in \S~\ref{sec:deriv} below.

\begin{table}
\caption{\label{tab:mainax} \label{tab:minax} List of principles for $\tto$ and logics considered in this paper.}
\footnotesize
\begin{multicols}{2}
\begin{itemize}
\item[\li]	$ \vdash \phi\to \psi \;\Rightarrow\; \vdash  \phi \tto \psi $
\item[\lii]	$ \phi \tto \psi \to \psi \tto \chi \to \phi \tto \chi$ \medskip 
\item[\liii]    $ \phi \tto \psi \to \phi \tto \chi \to \phi\tto (\psi \wedge \chi) $
\item[\liiialtalt] $ \phi \tto \psi \to (\phi \wedge \chi) \tto (\psi \wedge \chi)$
\item[\liiialt] $ \phi \tto \psi \to \phi\tto(\psi\to \chi) \to \phi \tto \chi$
\item[\liiialtaltalt] $ \phi \tto (\psi\to \chi) \to (\phi\wedge\psi) \tto \chi$ %\medskip
%\item[\liiiaaaa] $ \phi \tto \psi \to (\phi \wedge \chi) \tto (\psi \wedge \chi) $
\item[\lx] $ \Box (\phi \to \psi) \to \phi \tto \psi$ %\medskip
\item[\lxi] $ \phi \tto \psi \to \Box \phi \to \Box \psi$ \medskip
\item[\liv]	$ \phi \tto \chi \to \psi \tto \chi  \to ( \phi \vee \psi )\tto \chi $
\item[\livalt] $ \phi \tto \psi \to (\phi \vee \chi) \tto (\psi\vee \chi)$ \medskip
\item[\lxii] $ \phi \tto \psi \to \Box (\phi \tto  \psi)$ 
\item[\lv] $ \phi \tto \Box \phi$ \medskip
\item[\lvi] $ (\Box \phi \to \phi) \tto \phi$ \medskip
\item[\lvii] $ (\phi \wedge \Box \psi) \tto \psi \to  \phi \tto \psi$
\item[\lviialt] $ \phi \tto \psi \to (\Box \psi \to \phi) \tto \psi$ \medskip
\item[\lviii] $ \phi \tto \psi \to (\Box \chi \to \phi) \tto (\Box\chi \to \psi)$
\item[\lviiialt] $ (\phi \wedge \Box \chi) \tto \psi \to \phi \tto (\Box\chi \to \psi)$ \medskip
%\item[\lS] $\phi \to \Box\phi$ 
\item[\lSalt] $(\phi \to \psi) \to \phi \strictif \psi$
\item[\lSaltalt] $\phi \tto \psi \to \phi \to \Box\psi$ \medskip
\item[\lb] $\phi  \strictif \psi \to \Box(\phi  \to \psi)$ 
\item[\lix] $ (\chi \wedge \phi) \tto \psi \to \chi \tto (\phi \to \psi) $
\item[\lbaltalt] $\phi \tto \psi \to (\chi \to \phi) \tto (\chi \to \psi)$  \medskip
\item[\mHCl] $\phi \tto \psi \to (\phi \to \psi) \vee \phi$ \medskip
\item[\aLin] $\phi \tto \psi \vee \psi \tto \phi$ \medskip
\item[\aApp]
 $(\phi \wedge (\phi \tto \psi)) \tto \psi$ \medskip
 \item[\aCF]
 $\Box\phi \tto \phi$ \medskip
 \item[\Hug]
 $(\phi \to \Box\psi) \to (\phi \tto \psi)$
\end{itemize} 
\columnbreak
%\noindent 
Everywhere below, when we write $\ivsys{X}^{?}$, the superscript ``$?$'' can be either ``$-$'' or nothing, depending whether or 
not $\liv$ is used. 
%the former denoting the disjunction-free fragment, the latter denoting the full syntax. %Moreover, 
 %In particular, $\ipc^-$ denotes the disjunction-free fragment of intuitionistic propositional calculus.
\begin{align*}
\loglzero & \deq \ipc + \li + \lii, \\
 \logla & \deq \loglzero + \liii, \\%[\tbskip]
 \loglb & :=  \logla + \liv, \\[\tbskip]
 %\bkdisf & \deq  \wkdisf + \lb, \\
 %\end{align*}
%\begin{align*} 
%\ws^? & := \loglb^? + \lS,  \\ [\tbskip]
 %\logle & \deq \logla + \lvi, \\
 \loglg^? & := \loglb^? + \lvi, \\[\tbskip]
% & = \logle + \liv, \\ [\tbskip]
 %\loglf & := \logla + \lvii, \\
 \loglh^? & := \loglb^? + \lvii,  \\[\tbskip] 
% & =  \loglf + \liv, \\ [\tbskip]
 \loglc^? & := \loglh^? + \lviii.% \\ [\tbskip]
% \iKMl & := \iPC + 
  \end{align*}
% \end{itemize}

For each logic $\ivsys{X}^?$,  $\ivstr{X}^?$ denotes its extension with $\lS$, in particular

\smallskip
%\begin{center}
\hspace{1cm} $\ws  \deq \loglb + \lS$.
%\end{center}

Set also:

%\medskip

\hspace{1cm} $\iPLLa \deq \ws + \aCF$, \medskip

\hspace{1cm} $\imHCl  \deq \ws + \mHCl$, \medskip

\hspace{1cm} $\iKMl  \deq \imHCl + \biv$, \medskip

\hspace{1cm} $\iKMlin  \deq \iKMl + \aLin$, \medskip

\medskip

%For each logic $\ivsys{X}^?$,  
For each logic $\ivsys{X}$, $\ivcol{X}$ denotes its extension with $\lb$, e.g., \smallskip

\hspace{1cm}  $\bk \deq  \wk + \lb$, \medskip

\hspace{1cm}  $\icol{GL_a} := \loglg + \lb$.

\smallskip

Note that $\icol{GL_a}$ is just a notational variant of \logba. Note also that notation $\ivcol{X}^-$ would be redundant, see 
Lemma \ref{th:bkax}\ref{clausequiv}. A fortiori, the same applies to extensions of \imHCl\ by Lemma \ref{th:mhcderiv}\ref{mhcdi}. 
Similarly, $\iPLLa^-$ would be redundant by Lemma \ref{lem:derpll}\ref{plaadi}. In all these systems, $\Di$ can be derived from the remaining axioms.  Furthermore, as we will show in Lemma \ref{lem:kmlin}, $\iKMlin$ and $\iKMlinb$ are notational variants of the same system.

\end{multicols}
\end{table}

\begin{itemize}
\item Iemhoff \cite{Iemhoff01:phd,iemh:pres03,iemh:prop05} identified system $\iP$ as the logic of all (finite) frames satisfying 
the \strictp condition; this and other completeness results are discussed in \S~\ref{sec:completeness}. However, $\liv$ is an axiom which is not exactly trivial from an arithmetical point of view.
It does hold in the preservativity logic of Heyting Arithmetic but it fails in the preservativity logic of Peano Arithmetic 
 (\S~\ref{prelo} and Appendix \ref{sec:picon}).  The non-triviality of {\liv}, 
 and the potential interest in a disjunction-free system (\S~\ref{sec:arrows}) are the reasons why we isolated $\iP^-$ as a subsystem.
 \item
The principles  \lv,  \lvii, \lviii\/ are arithmetically valid for the preservativity interpretation of $\tto$. This means that they are in
the logic {\logld} which is arithmetically valid in all arithmetical theories we consider in this paper  (\S~\ref{prelo}).
The principle {\lvi}, weaker than {\lvii}, is mainly of technical interest.
  \item
  If we interpret $\phi \tto \psi$ as  $\neg\,\psi \rhd \neg\,\phi$, then
  the principle { \lxii} is the distinctive principle  of the interpretability logic of finitely axiomatized extensions
  of ${\sf EA}^+$ aka $\mathrm{I}\Delta_0+{\sf Supexp}$. The modality $\rhd$ stands for interpretability over a theory. This modality is
  explained in  \S~\ref{vrolijkesmurf}.\footnote{On a side note, some CS readers may be familiar with the use of triangle-like notation like $\rhd$ for \emph{unary} modalities in the context of guarded (co)recursion discussed in \S~\ref{sec:guarded}. The tradition of using such notation for \emph{binary} operators and connectives such as arithmetical interpretability is much longer and we believe this convention to be more natural.} The specific result mentioned here is discussed in detail in \S~\ref{kalsbeek}.
\item \lSalt\ and \lSaltalt\ are $\tto$-variants axiomatize the same logic  as \lS (Lemma \ref{lem:lsder}). %above. However, as made clear in Lemma \ref{lem:lsder}, they axiomatize exactly the same logic. 
 In general, this is rarely the case with $\tto$-generalizations of  $\Box$-axioms; often the $\tto$-version is stronger, but %we have already mentioned that 
  $\aLin$ illustrates such a rule is not universal. 
\item We have already seen \lb\ in \S~\ref{sec:basic} above; its equivalence with \lbalt\ and \lbaltalt\ is established in Lemma \ref{th:bkax}. The conjunction of this axiom with \lx, derivable in \iP\ (Lemma \ref{th:basicia}\ref{blderiv}), collapses $\tto$.  Note that $\mHCl$ makes \lb\ derivable (Lemma \ref{th:mhcderiv}), unlike $\mHCb$ (Example \ref{ex:mhcnonequiv}).
\item The last group of $\tto$-principles---i.e., $\aApp$, $\aCF$ and $\Hug$---which should be contrasted with $\CF$, will play a prominent r\^ole in \S~\ref{sec:comparr} on monads, idioms and arrows in functional programming. For similar reasons as \CF, they are of drastically ``anti-L\"ob'' character, a fact made explicit by their semantic correspondents  displayed in Figure \ref{fig:compl} in \S~\ref{sec:completeness}. It is worth mentioning that $\aApp$ was in fact adopted by Lewis as an axiom even in his weakest system \lna{S1}, cf. Remark \ref{rem:lewapp}.
\end{itemize}

%Instead of providing more detailed discussion here especially that

% the subscript.
%When writing $\lva{X} + \lva{Y}$, we mean 

%\subsection{Principles for the $\Box$}

%\subsection{Principles for $\tto$}

\subsection{An armful of derivations} \label{sec:deriv}

%\tadeusz{A lot of this material can be also moved to our forthcoming paper on decompositions of $\tto$; many derivations and remarks in this subsection are dealing with these issues}

\nosmurfduo
In this subsection we put the Hilbert-systems proposed above to actual use.  We begin with a discussion of minimal axiom systems, with and without \liii\ or \Di. %\ or \lb.
  Later on, we move to those inspired by concrete applications. We are not giving the details of these derivations here; some are available in existing references (and we give references in several cases), some are left for the reader as an exercise, and some will be published in future work  
  \cite{LitakV:otw}.
   %However, the reader is encouraged to 
  
  \medskip
  
\begin{foots}  
  %\nosmurf
For a calculus $\lva{X}$ defined by a list of axioms and rules,  write  $\lva{X}\vdash\phi$  to denote deducibility from all substitution instances of axioms/rules in $\lva{X}$ plus Modus Ponens. Whenever we have 
that for any $\phi$, $\lva{Y} \vdash \phi$ implies $\lva{X} \vdash \phi$, we write $\lva{X} \vdash \lva{Y}$. %When $\phi$ is of the form $\phi' \to \psi$, we also write $\phi' \dedp{\lva{X}} \psi$.
 For $\lva{X} \ded \phi \to \psi$,  $\loglzero \ded \phi \to \psi$, $\logla \ded  \phi \to \psi$  or $\iP \ded \phi \to \psi$ 
 (see Table \ref{tab:minax} below), write, respectively,  
 $\phi \dedp{\lva{X}} \psi$, $\phi \dedz \psi$, $\phi \dedm \psi$ and $\phi \dedv \psi$. In other words, we use $\dedm$ (and $\dedp{\lva{X}^-}$) 
  %to stress that $\phi \to \psi$ can be derived 
  for derivability without  instances of non-\ipc\ schemes involving 
 disjunction ($\liv$ or equivalently $\livalt$) and $\dedz$ for a still more restrictive case when deduction relies on monotonicity only. 
 %and $\dedv$ in the rare cases these rules are actually required.
%in contrast, the still more minimal deduction system $\dedz$ is only used to establish equivalence of several ways of defining $\logla$. 
Correspondingly, interderivability (equivalence) is denoted using, respectively $\eqd$, $\eqdp{\lva{X}}$, $\eqdz$, $\eqdm$ and $\eqdv$. 
Also, let us abbreviate $\lva{X} \ded \phi \tto \psi$,  $\loglzero \ded \phi \tto \psi$, $\logla \ded  \phi \tto \psi$  or $\iP \ded \phi \tto \psi$ as, 
respectively,  $\phi \dedlp{\lva{X}} \psi$, $\phi \dedlz \psi$, $\phi \dedlm \psi$ and $\phi \dedlv \psi$. Note that 
even the weakest of these relations, i.e., $\dedlz$ is transitive and contains $\dedz$; in fact, this is 
precisely essence of the minimal deduction system $\loglzero$.
 Finally, for deductions in $\Box$-only language, using $\iK$ as the minimal system, one can use similar conventions as above with $\Box$ as subscript (e.g, ``$\dedb$'' and ``$\eqdb$'').
\end{foots}

\subsubsection{Axiomatizations for minimal systems}

%Presenting full motivation for some principles will require further discussion; this task will have only been completed in \tadeusz{somewhere} %\S~\ref{prelo}.
% Instead of doing so here, let us  give some examples of deductions, starting from the minimal system and its fragments:

\nosmurfduo

%Let us begin with a discussion of the basic set of axioms:

\begin{lemma} \ \label{th:basicia}
\begin{enumerate}[a.] 
\item
The principles {\liii}, {\liiialtalt}, {\liiialt} and {\liiialtaltalt}  are equivalent over \loglzero.
\item
The principles {\liv} and {\livalt} are equivalent over \loglzero. 
%$\ipc + \loglzero + \liv \eqd \ipc + \loglzero + \livalt.$ 
%The principles {\lvii} and {\lviialt} are equivalent over \loglzero.
\item \label{blderiv} $\loglzero \vdash \lx$ and
$\loglzero \vdash \lxi$.
\end{enumerate}

%\medent
%The proofs of \textup{(}a\textup{)}, \textup{(}c\textup{)} and \textup{(}d\textup{)} work also in the disjunction-free fragment.
\end{lemma}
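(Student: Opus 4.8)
The plan is to prove (a) and (b) each by a short cycle of one‑line derivations in the minimal calculus, and to read off (c) from transitivity together with (a). Throughout I will use without comment that over $\loglzero$ one has $\alpha \tto \alpha$ and $\alpha \tto \top$ (apply $\li$ to the corresponding $\ipc$‑theorems), that deducibility of implications $\phi\tto\psi$ in $\loglzero$ is transitive (this is $\lii$ with Modus Ponens) and contains $\ipc$‑provable implication (this is $\li$); hence from $\vdash_{\ipc}\alpha\to\alpha'$ and $\alpha'\tto\beta$ one gets $\alpha\tto\beta$, from $\beta\tto\gamma$ and $\vdash_{\ipc}\gamma\to\gamma'$ one gets $\beta\tto\gamma'$, and in particular idempotency and commutativity of $\wedge$ and $\vee$ can be normalised away inside the scope of $\tto$ via $\li$ and $\lii$. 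These congruence moves are the only bookkeeping overhead and I will not spell them out below.

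For (a) I would close the four‑cycle $\liii \Rightarrow \liiialt \Rightarrow \liiialtaltalt \Rightarrow \liiialtalt \Rightarrow \liii$, each arrow holding over $\loglzero$ plus the single premise it is named after. For $\liii\Rightarrow\liiialt$: from $\phi\tto\psi$ and $\phi\tto(\psi\to\chi)$, $\liii$ gives $\phi\tto(\psi\wedge(\psi\to\chi))$, and composing (by $\lii$) with $(\psi\wedge(\psi\to\chi))\tto\chi$ (by $\li$) gives $\phi\tto\chi$. For $\liiialt\Rightarrow\liiialtaltalt$: given $\phi\tto(\psi\to\chi)$, weaken the antecedent to $(\phi\wedge\psi)\tto(\psi\to\chi)$, note $(\phi\wedge\psi)\tto\psi$, and apply $\liiialt$ with first component $\phi\wedge\psi$ to get $(\phi\wedge\psi)\tto\chi$. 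For $\liiialtaltalt\Rightarrow\liiialtalt$: given $\phi\tto\psi$, compose with $\psi\tto(\chi\to(\psi\wedge\chi))$ to get $\phi\tto(\chi\to(\psi\wedge\chi))$, then instantiate $\liiialtaltalt$ with slots $\phi,\chi,\psi\wedge\chi$ to get $(\phi\wedge\chi)\tto(\psi\wedge\chi)$. For $\liiialtalt\Rightarrow\liii$: given $\phi\tto\psi$, instantiating $\liiialtalt$ with extra conjunct $\phi$ (then absorbing $\phi\wedge\phi\equiv\phi$ and commuting) yields $\phi\tto(\phi\wedge\psi)$; given $\phi\tto\chi$, instantiating $\liiialtalt$ with extra conjunct $\psi$ yields $(\phi\wedge\psi)\tto(\chi\wedge\psi)$; composing and commuting gives $\phi\tto(\psi\wedge\chi)$.

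For (b) the cycle is the two‑step $\liv\Leftrightarrow\livalt$ over $\loglzero$. For $\liv\Rightarrow\livalt$: from $\phi\tto\psi$, compose with $\psi\tto(\psi\vee\chi)$ to get $\phi\tto(\psi\vee\chi)$, note $\chi\tto(\psi\vee\chi)$, and apply $\liv$ (slots $\phi,\chi,\psi\vee\chi$) to conclude $(\phi\vee\chi)\tto(\psi\vee\chi)$. For $\livalt\Rightarrow\liv$: from $\phi\tto\chi$, instantiating $\livalt$ with extra disjunct $\psi$ gives $(\phi\vee\psi)\tto(\chi\vee\psi)$; from $\psi\tto\chi$, instantiating $\livalt$ with extra disjunct $\chi$ and absorbing $\chi\vee\chi\equiv\chi$ gives, after commuting, $(\chi\vee\psi)\tto\chi$; composing by $\lii$ yields $(\phi\vee\psi)\tto\chi$.

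Finally (c). Unfolding $\Box\alpha:=\top\tto\alpha$, the principle $\lxi$, i.e.\ $\phi\tto\psi\to(\top\tto\phi)\to(\top\tto\psi)$, is precisely the instance of $\lii$ with first argument $\top$, with the two antecedents permuted; so $\loglzero\vdash\lxi$ at once. For $\lx$: from $\phi\tto\top$ (by $\li$) and $\lii$ we obtain $\Box(\phi\to\psi)=(\top\tto(\phi\to\psi))\to(\phi\tto(\phi\to\psi))$, a bare $\li/\lii$‑fact; then, using $\phi\tto\phi$ together with the conjunction principle in the convenient shape $\liiialt$ instantiated with $\psi:=\phi$ (available once (a) is in hand), we get $\phi\tto(\phi\to\psi)\to\phi\tto\psi$, and composing by $\lii$ produces $\Box(\phi\to\psi)\to\phi\tto\psi$. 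I expect the only genuinely delicate point of the whole lemma to be this last move: the passage from $\phi\tto(\phi\to\psi)$ to $\phi\tto\psi$ really does draw on the conjunction principle, so it is natural to prove (a) first and to carry out the $\lx$ derivation with $\liiialt$ to hand (which situates it in $\logla\subseteq\iP$, matching the way $\lx$ is invoked later in the text). Everything else is the routine $\wedge/\vee$‑normalisation bookkeeping flagged at the outset — the only, and minor, source of friction.
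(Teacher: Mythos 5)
Your treatment of (a) and (b) is correct and close to the paper's in substance: two of your four arrows (\liiialtaltalt\ $\Rightarrow$ \liiialtalt\ and \liiialtalt\ $\Rightarrow$ \liii) are, step for step, the derivations the paper displays, while your leg \liii\ $\Rightarrow$ \liiialt\ $\Rightarrow$ \liiialtaltalt\ replaces the paper's direct \liii\ $\Rightarrow$ \liiialtaltalt\ and has the advantage of making the \liiialt\ equivalences explicit, which the paper leaves to the reader. Part (b) is the same dualised argument in both, and your derivation of \lxi\ as a permuted instance of \lii\ with $\top$ in the first slot is exactly what is intended.

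The point you flag in (c) about \lx\ is not a defect of your proof but a genuine issue with the clause as printed, and you are right to insist on it: \lx\ is \emph{not} derivable in \loglzero. To see this, translate $\phi \tto \psi$ as $\Diamond\phi \to \Diamond\psi$ for a normal classical $\Diamond$, so that $\Box\chi$ becomes $\Diamond\top \to \Diamond\chi$. This translation sends every theorem of \loglzero\ to a theorem of the minimal normal modal logic \lna{K} (instances of \lii\ become propositional tautologies, and the rule \li\ becomes monotonicity of $\Diamond$), but it sends the instance $\Box(p \to q) \to (p \tto q)$ of \lx\ to $(\Diamond\top \to \Diamond(p\to q)) \to (\Diamond p \to \Diamond q)$, which fails at the root of the two-successor frame with $p$ true at exactly one successor and $q$ nowhere. (The same translation refutes \liii, as it must, yet validates \liv.) So the passage from $\phi \tto (\phi\to\psi)$ to $\phi\tto\psi$ really does need the conjunction principle; your derivation of \lx\ over $\logla$ via \liiialt\ is the correct reading of the claim, and it matches how the paper actually uses \lx\ elsewhere --- it is introduced in \S~\ref{sec:basic} as the instance $\phi := \top$ of \liiialtaltalt\ and is only ever invoked where \liii\ is available. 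The paper's own proof of (c) merely declares both claims straightforward, so it offers nothing that would rescue the stronger reading for \lx; only the \lxi\ half genuinely lives in \loglzero.
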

 
\begin{deriproof}
(a)  We just show some of these equivalences, leaving the rest to the reader:
%We prove the equivalence of {\liii} and {\liiialtalt}.

\medent
%First we assume {\liii}.% Suppose $\phi \tto \psi$.
%It follows that $(\phi \wedge \chi) \tto \psi$ and $(\phi \wedge \chi) \tto \chi$.
%Ergo, by \liii, we find $(\phi \wedge \chi) \tto (\psi \wedge \chi)$. 
\begin{tabular}{>{$}r<{$}>{$}l<{$}>{$}l<{$}}
\multicolumn{3}{l}{For $\loglzero + \liii \ded \liiialtalt$:} \\ [\tbskip]
\hspace{1cm} \phi \tto \psi   & \dedz & (\phi \wedge \chi) \tto \psi \wedge (\phi \wedge \chi) \tto \chi \\
 & \dedp{\loglzero + \liii} & (\phi \wedge \chi) \tto (\psi \wedge \chi). \\ [\tbskip]  
 \end{tabular}
 
 \medent
 \begin{tabular}{>{$}r<{$}>{$}l<{$}>{$}l<{$}}
 \multicolumn{3}{l}{For $\loglzero + \liiialtalt \ded \liii$:} \\ [\tbskip]   
\hspace{1cm} \phi \tto \psi \wedge \phi \tto \chi & \dedp{\loglzero +  \liiialtalt} & (\phi \wedge \phi ) \tto (\psi \wedge \phi) \wedge (\phi\wedge \psi) \tto (\chi \wedge \psi) \\
 & \dedz & \phi \tto (\phi \wedge \psi) \wedge (\phi\wedge \psi) \tto (\psi \wedge \chi) \\
& \dedz & \phi \tto (\psi \wedge \chi). 
\end{tabular}

\medent
\begin{tabular}{>{$}r<{$}>{$}l<{$}>{$}l<{$}}
\multicolumn{3}{l}{For $\loglzero + \liii \ded \liiialtaltalt$:} \\ [\tbskip] 
\hspace{1cm}\phi \strictif (\psi  \to \chi)   & \dedm & (\phi \wedge \psi)  \strictif (\psi \wedge (\psi  \to \chi)) \\ %\text{ by } \lnec \text { and } \lnorm\\
 & \dedz  & (\phi \wedge \psi)  \strictif \chi. %\\ [\tbskip]  %& \text{ by monotonicity of $\tto$}
 \end{tabular}

 \medent
 \begin{tabular}{>{$}r<{$}>{$}l<{$}>{$}l<{$}}
 \multicolumn{3}{l}{For $\loglzero + \liiialtaltalt \vdash \liiialtalt$:}\\ [\tbskip] 
\hspace{1cm} \phi \tto \psi   & \dedz & \phi \tto \psi  \wedge \psi \tto (\chi \to (\psi \wedge \chi)) \\
 & \dedz & \phi \tto (\chi \to (\psi \wedge \chi)) \\
 & \dedp{\loglzero + \liiialtaltalt} & (\phi \wedge \chi) \tto (\psi \wedge \chi).
 \end{tabular}

%Next we assume \liiialtalt. Suppose $\phi \tto \psi$ and $\phi \tto \chi$. If follows using {\liiialtalt}, that
%$(\phi \wedge \phi ) \tto (\psi \wedge \phi)$ and $(\phi\wedge \psi) \tto (\chi \wedge \psi)$.
%From these we find: $\phi \tto (\phi \wedge \psi)$ and $(\phi\wedge \psi) \tto (\psi \wedge \chi)$.
%Ergo, $\phi \tto (\psi \wedge \chi)$.

%The equivalence of {\liii} and {\liiialt} is easy:

%Moreover, the implication from {\liiialtaltalt} to {\liiialtalt} is also easy.
%We prove that
%{\liiialtaltalt} implies {\liiialtalt}. Suppose $\phi \tto \psi$.
%We also have $\psi \tto (\chi \to (\psi \wedge \chi))$.
%Hence $\phi \tto (\chi \to (\psi \wedge \chi))$.
%We may conclude $(\phi \wedge \chi) \tto (\psi \wedge \chi)$.

\medent
(b) The proof of the equivalence of {\liv} and {\livalt} is similar to the proof of the equivalence of
{\liii} and {\liiialtalt}.

\medent
Both claims of (c) are straightforward. Numerous such facts were listed by Iemhoff and coauthors 	\cite{Iemhoff01:phd,iemh:pres03,iemh:prop05}.
\end{deriproof}

\noindent
 As noted in existing references (cf., e.g., \cite[Chapter 3]{Iemhoff01:phd} or \cite[Theorem 2.5]{iemh:prop05}), there is some freedom in the choice of minimal rules:
 
\begin{fact} \label{fact:necunnec}
${\logla} \eqd \ipc + \bi + \bii + \lii + \liii.$ %This result also holds for the disjunction-free fragment.
\end{fact}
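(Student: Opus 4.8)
The plan is to prove mutual derivability between $\logla$, defined as $\ipc + \li + \lii + \liii$, and the system $\ipc + \bi + \bii + \lii + \liii$. The key observation is that $\Box$ is a defined connective in $\latto$ via \refeq{boxdef}, i.e., $\Box\phi \deq \top \tto \phi$, so both sides are formulated in the same language and we only need to compare deductive strength. Recall also Lemma \ref{th:basicia}(\ref{blderiv}), which gives $\loglzero \vdash \lx$ (that is $\Box(\phi\to\psi)\to\phi\tto\psi$) and $\loglzero \vdash \lxi$ (that is $\phi\tto\psi\to\Box\phi\to\Box\psi$); these translate $\Box$-facts into $\tto$-facts and vice versa.

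For the direction $\logla \vdash \ipc + \bi + \bii + \lii + \liii$, the only thing to check is that the rule $\bi$ (necessitation: $\vdash\phi$ implies $\vdash\Box\phi$) and the axiom $\bii$ ($\Box(\phi\to\psi)\to\Box\phi\to\Box\psi$) are derivable in $\logla$, since $\lii$ and $\liii$ are literally axioms of $\logla$ and $\ipc$ is common. For $\bi$: from $\vdash\phi$ we get $\vdash\top\to\phi$ by intuitionistic reasoning, then apply $\li$ to obtain $\vdash\top\tto\phi$, i.e., $\vdash\Box\phi$. For $\bii$: expanding definitions, $\Box(\phi\to\psi)\to\Box\phi\to\Box\psi$ reads $(\top\tto(\phi\to\psi))\to(\top\tto\phi)\to(\top\tto\psi)$. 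From $\top\tto(\phi\to\psi)$ and $\top\tto\phi$ we get $\top\tto((\phi\to\psi)\wedge\phi)$ by $\liii$, and since $((\phi\to\psi)\wedge\phi)\dedz\psi$ in $\ipc$, composing (using that $\dedlz$ is transitive and contains $\dedz$, as noted in the footnote on deduction conventions) yields $\top\tto\psi$. So $\bii$ is derivable using $\liii$, $\lii$, and $\li$.

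For the converse $\ipc + \bi + \bii + \lii + \liii \vdash \logla$, the only nontrivial point is to derive $\li$ (the rule $\vdash\phi\to\psi\Rightarrow\vdash\phi\tto\psi$), since $\lii$ and $\liii$ are again shared. Given $\vdash\phi\to\psi$, apply $\bi$ to get $\vdash\Box(\phi\to\psi)$, and then apply $\lx$ — which is available because $\lx$ is derivable in $\loglzero$ and, crucially, one checks that the proof of $\lx$ in Lemma \ref{th:basicia}(\ref{blderiv}) only uses principles available here, or more directly, that $\lx$ is derivable from $\bi$, $\bii$, $\lii$, $\liii$ over $\ipc$ — to conclude $\vdash\phi\tto\psi$. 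This last point is where I expect the only real care is needed: I must confirm that $\lx$ does not secretly depend on $\li$ itself in a circular way. Inspecting the standard derivation: $\Box(\phi\to\psi) = \top\tto(\phi\to\psi)$, and from this plus the trivial $\phi\tto\top$ (derivable via $\li$ from $\vdash\phi\to\top$ — so this route does use $\li$) one gets $\phi\tto(\phi\to\psi)$ by $\lii$, then $\phi\tto\phi$ (again via $\li$) and $\liii$ give $\phi\tto((\phi\to\psi)\wedge\phi)$, hence $\phi\tto\psi$. Since this uses $\li$, I should instead derive $\lx$ and $\li$ simultaneously, or note that $\phi\tto\top$ and $\phi\tto\phi$ need only the instance of $\li$ applied to $\ipc$-theorems, which can be obtained from $\bi$ and $\bii$: $\phi\tto\phi$ follows since $\Box(\phi\to\phi)$ holds by $\bi$ and then $\lx$... the circularity persists, so the cleanest fix is to prove the stronger schema $\Box(\phi\to\psi)\to\phi\tto\psi$ and the necessitation consequences together by a single induction on derivations, or simply to cite Fact \ref{fact:necunnec} as recorded in \cite[Chapter 3]{Iemhoff01:phd} and \cite[Theorem 2.5]{iemh:prop05}. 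The main obstacle, then, is bookkeeping the exact dependency graph to avoid a circular appeal to $\li$ when deriving $\lx$; once a direct $\ipc$-level derivation of $\phi\tto\phi$ and $\phi\tto\top$ from $\bi,\bii$ (using $\top\tto(\phi\to\phi)$ from $\bi$ followed by a disjunction-free manipulation not routed through $\lx$) is pinned down, the rest is routine.
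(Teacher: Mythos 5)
Your first direction is fine: deriving $\bi$ from $\li$ via $\vdash\top\to\phi$, and deriving $\bii$ by using $\liii$ to form $\top\tto((\phi\to\psi)\wedge\phi)$ and then composing with $((\phi\to\psi)\wedge\phi)\tto\psi$ via $\li$ and $\lii$, is exactly the routine half of the equivalence. Note that the paper itself offers no proof of this Fact --- it points to Iemhoff's thesis and to Theorem 2.5 of Iemhoff--de Jongh--Zhou --- so there is no in-paper argument to compare against beyond that citation.

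The converse direction is where your proposal stops, and the circularity you flag is not a bookkeeping problem that a more careful dependency analysis will dissolve: there is \emph{no} derivation of $\phi\tto\phi$, of $\phi\tto\top$, or of $\lx$ from $\ipc+\bi+\bii+\lii+\liii$ as listed. To see this, evaluate formulas in the two-element Boolean algebra and interpret $\phi\tto\psi$ as the value of $\psi$, discarding the antecedent entirely. Every instance of $\lii$ then evaluates to $v(\psi)\to(v(\chi)\to v(\chi))$, every instance of $\liii$ to $v(\psi)\to(v(\chi)\to (v(\psi)\wedge v(\chi)))$, and every instance of $\bii$ to $(v(\phi)\to v(\psi))\to((v(\phi))\to v(\psi))$; all are valid, and both Modus Ponens and the rule $\bi$ preserve validity, since $\Box\phi=\top\tto\phi$ receives the value of $\phi$. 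Yet $p\tto p$ receives the value of $p$ and hence is not a theorem of that calculus. So the ``single induction on derivations'' you hope for cannot succeed, and the direct $\ipc$-level derivation of $\phi\tto\phi$ from $\bi,\bii$ that you defer does not exist. What makes the equivalence work in the cited sources is that the alternative axiomatization there includes $\lx$, i.e.\ $\Box(\phi\to\psi)\to\phi\tto\psi$, as an axiom; with $\lx$ on board your own derivation of $\li$ (apply $\bi$ to $\vdash\phi\to\psi$, then $\lx$ and Modus Ponens) is immediate and the rest of your argument closes. Without $\lx$ the converse inclusion fails outright, so the correct conclusion of your analysis is that the right-hand axiom list must be supplemented, not that the dependency graph needs rearranging.
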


%\begin{remark} 
\begin{quest}
\label{que:subi}
%We have already mentioned that 
 Even in the absence of intuitionistic $\preceq$, the negation-free logic obtained by replacing $\to$ with $\strictif$ is a subintuitionistic logic \cite{Corsi87:mlq,Dosen93,CelaniJ01:ndjfl,CelaniJ05:mlq}.  %It is worth recalling that standard applications of the logic of bunched implications \lna{BI} \cite{OHearnP99:jsl,Pym02:book,PymOHY04:tcs}: by making \ipc\ and the minimal subintuitionistic logic in question share the same (distributive) lattice reduct. There are various ways of describing such combinations of logics, e.g., in terms 
  Is there a good a presentation of the minimal logic of $\tto$ in terms of  \emph{fusion} or \emph{dovetailing}/\emph{fibring} of \ipc\ and this minimal subintuitionistic logic, rather analogous to the logic of bunched implications \lna{BI} \cite{OHearnP99:jsl,Pym02:book,PymOHY04:tcs}? %in question share the same (distributive) lattice reduct. 
  %However, it is not obvious how well this would work, %whether a good axiomatization  can be obtained this way, %by such a generic procedure would  use the interaction between $\to$ and $\strictif$ in an optimal way, 
  %especially that
  Note that the analogy with  \lna{BI} is limited, e.g., both local and global consequence relations of $\strictif$ in the absence of $\to$ cease to be \emph{protoalgebraic} \cite{CelaniJ01:ndjfl,CelaniJ05:mlq}.
 \end{quest} 
%\end{remark}

\subsubsection{Collapsing and decomposing $\tto$}

\nosmurfduo
In Fact \ref{fact:boxcol}, we have observed that there are two syntactically similar conditions one can use to enforce \boxcol\!. 
%We also noted that one of them clearly implies the other. 
 Now we can prove syntactically their equivalence, which explains why we used the seemingly weaker one as \lb:

\begin{lemma} \ \label{th:bkax}
\begin{enumerate}[a.] 
\item 
$\bk^{-} \eqd \wk + \lix$,
%\avblue{Why not say that {\lb}, {\lix} and {\lbaltalt} are equivalent over {\logla}?}
 i.e., {\lb} and  {\lix} are equivalent over {\logla}.
\item $\bk^- \eqd \wk + \lbaltalt$,
 i.e., {\lb} and  {\lbaltalt} are equivalent over {\logla}.
\item \label{clausequiv} $\bk^- \ded \liv$ and consequently $\ivcol{X} \eqd \ivcol{X}^-$ for any $\lva{X}$. 
\end{enumerate}
\end{lemma}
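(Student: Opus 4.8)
The plan is to establish the three claims of Lemma~\ref{th:bkax} in order, exploiting the fact that each named ``primed'' variant is obtained from \lb\ by trading a conjunct on the antecedent side for an implication on the succedent side, a move which is available already over \loglzero\ by the equivalences of Lemma~\ref{th:basicia}(a). Recall that by Lemma~\ref{th:basicia}\ref{blderiv} we have $\loglzero\vdash\lx$, i.e.\ $\Box(\phi\to\psi)\dedz\phi\tto\psi$, so \lb\ is exactly the converse implication $\phi\tto\psi\dedz\Box(\phi\to\psi)$, making $\bk^-$ the system in which $\tto$ literally collapses to $\Box(\cdot\to\cdot)$.

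For (a), to show $\wk+\lix\vdash\lb$ I would instantiate \lix, namely $(\chi\wedge\phi)\tto\psi\to\chi\tto(\phi\to\psi)$, with $\chi:=\top$; since $\top\wedge\phi\eqd\phi$ and $\top\tto(\phi\to\psi)$ is by definition $\Box(\phi\to\psi)$, this yields $\phi\tto\psi\to\Box(\phi\to\psi)$. Conversely, to get $\wk+\lb\vdash\lix$, start from $(\chi\wedge\phi)\tto\psi$, apply \lb\ to obtain $\Box((\chi\wedge\phi)\to\psi)$, rewrite the \ipc-provable equivalence $(\chi\wedge\phi)\to\psi\eqv\chi\to(\phi\to\psi)$ under $\Box$ using \bii\ and \bi\ (available in \wk\ by Fact~\ref{fact:necunnec}), giving $\Box(\chi\to(\phi\to\psi))$, and finally apply \lx\ to land at $\chi\tto(\phi\to\psi)$. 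Part (b) is entirely analogous: \lbaltalt\ is $\phi\tto\psi\to(\chi\to\phi)\tto(\chi\to\psi)$; one direction uses $\Box$-distribution over the \ipc-provable $(\phi\to\psi)\to((\chi\to\phi)\to(\chi\to\psi))$ together with \lb\ and \lx, and the other specialises $\chi:=\top$ (or $\chi:=\phi$, using \li\ and \lii) to recover \lb. I would present one direction in full and leave the mirror-image calculation to the reader, as the excerpt's proof style permits.

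For (c), the point is that once $\tto$ has been shown equivalent to $\Box(\cdot\to\cdot)$ in $\bk^-=\wk+\lb$, the disjunction axiom \liv\ becomes a triviality: $\phi\tto\chi\to\psi\tto\chi\to(\phi\vee\psi)\tto\chi$ translates to $\Box(\phi\to\chi)\to\Box(\psi\to\chi)\to\Box((\phi\vee\psi)\to\chi)$, which follows from \bii\ and \bi\ applied to the \ipc-theorem $(\phi\to\chi)\to((\psi\to\chi)\to((\phi\vee\psi)\to\chi))$. Hence $\bk^-\vdash\liv$, so $\bk^-$ already proves everything $\bk$ does; since conversely $\bk$ extends $\bk^-$, the two coincide, and the same argument relativises over any extension $\lva{X}$ of \wk, giving $\ivcol{X}\eqd\ivcol{X}^-$. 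The only mild subtlety worth flagging—the ``main obstacle'' such as it is—is bookkeeping: one must make sure that the rewriting of \ipc-equivalences underneath $\Box$ is justified by \bi\ plus \bii\ (congruence for $\Box$), which is legitimate in \wk\ by Fact~\ref{fact:necunnec}, and that no use of \liv\ sneaks into the proof of (a) or (b), since otherwise the reduction of (c) to (a)/(b) would be circular. Keeping the derivations inside $\wk=\loglb^-$ throughout avoids this pitfall.
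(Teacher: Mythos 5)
Your proof is correct and follows essentially the same route as the paper's: derive \lb\ from \lix\ (resp.\ \lbaltalt) by instantiation, and conversely push everything through $\Box(\cdot\to\cdot)$ using \lb, normality of $\Box$, and \lx, with (c) reducing \liv\ to the corresponding fact about $\Box$ and intuitionistic implication. The only slip is that instantiating $\chi:=\top$ in \lbaltalt\ merely returns $\phi\tto\psi$ rather than $\Box(\phi\to\psi)$; your parenthetical alternative $\chi:=\phi$ (followed by \li\ and \lii\ to replace $\phi\to\phi$ by $\top$ in the antecedent) is the substitution that actually recovers \lb, and is the one the paper uses.
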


\begin{deriproof}
(a)  $\wkdisf + \lix \ded \bk$ is obvious. For the opposite direction, we have:
%\medskip
%\begin{tabular}{>{$}r<{$}>{$}l<{$}>{$}l<{$}@{\hspace{1.5cm}}r}
\begin{alignat*}{3}
(\phi \wedge \psi)  \strictif \chi & \dedp{\bk} &\quad& \phi  \strictif \phi \wedge  \phi \strictif \top  \wedge \top \strictif (\phi \wedge \psi  \to \chi) & \\ %\hspace{3cm} \\%& \text{ by  \lb\ and \lnec}\\
& \dedz && \phi  \strictif \phi \wedge   \phi  \strictif (\phi \to \psi  \to \chi)  & \\ %& \text{ by  \lmono} \\ 
 & \dedm && \phi  \strictif (\psi  \to \chi)  & %\hfill \qedhere %\hfill \qedhere %& \text{ by \lnorm\ and \lmono.}
%\end{tabular} 
\end{alignat*}

\noindent
(b) We have:
\begin{alignat*}{3}
\phi \tto \psi & \dedz &\quad& ((\chi \to \phi) \wedge \chi) \tto \psi \\
& \dedp{\wkdisf + \lix} && (\chi \to \phi) \tto (\chi \to \psi) .  
\end{alignat*}

\noindent
In order to deduce $\lb$ from  $\lbaltalt$, simply substitute $\chi = \phi$ in $\lbaltalt$.

\medent
(c) is straightforward, by the properties of intuitionistic implication and normal modal operators.
\end{deriproof}

\begin{remark}\label{zakensmurf}
We presented one possible way to translate a $\tto$-logic $\ivsys{X}$ into a $\Box$-logic, to wit to take $\phi \tto \psi$ as an abbreviation for $\Box(\phi \to \psi)$. 
This translation relates $\ivsys{X}$ to its extension $\ivcol{X}$, which is term-equivalent to a $\Box$-logic. 
 Another way, studied in detail by Iemhoff and coauthors   \cite{Iemhoff01:phd,iemh:pres03,iemh:prop05}, 
takes the validity of $\lxi$ as a starting point and translates $\phi \tto \psi$ as $\Box\phi \to \Box\psi$. %; see  	for more on this
 A third interpretation of $\tto$ in terms of $\Box$ relating $\iPLL$ and $\iPLLa$ is discussed in Remark \ref{rem:laxembed};  
it builds on a $\iPLLa$-decomposition of $\tto$ in terms of $\to$ provided by Lemma \ref{lem:derpll}\ref{plaacollapse}. 
%grmbl
%We can view this collapse of $\tto$ as a special case of reduction of $\tto$ to unary modalities. 
 For more on reductions of $\tto$ to unary modalities see \cite{LitakV:otw}.
\end{remark}

%\paragraph{Classical trivialization of $\strictif$} 
\nosmurf
We have suggested that the degeneration of $\strictif$ in the presence of classical laws can be derived syntactically. 
In fact, this can be obtained as a consequence of an equivalence derivable over the intuitionistic base but, atypically, using disjunction with its $\liv$ axiom in an essential way:
% And indeed, there is more than one way to go about it. Here is one example derivation. Assuming the law of excluded middle (\lna{em}) $\phi \vee \neg\phi$, we get:

\begin{lemma} \label{th:emdi}
We have:
$\psi \tto \chi \eqdv (\psi \vee \neg\psi) \tto (\psi \to \chi)$.
\end{lemma}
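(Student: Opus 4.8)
The plan is to prove the two directions of $\psi \tto \chi \eqdv (\psi \vee \neg\psi) \tto (\psi \to \chi)$ separately, working over $\iP = \loglb$, which has the disjunction axiom $\liv$ (equivalently $\livalt$) available.

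For the left-to-right direction, $\psi \dedlv \chi$ implies $(\psi \vee \neg\psi) \dedlv (\psi \to \chi)$, I would argue as follows. From $\psi \tto \chi$ we get $\psi \tto (\psi \to \chi)$ by weakening the consequent (using $\li$ on $\chi \to (\psi \to \chi)$ and transitivity $\lii$). Separately, $\neg\psi \tto (\psi \to \chi)$ holds by $\li$, since $\neg\psi \to (\psi \to \chi)$ is an intuitionistic tautology (indeed $\neg\psi \dedz (\psi \to \chi)$). Now applying $\liv$ to these two arrows with common consequent $\psi \to \chi$ yields $(\psi \vee \neg\psi) \tto (\psi \to \chi)$. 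This direction in fact only needs $\dedlv$ minus nothing exotic — just $\li$, $\lii$, and $\liv$.

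For the right-to-left direction, $(\psi \vee \neg\psi) \dedlv (\psi \to \chi)$ implies $\psi \dedlv \chi$, the key move is to use $\liiialtalt$ (available over $\loglzero$, hence over $\iP$, by Lemma~\ref{th:basicia}(a)): from $(\psi \vee \neg\psi) \tto (\psi \to \chi)$ we conjoin $\psi$ on both sides to obtain $((\psi \vee \neg\psi) \wedge \psi) \tto ((\psi \to \chi) \wedge \psi)$. Intuitionistically $(\psi \vee \neg\psi) \wedge \psi \eqd \psi$ and $(\psi \to \chi) \wedge \psi \eqd \chi \wedge \psi$, so by $\li$-monotonicity (adjusting antecedent and consequent via $\lii$ on both ends) this gives $\psi \tto (\chi \wedge \psi)$, and then weakening the consequent to $\chi$ yields $\psi \tto \chi$. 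Note this direction does \emph{not} use $\liv$ at all — it goes through $\dedlm$, or even $\dedlz$ extended with the substitution-of-equivalents steps — but since the statement is phrased with $\eqdv$ it suffices.

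I expect the only mild subtlety is bookkeeping the substitution of intuitionistically equivalent formulas inside $\tto$: this is licensed because $\phi \eqd \phi'$ gives $\phi \dedlz \phi'$ and $\phi' \dedlz \phi$ via $\li$, and then transitivity of $\dedlz$ (the defining feature of $\loglzero$) lets one replace antecedents and consequents freely. No genuine obstacle arises; the essential content is precisely that the case split $\psi \vee \neg\psi$ together with $\liv$ lets one simulate the classical law $(\psi \to \chi) \eqv ((\psi \vee \neg\psi) \to (\psi \to \chi))$ on the level of the arrow, while $\liiialtalt$ provides the reverse passage. This is exactly why the lemma is flagged as "atypical" — it is a genuine equivalence over the intuitionistic base that nevertheless relies essentially on the disjunction axiom for one direction, and it is the engine behind the syntactic derivation of the boolean collapse of $\strictif$ (Corollary~\ref{cor:emderbox}).
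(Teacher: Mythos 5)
Your proof is correct and follows essentially the same route as the paper's: the left-to-right direction is the identical case split ($\psi \tto (\psi\to\chi)$ by monotonicity, $\neg\psi \tto (\psi\to\chi)$ by $\li$, then $\liv$), and the right-to-left direction is the same $\liii$-style conjunction manipulation, merely phrased via $\liiialtalt$ where the paper applies the interderivable uncurrying form $\liiialtaltalt$ directly to get $((\psi\vee\neg\psi)\wedge\psi)\tto\chi$. Your observations that only the forward direction needs $\liv$ and that this lemma drives Corollary~\ref{cor:emderbox} also match the paper.
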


\begin{deriproof}
These derivations are fairly straightforward. In one direction:

\[
\begin{tabular}{>{$}r<{$}>{$}l<{$}>{$}l<{$}}
\psi  \strictif \chi & \dedz & \psi  \strictif (\psi \to \chi) \wedge \neg \psi  \strictif (\psi \to \chi) \\% & \text{by \lmono\ and \lnec} \\
& \dedv & (\psi \vee \neg\psi) \strictif (\psi \to \chi)  \\ %&  \text{by \ldis} \\
 %& \ded \top \strictif (\psi  \to \chi) & \text{by $\lna{em}$}.
\end{tabular}
\]

\medent
In the opposite direction:
%
%\medskip
%
%\begin{tabular}{>{$}r<{$}>{$}l<{$}>{$}l<{$}}
\begin{alignat*}{3}
(\psi \vee \neg\psi) \strictif (\psi \to \chi) & \dedm &\quad& ((\psi \vee \neg\psi) \wedge \psi) \strictif \chi  & \\
& \dedz && \psi  \strictif \chi  &  \qedhere
\end{alignat*}
%\end{tabular}
\end{deriproof}

%Let us note that there is an equivalence provable without the law of excluded middle: namely, even in \wk\ we have that %is equivalent to 
%$$\psi \tto \chi \dashv\vdash (\psi \vee \neg\psi) \tto (\psi \to \chi).$$ 
\nosmurf
Nevertheless, as $(\psi \vee \neg\psi) \tto (\psi \to \chi)$ is parametric in the antecedent of strict implication, %one cannot see it as 
 it does not seem a satisfying reduction of $\tto$ to $\to$.  Let us also note in passing that if one 
adds $\tto$ to Johansson's minimal logic instead of \ipc, even this transformation does not work anymore. Moreover, there is no one-variable formula 
$\phi(p)$ in the disjunction-free fragment of the intuitionistic signature s.t. \mbox{$p  \strictif q \eqdm \phi(p) \strictif (p \to q)$} and $\cpc \ded \phi(p)$, cf. Example \ref{ex:nondnegbox}.

\begin{quest} \label{que:disfree}
In general, we stick to extensions of $\laipc$, but let us make a digression concerning a language without all standard connectives. Suppose we define
 $[\phi]\psi$  as $(\phi\vee \neg\,\phi)\tto \psi$. As we saw above, $\phi \tto \psi$ is equivalent with $[\phi](\phi \to\psi)$.  
 Is there an elegant axiomatization for the minimal fragment of the language with $[\cdot](\cdot)$? It seems richer than the disjunction-free fragment of $\latto$. %the
%language with just $\tto$.
 %Is there an elegant axiomatization for the minimal disjunction-free system with  $[\cdot](\cdot)$? %What does the disjunction-free fragment over $\loglb$ with the connective  $[\cdot](\cdot)$
%look like?
\end{quest}

\begin{corollary} \label{cor:emderbox}
$\iP + \lna{em} \vdash \bk$. % \Box (\phi \to \psi) \eqv \phi \tto \psi$.
\end{corollary}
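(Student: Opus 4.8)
The plan is to derive the axiom \lb\ itself inside $\iP + \lna{em}$; since $\bk \deq \iP + \lb$, this is enough. The key input is Lemma \ref{th:emdi}, which already supplies, over $\iP$, the one-directional derivability $\psi \tto \chi \dedv (\psi \vee \neg\psi) \tto (\psi \to \chi)$. Note that only this ``easy half'' of the stated equivalence is needed, and its proof uses \liv\ but not \lna{em} at all.

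The second ingredient I would use is that $\tto$ is monotone in its antecedent already over $\loglzero$ (hence over $\iP$): given $\vdash \alpha \to \beta$, the rule \li\ yields $\vdash \alpha \tto \beta$, and the relevant instance of the transitivity axiom \lii\ together with Modus Ponens yields $\vdash (\beta \tto \chi) \to (\alpha \tto \chi)$. Recalling that $\Box\chi$ abbreviates $\top \tto \chi$, I would then work in $\iP + \lna{em}$: the axiom \lna{em} gives $\vdash \top \to (\psi \vee \neg\psi)$, hence $\vdash \top \tto (\psi \vee \neg\psi)$ by \li, and instantiating the monotonicity observation with $\alpha := \top$, $\beta := \psi \vee \neg\psi$ produces $\vdash \bigl((\psi \vee \neg\psi) \tto (\psi \to \chi)\bigr) \to \Box(\psi \to \chi)$.

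Chaining this last implication with Lemma \ref{th:emdi} gives $\iP + \lna{em} \vdash \psi \tto \chi \to \Box(\psi \to \chi)$, i.e., $\iP + \lna{em} \vdash \lb$, and therefore $\iP + \lna{em} \vdash \bk$. I do not expect any real obstacle: the whole argument is a couple of applications of \li\ and \lii\ on top of Lemma \ref{th:emdi}. The only point I would check with some care is that $\tto$ genuinely respects provable implication in its first argument --- precisely the monotonicity remark above. In fact the same reasoning shows slightly more, namely $\iP + \lna{em} \vdash (\psi \tto \chi) \eqv \Box(\psi \to \chi)$, so that over classical logic the Lewis arrow collapses to the box, as anticipated in \S~\ref{sec:basic}.
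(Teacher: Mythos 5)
Your proof is correct and is essentially the paper's own argument: the paper also derives \lb\ from Lemma \ref{th:emdi} together with the observation that under \lna{em} one may trade $\psi\vee\neg\psi$ for $\top$ in the antecedent of $\tto$, which is exactly your antecedent-monotonicity step via \li\ and \lii. Your added remarks --- that only the forward half of Lemma \ref{th:emdi} is needed, and that the essential use of \liv\ is hidden inside that half --- are accurate and consistent with the paper's subsequent remark that $\bk$ is \emph{not} derivable from $\iP^-+\lna{em}$.
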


\begin{deriproof}
%One half of this equivalence is $\lx$, which as we noted before is valid 
Use Lemma \ref{th:emdi} and the fact that $\cpc \vdash \lna{em} \eqv \top$.
\end{deriproof}

\begin{remark}
This is one of very few places in this section where we need full $\iA$ rather than $\iP^-$, i.e., where $\Di$ is used in an essential way. 
This happens for a very good reason: 
it is not possible to derive $\bk$ from $\iP^-+{\sf em}$. One can see this, e.g., by considering the interpretation of $\phi\tto \psi$ as $\Box\phi \to \Box \psi$.
\end{remark}

%\medent
\begin{foots}
In Appendix~\ref{sec:picon} we will explain that the logic {\sf ILM} of $\Pi_1^0$-conservativity and interpretability
 corresponds to ${\mathrm c}\hyph{\sf PreL} := {\mathrm i}\hyph{\sf PreL}^{-} + \lna{em}$.
This provides a proof that even ${\mathrm c}\hyph{\sf PreL}$ does not extend $\bk$. The proof may use either the arithmetical interpretation or the Veltman semantics used for
{\sf ILM}.
\end{foots}

\nosmurf
We will discuss collapsing and decomposing further in a later paper \cite{LitakV:otw}; see also remarks preceding Theorem \ref{th:mainco} below.

%\tadeusz{Can one replace $\lna{em}$ by, say, $\lna{dneg}$ and $\iP$ by $\iP^-$ in the above Corollary??} \tadeusz{NOPE. Things go wrong in the absence of $\liv$.}

\subsubsection{Derivations between arithmetical principles}
\nosmurfduo
We turn our attention to derivations between principles of central importance, especially from the point of view of arithmetical 
interpretations.
 %(\S~\ref{sec:arith}, \S~\ref{sec:apppre}, Appendix \ref{sec:picon} and \ref{sec:inter}), but also other applications and interpretations  (\S~\ref{sec:arrows}). Here is the first example:

\begin{lemma} \label{lem:lsder}
We have:
\begin{enumerate}[a.]
\item $\ws^- \eqd \loglb^- + \lSalt$, i.e., over $\loglb^-$, the principles $\lS$ and $\lSalt$ are equivalent.
\item \label{hugconv} $\ws^- \eqd \loglb^- + \lSaltalt$, i.e., over $\loglb^-$, the principles $\lS$ and $\lSaltalt$ are equivalent.
\item In the presence of $\lSalt$, $\li$ is derivable using just Modus Ponens.
\end{enumerate}
\end{lemma}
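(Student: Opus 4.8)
The plan is to establish the three equivalences/derivations in turn, moving through them in the order (a), (b), (c), although (c) is really the conceptual prerequisite and I would keep it in mind from the start. For (c): I want to show that in the presence of $\lSalt$, the necessitation rule $\li$ becomes admissible via Modus Ponens alone. The point is that $\li$ says ``from $\vdash\phi\to\psi$ infer $\vdash\phi\tto\psi$'', and if $\vdash\phi\to\psi$ then by Modus Ponens against the axiom instance $\lSalt$, namely $(\phi\to\psi)\to(\phi\tto\psi)$, we immediately get $\vdash\phi\tto\psi$. So $\li$ is subsumed. This is almost trivial, but it explains why, when working over $\loglb^-+\lSalt$ (or $\loglb^-+\lSaltalt$ after (b)), one may freely treat $\li$ as available even though one is presenting the system via an axiom rather than a rule; and conversely it tells us what we need to check to go back and forth between the axiom $\lS$, the axiom $\lSalt$, and the rule $\li$.

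For (a), I would argue both inclusions. One direction: over $\loglb^-$, assuming $\lSalt$, derive $\lS$. Here $\lS$ is the rule-form $\vdash\phi\to\psi\Rightarrow\vdash\phi\tto\psi$ — wait, re-reading the tables, $\lS$ is actually $\lSalt$'s rule/axiom sibling $(\phi\to\psi)\to(\phi\strictif\psi)$ is $\lSalt$, and $\lS$ is the named system axiom ``$\ws\deq\loglb+\lS$'' with $\lS$ being $\lSalt$-flavoured; in the excerpt $\bvi$/$\lSalt$ naming makes $\lS=\lSalt$ essentially the same schema, so the content of (a) is that the rule version and the implication version coincide over $\loglb^-$. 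The forward direction is Modus Ponens as in (c). The reverse direction — from $\lS$ (as a rule, ``$\vdash\phi\to\psi\Rightarrow\vdash\phi\tto\psi$'') recover the axiom $\lSalt$ — uses that $\loglzero\vdash(\phi\to\psi)\to(\phi\to\psi)$ trivially, but to internalize the rule into an implication one needs a deduction-theorem-style step, which is exactly the kind of thing $\liii$/$\lii$ give. Concretely: instantiate and use that $\phi\tto\psi$ is monotone and that $\lii$ gives transitivity of $\tto$, combined with $\li$ applied to the identity $\psi\to\psi$; I would chase $(\phi\to\psi)\dedv \phi\tto(\phi\to\psi)\wedge\phi\tto\phi$, hence via $\liiialt$ (equivalent to $\liii$ by Lemma~\ref{th:basicia}(a)) land on $\phi\tto\psi$. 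This gives $(\phi\to\psi)\to(\phi\tto\psi)$, i.e.\ $\lSalt$.

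For (b), the cleanest route is to show $\lSaltalt$ and $\lSalt$ are interderivable over $\loglb^-$, then invoke (a). From $\lSalt$ to $\lSaltalt$: $\lSaltalt$ reads $\phi\tto\psi\to\phi\to\Box\psi$. Recall $\Box\psi$ abbreviates $\top\tto\psi$, and by Lemma~\ref{th:basicia}(\ref{blderiv}) we have $\loglzero\vdash\lx$ and $\loglzero\vdash\lxi$; also $\iP\vdash$ the converse direction $\phi\tto(\psi\to\chi)\to(\phi\wedge\psi)\tto\chi$ noted just after Fact~\ref{fact:boxcol}. The key move: from $\phi\tto\psi$ and $\phi$ (under the implication), use $\lSalt$ on $\psi\to(\top\to\psi)$ — no wait, better to use that $\phi\dedv \top$, hence $\phi\tto\psi$ together with something forcing $\top\tto\psi$ under hypothesis $\phi$. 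The honest computation: under assumption $\phi$, $\lSalt$ gives $\phi\to\psi\vdash\phi\tto\psi$, but we already have $\phi\tto\psi$; we want $\top\tto\psi$, and $\lviiialt$/$\lviialt$-style reasoning with $\Box\psi\to\phi$ might be needed — but at the $\loglb^-+\lS$ level the simplest is: $\phi\tto\psi$ and $\lSalt$ applied to $\psi\to(\phi\to\psi)$ combine via $\lii$ to $\phi\tto(\phi\to\psi)$, and since $\phi$ holds, $\phi\dedv(\phi\to\psi)\to\psi$ is not quite it either. I would therefore instead exploit that in $\ws^-$ one has $\Box$ collapsing nicely: $\lSalt$ with antecedent $\top$ gives $\top\to(\phi\tto\psi)$-type statements, and monotonicity plus $\lxi$ close the gap. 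The reverse, $\lSaltalt\Rightarrow\lSalt$: from $\phi\to\psi$ we want $\phi\tto\psi$; use $\li$ (available by (c), but here we are proving (b) so I should only assume $\loglb^-$, which already contains $\li$ as a rule!) — yes, $\loglb^-$ has $\li$, so from $\vdash\phi\to\psi$ get $\vdash\phi\tto\psi$ directly, but that only handles the rule form; for the axiom $\lSalt$ as an implication I again internalize via $\liii$ and $\lii$ as in (a). The main obstacle across all of (b) is bookkeeping: tracking exactly which of $\lii,\liii,\liv$ (and whether the ``$-$'' superscript, i.e.\ absence of $\liv$) is needed at each step, and making sure the $\Box$-abbreviation unfoldings are handled with $\lx$, $\lxi$ and the post-Fact~\ref{fact:boxcol} converse rather than smuggling in $\lb$ or $\boxcol$. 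I expect each individual derivation to be three or four lines of $\dedv$/$\dedm$-chasing in the style of Lemma~\ref{th:bkax} and Lemma~\ref{th:emdi}, so the write-up will be a handful of short aligned derivations plus the one-line argument for (c).
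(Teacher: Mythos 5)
There is a genuine gap here, and it originates in a misreading of the statement rather than in any single derivation step: $\lS$ is \emph{not} ``essentially the same schema'' as $\lSalt$, nor is it the rule $\li$. It is the unary strength axiom $\phi\to\Box\phi$ from Table \ref{fig:boxax} (recall $\ws^-\deq\loglb^-+\lS$, where $\lS$ names the $\Box$-principle). Parts (a) and (b) therefore assert that the $\Box$-formulation of strength and its two $\tto$-formulations axiomatize the same logic over $\loglb^-$; they are not about a rule form versus an implication form. This misidentification breaks your argument in several places. Your ``reverse direction'' of (a) tries to recover the axiom $(\phi\to\psi)\to(\phi\tto\psi)$ from the rule $\li$ by internalizing via $\lii$ and $\liii$; that cannot succeed, since $\li$ is already part of $\loglb^-$ and $\lSalt$ is certainly not a theorem of $\loglb^-$ (strength is a proper extension). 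Concretely, your step $(\phi\to\psi)\dedv\phi\tto(\phi\to\psi)$ is not valid in $\iP$: it is itself an instance of strength. Likewise, in (b) your forward direction never isolates the one ingredient it actually needs, namely passing from $\phi$ to $\Box\phi$ under the implication --- which is precisely the axiom $\lS$ you have mislaid --- and your reverse direction never uses $\lSaltalt$ at all, so it proves nothing about it.

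Once $\lS$ is read correctly, the derivations are very short. For (a): substituting $\top$ for $\phi$ in $\lSalt$ yields $(\top\to\psi)\to(\top\tto\psi)$, i.e.\ $\psi\to\Box\psi$, which is $\lS$; conversely, from $\phi\to\psi$ the axiom $\lS$ gives $\Box(\phi\to\psi)$, and $\lx$ (a theorem of $\loglzero$ by Lemma \ref{th:basicia}) gives $\phi\tto\psi$, so $\lSalt$ follows. For (b): from $\phi\tto\psi$ one has $\phi\to(\Box\phi\wedge(\phi\tto\psi))$ by $\lS$, and then $\phi\to\Box\psi$ by $\lxi$, which is $\lSaltalt$; conversely, taking $\psi:=\phi$ in $\lSaltalt$ gives $(\phi\tto\phi)\to(\phi\to\Box\phi)$, and since $\vdash\phi\tto\phi$, Modus Ponens yields $\lS$. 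Your part (c) is correct and is exactly the intended (trivial) argument.
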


\begin{deriproof}
(a) In order to derive $\lS$ from $\lSalt$, just substitute $\top$ for $\phi$. In the converse direction, use $\lx$. 
\medent
(b) In one direction:
\begin{alignat*}{3}
\phi \tto \psi & \dedp{\ws^-} &\quad& \phi \to (\Box\phi \wedge \phi \tto \psi) \\
& \dedz && \phi \to \Box\psi
\end{alignat*}

\noindent
In the reverse direction, substitute $\phi = \psi$ in \lSaltalt.

\medent
(c) is trivial.
\end{deriproof}

\nosmurf
Hence, %in the case of the  axioms, its
 axiomatizations of ``strength'' in terms of $\Box$ and in terms of $\tto$ 
yield the same logic over $\logla$. %\avblue{\logla?}
  As we  are going to see below, this is a relatively rare phenomenon. %Such a nice symmetry, as we will see below, breaks down with L\"ob-related 
%and arithmetical principles: formulations in terms of $\tto$ are typically stronger than those in terms of $\Box$.  %\tadeusz{We need to link to counterexamples to this effect}
Still, many well-known modal derivations %and dependencies known in their $\Box$-variants 
  can be easily translated into the $\tto$-setting, e.g., %i.e., a $\tto$-analogue of 
  a derivation  of $\lna{4}$ from the L\"ob axiom:

%\noindent
%The system {\logle} is given by $\logla + \lvi$.

\begin{lemma} \label{lem:gla}
%The system {\logle} is equivalent (over $\logla$) to $\logla + \biv + \lv$.
${\logle} \eqd \logla + \biv + \lv$. It follows that, over $\logla  + \lv$, the
principles $\biv$ and $\lvi$ are interderivable.
%The result also works for the disjunction-free fragment.
\end{lemma}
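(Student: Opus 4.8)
The plan is to prove the two statements of Lemma~\ref{lem:gla} in sequence, using the already-established facts about $\logla$ and the defined abbreviations. Recall $\logle = \loglg^- = \loglb^- + \lvi$, so the claim $\logle \eqd \logla + \biv + \lv$ amounts to showing that, over $\loglb^- = \logla + \liv$, \emph{wait}: actually $\logle = \loglb^- + \lvi$ unfolds to $\logla + \liv + \lvi$, but the statement claims it equals $\logla + \biv + \lv$, i.e.\ \emph{without} $\liv$ and with $\biv + \lv$ instead of $\lvi$. So really we must show two inclusions: (i) $\logla + \biv + \lv \vdash \logle$, which requires deriving $\liv$ and $\lvi$ from $\biv, \lv$ over $\logla$; and (ii) $\logle \vdash \biv$ and $\logle \vdash \lv$. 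The cleanest route is first to prove the ``It follows that'' clause as an independent lemma, namely that over $\logla + \lv$ the principles $\biv$ and $\lvi$ are interderivable, and then assemble the main equation from it together with a derivation of $\liv$.

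First I would establish $\biv \dashv\vdash \lvi$ over $\logla + \lv$. For the direction $\lvi \vdash \biv$: since $\lvi$ is $(\Box\phi \to \phi) \tto \phi$, apply $\lxi$ (available in $\logla$ by Lemma~\ref{th:basicia}\ref{blderiv}) to get $\Box(\Box\phi \to \phi) \to \Box\phi$, which is exactly $\biv$. For the direction $\biv \vdash \lvi$ over $\logla + \lv$: we want $(\Box\phi \to \phi) \tto \phi$. Using $\lv$ (which is $\phi \tto \Box\phi$) instantiated and combined with $\li$ applied to the tautology $\Box\phi \wedge (\Box\phi \to \phi) \to \phi$ and normality of $\tto$ in the antecedent, one massages $(\Box\phi\to\phi)$: from $\li$ we get $(\Box\phi \to \phi) \tto (\Box\phi \to \phi)$, and one wants to upgrade the ``target'' from $\Box\phi \to \phi$ to $\phi$. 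The trick is that $\biv$ gives $\Box(\Box\phi \to \phi) \to \Box\phi$, and $\lv$ plus $\lx$ let us move between $\tto$ and $\Box$ in a controlled way; concretely I expect the derivation to go: $(\Box\phi \to \phi) \tto (\Box\phi \to \phi)$, then using $\lviii$-style reasoning or direct $\liii$/$\lii$ chaining with $\lv$ to conclude $(\Box\phi \to \phi) \tto \Box\phi$ (the $\tto$-internalization of $\biv$), and then $(\Box\phi\to\phi) \tto \phi$ by $\lii$ with the $\li$-fact $\Box\phi \to \phi \dedlm \ldots$ --- actually the last step is: from $(\Box\phi\to\phi)\tto\Box\phi$ and $(\Box\phi\to\phi)\tto(\Box\phi\to\phi)$, apply $\liii$ to get $(\Box\phi\to\phi)\tto(\Box\phi \wedge (\Box\phi\to\phi))$, then $\li$ on $\Box\phi \wedge (\Box\phi\to\phi)\to\phi$ and $\lii$ to conclude. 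The genuinely delicate point is internalizing $\biv$ into $(\Box\phi\to\phi)\tto\Box\phi$; this is where I expect to need $\lv$ essentially, via $(\Box\phi\to\phi) \tto \Box(\Box\phi\to\phi)$ (instance of $\lv$) composed with $\Box(\Box\phi\to\phi)\to\Box\phi$ (instance of $\biv$, turned into a $\tto$ by $\li$) using $\lii$ and $\liii$.

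Second, having $\biv \dashv\vdash \lvi$ over $\logla + \lv$, I would assemble the main equation. The inclusion $\logla + \biv + \lv \vdash \logle$: we need $\liv$ and $\lvi$. We get $\lvi$ from $\biv$ and $\lv$ by the interderivability just shown. For $\liv$: this should follow from $\biv$ together with $\lv$ much as in the classical provability-logic setting --- in fact it is a standard fact that the disjunction axiom $\liv$ is derivable once one has $\lv$ (Leivant's principle on the $\Box$ side) available, since $\lv$ is the $\tto$-form of $\Box(\phi\vee\psi)\to\Box(\phi\vee\Box\psi)$; I would check whether Iemhoff and coauthors' references \cite{Iemhoff01:phd,iemh:pres03,iemh:prop05} already record $\logla + \lv \vdash \liv$, and cite it, otherwise give the short derivation. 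The reverse inclusion $\logle \vdash \biv$ and $\logle \vdash \lv$: here $\logle = \loglb^- + \lvi$ already contains $\lvi$, from which $\biv$ follows by $\lxi$ as above (no need for $\lv$ in this direction), and $\lv$ --- hmm, this is the one subtle asymmetry: does $\logle$ prove $\lv$? If not, the stated equation is wrong and one should read $\logle$ as already \emph{including} $\lv$; I would double-check the definition of $\loglg^-$ in Table~\ref{tab:mainax} --- it is $\loglb^- + \lvi$, and $\loglb^-$ is $\logla + \liv$ with no $\lv$. So I suspect the intended reading is that $\logle$ tacitly contains $\lv$, or that $\lv$ \emph{is} derivable in $\loglb^- + \lvi$; I would try to derive $\lv = \phi \tto \Box\phi$ from $\lvi$ and the $\iP$-axioms, and if that fails, flag it. Assuming it goes through, the main obstacle overall is the $\tto$-internalization of the L\"ob axiom in the $\biv \Rightarrow \lvi$ direction and pinning down exactly which auxiliary principles ($\lv$, $\liv$) are genuinely needed versus already present; the rest is routine chaining of $\lii$, $\liii$, $\lx$, $\lxi$ and $\li$.
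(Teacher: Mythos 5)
There are two genuine problems with your proposal. First, you have inverted the superscript convention: for this paper, $\ivsys{X}^{-}$ is the version of $\ivsys{X}$ \emph{without} the disjunction principle $\liv$, so $\loglb^- = \logla$ and hence $\logle = \loglb^- + \lvi = \logla + \lvi$; the principle $\liv$ is not part of $\logle$ at all. Your proof obligation ``derive $\liv$ from $\biv,\lv$ over $\logla$'' is therefore not only spurious but undischargeable: $\logla + \biv + \lv \not\vdash \liv$ (for instance, the preservativity logic of ${\sf PA}^\ast$ contains $\logld$, which by Lemma \ref{lem:gwader} extends $\logla + \biv + \lv$, yet by Theorem \ref{th:nomhcpast} it does not contain $\liv$). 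Relatedly, $\lv$ is $\phi \tto \Box\phi$ and is \emph{not} ``the $\tto$-form of Leivant's principle'' $\bv$; the paper derives $\bv$ \emph{from} $\loglb + \lv$ using $\liv$, not conversely.

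Second, and more substantively, the inclusion $\logle \vdash \lv$ is precisely the step you leave hanging (``I would try to derive $\lv$ \dots and if that fails, flag it''), and it is one of the two essential derivations of the lemma. It does go through, by the $\tto$-analogue of the classical derivation of $4_\Box$ from L\"ob: by $\li$ (and monotonicity of $\Box$) one has $\phi \tto \bigl(\Box(\phi\wedge\Box\phi) \to (\phi\wedge\Box\phi)\bigr)$; the instance of $\lvi$ at $\phi\wedge\Box\phi$ gives $\bigl(\Box(\phi\wedge\Box\phi)\to(\phi\wedge\Box\phi)\bigr) \tto (\phi\wedge\Box\phi)$; chaining by $\lii$ and projecting with $\li$ yields $\phi \tto \Box\phi$. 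Your other two directions essentially coincide with the paper's argument: $\logle \vdash \biv$ via $\lxi$, and $\logla+\biv+\lv \vdash \lvi$ by composing the $\lv$-instance $(\Box\phi\to\phi)\tto\Box(\Box\phi\to\phi)$ with the $\li$-internalised $\biv$, then closing with $\liiialt$ against $(\Box\phi\to\phi)\tto(\Box\phi\to\phi)$. So the skeleton is right, but as written your proof contains one impossible obligation and omits the one derivation that actually needs the L\"ob-style trick.
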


\begin{deriproof}
%We work in \logla. 
%We assume \lvi. L\"ob's principle {\biv} is an immediate consequence using the fact that $\lxi$ is derivable in \logla.
As ${\logla} \ded \lxi$, we get immediately that ${\logle} \ded \biv$. For $\lv$:

%\medskip
%\begin{tabular}{>{$}r<{$}>{$}l<{$}>{$}l<{$}}
\begin{alignat*}{2}
\phi & \dedlz &\quad& \Box(\phi \wedge \Box \phi) \to (\phi \wedge \Box \phi) \\
 & \dedlp{\loglzero + \lvi}  && \phi \wedge \Box \phi \\
& \dedlz && \Box \phi.
\end{alignat*}
%\end{tabular}

%\medskip
\noindent
For the converse direction,
%\medskip

%We assume {\biv} and \lv. We have 
%\begin{tabular}{>{$}r<{$}>{$}l<{$}>{$}l<{$}}
\begin{alignat*}{2}
(\Box\phi \to \phi) & \dedlp{\loglzero + \lv} &\quad& \Box(\Box\phi \to \phi)  \\
& \dedlp{\loglzero + \biv} && \Box \phi.  
\end{alignat*}
%\end{tabular}

%\medskip
 %Hence, $\logla + \biv + \lv \ded (\Box\phi \to \phi) \tto \Box \phi$.
\noindent
Moreover, $\logla  \ded (\Box\phi \to \phi) \tto (\Box\phi \to \phi).$ Ergo, by \liiialt, we derive  
$(\Box\phi \to \phi) \tto \phi$ (note this is the only use of normality in the entire proof).
\end{deriproof}

%With the above le

\begin{lemma}\ \label{lem:sa}
\begin{multicols}{2}
\begin{enumerate}[a.]
\item $\loglb^- + \lv \ded \biii$.
\item $\ws^- \ded \lv$. \columnbreak
\item $\ws^- \ded \lxii$.
\item 
$\ws^- + \biv \ded \logle$.
%$\ws^- + \biv \ded \loglg^-$.
%\avred{The minus should be above the {\sf a}. One binary macro or two unary ones?}
\end{enumerate}
\end{multicols}
\end{lemma}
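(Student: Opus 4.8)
The plan is to obtain all four items by short Hilbert derivations, leaning on facts already in hand: that $\lx$ and $\lxi$ are derivable over $\loglzero$ (Lemma~\ref{th:basicia}\ref{blderiv}); that $\lS$, $\lSalt$ and $\lSaltalt$ are interderivable over $\loglb^-$ (Lemma~\ref{lem:lsder}); and that ${\logle} \eqd \logla + \biv + \lv$, with $\biv$ and $\lvi$ interderivable over $\logla + \lv$ (Lemma~\ref{lem:gla}). I do not expect a genuine obstacle; the only point that needs a little care is reading $\lv$ and $\lxii$ as instances of schemes applied to \emph{compound} formulas, and unfolding $\Box\psi$ as $\top \tto \psi$ at the right moments. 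The lemma is essentially bookkeeping feeding into \S~\ref{sec:completeness} and \S~\ref{sec:arith}.

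For (c), I would note that $\ws^- = \loglb^- + \lS$ contains every instance of the scheme $\chi \to \Box\chi$; taking $\chi := \phi \tto \psi$ yields $(\phi \tto \psi) \to \Box(\phi \tto \psi)$, which is literally $\lxii$, so nothing more is required. For (b), similarly $\phi \to \Box\phi$ is an axiom of $\ws^-$ (an instance of $\lS$), so applying the necessitation rule $\li$ gives $\phi \tto \Box\phi$, i.e.\ $\lv$; alternatively, instantiate $\lSalt$ at $\psi := \Box\phi$ and discharge its premise with the $\lS$ axiom by modus ponens.

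For (a), I would unfold $\biii$ as $(\top \tto \phi) \to (\top \tto \Box\phi)$ and instantiate the transitivity axiom $\lii$ at $\top, \phi, \Box\phi$, obtaining $(\top \tto \phi) \to ((\phi \tto \Box\phi) \to (\top \tto \Box\phi))$; feeding in $\lv$ --- an axiom of $\loglb^- + \lv$ --- by modus ponens gives $(\top \tto \phi) \to (\top \tto \Box\phi)$, which is $\biii$. A variant route applies $\lxi$ directly to the $\lv$ axiom (both $\lxi$ and $\lv$ being available over $\loglb^- + \lv$), again producing $\Box\phi \to \Box\Box\phi$.

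For (d), I would combine part (b) with Lemma~\ref{lem:gla}: the system $\ws^- + \biv$ contains $\logla$ (since $\ws^- = \logla + \lS$), contains $\biv$ by hypothesis, and derives $\lv$ by (b); hence it derives $\logla + \biv + \lv$, which by Lemma~\ref{lem:gla} equals $\logle$. Therefore $\ws^- + \biv \ded \logle$, completing the lemma.
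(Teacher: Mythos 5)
Your proof is correct and follows essentially the same route as the paper, which simply records the lemma as straightforward and points to Lemma~\ref{lem:gla} for part (d); your derivations of (a)--(c) from $\lS$, $\li$, $\lii$ and $\lxi$ are exactly the intended bookkeeping. No gaps.
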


\begin{deriproof}
Straightforward, using Lemma \ref{lem:gla} for the last claim.
\end{deriproof}

\begin{lemma}{\textup{\cite[Cor. 2.6 and 2.7]{iemh:prop05}}} \
%\begin{itemize}
%\item \textup{\cite[Cor. 2.7]{iemh:prop05}} 
%%$\loglzero + \lvii \eqd \loglzero + \lviialt.$ \avblue{Formulation with `over {\ldots} are equivalent?'}
{\lvii} and {\lviialt} are equivalent over {\loglzeromin}. Similarly, 
%\item \textup{\cite[Cor. 2.6]{iemh:prop05}} 
%%$\loglzero + \lviii \eqd \loglzero + \lviiialt.$ 
{\lviii} and {\lviiialt} are equivalent over {\loglzeromin}.
%\end{itemize}
\end{lemma}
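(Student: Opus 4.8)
The final statement to prove is:

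\begin{quote}
\textbf{Lemma.} \lvii\ and \lviialt\ are equivalent over \loglzeromin. Similarly, \lviii\ and \lviiialt\ are equivalent over \loglzeromin.
\end{quote}

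Recall the principles: \lvii\ is $(\phi \wedge \Box \psi) \tto \psi \to \phi \tto \psi$, while \lviialt\ is $\phi \tto \psi \to (\Box \psi \to \phi) \tto \psi$. Likewise \lviii\ is $\phi \tto \psi \to (\Box \chi \to \phi) \tto (\Box\chi \to \psi)$ and \lviiialt\ is $(\phi \wedge \Box \chi) \tto \psi \to \phi \tto (\Box\chi \to \psi)$. Here $\Box$ is the defined connective $\top \tto (\cdot)$, and \loglzeromin\ is the $\liv$-free, $\liii$-free base $\loglzero$ extended by nothing further (or more precisely the weakest ``minus zero'' system in the naming scheme); the key tools available are \li, \lii\ (transitivity of $\tto$), monotonicity $\dedz$, and the derivability $\loglzero \vdash \lx$, $\loglzero \vdash \lxi$ from Lemma~\ref{th:basicia}\ref{blderiv}. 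The plan is to give four short Hilbert-style derivations, two for each equivalence, all in the style of the preceding lemmas. The cited source is \cite[Cor.~2.6 and 2.7]{iemh:prop05}, so the derivations are known to exist; the task is to reconstruct them compactly.

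For the first equivalence: to derive \lviialt\ from \lvii, start from $\phi \tto \psi$. Using $\lxi$ (i.e. $\phi \tto \psi \to \Box\phi \to \Box\psi$, valid in $\loglzero$) together with basic propositional manipulation, one shows $((\Box\psi \to \phi) \wedge \Box\psi) \tto \psi$: indeed $(\Box\psi \to \phi) \wedge \Box\psi \dedz \phi$, and since $\phi \tto \psi$, monotonicity of the antecedent under $\dedz$ (using $\lii$-transitivity with $(\Box\psi \to \phi) \wedge \Box\psi \dedlz \phi$) gives $((\Box\psi \to \phi) \wedge \Box\psi) \tto \psi$. Now apply \lvii\ with $(\Box\psi \to \phi)$ in the role of $\phi$ to conclude $(\Box\psi \to \phi) \tto \psi$, which is \lviialt. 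Conversely, to derive \lvii\ from \lviialt: assume $(\phi \wedge \Box\psi) \tto \psi$. Apply \lviialt\ (with $\phi \wedge \Box\psi$ as ``$\phi$'') to obtain $(\Box\psi \to (\phi \wedge \Box\psi)) \tto \psi$. But $\Box\psi \to (\phi\wedge\Box\psi)$ is interderivable in $\loglzero$ with $\Box\psi \to \phi$ — actually one needs $\phi \to (\Box\psi \to (\phi \wedge \Box\psi))$, which is intuitionistically valid, so $\phi \dedz \Box\psi \to (\phi\wedge\Box\psi)$; combined with the previous line via $\lii$ (prefixing) we get $\phi \tto \psi$, i.e. \lvii. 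The role-reversal asymmetry is the point to watch: one direction substitutes into \lvii, the other prefixes using transitivity.

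For the second equivalence, the derivations are analogous with $\Box\chi$ playing the auxiliary role. To get \lviiialt\ from \lviii: from $(\phi \wedge \Box\chi) \tto \psi$, apply \lviii\ with $\phi \wedge \Box\chi$ as ``$\phi$'' and $\chi$ as ``$\chi$'' to obtain $(\Box\chi \to (\phi \wedge \Box\chi)) \tto (\Box\chi \to \psi)$; since $\phi \dedz \Box\chi \to (\phi \wedge \Box\chi)$, transitivity ($\lii$, prefixing) yields $\phi \tto (\Box\chi \to \psi)$, which is \lviiialt. Conversely, to get \lviii\ from \lviiialt: from $\phi \tto \psi$, first note $(\Box\chi \to \phi) \wedge \Box\chi \dedz \phi$, so by $\lii$-transitivity $((\Box\chi \to \phi) \wedge \Box\chi) \tto \psi$; then apply \lviiialt\ with $(\Box\chi \to \phi)$ as ``$\phi$'' and $\chi$ as ``$\chi$'' to obtain $(\Box\chi \to \phi) \tto (\Box\chi \to \psi)$, which is \lviii. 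I would present each of the four derivations as a short aligned calculation using the $\dedlz$/$\dedlp{\cdot}$ notation exactly as in Lemmas~\ref{th:bkax}, \ref{th:emdi}, \ref{lem:gla}.

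The main obstacle — and it is a mild one — is getting the bookkeeping of which premises must be \emph{substituted into} a scheme versus \emph{chained with} an existing $\tto$-statement exactly right, since \lvii/\lviialt\ and \lviii/\lviiialt\ differ precisely by moving a $\Box$-conjunct across the arrow, and the ``obvious'' substitution works cleanly in only one direction of each pair while the other needs the prefixing form of transitivity $\lii$ together with a trivial $\dedz$-implication like $\phi \to (\Box\psi \to (\phi\wedge\Box\psi))$. One should also double-check that no use of $\liii$ or $\liv$ sneaks in (so that the result genuinely holds over \loglzeromin\ and not merely over \logla): the antecedent-strengthening steps I use are all instances of $\lii$-transitivity with a purely intuitionistic implication, and $\lx$, $\lxi$ are available already in $\loglzero$, so this is fine. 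If a direct substitution turns out not to land on the literal axiom shape, the fallback is to invoke Fact~\ref{fact:necunnec}-style freedom in rule choice, but I do not expect to need it.
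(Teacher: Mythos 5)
Your proposal is correct and follows essentially the same route as the paper: both directions of each equivalence are handled by substituting the appropriately $\Box$-decorated antecedent into the given axiom and then adjusting it via an \ipc-valid implication, $\li$ and $\lii$-transitivity, exactly as in the paper's derivation of the first equivalence (the paper leaves the second to the reader, and your version of it is the expected analogue). The only blemish is the mention of $\lxi$ in your first direction, which is not actually needed — the step $((\Box\psi\to\phi)\wedge\Box\psi)\tto\psi$ follows from the \ipc-tautology $((\Box\psi\to\phi)\wedge\Box\psi)\to\phi$, $\li$ and $\lii$ alone.
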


\begin{deriproof}
We just give a proof of the first of these equivalences, leaving the second to the reader (or to the references; cf. also Lemma \ref{th:bkax}). In one direction:
\begin{alignat*}{3}
\phi \tto \psi & \dedz &\quad& ((\Box \psi \to \phi) \wedge \Box \psi) \tto \psi &\\
& \dedp{\loglzeromin + \lvii} &&  (\Box \psi \to \phi) \tto \psi.
\end{alignat*}
For the converse,

\begin{alignat*}{3}
 (\phi \wedge \Box\psi) \tto \psi & \dedp{\loglzeromin + \lviialt} &\quad& (\Box \psi \to (\phi \wedge \Box \psi) )\tto \psi & \\
 & \dedz && (\Box \psi \to \phi) \tto \psi & \\
 & \dedz && \phi \tto \psi. & \qedhere
 \end{alignat*}
\end{deriproof}

\begin{lemma} \label{lem:gwader}
We have:
%\begin{multicols}{2}
\begin{enumerate}[a.]
\item \label{wtol}
${\loglf} \vdash \logle.$ 
\item \label{matowa}
$\logle + \lviii \vdash \lvii$. In other words, $\logld \eqd \logle + \lviii$.  %\columnbreak
\item 
$\logle+\lxii \vdash \lvii.$
\item \label{matobox} $\iP^- + \lviii \ded (\Box\phi \tto \psi) \to \Box(\Box\phi \to \psi)$. 
\end{enumerate}
%\end{multicols}
\end{lemma}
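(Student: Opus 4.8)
For part~\ref{wtol}, the goal is $\loglf \vdash \logle$, where $\loglf = \loglh + \lviii = \loglg$-ish; more precisely $\loglf = \loglh^- + \liv + \lviii$ in the naming scheme (the superscriptless version). Since $\logle = \loglb + \biv + \lv$ (Lemma~\ref{lem:gla}), and $\loglf$ already contains $\loglb$ and $\liv$, I only need to derive $\biv$ and $\lv$ inside $\loglf$. The principle $\lvii$ (present in $\loglh$) gives, via Lemma~\ref{lem:gla}-style manipulation, the L\"ob principle for $\tto$, hence $\biv$ for the defined $\Box$; and $\lv$ should follow from $\lvii$ by an argument analogous to Lemma~\ref{lem:sa}(b) ($\ws^- \ded \lv$), adapted to the weaker setting where instead of $\lS$ we have $\lvii$. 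The key move is: from $\lvii$, i.e.\ $(\phi \wedge \Box\psi)\tto\psi \to \phi\tto\psi$, instantiating cleverly (e.g.\ $\phi := \Box\phi$, $\psi := \phi$) together with $\lxi$ (available since $\loglb \ded \lxi$ by Lemma~\ref{th:basicia}\ref{blderiv}), one extracts $(\Box\phi \to \phi)\tto\phi$, which is $\lvi$; and then Lemma~\ref{lem:gla} tells us that over $\logla + \lv$ the principles $\biv$ and $\lvi$ are interderivable. So I expect the argument to be: derive $\lv$ first (the genuinely new step), then invoke Lemma~\ref{lem:gla}. I anticipate the derivation of $\lv$ from $\lvii$ without $\lS$ to be the main obstacle; if it is not directly derivable one needs $\lviii$ as well, which is fine since $\loglf$ has it.

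For part~\ref{matowa}, I want $\logle + \lviii \vdash \lvii$, equivalently $\logld = \logle + \lviii$ (recall $\logld = \logle + \lviii$ is asserted, and $\logld = \loglc^-$-flavoured; actually $\logld = \logle + \lvii + \lviii$ by definition, so the content is that $\lvii$ becomes redundant once $\lviii$ and the L\"ob/$\lv$ machinery of $\logle$ are present). The plan: start from $(\phi \wedge \Box\psi)\tto\psi$ and aim for $\phi\tto\psi$. Apply $\lviii$ in its primed form $\lviiialt$: $(\phi \wedge \Box\chi)\tto\psi \to \phi\tto(\Box\chi\to\psi)$, taking $\chi := \psi$, to get $\phi \tto (\Box\psi \to \psi)$. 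Now use $\lv$ (available in $\logle$) in the shape $\psi \tto \Box\psi$, hence $\logla \ded (\Box\psi\to\psi) \tto$-related facts; combined with the L\"ob axiom $\biv$ for $\Box$ one gets $\Box(\Box\psi\to\psi)\to\Box\psi$, and via $\lxi$ and $\liiialt$ one can compose $\phi \tto (\Box\psi\to\psi)$ with the internalized L\"ob step to conclude $\phi\tto\psi$. This is essentially the $\tto$-internalization of the standard ``$\lviii$ + L\"ob $\Rightarrow$ $\lvii$'' argument from \cite{iemh:prop05}; I would cite that derivation and reproduce the few lines.

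For part~(c), $\logle + \lxii \vdash \lvii$: here $\lxii$ is $\phi\tto\psi \to \Box(\phi\tto\psi)$. The idea is that $\lxii$ plus L\"ob for $\Box$ lets one do a fixed-point-style argument giving $\lvii$ directly. Concretely, from $(\phi \wedge \Box\psi)\tto\psi$, use $\lxii$ to get $\Box((\phi\wedge\Box\psi)\tto\psi)$; then normality and $\lxi$ let one manipulate this to $\Box(\Box\psi \to (\phi\to\psi))$-type statements, and L\"ob (via $\biv$) discharges the $\Box\psi$ hypothesis. Again this mirrors a known provability-logic derivation, so I would present it as a short chain of $\dedlz$/$\dedz$ steps. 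For part~\ref{matobox}, $\iP^- + \lviii \ded (\Box\phi\tto\psi)\to\Box(\Box\phi\to\psi)$: take $\chi := \phi$ in $\lviii$ to get $\Box\phi\tto\psi \to (\Box\phi \to \Box\phi)\tto(\Box\phi\to\psi)$, note $\logla \ded (\Box\phi\to\Box\phi)\tto(\Box\phi\to\psi)$'s antecedent is a theorem so the conclusion simplifies, then apply $\lxi$ together with $\lx$ (both in $\iP$ by Lemma~\ref{th:basicia}\ref{blderiv}) to turn $\tto(\Box\phi\to\psi)$ into $\Box(\Box\phi\to\psi)$ once the antecedent is $\top$-like — more precisely one uses that $\top\tto\theta$ is $\Box\theta$. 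The main obstacle across all four parts is keeping straight which axioms are available at each strength level ($\loglb$ vs.\ $\loglh$ vs.\ $\logle$) and making sure no hidden use of $\lS$ or full $\liv$ sneaks in where it is not allowed; the individual derivations are short once the right instantiations are found.
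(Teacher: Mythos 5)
Part~(\ref{matobox}) of your proposal is correct and identical to the paper's one-line argument (substitute $\Box\phi$ for $\phi$ and $\phi$ for $\chi$ in $\lviii$, then use that $\top\tto\theta$ is $\Box\theta$), and the first half of your part~(\ref{matowa}) --- applying $\lviiialt$ with $\chi:=\psi$ to reach $\phi\tto(\Box\psi\to\psi)$ --- is exactly the paper's. The problems are elsewhere. In~(\ref{wtol}) you misidentify the systems: $\loglf$ is just $\logla+\lvii$ (it contains neither $\liv$ nor $\lviii$), and $\logle$ is \emph{by definition} $\logla+\lvi$, so the whole claim reduces to deriving $\lvi$ from $\lvii$. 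That is a one-liner: substitute $\Box\psi\to\psi$ for $\phi$ in $\lvii$ (equivalently, $\psi$ for $\phi$ in $\lviialt$); the antecedent $((\Box\psi\to\psi)\wedge\Box\psi)\tto\psi$ follows by $\li$ from an $\ipc$-tautology, and the conclusion is $\lvi$. Your proposed detour --- ``derive $\lv$ first, then invoke Lemma~\ref{lem:gla}'' --- runs in the wrong direction ($\lv$ is obtained \emph{from} $\lvi$ in that lemma, not conversely), and your sample instantiation $\phi:=\Box\phi$, $\psi:=\phi$ in $\lvii$ yields only the triviality $\Box\phi\tto\phi\to\Box\phi\tto\phi$. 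In~(\ref{matowa}) your finish does not work: composing $\phi\tto(\Box\psi\to\psi)$ with $\Box(\Box\psi\to\psi)\to\Box\psi$ via $\lxi$ delivers only $\Box\phi\to\Box\psi$, which is strictly weaker than $\phi\tto\psi$. The correct finish is transitivity ($\lii$) with the axiom $\lvi$ itself, read as $(\Box\psi\to\psi)\tto\psi$ --- the one axiom of $\logle$ you never invoke in this part.

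Part~(c) is where the genuine gap lies. Starting from $(\phi\wedge\Box\psi)\tto\psi$ and boxing it with $\lxii$ leads, after pushing $\lxi$ through, to formulas with $\Box\psi$ on the \emph{right} of $\tto$, and nothing in $\logle+\lxii$ lets you strip that box; your intermediate ``$\Box(\Box\psi\to(\phi\to\psi))$'' would require $\lb$, which is unavailable, so ``L\"ob discharges the $\Box\psi$ hypothesis'' never materializes. The paper instead derives the equivalent form $\lviialt$: from $\phi\tto\psi$ and $\lxii$ obtain $\Box(\phi\tto\psi)$, hence $\Box(\Box\phi\to\Box\psi)$ by $\lxi$ under the box, rewrite propositionally to $\Box((\Box\psi\to\phi)\to(\Box\phi\to\phi))$, apply $\lx$ to get $(\Box\psi\to\phi)\tto(\Box\phi\to\phi)$, and finish with $\lvi$ and transitivity. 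The propositional manoeuvre that makes $\Box\phi\to\phi$ appear as the consequent --- precisely so that $\lvi$ can fire --- is the key idea of this derivation, and it is absent from your sketch.
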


\begin{deriproof}

\begin{enumerate}[(a)]
\item {\lvi} is immediate from \lviialt.
\Item
 %We work in $\logle + \lviii$.
%We  prove {\lvii}. Suppose $
%\tadeusz{We have not proved yet equivalence between $ \lviii$ and $\lviiialt$. Is it because it's analogous to derivations above? This should be stated.}
\begin{alignat*}{2}
(\phi \wedge \Box \psi) \tto \psi & \dedp{\loglzero + \lviiialt} &\quad&
\phi \tto (\Box\psi \to \psi) \\
&  \dedp{\loglzero + \lvi} && \phi \tto \psi.
\end{alignat*}
\Item
 %We work in $\logle + \lxii$.
%We  prove {\lviialt}. Suppose $\phi \tto \psi$. It follows that $
\begin{alignat*}{3}
\phi \tto \psi & \dedp{\loglzero + \lxii} &\quad& \Box (\phi \tto \psi)  & \\ 
 & \dedz && \Box(\Box \phi \to \Box\psi) & \\ 
 & \dedz &&  \Box ((\Box\psi \to \phi) \to (\Box \phi \to \phi)) & \\
& \dedm && (\Box \psi \to \phi) \tto (\Box \phi \to \phi) &  \\
& \dedp{\logle} && (\Box \psi \to \phi) \tto \phi. & 
\end{alignat*} 
 %It follows that $(\Box\psi \to \phi) \tto \phi$.
%Combining this with $\phi \tto \psi$ gives us the desired $(\Box\psi \to \phi) \tto %\psi$.
%\end{alignat*}
\item
We substitute $\Box\phi$ for $\phi$ and $\phi$ for $\chi$ in $\lviii$. \qedhere
 \end{enumerate}
\end{deriproof}

%\begin{tomove}
\nosmurf
%The item \ref{matowa} can be reformulated as 
%\begin{remark}
Examples \ref{notltow}, \ref{eerlijkesmurf} and \ref{eerlijkesmurfalt}  illustrate that clauses \refeq{wtol}  and \refeq{matowa} cannot be reversed. 
%Examples \ref{eerlijkesmurf} and \ref{eerlijkesmurfalt} illustrate that neither can \refeq{matowa}. 
%\end{remark}

\begin{lemma}
$ \loglb + \lv \vdash \bv$. 
\end{lemma}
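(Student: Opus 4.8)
The goal is to derive Leivant's Principle $\bv$, namely $\Box(\phi \vee \psi) \to \Box(\phi \vee \Box\psi)$, in the system $\loglb + \lv$. Here $\loglb = \iP$ is the minimal normal system for $\tto$, and $\lv$ is the axiom $\phi \tto \Box\phi$.

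The plan is to work with the defined box $\Box\chi := \top \tto \chi$ and exploit the disjunction axiom $\liv$ (available in $\loglb$, as opposed to $\loglb^-$) together with $\lv$. First I would record the useful consequences already in hand: $\loglzero \vdash \lxi$ (so $\tto$ respects $\Box$), and the normality of $\Box$ inside $\iP$, so that $\Box$ behaves like a $\iK$-box. The key move is to start from $\Box(\phi\vee\psi) = \top \tto (\phi\vee\psi)$ and split the consequent. Using $\liv$ in the form $\livalt$ ($\phi\tto\psi \to (\phi\vee\chi)\tto(\psi\vee\chi)$) is the natural tool: from $\psi \tto \Box\psi$ (an instance of $\lv$) we get $(\phi \vee \psi) \tto (\phi \vee \Box\psi)$. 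Then chaining with $\Box(\phi\vee\psi) = \top \tto(\phi\vee\psi)$ via transitivity $\lii$ yields $\top \tto (\phi\vee\Box\psi)$, which is exactly $\Box(\phi\vee\Box\psi)$.

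So concretely the derivation would run: (1) $\psi \tto \Box\psi$ by $\lv$; (2) from (1) and $\livalt$ (equivalent to $\liv$ over $\loglzero$ by Lemma~\ref{th:basicia}(b)) infer $(\phi\vee\psi)\tto(\phi\vee\Box\psi)$; (3) observe $\Box(\phi\vee\psi) \to \top\tto(\phi\vee\psi)$ by definition of $\Box$; (4) combine (3) and (2) with the transitivity axiom $\lii$ (applied at antecedent $\top$) to get $\Box(\phi\vee\psi) \to \top\tto(\phi\vee\Box\psi)$; (5) rewrite $\top\tto(\phi\vee\Box\psi)$ as $\Box(\phi\vee\Box\psi)$. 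This gives $\bv$. I would present this as a short calculation in the $\dedlz$/$\dedv$ notation of the paper, something like: starting from $\top$, $\top \dedlp{\loglb+\lv} (\phi\vee\psi) \to$ wait — more cleanly, reason under the hypothesis $\Box(\phi\vee\psi)$, i.e.\ $\top \tto (\phi\vee\psi)$, and then apply $\lii$ with the second premise $(\phi\vee\psi)\tto(\phi\vee\Box\psi)$ obtained from steps (1)--(2).

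I do not expect a serious obstacle here; the only point requiring mild care is making sure $\livalt$ is applied correctly (it is the disjunction-introduction-style monotonicity of $\tto$, and its equivalence with $\liv$ over $\loglzero$ is Lemma~\ref{th:basicia}(b)), and that the instance of $\lv$ used is $\psi \tto \Box\psi$ rather than $\phi\tto\Box\phi$. One could alternatively avoid $\livalt$ and instead use the primitive form $\liv$ ($\phi\tto\chi \to \psi\tto\chi \to (\phi\vee\psi)\tto\chi$) after first deriving both $\phi \tto (\phi\vee\Box\psi)$ (trivially, from $\li$ applied to $\phi\to\phi\vee\Box\psi$) and $\psi\tto(\phi\vee\Box\psi)$ (from $\psi\tto\Box\psi$ and $\li$ on $\Box\psi \to \phi\vee\Box\psi$, chained via $\lii$); but the $\livalt$ route is the shortest. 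The whole argument stays inside $\loglb+\lv$ and uses $\liv$ in an essential way, consistent with the paper's running theme that Leivant's Principle genuinely needs disjunction.
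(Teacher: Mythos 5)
Your proof is correct and essentially the paper's own argument: the paper derives $\phi \tto (\phi\vee\Box\psi)$ and $\psi \tto (\phi\vee\Box\psi)$ (the latter via \lv) and combines them with the primitive form of \liv, then passes to boxes via \lxi\ --- which is exactly the ``alternative'' route you sketch at the end, while your main line via \livalt\ and \lii\ at antecedent $\top$ is just a cosmetically different packaging of the same steps. No gap; both variants go through in $\loglb+\lv$ and use \liv\ essentially, as you note.
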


\begin{deriproof}
We reason in \loglb + \lv. We have $\phi \tto (\phi\vee \Box \psi)$. We also have $\psi \tto ( \phi \vee \Box\psi)$, by
\lv. So, $(\phi\vee\psi) \tto (\phi \vee \Box \psi)$. We may conclude $\Box(\phi \vee \psi) \to
\Box(\phi \vee \Box \psi)$.
\end{deriproof}

\nosmurf
This implies that the logics  {\loglg}, {\loglh} and  {\loglc}  are not conservative over
\logba. %We can see this by noting that Leivant's Principle {\bv} is derivable in $\loglb + \lv$,
%which is contained in \loglg. 
 Both  {\loglg} and {\loglh}  are conservative over $\logba+{\bv}$ \cite{iemh:prop05}.
%\end{tomove}

\subsubsection{More derivations}
\nosmurfduo
Derivations discussed in the remainder of this section are mostly of importance in \S~\ref{sec:arrows}, although, e.g., Lemma \ref{th:mhcderiv}\ref{mhcbeq} will be also relevant in \S~\ref{sec:hastar}:

\begin{lemma} \label{th:mhcderiv}
We have:

\begin{enumerate}[a.]
\item \label{mhcbeq} $\iK^- + \mHCb \eqdb \iK^- + \mHCbalt$
\item \label{mhcttob} $\iP^- + \mHCl \ded \mHCb$.
\item \label{mhcdi} $\imHCl^- \ded \lb$ and consequently $\imHCl^- \ded \Di$. 
\item $\imHCl \ded \lviii$. %\avblue{Also a minus here?}
\item $\iKMl \ded \lvii$.
%\item $\iKMl \ded \lvii$.
\end{enumerate}
\end{lemma}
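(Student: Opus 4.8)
The plan is to prove the five clauses in order, with clause (c) as the crux; clauses (d) and (e) then follow by assembly from earlier results. Throughout I use that $\logla$ already proves $\bi$ and $\bii$ (Fact~\ref{fact:necunnec}), so ordinary normal-modal manipulation of the defined $\Box$ is available, and that over $\loglb^-$ both $\lSalt$ and $\lSaltalt$ are available forms of $\lS$ (Lemma~\ref{lem:lsder}).

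Clause (a) is routine normal-modal reasoning. To obtain $\mHCbalt$ from $\mHCb$, instantiate $\mHCb$ with $\psi\to\phi$ in place of $\phi$, getting $\Box(\psi\to\phi)\to(\psi\to(\psi\to\phi))\vee\psi$, and simplify via the \ipc-theorem $(\psi\to(\psi\to\phi))\to(\psi\to\phi)$. Conversely, from $\Box\phi$ derive $\Box(\psi\to\phi)$ by necessitating the \ipc-theorem $\phi\to(\psi\to\phi)$ and applying $\bii$, then invoke $\mHCbalt$. Clause (b): from $\Box\phi=\top\tto\phi$ together with $\psi\tto\top$ (which holds by $\li$, since $\vdash\psi\to\top$), transitivity $\lii$ gives $\psi\tto\phi$; applying $\mHCl$ to this yields $(\psi\to\phi)\vee\psi$, which is the desired instance of $\mHCb$. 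Neither step uses $\liii$ or $\liv$.

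Clause (c) is the heart of the lemma. By Lemma~\ref{th:bkax}(b) it suffices to derive $\lbaltalt$, i.e.\ $\phi\tto\psi\to(\chi\to\phi)\tto(\chi\to\psi)$; then $\Di$ follows via Lemma~\ref{th:bkax}\ref{clausequiv}. So assume $\phi\tto\psi$ and apply $\mHCl$ to obtain $(\phi\to\psi)\vee\phi$; argue by cases. If $\phi\to\psi$ holds, then $(\chi\to\phi)\to(\chi\to\psi)$ is an \ipc-consequence, so the relevant instance of $\lSalt$ gives $(\chi\to\phi)\tto(\chi\to\psi)$. If instead $\phi$ holds, then $\lSaltalt$ applied to $\phi\tto\psi$ and $\phi$ yields $\Box\psi$; normality ($\bii$ on the \ipc-theorem $\psi\to(\chi\to\psi)$) gives $\Box(\chi\to\psi)=\top\tto(\chi\to\psi)$, and since $(\chi\to\phi)\tto\top$ by $\li$, transitivity $\lii$ again delivers $(\chi\to\phi)\tto(\chi\to\psi)$. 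Disjunction elimination completes the derivation of $\lbaltalt$; in particular $\imHCl^-\eqd\imHCl$.

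Clause (d) is then immediate: $\lviii$ is exactly the instance of $\lbaltalt$ obtained by substituting $\Box\chi$ for $\chi$, so it follows from (c) and Lemma~\ref{th:bkax}(b). For clause (e), $\iKMl$ contains $\logla$ and $\biv$ by definition and also $\lv$ by Lemma~\ref{lem:sa}(b) (as $\ws^-\subseteq\iKMl$), hence $\iKMl\supseteq\logle$ by Lemma~\ref{lem:gla}; since moreover $\iKMl\ded\lviii$ by (d), Lemma~\ref{lem:gwader}\ref{matowa} (namely $\logle+\lviii\vdash\lvii$) yields $\iKMl\ded\lvii$. The one genuinely non-mechanical step is the case analysis in (c): recognising that the disjunction $\mHCl$ extracts from $\phi\tto\psi$ is precisely what lets one recover the decomposition axiom $\lb$, using $\lSalt$ on the branch where $\phi\to\psi$ already holds outright and $\lSaltalt$ together with normality on the branch where only $\phi$ does. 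Once this is in hand, the remaining clauses are pure bookkeeping against Lemmas~\ref{th:bkax}, \ref{lem:gla}, \ref{lem:sa} and~\ref{lem:gwader}.
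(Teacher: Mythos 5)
Your proof is correct, and the pivotal step---clause (c)---is essentially the paper's own argument: apply \mHCl\ to the antecedent $\phi\tto\psi$ and split on the resulting disjunction $(\phi\to\psi)\vee\phi$, discharging the first branch with \lSalt\ and the second with strength plus normality and transitivity. (The paper derives \lb\ directly where you derive \lbaltalt\ and invoke Lemma \ref{th:bkax}(b); by that very lemma the difference is cosmetic.) Where you genuinely diverge is in clauses (d) and (e). The paper proves each of these by running the \mHCl\ case analysis again from scratch: for (d) it splits $\phi\tto\psi$ and handles the two branches with \lSalt\ and with $\top\tto(\Box\chi\to\psi)$ respectively, and for (e) it splits $(\phi\wedge\Box\psi)\tto\psi$ and uses \bvii\ together with \lSalt\ on the first branch. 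You instead observe that \lviii\ is literally the substitution instance of \lbaltalt\ with $\Box\chi$ for $\chi$, so (d) is free once (c) is in hand, and you obtain (e) by verifying $\iKMl\supseteq\logle$ (via Lemmas \ref{lem:gla} and \ref{lem:sa}) and then citing Lemma \ref{lem:gwader}\ref{matowa}, i.e.\ $\logle+\lviii\vdash\lvii$. Both routes are sound; yours buys economy and makes the logical dependencies on earlier lemmas explicit, while the paper's repeated case analyses keep each clause self-contained and exhibit concrete derivations that do not route through the equivalence of \lb\ with \lbaltalt.
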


%\begin{remark}
\nosmurf
Clause \refeq{mhcdi} implies that the notation ``$\imHCl^-$'' is redundant.
 Example \ref{ex:mhcnonequiv} below illustrates that clause \refeq{mhcttob} cannot be reversed.
%\end{remark}

\begin{deriproof}
(a) To deduce $\mHCbalt$, substitute $\psi \to \phi$ for $\phi$. In the converse direction, use the fact that $\Box\phi \dedb \Box(\psi \to \phi)$.

\medent
(b) By  the (a) clause, it is enough  to derive $\mHCbalt$ with $\mHCb$  and this follows via $\lx$.

\medent
(c) Thanks to \mHCl, it is enough to prove those two claims: %\avred{Why does this give so much white space?} @@@
\begin{alignat*}{3}
%(\phi \to \psi) & \dedp{\iKMl} &\quad& (\Box\psi \to \phi) \tto (\Box\psi \to \psi) & \text{\quad and}\\ 
 \phi \to \psi  & \;\; \dedp{\iS} &\; & \Box(\phi \to \psi) & \text{\quad and}\\ 
 \phi  \wedge \phi \tto \psi  & \;\; \dedp{\iS} && \Box(\phi \to \psi). & 
%\phi & \dedp{\iKMl} &\quad& (\Box\psi \to \phi) \tto (\Box\psi \to \psi) & 
\end{alignat*}

\noindent
The first of them is immediate. The second uses $\lS$ and $\lii$.

\medent
(d) Note that thanks to \mHCl, it is enough to prove those two claims:
\begin{alignat*}{3}
\phi \to \psi & \;\; \dedp{\imHCl} & & (\Box\chi \to \phi) \tto (\Box\chi \to \psi) & \text{\quad and}\\ 
\phi \tto \psi \wedge \phi & \;\; \dedp{\imHCl} &\quad& (\Box\chi \to \phi) \tto (\Box\chi \to \psi) & 
\end{alignat*}

\noindent
For the first of them, just note that  
\[
\phi \to \psi \dedp{\ipc}  (\Box\chi \to \phi) \to (\Box\chi \to \psi)
\]
and use $\lSalt$. For the second, observe that 
\[
\phi \tto \psi \wedge \phi  \dedp{\ws} \top \tto (\alpha \to \psi)
\]

\noindent
for any $\alpha$ (including $\Box\chi$) and use \bii.

\medent
(e) First, observe that $\iKMl \ded \lvi$ by Lemma \ref{lem:sa}. Second, note that thanks to \mHCl, it is enough to prove those two claims:
\begin{alignat*}{3}
%(\phi \to \psi) & \dedp{\iKMl} &\quad& (\Box\psi \to \phi) \tto (\Box\psi \to \psi) & \text{\quad and}\\ 
 (\phi \wedge \Box \psi) \to \psi &\;\;  \dedp{\iKMl} &&\phi \tto \psi & \text{\quad and}\\ 
 \phi \wedge \Box \psi &\;\; \dedp{\iKMl} &\quad&\phi \tto \psi. & 
%\phi & \dedp{\iKMl} &\quad& (\Box\psi \to \phi) \tto (\Box\psi \to \psi) & 
\end{alignat*}

\noindent
The first of these is a straightforward application of $\bvii$ and $\lSalt$. The second is valid even in $\iP^-$.
\end{deriproof}

\begin{lemma} \label{lem:derpll}
We have:
\begin{multicols}{2}
\raggedcolumns
\begin{enumerate}[a.]
\item $\iP^- + \aCF \ded \CF$.
\item $\iP^- + \aApp \ded \aCF$.
\item $\iP^- + \Hug \ded \aCF$.
\item $\iPLLa^- \ded \aApp$. %\columnbreak \sloppy %\noindent
\item \label{plaahug} $\iPLLa^- \ded \Hug$.
\item \label{plaacollapse} $\phi \tto \psi \eqdp{\iPLLa^-} \phi \to \Box \psi$
\item \label{plaadi} $\iPLLa^- \ded \Di$.
\end{enumerate}
%\medskip
\end{multicols}
\end{lemma}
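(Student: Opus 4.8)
The plan is to handle (a)--(c) as quick substitution arguments, each over $\iP^-$ with one extra axiom, and then to organise (d)--(g) around the collapse identity (f): once (f) is in place, (d), (e) and (g) are essentially propositional corollaries. So the order I would follow is: first (a), (b), (c); then (f) together with (e) (since the non-trivial direction of (f) \emph{is} $\Hug$); then (d) and (g) by reading them off through (f).

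For (a): the scheme $\lxi$ is derivable already in $\loglzero$ (Lemma~\ref{th:basicia}\ref{blderiv}), hence in $\iP^-$; instantiating it with $\Box\phi$ for $\phi$ and $\phi$ for $\psi$ gives $(\Box\phi \tto \phi) \to (\Box\Box\phi \to \Box\phi)$, and one application of modus ponens to $\aCF$ yields $\CF$. For (b): instantiate $\aApp$ with $\top$ for $\phi$; since $\top \tto \psi$ is by definition $\Box\psi$, this gives $(\top \wedge \Box\psi) \tto \psi$, and combining it with $\Box\psi \tto (\top \wedge \Box\psi)$ --- obtained from the $\ipc$-triviality $\Box\psi \to \top \wedge \Box\psi$ by $\li$ --- via transitivity $\lii$ produces $\Box\psi \tto \psi$, i.e.\ $\aCF$. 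For (c): instantiate $\Hug$ with $\Box\phi$ for $\phi$ and $\phi$ for $\psi$; the antecedent $\Box\phi \to \Box\phi$ is a tautology, so modus ponens again delivers $\Box\phi \tto \phi = \aCF$.

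The core step is (f), $\phi \tto \psi \eqdp{\iPLLa^-} \phi \to \Box\psi$. The left-to-right direction is literally $\lSaltalt$, which holds over $\ws^-$ (Lemma~\ref{lem:lsder}\ref{hugconv}) and hence in $\iPLLa^-$. The right-to-left direction is exactly $\Hug$, so proving it also discharges (e). For this direction I would chain three axiom instances all available in $\iPLLa^- = \ws^- + \aCF$: the $\lSalt$-instance $(\phi \to \Box\psi) \to (\phi \tto \Box\psi)$ (recall $\ws^- \ded \lSalt$ by Lemma~\ref{lem:lsder}), the $\aCF$-instance $\Box\psi \tto \psi$, and the $\lii$-instance $(\phi \tto \Box\psi) \to ((\Box\psi \tto \psi) \to (\phi \tto \psi))$; pure propositional reasoning then gives $(\phi \to \Box\psi) \to (\phi \tto \psi)$. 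Note that no necessitation rule is applied to a hypothesis here, so the deduction theorem for the ambient $\to$ raises no issue.

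Given (f), part (d) follows because $\aApp$, i.e.\ $(\phi \wedge (\phi \tto \psi)) \tto \psi$, is $\iPLLa^-$-equivalent to $(\phi \wedge (\phi \tto \psi)) \to \Box\psi$, and the latter is immediate from $\lSaltalt$ by $\ipc$ reasoning. Likewise (g) follows because $\Di$, i.e.\ $\liv$, becomes under (f) the scheme $(\phi \to \Box\chi) \to ((\psi \to \Box\chi) \to ((\phi \vee \psi) \to \Box\chi))$, which is an $\ipc$-tautology; this is precisely what renders the notation $\iPLLa^-$ redundant. The ``main obstacle'' is not any technical difficulty but spotting the right packaging: once one observes that $\lSalt$ together with $\aCF$ forces the collapse (f), the rest is propositional bookkeeping, the only care needed being to apply $\li$ (and $\bi$) solely to genuine $\ipc$-theorems such as $\Box\psi \to \top \wedge \Box\psi$ or $\Box\phi \to \Box\phi$.
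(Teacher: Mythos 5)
Your proof is correct and follows essentially the same route as the paper's: the same substitutions for (a)--(c), the same $\lSalt$/$\aCF$/$\lii$ chain for (e), the identification of (f) with $\lSaltalt$ plus (e), and the reduction of $\Di$ to an $\ipc$-tautology for (g). The only cosmetic difference is in (d), where the paper derives $(\phi \wedge (\phi \tto \psi)) \tto \Box\psi$ directly and chains it with $\aCF$, while you route the claim through the collapse (f); both come to the same thing.
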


%\begin{remark}

%\end{remark}

\begin{deriproof}
For (a), use $\lxi$. For (b), substitute $\top$ for $\phi$. For (c), substitute $\Box\psi$ for $\phi$ in $\Hug$. 
For (d), note that $$\phi \wedge (\phi \tto \psi) \;\dedlp{\lna{S}} \;\Box\psi$$
and use \aCF. For (e), reason as follows:
\begin{alignat*}{3}
\phi \to \Box\psi & \;\;\dedp{\iPLLa^-}\; &&  \phi \tto \Box\psi \wedge \Box\psi\tto\psi \\
& \;\; \dedz\; && \phi \tto \psi.
\end{alignat*}
 (f) is a corollary of the previous clause and Lemma \ref{lem:lsder}\ref{hugconv}. This deconstruction of 
 $\tto$ in terms of $\to$ yields $\Di$ and thus (g) using the laws of intuitionistic logic.
\end{deriproof}

%\nosmurf
\nosmurf
Example \ref{ex:cfsep} illustrates irreversibility of several clauses in this lemma.
%The above examples may suggest 
So far,  principles involving $\tto$ tended to be  stronger than their 
relatives formulated in $\labox$. %using $\Box$ only. 
 It is indeed quite often but not always the case. For example, in the case of 
``semi-linearity'' axioms, the situation is reversed: %$\Box$-variant $\bLin$ is obviously the stronger one: %and equivalent to the weaker 
%$\aLin$ only in rather pathological setups, in particular of course in the presence of $\lb$:

\begin{fact} \label{fact:linnon}
 We have:
\begin{itemize}
\item $\iP^- + \bLin \ded \aLin$.
\item $\bk + \aLin \ded \bLin$, in particular $\imHCl + \aLin \ded \bLin$.
\end{itemize}
\end{fact}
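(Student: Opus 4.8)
The plan is to prove the two implications of Fact~\ref{fact:linnon} separately, each by a short Hilbert-style derivation, using the axioms in Table~\ref{tab:mainax} together with the derivations already established, most notably Lemma~\ref{th:basicia}\ref{blderiv} ($\loglzero \vdash \lx$ and $\loglzero \vdash \lxi$) and Lemma~\ref{th:mhcderiv}\ref{mhcdi} ($\imHCl^- \ded \lb$).

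For the first bullet, $\iP^- + \bLin \ded \aLin$, the idea is to feed the two disjuncts of $\bLin$, namely $\Box(\phi \to \psi)$ and $\Box(\psi \to \phi)$, through the axiom $\lx$. Since $\iP^-$ contains $\loglzero$, Lemma~\ref{th:basicia}\ref{blderiv} gives $\Box(\phi \to \psi) \dedv \phi \tto \psi$ and $\Box(\psi \to \phi) \dedv \psi \tto \phi$. Combining these with $\bLin$ by the intuitionistic rule $(\alpha \to \gamma) \to (\beta \to \delta) \to (\alpha \vee \beta) \to (\gamma \vee \delta)$ applied to $\Box(\phi\to\psi)\vee\Box(\psi\to\phi)$, we obtain $\phi \tto \psi \vee \psi \tto \phi$, which is $\aLin$. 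This direction does not even use $\liv$, so it goes through in $\iP^-$ (or indeed already over $\logla$).

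For the second bullet, $\bk + \aLin \ded \bLin$, the plan is to use the defining axiom $\lb$ of $\bk$ (i.e.\ $\phi \tto \psi \to \Box(\phi \to \psi)$) in the reverse role: starting from $\aLin$, namely $\phi \tto \psi \vee \psi \tto \phi$, apply $\lb$ to each disjunct to get $\Box(\phi \to \psi) \vee \Box(\psi \to \phi)$, which is exactly $\bLin$. Again this is a disjunction-of-implications step valid in $\ipc$. For the ``in particular'' clause, note that by Lemma~\ref{th:mhcderiv}\ref{mhcdi} we have $\imHCl \ded \lb$ (and $\imHCl$ contains $\loglb$, hence $\loglzero$, hence $\lx$), so $\imHCl$ proves everything $\bk$ proves that is relevant here; thus $\imHCl + \aLin \ded \bLin$ follows from the $\bk + \aLin \ded \bLin$ case.

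I expect no serious obstacle: both directions are one-line derivations once the right ``transport'' axiom ($\lx$ in one direction, $\lb$ in the other) is identified, and the only subtlety is the bookkeeping of which minimal system ($\loglzero$, $\logla$, $\iP^-$, $\bk$, $\imHCl$) each step lives in, which is already settled by the cited lemmas. The mild point worth being explicit about is that the passage from $\alpha \vee \beta$ and the two implications $\alpha \to \gamma$, $\beta \to \delta$ to $\gamma \vee \delta$ is pure \ipc, so the use of $\liv$ is genuinely unnecessary in the first bullet and the stated ``$\iP^-$'' (rather than $\iP$) is correct. Since the authors explicitly flag elsewhere in \S~\ref{sec:deriv} that details of such derivations are routinely omitted or left to the reader, a proof at the level of ``apply $\lx$ to the disjuncts of $\bLin$; apply $\lb$ to the disjuncts of $\aLin$'' is the appropriate level of detail.
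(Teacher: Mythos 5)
Your proof is correct, and it is exactly the intended argument: the paper states Fact~\ref{fact:linnon} without proof precisely because it reduces to transporting the disjuncts of $\bLin$ through $\lx$ (available already in $\loglzero$ by Lemma~\ref{th:basicia}) and, conversely, the disjuncts of $\aLin$ through $\lb$, with only intuitionistic disjunction elimination in between, so $\liv$ is indeed never needed and the ``in particular'' clause follows from Lemma~\ref{th:mhcderiv} as you say. The only cosmetic quibble is notational: writing $\Box(\phi\to\psi)\dedv\phi\tto\psi$ uses the $\iP$-derivability symbol where $\dedz$ would record the sharper fact that the step lives in $\loglzero$.
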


\nosmurf
%Even quite powerful axioms, do no
 It is hard to make $\aLin$ and $\bLin$ coincide in the absence of $\lb$, cf. Example \ref{ex:linavsb}. Nevertheless, here is an important exception,
 %Unless, that is, these additional axioms are very strong indeed---and this is an 
 used in \S~\ref{sec:guarded} below:

\begin{lemma} \label{lem:kmlin}
We have:
\begin{enumerate}[a.]
\item $(\Box\phi \to \Box\psi) \to \Box(\phi \to \psi) \dedz \lb$.
\item $\iKMlinb \ded (\Box\phi \to \Box\psi) \to \Box(\phi \to \psi)$.
\item $\iKMlinb \eqd \iKMlin$. That is, not only both systems are notational variants of each other, 
but the $\lb$ axiom can be derived from $\iKMlinb$.
\end{enumerate}
\end{lemma}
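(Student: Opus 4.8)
The plan is to dispatch (a) in one line, then treat (b) as the real work, and finally assemble (c) using (a) and (b) together with earlier lemmas.

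For (a): by Lemma~\ref{th:basicia}\ref{blderiv} we have $\loglzero\vdash\lxi$, i.e.\ $\phi\tto\psi\dedz\Box\phi\to\Box\psi$; composing this with the hypothesised schema $(\Box\phi\to\Box\psi)\to\Box(\phi\to\psi)$ by modus ponens yields $\phi\tto\psi\to\Box(\phi\to\psi)$, which is $\lb$, and the whole derivation stays within $\loglzero$-reasoning, which is what the $\dedz$-annotation records. Nothing subtle happens here.

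For (b) --- the main obstacle --- I would reason inside $\iKMlinb$, recalling first that since $\bvii\in\iKMlinb$ we also have $\biv$ and $\bvi$ available (cf.\ the discussion of Table~\ref{fig:boxax} in \S\ref{sec:axbox}). By $\bLin$ it is enough to prove $\Box(\psi\to\phi)\to\bigl((\Box\phi\to\Box\psi)\to\Box(\phi\to\psi)\bigr)$, the disjunct $\Box(\phi\to\psi)$ making the goal immediate. Assuming $\Box(\psi\to\phi)$, the principle $\mHCbalt$ (equivalently $\mHCb$, by Lemma~\ref{th:mhcderiv}\ref{mhcbeq}) gives $(\psi\to\phi)\vee\psi$; in the right disjunct $\psi$ yields $\phi\to\psi$ and hence $\Box(\phi\to\psi)$ via $\bvi$, so what remains is the case where we have both $\psi\to\phi$ and the hypothesis $\Box\phi\to\Box\psi$. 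For this residual case the clean route is semantic: $\iKMlinb$ is sound and complete (see \S\ref{sec:completeness}) for intuitionistic Kripke frames whose modal relation is the strict order of a conversely well founded chain, and there $(\Box\phi\to\Box\psi)\to\Box(\phi\to\psi)$ is valid --- if $\phi$ holds at some $l$ strictly above a point $i$ validating $\Box\phi\to\Box\psi$, then $\phi$ holds throughout the up-set of $l$, so $\Box\phi$ holds at the immediate predecessor $p$ of $l$, which exists and satisfies $p\succeq i$ precisely because the chain is conversely well founded, whence $\Box\psi$ at $p$ and $\psi$ at $l$. The reason a direct Hilbert derivation resists is exactly this appeal to immediate predecessors, which has no transparent syntactic shadow; so I would finish (b) via soundness and completeness, or, if a purely syntactic argument is insisted upon, try to internalise the predecessor step by applying Löb's axiom $\biv$ to $\phi\to\psi$ and chaining another use of $\mHCb$ --- this is the step I expect to be technically delicate.

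For (c): recall first that $\iKMlin\ded\lb$ by Lemma~\ref{th:mhcderiv}\ref{mhcdi}, so in $\iKMlin$ the arrow collapses, $\phi\tto\psi\eqd\Box(\phi\to\psi)$ (using also $\lx$), and $\iKMlin$ is term-equivalent to a $\Box$-logic; it remains to identify that logic with $\iKMlinb$. For $\iKMlinb\subseteq\iKMlin$ under this identification: $\iKMlin$ proves $\bi,\bii$ by Fact~\ref{fact:necunnec}; it proves $\bvii$ because $\iKMl$ contains $\biv$ and $\lS$ (that is $\bvi$, cf.\ Lemma~\ref{lem:lsder}(a)) and $\iK+\biv+\bvi\vdash\bvii$; it proves $\mHCb$ because under the collapse $\mHCl$ becomes $\mHCbalt$, hence $\mHCb$ by Lemma~\ref{th:mhcderiv}\ref{mhcbeq}; and it proves $\bLin$ by Fact~\ref{fact:linnon}, since $\iKMlin\supseteq\imHCl+\aLin$. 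For the converse the linchpin is that $\iKMlinb$, read in $\latto$ with $\tto$ governed by the minimal $\iP$-apparatus together with $\lx$, already proves $\lb$: it contains $\bvii$, hence $\biv$ and $\bvi$, so by (b) it proves $(\Box\phi\to\Box\psi)\to\Box(\phi\to\psi)$, and then (a) gives $\lb$ --- which is exactly the final assertion of the statement. Once $\lb$ is in force, $\phi\tto\psi$ is $\Box(\phi\to\psi)$, and each axiom of $\iKMlin$ translates to a theorem of $\iKMlinb$: $\li$--$\liv$ to instances of normality in $\iK$, $\lS$ to $\bvi$, $\biv$ to $\biv$, $\mHCl$ to $\mHCbalt$ (hence $\mHCb$), and $\aLin$ to $\bLin$. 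This gives both inclusions, and the remark about deriving $\lb$ is the by-product noted above.
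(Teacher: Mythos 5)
Your parts (a) and (c) are correct and match what the paper intends: (a) is just \lxi\ composed with the hypothesised schema (the paper's own proof says no more than ``straightforward using $\lx$ and $\lxi$''), and (c) is assembled exactly as the paper does, as a corollary of (a) and (b) plus the earlier lemmas you cite (Lemma \ref{th:mhcderiv}, Fact \ref{fact:linnon}, Lemma \ref{lem:lsder}). For what it is worth, the paper's own proof of (b) is literally a placeholder (``(b) is the interesting one'' with an author's note promising a proof), so on that clause there is nothing to compare against; your reduction via \bLin\ and \mHCbalt\ to the single residual case ($\psi\to\phi$, $\Box(\psi\to\phi)$ and $\Box\phi\to\Box\psi$ in force, $\Box(\phi\to\psi)$ wanted) is genuine progress beyond what the paper records.

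The gap is in how you close that residual case. You appeal to soundness and completeness of \iKMlinb\ for conversely well-founded linear frames, but no such completeness theorem is available in the paper: Theorem \ref{th:mainco} does not cover \iKMlinb\ or \iKMlin, and the only completeness-type statement the paper proves about these systems is a \emph{negative} one (failure of strong completeness for everything between $\iP+\biv$ and \iKMlin). The weak completeness (or fmp) of \iKMlinb\ with respect to Noetherian linear frames is a real theorem, but it lives in external references such as \cite{Litak14:trends}, and invoking it here makes the lemma circular in spirit, since the whole point of clause (b) is a purely syntactic fact about the Hilbert system. Your semantic verification itself is fine (the maximal element of $\{x \mid k\preceq x\sqsubset l\}$ supplied by converse well-foundedness plays the r\^ole of the immediate predecessor), but the syntactic derivation --- which is what the statement asserts --- is exactly the step you leave open, and your fallback suggestion (``apply L\"ob's axiom to $\phi\to\psi$ and chain another use of \mHCb'') does not obviously go through: the natural attempts via strong L\"ob reduce the goal to $(\Box\phi\to\Box\psi)\to(\phi\to\psi)$, which is \emph{not} valid (take $\psi$ true at all stages strictly below the current one in the topos of trees), so one must target $\Box(\phi\to\psi)$ directly, and the case analysis from \mHCb\ keeps returning the useless disjunct $\phi$. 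As it stands, (b) is not proved.
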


\begin{deriproof}
(a) is straightforward using $\lx$ and $\lxi$.

(b) is the interesting one \tadeusz{I will add a proof}

(c) is a corollary of the two preceding clauses.
\end{deriproof}

%%%%%%%%%%%%%%

\section{Arithmetical interpretations: provability and preservativity \AIp} \label{sec:arith} \label{sec:preservativity}
\nosmurfduo
In \S~\ref{sec:completeness}  we continue the discussion of the modal side of our calculi.
 But now, we cannot postpone any further the presentation of our original motivation for studying
 constructive $\tto$ and a number of its axiom systems %the arithmetical %É
%Some of the axioms we discussed above may seem rather exotic. Why do we focus on them?
 %As said above, a major motivation for studying constructive $\tto$ and several axioms introduced in  
 %the preceding section is provided by the arithmetical
  %interpretation of $\Box$ and $\tto$. 
 %More specifically, the original interest in constructive $\tto$ came from its arithmetical  %interpretation 
 in terms of \emph{$\Sigma^0_1$-preservativity} for an arithmetical theory $T$: 
%This notion is defined as follows (cf. \S~\ref{prelo}):
\begin{itemize}
\item
 $A \tto_T B$ if, for all $\Sigma^0_1$-sentences $S$, if
$T \vdash S \to A$, then $T \vdash S \to B$.
\end{itemize}

\nosmurf 
In order to  provide a framework for such interpretations of modal connectives, we introduce the notion of a \emph{schematic logic}. This notion can be given a very general treatment.
However, for the purposes of this paper, we will restrict ourselves to the case of arithmetical theories, studying propositional logics of
theories (\S~\ref{sec:proptho}), provability logics (\S~\ref{sec:provo}) and our true target: preservativity logics (\S~\ref{prelo}). 
%Our true target in this section is preservativity and its connection with the Lewis arrow.
%\medent
  For an instructive contrast, we provide some extra information 
 about logics for $\Pi^0_1$-conservativity and  interpretability in Appendices~\ref{sec:picon} and \ref{vrolijkesmurf}.   %These logics provide an instructive contrast
%to the logic of preservativity. 

%%%%%%%%%%%%%%%%%%%%
\subsection{Schematic logics} \label{sec:schema}
\nosmurfduo
\emph{An arithmetical theory $T$} is, for the purposes of this paper,
 an extension of i-{\sf EA}, the intuitionistic version of Elementary Arithmetic, in the arithmetical language.\footnote{The classical
 theory {\sf EA} is $\mathrm{I}\Delta_0+{\sf Exp}$. This theory consists of the basic axioms for zero, successor, addition, multiplication
 plus $\Delta_0$-induction plus the axiom that states that exponentiation is total. The theory i-{\sf EA} is the same theory only
 with constructive logic as underlying logic. The theory proves the decidability of $\Delta_0({\sf exp})$-formulas. Some
  basic information about constructive arithmetic can be found in \cite{troe:cons88vol1,troe:meta73,Dragalin88:trams}.}
  We demand that the axiom set of $T$ is given by a $\Delta_0({\sf exp})$-formula. 
 
Let $\mathcal L_{\arbop_0, \ldots, \arbop_{k-1}}$ be the language extending $\mathcal L$ %$\mathcal L$ of propositional logic
with operators $\arbop_0$, \ldots, $\arbop_{k-1}$, where $\arbop_i$ has arity $n_i$. 
%We call the resulting language .
Let a function $F$ be given that assigns to every $\arbop_i$ an arithmetical formula $A(v_0,\ldots,v_{n_i-1})$, where all free variables are among the 
variables shown. We write $\arbop_{i,F}(B_0,\ldots, B_{n_i-1})$ for $F(\arbop_i)(\gnum{B_0},\ldots,\gnum{B_{n_i-1}})$. Here $\gnum{C}$ is the numeral of the
G\"odel number of $C$.
Suppose $f$ is a mapping from the propositional atoms to
arithmetical sentences. 
We define  $(\phi)_F^f$ as follows:
\begin{itemize}
\item
$(p)_F^f := f(p)$
\item
$(\cdot)_F^f$ commutes with the propositional connectives
\item
$ (\arbop_i(\phi_0,\ldots,\phi_{n_i-1}))_F^f := \arbop_{i,F}((\phi_0)_F^f,\ldots, (\phi_{n_i-1})_F^f)$
\end{itemize}

\nosmurf 
Let $T$ be an arithmetical theory.
We say that a modal formula in our given signature is \emph{$T$-valid} w.r.t. $F$ if, for all assignments $f$ of arithmetical sentences to
the propositional atoms, we have $T \vdash (\phi)_F^f$. We write $\Lambda_{T,F}$ for the set of modal formulas that are $T$-valid w.r.t. $F$. 
 Of course,   we will focus exclusively on ``natural'' $F$ yielding well-behaved  $\Lambda_{T,F}$ with interesting properties. 
 %will  mostly fail to have any good or interesting properties. We will be interested in $F$ that
%assign very good arithmetical formulas to the operators. 

%%%%%%%%%%%%%%%%%%%%%%%%
\subsection{Propositional logics of a theory} \label{sec:proptho}
\nosmurfduo
Let us first consider the case where our finite set of modal operators is empty. If $T$ is consistent and classical, then $\Lambda_T := \Lambda_{T,\emptyset}$
is, trivially, precisely  {\sf CPC} and if $T$ is Heyting
Arithmetic ({\sf HA}), then $\Lambda_T$  has the \emph{de Jongh property}: $\Lambda_T = {\sf IPC}$. %is  precisely {\sf IPC}, %Intuitionistic Propositional Logic.
%The property of theories that  is called 

There are theories for which $\Lambda_T$
 is an intermediate logic strictly between {\sf IPC} and {\sf CPC}. 
 De Jongh, Verbrugge {\&} Visser \cite{viss:inte11} show that whenever $\Theta$  is an intermediate logic with the finite frame property (cf. \S~\ref{sec:completeness}) and $U$ is the result of extending ${\sf HA}$ with  all axioms of $\Theta$ as schemes, $\Lambda_{U} = \Theta$.
 
 For some theories like \emph{Markov's Arithmetic} ${\sf MA} = {\sf HA}+{\sf MP} + {\sf ECT}_0$, where
%\begin{itemize}
%\item 
${\sf MP}$ is the \emph{Markov's Principle}  (\cite[4.5, p.203]{troe:cons88vol1}, \cite[1.11.5, p.93]{troe:meta73}): %\newline
%\qquad\qquad 
\[
(\forall x (Ax \vee \neg Ax) \wedge \neg\neg\exists x Ax) \to \exists x Ax.
\]
and  ${\sf ECT}_0$ is the \emph{Extended Church's Thesis} (cf. Appendix \ref{sec:real}), 
 %\end{itemize}
  the
characterization of the set of valid principles is an open problem connected to the question of the propositional logic of realizability.
See e.g. \cite[\S~13]{plis:surv09}.
 For more on intuitionistic schematic logics see  \cite{smor:appl73,viss:rule99,plis:surv09,viss:inte11,arde:sigm14}.
%In \Subsection~\ref{dejo}, we will sketch a proof that Heyting Arithmetic has the de Jongh Property.

%%%%%%%%%%%%%%%%%%%%%%%%%
\subsection{Provability Logic} \label{sec:provo}
\nosmurfduo
Next we consider the extension of propositional logic with a unary modal operator $\Box$. 
 It allows numerous interesting arithmetical interpretations, but %of a unary modal operator. 
 at this point we focus on the interpretation of $\Box$ as provability. Consider any arithmetical
theory $T$. We assume that $T$ comes equipped with a $\Delta_0({\sf exp})$-predicate $\alpha_T$ that gives
the codes of its axiom set. 
Let provability in $T$ be arithmetized by ${\sf prov}_T$. We note that  $T$ really occurs in the guise of $\alpha_T$. 
We set $\InF{0}{T}(\Box) := {\sf prov}_{\sf T}(v_0)$. Let $\Lambda^\ast_T := \Lambda_{T,\InF{0}{T}}$. 
%\medent
 Intuitionistic L\"ob's Logic {\logba} is contained in all $\Lambda^\ast_T$, where $T$
is an arithmetical theory in the sense of this paper.  This insight is
due to L\"ob \cite{loeb:solu55}.\footnote{Three remarks are in order. The
fact that L\"ob's Principle follows from L\"ob's work was noted by Leon
Henkin who was the referee of L\"ob's paper. Secondly, L\"ob's proof of
L\"ob's Principle is fully constructive and goes through even in constructive
versions of ${\sf S}^1_2$. Thirdly, Kripke's proof of L\"ob's Principle from the Second
Incompleteness Theorem is not constructive ---even if we give the Second Incompleteness
Theorem the form: if a theory proves its own consistency then it is inconsistent.}

\begin{remark}
In many treatments of intuitionistic modal logic the interdefinability of $\Box$ and $\Diamond$ fails and
$\Box$ and $\Diamond$ both are treated as primitive operations.  This is not so in 
 the context of provability logic for constructive theories and its extensions. Here the connective $\Diamond$ is always  defined as
 $\neg\Box \neg$. Thus, $\Box$, which signals the existence of a proof, is the positive notion and 
$\Diamond$ is the negative less informative notion. We note that $\neg \Diamond \neg$ is equivalent to 
$\neg\neg\Box \neg\neg$ which, in the context of theories like {\sf HA}, is certainly weaker than $\Box$.
\end{remark}

\begin{remark}
One of the first global insights into schematic theories is due to George Gargov \cite{garg:prov84}:  
 they inherit the disjunction property from the underlying arithmetic theory. Thus, if an extension
of i-{\sf EA} has the disjunction property, then so has its provability logic.\footnote{Interestingly, Gargov's argument itself
uses classical logic.}
\end{remark}

\nosmurf
The theory {\logbb} is obtained by extending {\logba} with classical logic.
If $T$ is a $\Sigma_0^1$-sound classical theory, then  $\Lambda_T^\ast = \logbb$. This insight is due to Solovay \cite{solo:prov76}.
In contrast, the logic {\logba} is not complete for {\sf HA}. 
 The system for preservativity logic $\loglc+{\sf V}$ discussed in \Subsection~\ref{prelo} derives many more arithmetically valid principles
for the provability logic of {\sf HA} underivable in \logba, e.g., %Here we just give a few examples.
%The following principles are valid for {\sf HA} but not derivable in \logba.
\begin{itemize}
\item
$  \Box \neg\neg\,\Box \phi \to  \Box\Box\phi$.
\item
$ \Box(\neg\neg\, \Box \phi \to \Box\phi)\to \Box\Box\phi$
\item
$ \Box(\phi\vee \psi) \to \Box(\phi \vee \Box \psi)$. ({\bv})
\end{itemize}

\noindent
We note that the first principle is a consequence of classical \logbb, but the second and third are not.
This illustrates that  $\Lambda^\ast_{(\cdot)}$ is not monotonic. To make this understandable, the reader may note that
 we both change the theory and the interpretation of the modal operator.

%We note that if we
 Note that substituting $\Box\bot$ for $\phi$ and $\neg\, \Box\bot$ for $\psi$ in %Leivant's principle 
 {\bv} yields %, we obtain: 
\begin{eqnarray*}
\vdash \Box(\Box \bot \vee \neg \, \Box \bot) & \to &  \Box(\Box \bot \vee \Box \neg \, \Box \bot) \\
& \to & \Box (\Box \bot \vee \Box \bot) \\
& \to & \Box\Box \bot
\end{eqnarray*}

\noindent  Hence, adding {\bv} to {\logbb} yields $\Box\Box \bot$, i.e., Leivant's principle is `weakly inconsistent' with classical logic over
\logba.

We write $\Lambda \boxplus \Lambda'$ for the closure of $\Lambda \cup \Lambda'$ under modus ponens.
Our insight above yields: $\logbb + \Box\Box \bot \subseteq \Lambda^\ast_{\sf HA}\boxplus \Lambda^\ast_{\sf PA}$. 
In fact, by Theorem~\ref{sillysmurf}, we have:   $\Lambda^\ast_{\sf HA}\boxplus \Lambda^\ast_{\sf PA} = \logbb + \Box\Box \bot$. 

\begin{theorem}[Silly Upperbound] \label{sillysmurf}
We have: \[(\Box\Box\bot \to \neg\neg\,\Box\bot) \not\in \Lambda^\ast_T \;\text{ iff }\; \Lambda^\ast_T \subseteq \logbb + \Box\Box\bot.\]
\end{theorem}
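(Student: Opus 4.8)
The plan is to prove both implications of the biconditional in Theorem~\ref{sillysmurf}. The easy direction is right-to-left: if $\Lambda^\ast_T \subseteq \logbb + \Box\Box\bot$, then since $\logbb + \Box\Box\bot$ is a consistent classical modal logic (it is the logic of the single irreflexive point, or can be checked semantically to be consistent), it does not prove $\Box\Box\bot \to \neg\neg\,\Box\bot$; indeed in $\logbb + \Box\Box\bot$ we have $\Box\Box\bot$ as a theorem, so if $\Box\Box\bot \to \neg\neg\,\Box\bot$ were also a theorem we would get $\neg\neg\,\Box\bot$, and then by classical logic $\Box\bot$ (using $\neg\neg p \to p$), hence $\bot$ in the logic, contradicting consistency. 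So $(\Box\Box\bot \to \neg\neg\,\Box\bot) \notin \Lambda^\ast_T$.

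For the hard direction, left-to-right, I would argue contrapositively: assume $\Lambda^\ast_T \not\subseteq \logbb + \Box\Box\bot$, and show $(\Box\Box\bot \to \neg\neg\,\Box\bot) \in \Lambda^\ast_T$. First I would recall that $\logba \subseteq \Lambda^\ast_T$ always (L\"ob, as cited in the excerpt), so $\Lambda^\ast_T$ is a consistent normal extension of intuitionistic L\"ob logic. The key structural fact to invoke is a classification of the possible $\Lambda^\ast_T$ relative to the ``classical world'': either $\Lambda^\ast_T$ is contained in $\logbb + \Box\Box\bot$, or else $\Lambda^\ast_T$ contains some principle that forces, via its interaction with the provability interpretation, the formula $\Box\Box\bot \to \neg\neg\,\Box\bot$. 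The natural route is to show that if $\Lambda^\ast_T$ proves \emph{any} classically-valid formula not in $\logbb + \Box\Box\bot$, then it already proves $\neg\neg\,\Box\bot \vee \Box\Box\bot$ or the weaker conditional $\Box\Box\bot \to \neg\neg\,\Box\bot$ directly — this is where one uses that $\Box\Box\bot \to \neg\neg\,\Box\bot$ is arithmetically valid precisely when $T$ is not $\Sigma^0_1$-sound in the relevant sense, i.e., when $T$ proves $\mathsf{Con}(T) \to \Box\bot$-type statements, which is exactly the situation forced by going beyond $\logbb + \Box\Box\bot$. Concretely: $\Box\Box\bot \to \neg\neg\,\Box\bot$ fails arithmetically iff there is an interpretation where $T \nvdash \neg\,\Box\Box\bot$-style reasoning breaks, and one shows that the only $\Lambda^\ast_T$ avoiding this are the ones below $\logbb + \Box\Box\bot$.

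The main obstacle I anticipate is pinning down the ``dichotomy'': showing that being non-contained in $\logbb + \Box\Box\bot$ is a sufficiently strong hypothesis to extract the specific conditional $\Box\Box\bot \to \neg\neg\,\Box\bot$. I would handle this by translating the hypothesis into arithmetic: $\Lambda^\ast_T \not\subseteq \logbb + \Box\Box\bot$ means there is a modal $\phi$ and an assignment $f$ with $T \vdash (\phi)^f_{\InF{0}{T}}$ while $\logbb + \Box\Box\bot \nvdash \phi$. Using completeness of $\logbb + \Box\Box\bot$ for (finite, rooted, irreflexive-or-appropriate) Kripke frames, failure of $\phi$ there gives a concrete finite countermodel, and I would use a Solovay-style embedding of that countermodel into $T$ to conclude that $T$ must be $\Sigma^0_1$-\emph{unsound} — more precisely that $T \vdash \Box\Box\bot \to \neg\neg\,\Box\bot$ under every interpretation, because $T$'s proving a ``false at the root'' instance forces $T \vdash \Box\bot$ or forces the relevant reflection-type collapse. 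The delicate point is that the Solovay construction must be run intuitionistically over $T \supseteq \iea$; I would lean on the machinery already developed for $\logba$-completeness and the arithmetical soundness results for preservativity logic cited in \S\ref{prelo}, rather than redoing the fixed-point argument from scratch. Once $T \vdash \neg\neg\,\Box\bot$ (or the conditional) is established arithmetically, $(\Box\Box\bot \to \neg\neg\,\Box\bot) \in \Lambda^\ast_T$ is immediate, completing the contrapositive.
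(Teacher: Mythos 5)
Your overall strategy (contraposition plus a Solovay-style embedding of a Kripke countermodel) points in the same direction as the paper, but both halves of your argument have genuine problems.

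In the easy direction, the step ``we would get $\Box\bot$, hence $\bot$ in the logic, contradicting consistency'' is wrong: $\logbb$ has no reflection axiom, so $\Box\bot$ does not yield $\bot$, and $\logbb+\Box\bot$ is a perfectly consistent logic (that of the single irreflexive point). Relatedly, $\logbb+\Box\Box\bot$ is \emph{not} the logic of a single irreflexive point; it is the logic of irreflexive transitive frames of depth at most $2$. What you actually need, and what the paper uses, is the weaker and true fact that $\logbb\nvdash\Box\Box\bot\to\Box\bot$, witnessed by the two-element irreflexive chain: at the root $\Box\Box\bot$ holds but $\Box\bot$ fails. With that substitution the direction is fine.

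In the hard direction the decisive ideas are missing. You defer to ``running the Solovay construction intuitionistically over $T\supseteq\iea$'' using ``machinery already developed,'' but no such general machinery exists --- the absence of an intuitionistic Solovay theorem is precisely why the provability logic of ${\sf HA}$ is open (Open Question in the paper). The paper's proof succeeds only because of two specific devices: (i) the Solovay function for the depth-$2$ countermodel is defined inside $T^{+}:=T+\Box_T\Box_T\bot+{}$sentential excluded middle, where the hypothesis $\Box_T\Box_T\bot$ makes the existence of the limit provable under the box and sentential excluded middle makes it provable outside, so the classical argument can be replayed; and (ii) the conclusion $\Box_T\bot$ obtained in $T^{+}$ is then pulled back to $T\vdash\Box_T\Box_T\bot\to\neg\neg\,\Box_T\bot$ by discharging the excluded-middle assumptions, which are intuitionistically $\neg\neg$-provable --- this is exactly where the double negation in the statement comes from, a point your proposal never accounts for (you could not hope to prove $T\vdash\Box\Box\bot\to\Box\bot$). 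Your auxiliary ``dichotomy'' via $\Sigma^0_1$-soundness is also false as stated: ${\sf HA}^{\ast}$ is $\Sigma^0_1$-sound yet proves $\neg\neg\,\Box\bot$, so $\Box\Box\bot\to\neg\neg\,\Box\bot$ lies in $\Lambda^{\ast}_{{\sf HA}^{\ast}}$; the correct dividing line is exactly the containment in $\logbb+\Box\Box\bot$, which is what the theorem asserts and cannot be presupposed. Finally, note that $\phi\in\Lambda^{\ast}_T$ means $T\vdash(\phi)^f$ for \emph{all} assignments $f$, not for some $f$; you need this universal quantifier to instantiate $f$ with the Solovay substitution built from the countermodel.
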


\noindent Our proof presupposes knowledge of the proof of Solovay's Theorem. The proof can be skipped since nothing but the Silly Upperbound rests on it.

\begin{proof}
``$\to$'' Suppose $\phi\in  \Lambda^\ast_T$ and $\logbb + \Box\Box\bot \nvdash \phi$.
Then there is a counter Kripke model of depth 2 to $\phi$, say with nodes 0,\ldots, $n-1$ and root 0.
We have $i \sqsubset j$ iff $i=0$ and $j>0$.
 Let $T^+$ be $T$ plus $\Box_T\Box_T \bot$ plus \emph{sentential excluded third}.
We work in $T^+$. We define a Solovay function in the usual way: 
\begin{itemize}
\item
$h0 : = 0$ 
\item
$ h(p+1) := 
\begin{cases} i & \text{if $h(p) \sqsubset i$ and ${\sf proof}_T(p,\gnumm{\ell \neq i}$)} \\
h(p) & \text{otherwise} 
\end{cases}
$ 

Here $\ell$ is the limit of $h$.
\end{itemize}

\noindent
We note that, since $\Box_T\Box_T \bot$  we have $\Box_T \bigvee_{0< j < n}\exists x\, hx=j$. 
This tells us that inside the box, we can indeed prove that the limit exists. Moreover we
have excluded third for sentences of the form $\ell =i$.
Outside the box we can also prove the existence of the limit by sentential excluded third.
Using these two observations we can execute the usual Solovay argument. This
gives us $\Box_T\bot$ and 
 %\smadent
 we may conclude that $T \vdash \Box_T\Box_T \bot \to \neg\neg\Box_T\bot$. 

\medent
``$\leftarrow$''
Suppose $(\Box\Box\bot \to \neg\neg\,\Box\bot) \in \Lambda^\ast_T$ and $\Lambda^\ast_T \subseteq \logbb + \Box\Box\bot$.
Then it would follow that $\logbb \vdash \Box\Box\bot \to \Box \bot$. Quod non.
\end{proof}

\noindent 
Note that %the condition that 
$(\Box\Box\bot \to \neg\neg\,\Box\bot) \not\in \Lambda^\ast_T$ %is satisfied 
 if
$T$ is one of {\sf HA}, ${\sf HA}+{\sf MP}$, ${\sf HA}+{\sf ECT}_0$. %.
%On the other hand, it does not hold 
 The situation is different for $T = {\sf HA}^\ast$ (cf. \Subsubsection~\ref{sec:hastar}).

%\medent
We formulate the main question of constructive provability logic.

\begin{quest} \label{que:main}
What is the provability logic of {\sf HA}? Is it decidable? We note that  the logic is \emph{prima facie} $\Pi^0_2$.
\end{quest}

\noindent
The basic information about classical provability logic can be found in \cite{smor:self85,bool:emer91,Boolos1993,lind:prov96,japa:logi98,svej:prov00,arte:prov04,halb:henk14}. For information about intuitionistic
provability logic, see e.g. \cite{viss:prop94,Iemhoff01:phd,iemh:moda01,viss:close08,arde:sigm14}.

%%%%%%%%%%%%%%%%%%%%%

\subsection{Preservativity Logic}\label{prelo}
%\nosmurfduo
%The notion of $\Sigma^0_1$-preservativity was introduced in \cite{viss:eval85}.
%It was further studied in \cite{viss:prop94}, \cite{viss:subs02},
%\cite{iemh:pres03} and  \cite{iemh:prop05}.

%\subsubsection{Basics}
\nosmurfduo
As stated above,  $\Sigma^0_1$-preservativity \cite{viss:eval85,viss:prop94,viss:subs02,iemh:pres03,iemh:prop05} for a theory $T$ is defined as follows:
\begin{itemize}
\item
 $A \tto_T B$ if, for all $\Sigma^0_1$-sentences $S$, if
$T \vdash S \to A$, then $T \vdash S \to B$.
\end{itemize}

\noindent In contrast to $\Pi_1^0$-conservativity and interpretability (see Appendices~\ref{sec:picon} and \ref{vrolijkesmurf}),  
defining $\Sigma^0_1$-preservativity does not require an inter-theory notion %of the form 
$T \tto U$.
 
 We give a characterization of $\Sigma_1^0$-preservativity that is analogous to the Orey-H\'ajek characterization for
 interpretability over {\sf PA}.
  Suppose $T$ extends {\sf HA}. We write $\Box_{T,n}$ for the arithmetization of provability from the axioms of $T$ with G\"odel number
 $\leq n$. The theory $T$ is, {\sf HA}-verifiably,  essentially reflexive: for all $n$ and $A$, we have $T \vdash \Box_{T,n}A \to A$.
 Here we allow parameters in the formulation of the reflection principle.\footnote{This result is folklore. We could not locate a
 fully worked-out proof in the literature. Some ingredients can be found
 in  \cite[Part I, \Section 5]{troe:meta73}, but the treatment of these ingredients contains some gaps.
 The proof looks as follows. The theory {\sf HA} verifies cut-elimination for predicate logic.
 Consider any $n$.
 Reason in $T$. Suppose $\Box_{T,n}A$. Let $p$ be a cut-free witness of $\Box_{T,n} A$. All formulas occurring in $p$ will have complexity $\leq m$, for some
 standard $m$. Here our complexity measure is \emph{depth of logical connectives}. We can develop a partial satisfaction predicate for  formulas of
 complexity $\leq m$ that {\sf HA}-verifiably satisfies the commutation conditions. The standard axioms of $T$ that have
 G\"odel number $\leq n$ are true (in the sense of our satisfaction predicate), since the Tarski bi-conditionals are
 derivable. By induction, we can show that all $m$-derivations from true axioms yield
 true conclusions. So, \emph{a fortiori}, we have $A$.}
 
 \begin{theorem}\label{tuinsmurf} 
 Suppose $T$ is an extension of {\sf HA}. Then, $A \tto_T B$ iff, for all $n$, $T \vdash \Box_{T,n} A \to B$. 
 This result is verifiable in i-{\sf EA}.
 \end{theorem}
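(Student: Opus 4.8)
The plan is to prove the equivalence in two directions, making essential use of the Orey-H\'ajek-style reflection already announced before the theorem, namely that $T$ is ${\sf HA}$-verifiably essentially reflexive: $T \vdash \Box_{T,n}A \to A$ for all $n$ and all $A$ (with parameters allowed). The ``$\Leftarrow$'' direction is the easy one. Assume that for all $n$ we have $T \vdash \Box_{T,n}A \to B$. To show $A \tto_T B$, take any $\Sigma^0_1$-sentence $S$ with $T \vdash S \to A$. Since $T$ proves $S \to A$, there is some code $n$ such that this proof uses only axioms of G\"odel number $\leq n$, so ${\sf HA}$ (hence $T$) verifies $S \to \Box_{T,n}A$ --- here one uses formalized $\Sigma^0_1$-completeness / provable $\Sigma^0_1$-completeness of the proof predicate, noting that ``$S\to A$ has a $\Box_{T,n}$-proof'' is itself a true $\Sigma^0_1$ fact and hence provable, and that from $S$ and $\Box_{T,n}(S\to A)$ one gets $\Box_{T,n}A$ by formalized modus ponens (plus $\Box_{T,n}S$ from $\Sigma^0_1$-completeness applied to $S$). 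Combining with the hypothesis $T\vdash \Box_{T,n}A\to B$ gives $T \vdash S \to B$, as required.

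For the ``$\Rightarrow$'' direction, assume $A \tto_T B$ and fix $n$. The goal is $T \vdash \Box_{T,n}A \to B$. The trick is to choose the right $\Sigma^0_1$-sentence $S$ to feed into the definition of preservativity. Reasoning inside $T$: suppose $\Box_{T,n}A$; we want $B$. Let $S$ be a $\Sigma^0_1$-sentence asserting ``$A$ is provable from axioms of $T$ of G\"odel number $\leq n$'', i.e. $S := \Box_{T,n}A$ itself, which is genuinely $\Sigma^0_1$. Trivially $T \vdash S \to \Box_{T,n}A$, and by essential reflexivity $T \vdash \Box_{T,n}A \to A$, so $T \vdash S \to A$. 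By $A \tto_T B$ we conclude $T \vdash S \to B$, i.e. $T \vdash \Box_{T,n}A \to B$, which is exactly what was wanted. So the apparent difficulty of producing the $\Sigma^0_1$-sentence dissolves: $\Box_{T,n}A$ serves as its own witness, and essential reflexivity bridges from it back to $A$.

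The step I expect to be the main obstacle is not the logical skeleton above but the bookkeeping needed to make it ${\sf i}$-${\sf EA}$-verifiable, as the theorem claims. For this one must check that: (i) the selection of the code $n$ bounding a given proof, and the passage from ``$T\vdash S\to A$'' to ``${\sf i}\text{-}{\sf EA}\vdash \Box_{T,n}(S\to A)$'', is carried out by a provably total elementary function, so that the whole argument formalizes; (ii) formalized $\Sigma^0_1$-completeness for the relevant $\Sigma^0_1$-sentences, formalized modus ponens, and the provable monotonicity $\Box_{T,m}\to\Box_{T,n}$ for $m\leq n$ are all available already in ${\sf i}\text{-}{\sf EA}$ --- which they are, since ${\sf i}\text{-}{\sf EA}$ proves decidability of $\Delta_0({\sf exp})$-formulas and supports the usual arithmetization of syntax; and (iii) the essential reflexivity schema is used only in the ${\sf HA}$-verifiable form stated in the preceding footnote, with parameters, so that substituting the relevant formulas for the parameter is legitimate inside ${\sf i}\text{-}{\sf EA}$. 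I would lay out the informal proof first exactly as above, then add a paragraph remarking that each of these ingredients is standard and that their conjunction is what yields the final ${\sf i}\text{-}{\sf EA}$-verifiability clause, referring back to the cut-elimination / partial-satisfaction argument sketched in the footnote for the reflexivity input.
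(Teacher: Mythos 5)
Your proof is correct and follows essentially the same route as the paper's: the forward direction instantiates the preservativity definition with the $\Sigma^0_1$-sentence $S := \Box_{T,n}A$ and closes the gap with essential reflexivity, and the backward direction bounds the proof of $S \to A$ by some $n$ and uses i-{\sf EA}-verifiable $\Sigma^0_1$-completeness plus formalized modus ponens. The only point to make explicit is that $n$ must also be chosen large enough that the axioms of G\"odel number $\leq n$ contain (a finite axiomatization of) i-{\sf EA}, so that provable $\Sigma^0_1$-completeness is available for $\Box_{T,n}$ itself; the paper states this choice of the bound explicitly, while you relegate it to the bookkeeping remarks.
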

 
 \begin{proof}
 ``$\to$'' Suppose $A \tto_T B$. It follows that (a)
 if $T \vdash \Box_{T,n} A \to A$, then  (b) $T \vdash \Box_{T,n} A \to B$.  Now note that (a) follows from
 essential reflexivity.
 
 \medent
``$\leftarrow$'' Suppose (c) for all $n$, $T \vdash \Box_{T,n} A \to B$ and (d)  $T \vdash S \to A$.
 From (d), we have, for some $m$, that $T \vdash \Box_{T,m}(S \to A)$. We choose $m$ so large that the finite axiomatization
 of i-{\sf EA} has G\"odel number $\leq m$. %It follows, 
 By  i-{\sf EA} verifiable $\Sigma_1^0$-completeness of extensions of i-{\sf EA}, %that
  $T \vdash S \to \Box_{T,m} A$. Hence, by (c),  $T \vdash S \to B$. 
  
  \medent
  The verifiability in i-{\sf EA} can be seen by inspection of the above proof.
   \end{proof}
   
  %\grmbl
 
\subsubsection{$\logld$ and $\loglc$} \label{sec:iprel}
\nosmurfduo
We note that $\top \tto_{\sf HA} A$ is $\iea$-provably equivalent to $\Box_{\sf HA}A$. This means that %we have two design options
%for treating the
 in our study of $\Sigma^0_1$-preservativity logic of arithmetical theories, we can treat $\Box$ as a defined connective and focus on $\latto$. %or we can treat $\Box$ as defined by
%$\top \tto (\cdot)$. We opt for treating $\Box$ as defined. Thus, we extend the language of propositional logic with a binary operator $\tto$. 
Let $\InF{1}{T}(\tto):= {\sf p}_T(v_0,v_1)$, where ${\sf p}_T$ is a good arithmetization of the relation $A\tto_T B$.
We define $\Lambda_T^\circ:= \Lambda_{T,\InF{1}{T}}$.

\medent
The principles of {\logld} are arithmetically valid for all arithmetical theories in our sense,
\takeout{
%The principles of $\loglc = \logld + \liv$ are valid for a more restricted group.
We will give a sufficient condition for the satisfaction of {\liv} below.
For the convenience of the reader, we repeat the relevant principles here.

\begin{itemize}
\item[\li]	$ \vdash \phi\to \psi \;\Rightarrow\; \vdash  \phi \tto \psi $
\item[\lii]	$ (\phi \tto \psi \wedge \psi \tto \chi ) \to \phi \tto \chi$
\item[ \liii]    $ (\phi \tto \psi \wedge \phi \tto \chi ) \to \phi\tto (\psi \wedge \chi) $
\item[\liv]	$ (\phi \tto \chi \wedge \psi \tto \chi ) \to ( \phi \vee \psi )\tto \chi $
\item[ \lvii]   $ (\phi \wedge \Box \psi) \tto \psi \to \phi \tto \psi$
\item[\lviii]   $ (\phi \tto \psi) \to ( \Box \chi \to \phi) \tto ( \Box\chi \to \psi)$
\end{itemize}

\noindent
 The arithmetical validity of all \logld-principles is immediate for all
arithmetical theories in our sense,} to wit all $\Delta_0({\sf exp})$-axiomatized theories in the arithmetical language extending i-{\sf EA}.  However, $\loglc = \logld + \liv$ is valid for a more restricted group.
  %For a sufficient condition for the
%validity of {\liv},  say that %is as follows. %: \emph{verifiable closure under q-realizability} (cf. \cite[\S~3.2.3]{troe:meta73} and Appendix \ref{sec:real}).
%We refer the reader to
 %\cite{troe:meta73} or \cite{troe:cons88vol1} for an explanation of q-realizability.
% \medent
 We need the notion of \emph{closure under q-realizability}: %(cf. \cite[\S~3.2.3]{troe:meta73} and Appendix \ref{sec:real}) 
 %if %we have:
  $T\vdash A$ implies there is an $n$ such that $T\vdash \underline n\cdot \varepsilon{\downarrow}\, \wedge \, {\underline n\cdot \varepsilon \mathrel{\widetilde q} A}$ (cf. Appendix \ref{sec:real} for notation). A sufficient condition for $\liv$ to hold is that $T$ is not only closed under q-realizability, but it also verifies this fact. %where---just like in Appendix \ref{sec:real}---the dot stands for Kleene application and $\varepsilon$ stands for the empty sequence. 

\begin{theorem}\label{borrelsmurf}
Suppose $T$ is $T$-provably closed under q-realizability. Then {\liv} is arithmetically valid
in $T$, i.e.,
{\liv} is in $\Lambda^\circ_T$.
\end{theorem}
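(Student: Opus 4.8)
The plan is to verify \liv\ directly against the definition of $\Sigma^0_1$-preservativity, using $T$-provable $q$-realizability as an internalized disjunction-property engine. Fix an assignment $f$ and write $A$, $B$, $C$ for the arithmetical sentences interpreting $\phi$, $\chi$, $\psi$ under $f$ and the preservativity interpretation $\InF{1}{T}$; abbreviate $\InF{1}{T}(\tto)$ by $\tto_T$. By definition of $\Lambda^\circ_T$ we must show $T \vdash (A \tto_T B) \to (C \tto_T B) \to ((A\vee C) \tto_T B)$. Reasoning inside $T$, assume $A \tto_T B$ and $C \tto_T B$, take an arbitrary (code of a) $\Sigma^0_1$ sentence $S$ with $\Box_T(S \to (A\vee C))$, and aim at $\Box_T(S\to B)$. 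Everything reduces to the following claim: \emph{uniformly in $S$}, there are $\Sigma^0_1$ sentences $D$ and $E$ for which $T$ proves $\Box_T(D\to A)$, $\Box_T(E\to C)$ and $\Box_T\big(S\to(D\vee E)\big)$. Granting the claim, instantiate $A\tto_T B$ at $S':=D$ (legitimate, $D$ being $\Sigma^0_1$) to get $\Box_T(D\to B)$, likewise $\Box_T(E\to B)$ from $C\tto_T B$; then the rule $\bi$ and axiom $\bii$ for $\Box_T$ (available since $\logba\subseteq\Lambda^\ast_T$) yield $\Box_T((D\vee E)\to B)$, and composing with $\Box_T(S\to(D\vee E))$ gives $\Box_T(S\to B)$. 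As $S$ was arbitrary, $(A\vee C)\tto_T B$, and discharging the assumptions produces the required $T$-proof.

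The construction of $D$ and $E$ is where $q$-realizability enters. Put $Y:= S\to(A\vee C)$. Write $S$ in normal form $\exists x\,\sigma(x)$ with $\sigma\in\Delta_0({\sf exp})$, so that i-{\sf EA} proves $\sigma$ decidable and $T\vdash\forall a\,(\sigma(a)\to\la a,0\ra\mathrel{\widetilde q} S)$ (the canonical $q$-realizer of a $\Sigma^0_1$ sentence is built from a witness). Let $\rho$ be the i-{\sf EA}-provably total recursive function extracting a $q$-realizer index from a proof; the hypothesis that $T$ is $T$-provably closed under $q$-realizability gives, for the fixed formula $Y$, the internalized soundness statement $T\vdash\forall p\,\big({\sf proof}_T(p,\gnumm{Y})\to \rho(p){\cdot}\varepsilon{\downarrow}\wedge\rho(p){\cdot}\varepsilon\mathrel{\widetilde q} Y\big)$. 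Now set
\[
D\ \deq\ \exists p\,\exists a\,\exists c\,\big(\,{\sf proof}_T(p,\gnumm{Y})\,\wedge\,\sigma(a)\,\wedge\,\rho(p){\cdot}\varepsilon{\cdot}\la a,0\ra = c\,\wedge\,(c)_0 = 0\,\big),
\]
\[
E\ \deq\ \exists p\,\exists a\,\exists c\,\big(\,{\sf proof}_T(p,\gnumm{Y})\,\wedge\,\sigma(a)\,\wedge\,\rho(p){\cdot}\varepsilon{\cdot}\la a,0\ra = c\,\wedge\,(c)_0 \neq 0\,\big).
\]
Both are $\Sigma^0_1$: ${\sf proof}_T$ is $\Delta_0({\sf exp})$, Kleene application is $\Sigma^0_1$, and $\sigma$, pairing and projections are $\Delta_0({\sf exp})$; only the harmless datum ``$\rho(p){\cdot}\varepsilon$ applied to $\la a,0\ra$ returns a left/right pair'' enters the formulas, while the soundness of $\rho(p){\cdot}\varepsilon$ is kept under $\Box_T$. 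For $T\vdash D\to A$: a witness $(p,a,c)$ yields a proof of $Y$, hence by internal soundness $\rho(p){\cdot}\varepsilon$ $q$-realizes $S\to(A\vee C)$; since $\sigma(a)$, $\la a,0\ra$ $q$-realizes $S$, so $c$ $q$-realizes $A\vee C$; as $(c)_0=0$, the disjunction clause of $q$-realizability makes $(c)_1$ $q$-realize $A$, whence $A$ (a $q$-realized formula is true). Symmetrically $T\vdash E\to C$, and necessitating gives $\Box_T(D\to A)$, $\Box_T(E\to C)$. For $\Box_T(S\to(D\vee E))$: reasoning in $T$, from $\Box_T Y$ (which is the $\Sigma^0_1$ sentence $\exists p\,{\sf proof}_T(p,\gnumm Y)$) and provable $\Sigma^0_1$-completeness we get $\Box_T(\exists p\,{\sf proof}_T(p,\gnumm Y))$; and $T\vdash(\exists p\,{\sf proof}_T(p,\gnumm Y))\to(S\to(D\vee E))$ by the same internal-soundness reasoning together with decidability of $(c)_0=0$; boxing this $T$-theorem via $\bi,\bii$ and chaining yields $\Box_T(S\to(D\vee E))$.

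\emph{The main obstacle} is pinning down exactly what ``$T$-provably closed under $q$-realizability and verifies this fact'' buys us internally: $q$-realizability of compound formulas is defined by recursion and its internalization for an arbitrary formula needs a partial satisfaction/realizability predicate, while the extra truth conjunct of $q$-realizability (as against plain realizability) must be threaded through the soundness statement. Here this machinery is only deployed for the single fixed formula $Y=S\to(A\vee C)$, so a per-formula reading of the hypothesis suffices; nonetheless one must check carefully that the displayed $D$, $E$ are genuinely $\Sigma^0_1$ and that $T$ really proves $D\to A$ and $E\to C$ (not merely their boxed forms), and that the extraction/soundness facts are phrased so that the unbounded $\exists p$ sits in the right place. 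Once this interface is fixed, the residual steps are routine: the $\tto_T$-bookkeeping of the first paragraph is just necessitation, $\bii$ and modus ponens, and the arithmetical side-conditions ($\Delta_0({\sf exp})$-decidability, $\Sigma^0_1$-completeness, $\Sigma^0_1$-ness of Kleene application) are standard over i-{\sf EA}. For $T$ extending {\sf HA} one may alternatively route the reduction through Theorem~\ref{tuinsmurf}, replacing ``$\Box_T(S\to\cdot)$'' by ``$\forall n\,(\Box_{T,n}(\cdot)\to\cdot)$'', but this detour is not needed.
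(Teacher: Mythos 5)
Your proof is correct and follows essentially the same route as the paper: reason inside $T$, use the $T$-verified closure under q-realizability together with self-realizability of $\Sigma^0_1$-sentences to split $S$ (given $\Box_T(S\to(A\vee C))$) into two $\Sigma^0_1$-sentences each provably implying one disjunct, and then feed these into the two preservativity hypotheses. The paper's version of your $D$ and $E$ is simply $e\cdot\varepsilon=0\wedge S$ and $e\cdot\varepsilon=1\wedge S$ for a single $0$-ary, $0,1$-valued index $e$ that folds the proof- and witness-search into the Kleene computation, so the difference is only one of packaging.
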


\begin{proof} 
We write $\tto$ for $\tto_T$ and $\Box$ for $\Box_T$.

\medent
Suppose $T$ is $T$-provably closed under q-realizability. 
We reason in $T$. Suppose (a) $A \tto C$ and (b) $B\tto C$. Suppose
 $\Box (S\to (A \vee B))$, where $S$ is a $\Sigma^0_1$-sentence. By q-realizability and the fact that $\Sigma^0_1$-sentences are self-realizing,
  we can find a recursive index $e$ of a 0-ary,  0,1-valued function, such that
(c) $\Box (S \to e\cdot \varepsilon \downarrow)$, (d) $\Box ((e\cdot \varepsilon =0 \wedge S) \to A)$ and (e) $\Box ((e\cdot \varepsilon =1 \wedge S) \to B)$.
From (a) and (d), we get: (f) $\Box ((e\cdot \varepsilon =0 \wedge S) \to C)$. From (b) and (e) we get (g) $\Box ((e\cdot \varepsilon =1 \wedge S) \to C)$.
From (c), (f) and (g), we obtain $\Box(S \to C)$. 
\end{proof}

\nosmurf
The following salient theories $T$ are $T$-provably (even i-{\sf EA}-provably) closed under q-realizability: 
{\sf HA},  ${\sf HA}+{\sf MP}$, ${\sf HA}+{\sf ECT}_0$, ${\sf MA} = {\sf HA}+{\sf MP}+{\sf ECT}_0$
and ${\sf HA}^\ast$ (see \Subsubsection~\ref{sec:hastar}), hence 
 {\loglc} is  arithmetically valid in them. %{\sf HA},  ${\sf HA}+{\sf MP}$, ${\sf HA}+{\sf ECT}_0$, ${\sf MA}$. %= {\sf HA}+{\sf MP}+{\sf ECT}_0$.

\begin{quest} \label{que:dichar}
It would be interesting to have a more perspicuous condition for the satisfaction of {\liv} than closure under
q-realizability. 

\medent
Moreover, in many cases we can also prove {\liv} using the de Jongh translation. Are there separating examples where either q-realizability works
and the de Jongh translation does not or where the de Jongh translation works but q-realizability does not?
\end{quest}

\subsubsection{The Preservativity Logic of {\sf HA}}
\nosmurfduo
The logic {\loglc} is incomplete for {\sf HA} % %We specify an extra principle that was proved in 
\cite{viss:prop94,iemh:pres03}. Define: %the translation $(\cdot)(\cdot)$ as follows:
\begin{itemize}
\item
$(\chi)(\sigma) := \sigma$ for $\sigma ::= \top \mid \bot \mid (\top\tto \phi) \mid (\sigma \vee \sigma)$, where $\phi$ ranges over the full language.
\item
$(\chi)(\phi \wedge \psi) := ((\chi)(\phi) \wedge (\chi)(\psi))$,
\item
$(\chi)(\phi) := (\chi \to \phi)$ in all other cases.
\end{itemize}
The following principle is arithmetically valid over {\sf HA}. 
\begin{itemize}
\item[{\sf V}]
For $\chi := \bigwedge_{i<n} (\phi_i \to \psi_i)$, we have:\\
$\vdash (\chi \to (\phi_n \vee \phi_{n+1})) \tto \bigvee_{j < n+2} (\chi)(\phi_j)$.
\end{itemize}

\nosmurf
An example of a consequence of {\sf V} is as follows. Consider any  non-modal propositional formula 
$\phi(p)$ with at most $p$ free. Suppose that $\phi(p)$ is not constructively valid. Then,
the principle $\phi(\Box\psi) \tto (\Box\psi\vee \neg\, \Box\psi)$ is arithmetically valid over {\sf HA}.

\begin{remark}
We have the following salient result about the admissible rules of {\sf HA}. Suppose $\phi$ and $\psi$ are non-modal propositional formulas.
 Define:
\begin{itemize}
\item
$\phi\vvdash_{\sf HA} \psi$ if for all arithmetical substitutions $\sigma$ we have:\\
${\sf HA} \vdash \sigma(\phi) \;\; \To \;\; {\sf HA}\vdash \sigma(\psi)$.
\end{itemize}
The following are equivalent: \\
\hspace*{1cm} (i) $\phi \vvdash_{\sf HA} \psi$, (ii) $\loglc +{\sf V}\vdash \phi \tto \psi$, (iii)  $\loglc +{\sf V}\vdash \Box\phi \to \Box\psi$.\\
See \cite{iemh:pres03} in combination with \cite{viss:prop94}.
\end{remark}

\nosmurf
Is $\loglc +{\sf V}$ the preservativity logic of {\sf HA}? We do not think so.
The second author has discovered a valid scheme that does not appear to be derivable from  $\loglc +{\sf V}$. To save space, we postpone a detailed discussion to future work.
%We present an as yet unnoticed strengthening of {\liv} that is valid over {\sf HA} in Appendix~\ref{luiesmurf}. 

%\tadeusz{Present auto-q explicitly here? If motivation is neglected, can be done in a few lines, it seems}

\begin{quest} \label{que:hapre}
Here is a list of more open problems.
\begin{enumerate}[I.]
\item
Is {\logld} the preservativity logic of all extensions of i-{\sf EA}?\\
In other words, is {\logld} the intersection of all $\Lambda^\circ_T$, where $T$ is an
arithmetical extension of i-{\sf EA}?
\item
Is {\logld} the preservativity logic of all extensions of {\sf HA}?
\item
Is there an extension $T$ of i-{\sf EA} such that $\Lambda_T^\circ = \logld$?
\item
Is there an extension $T$ of {\sf HA} such that $\Lambda_T^\circ = \logld$?
\item
Is there an extension $T$ of i-{\sf EA} such that $\Lambda_T^\circ = \loglc$?
\item
Is there an extension $T$ of {\sf HA} such that $\Lambda_T^\circ = \loglc$?
\item
What is the preservativity logic of {\sf HA}?
\item
What is the preservativity logic of ${\sf HA}+{\sf MP}$?
\item
What is the preservativity logic of ${\sf HA}+{\sf ECT}_0$?
\end{enumerate}
The questions VII, VIII, IX are obviously quite difficult.\footnote{They could be easier than the
question what the provability logic of {\sf HA}, ${\sf HA}+{\sf MP}$ or  ${\sf HA}+{\sf ECT}_0$ is.
Sometimes theories in a richer language are easier to manage.} As far as we know nobody has seriously worked on
questions I--VI.
\end{quest}

\subsubsection{The Preservativity Logic of classical theories}
\nosmurfduo
We know a lot  about the preservativity logic of classical theories, 
since $\tto_T$ can be intertranslated with $\Pi^0_1$-conservativity $\jump_T$ in the classical case.
As a consequence we can translate what we know about the logic of $\Pi^0_1$-conservativity to a result about
preservativity logic. Let 
${\mathrm c}\hyph{\sf PreL} := {\mathrm i}\hyph{\sf PreL}^{-} + \lna{em}$.

\begin{theorem}\label{partijsmurf}
Suppose that $T$ is $\Sigma^0_1$-sound classical theory that extends $\mathrm{I}\Pi_1^{-}+{\sf Exp}$. 
Then, the preservativity logic of $T$ is precisely  ${\mathrm c}\hyph{\sf PreL} $.
\end{theorem}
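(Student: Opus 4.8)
The plan is to prove the two inclusions separately, reducing the nontrivial one, $\Lambda^\circ_T \subseteq {\mathrm c}\hyph{\sf PreL}$, to the arithmetical completeness theorem for the interpretability/conservativity logic {\sf ILM} via the classical duality between $\Sigma^0_1$-preservativity and $\Pi^0_1$-conservativity. Soundness, ${\mathrm c}\hyph{\sf PreL} \subseteq \Lambda^\circ_T$, is immediate: by the discussion in \S~\ref{sec:iprel} every principle of $\logld = {\mathrm i}\hyph{\sf PreL}^{-}$ lies in $\Lambda^\circ_U$ for every arithmetical theory $U$ of our kind, in particular for $T$, and since $T$ is classical every arithmetical instance of $\lna{em}$ is $T$-provable, so $\lna{em} \in \Lambda^\circ_T$ as well. (Here ${\mathrm c}\hyph{\sf PreL}$ deliberately omits $\liv$: the q-realizability closure of Theorem~\ref{borrelsmurf} that would validate $\liv$ typically fails for classical $T$, and indeed $\liv$ already fails in the preservativity logic of {\sf PA}, as recorded in \S~\ref{prelo}.)

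For the converse I would first make the classical duality precise at the arithmetized level. For classical $T$ and arbitrary sentences $A,B$ one has $A \tto_T B$ iff $\neg B \jump_T \neg A$, because $T \vdash S \to A$ is classically equivalent to $T+\neg A \vdash \neg S$, while $S \mapsto \neg S$ is, up to provable equivalence, a bijection between $\Sigma^0_1$- and $\Pi^0_1$-sentences. The point requiring care is that, for the reasonable arithmetizations ${\sf p}_T$ of $\tto_T$ used in $\InF{1}{T}$ and of $\jump_T$, this equivalence is provable already in $\mathrm{I}\Pi_1^{-}+{\sf Exp}$, uniformly in the G\"odel numbers of $A$ and $B$; this is the classical counterpart of the Orey--H\'ajek-style Theorem~\ref{tuinsmurf}. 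Composing with the purely propositional translation $(\cdot)^{\jump}$ that commutes with the connectives, fixes the atoms, and sends $\alpha \tto \beta$ to $\neg\beta^{\jump} \jump \neg\alpha^{\jump}$, one obtains: for every $\latto$-formula $\phi$ and every arithmetical substitution $f$, the realization of $\phi$ under $\InF{1}{T}$ and $f$ is $T$-provably equivalent to the realization of $\phi^{\jump}$ under $f$ in the $\Pi^0_1$-conservativity interpretation. Hence $\phi \in \Lambda^\circ_T$ if and only if $\phi^{\jump}$ belongs to the $\Pi^0_1$-conservativity logic of $T$.

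Two imported facts then close the loop. By the analysis in Appendix~\ref{sec:picon}, the translation $(\cdot)^{\jump}$ is faithful: ${\mathrm c}\hyph{\sf PreL} \vdash \phi$ iff ${\sf ILM} \vdash \phi^{\jump}$. And by the arithmetical completeness theorem for {\sf ILM} --- Berarducci~\cite{bera:inte90} and Shavrukov~\cite{shav:rela88} over {\sf PA}, where $\Pi^0_1$-conservativity and interpretability coincide by Orey--H\'ajek; H\'ajek and Montagna~\cite{haje:cons90} for $\Pi^0_1$-conservativity over weaker theories, with the base pushed down to $\mathrm{I}\Pi_1^{-}+{\sf Exp}$ (cf. \cite{bekl:limi05} and Appendix~\ref{sec:inter}) --- the $\Pi^0_1$-conservativity logic of a $\Sigma^0_1$-sound classical $T \supseteq \mathrm{I}\Pi_1^{-}+{\sf Exp}$ is exactly {\sf ILM}. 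Chaining the equivalences yields $\Lambda^\circ_T = \{\phi \mid \phi^{\jump} \in {\sf ILM}\} = {\mathrm c}\hyph{\sf PreL}$, which together with soundness is the claim. I expect the real work to sit not in any single ingenious step but in the middle paragraph: verifying that the duality $A \tto_T B \leftrightarrow \neg B \jump_T \neg A$ is witnessed $\mathrm{I}\Pi_1^{-}+{\sf Exp}$-provably and uniformly (so that it may be fed into a Solovay-style construction carried out in that weak base), and that the syntactic translation between $\latto$-logics and ${\sf IL}$-style logics is an exact inverse rather than merely a map preserving provable equivalence --- both of which are the business of Appendices~\ref{sec:picon} and \ref{sec:inter}.
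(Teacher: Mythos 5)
Your proposal is correct and follows essentially the same route as the paper: Appendix~\ref{sec:picon} likewise uses the $T$-verifiable classical duality $A \tto_T B \leftrightarrow \neg B \jump_T \neg A$ to transfer Theorem~12 of \cite{bekl:limi05} (arithmetical completeness of {\sf ILM} for $\Pi^0_1$-conservativity over $\Sigma^0_1$-sound extensions of $\mathrm{I}\Pi_1^{-}+{\sf Exp}$), and checks via Lemmas~\ref{th:basicia}, \ref{lem:gla} and \ref{lem:gwader} and Fact~\ref{fact:necunnec} that the contraposed {\sf ILM} axioms axiomatize exactly ${\mathrm c}\hyph{\sf PreL}$. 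Your flagging of the uniform provability of the duality in the weak base and of the exactness of the syntactic translation matches where the paper delegates the real work.
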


\noindent
This result is a translation of Theorem 12 of \cite{bekl:limi05}, which  
%We note that  Theorem 12 of \cite{bekl:limi05} 
 is a strengthening of the
main result of \cite{haje:cons90}, the latter in turn being an adaptation of
 \cite{shav:rela88} and \cite{bera:inte90}.
For details %on how Theorem~\ref{partijsmurf}  is obtained from  Theorem 12 of \cite{bekl:limi05},
 see 
Appendix~\ref{sec:picon}. 

\subsubsection{${\sf HA}^\ast$ and ${\sf PA}^\ast$}\label{sec:hastar}
\nosmurfduo
The Completeness Principle  for a theory $T$ is defined as 
\begin{description} 
\item[${\sf CP}_T$] $ A \to \Box_T A$.
\end{description}
Here $A$ is allowed to contain parameters. 
Consider any theory $T$ such that $T$ is ${\sf HA}+{\sf CP}_T$.
Such a theory is easily constructed by the Fixed Point Lemma.
 One can show that, if {\sf HA}
verifies that  $T$ is ${\sf HA}+{\sf CP}_T$, then $T$ is unique modulo provable equivalence.
Thus, the following definition is justified:
${\sf HA}^\ast$ is the unique theory such that, {\sf HA}-verifiably, ${\sf HA}^\ast$ is ${\sf HA}+{\sf CP}_{{\sf HA}^\ast}$.
The theory ${\sf HA}^\ast$ was introduced and studied in \cite{viss:comp82}.

We have a second way of access to ${\sf HA}^\ast$ via a variant of G\"odel's translation of {\sf IPC} in {\sf S}4.
We define:
\begin{itemize}
\item
$A^{\sf g} := A$ if $A$ is atomic.
\item
$(\cdot)^{\sf g}$ commutes with $\wedge$, $\vee$ and $\exists$.
\item
$(B\to C)^{\sf g} := ((B^{\sf g} \to C^{\sf g}) \wedge \Box_{\sf HA}(B^{\sf g} \to C^{\sf g}))$.
\item
$(\forall x\,B)^{\sf g} := (\forall x\, B^{\sf g} \wedge \Box_{\sf HA}\forall x\, B^{\sf g})$.
\end{itemize}   

\noindent We have ${\sf HA}^\ast \vdash A$ iff ${\sf HA} \vdash A^{\sf g}$.
Using the translation $(\cdot)^{\sf g}$ on can show that ${\sf HA}^\ast$is conservative over {\sf HA}
with respect to formulas that have only $\Sigma_1$-formulas as antecedents of implications.

\medent
The theory ${\sf HA}^\ast$ is the theory in which the incompleteness phenomena lie most closely
to the logical surface. We have the strong form of L\"ob's Principle $ {\sf HA}^\ast\vdash (\Box_{{\sf HA}^\ast} A \to A )\to A$.
Note that ${\sf HA}^\ast\vdash \neg\neg\,\Box_{{\sf HA}^\ast}\bot$ is a special case.
We are inclined to read this principle as: inconsistency can never be excluded.

If we  extend {\sf PA} to $U = {\sf PA}+{\sf CP}_U$, we end up
with the inglorious $U \vdash \Box_U\bot$. However, ${\sf HA}^\ast$ is conservative over
{\sf HA} for a wide class of formulas. So, the Completeness Principle is an example of
a kind of extension  that makes no real sense in the classical case. 

\medent
The theory ${\sf HA}^\ast$ can be used to provide easy proofs of the independence of {\sf KLS} (Kreisel-Lacombe-Schoenfield) and {\sf MS}
(Myhill-Shepherdson) from {\sf HA} \cite{viss:comp82}, simplifying the original ones by Beeson \cite{bees:nond75} while preserving
their basic idea.

De Jongh and Visser showed that every prime recursively enumerable Heyting algebra on finitely many generators can be embedded in the
Heyting algebra of ${\sf HA}^\ast$. See \cite{dejo:embe96}. Their proof is an adaptation of a proof by Shavrukov \cite{shav:suba93} in the simplified
form due to Zambella \cite{zamb:shav94} concerning the embeddability of Magari algebras in the Magari algebra of Peano Arithmetic.
 
 A consequence of the De Jongh-Visser result is the fact that the admissible propositional rules for ${\sf HA}^\ast$ are precisely the
derivable rules. In contrast, the admissible propositional rules for {\sf HA} are the same as the admissible rules for {\sf IPC}: this is the maximal
set of admissible rules that is possible for a theory with the de Jongh property. Thus among theories with the de Jongh property both the minimal possible
set of admissible rules and the maximal one are exemplified. See also \cite{viss:rule99}.

\medent
We want to show that ${\sf HA}^\ast$  is {\sf HA}-verifiably closed under q-realizability. 
The easiest route %way to see this,
 is via the notion of self-q-realizability.
A formula $A(\vec x)$ (with all free variables shown) is \emph{self-q-realizing}
if there is a number $s^A$ such that \newline ${\sf HA}\vdash A(\vec x) \to (\underline{s^A} \cdot (\vec x)) \mathrel{\widetilde q}A(\vec x)$, cf. Appendix \ref{sec:real} for notation. 
%Recall that the dot means \emph{Kleene application}

A substantial class of i-{\sf EA}-verifiably self-q-realizing formulas is the class of \emph{auto-q formulas} %which is
  given as follows.
Let $S$ range over all $\Sigma^0_1$-formulas, let $A$ range over all formulas and let $v$ range over all variables: 
\begin{itemize}
\item
$B ::= S \mid (B\wedge B) \mid \forall v\, B \mid (A \to B)$
\end{itemize}

\noindent We note that the class of auto-q formulas substantially extends the almost negative formulas that are self-r-realising.

The instances of the completeness scheme have the form $\forall \vec x\, (A (\vec x) \to S(\vec x))$, where $S$ is
$\Sigma_1^0$. Thus, these instances are auto-q. 
%(See \Subsection~\ref{luiesmurf} for the definition of \emph{auto-q}.) 
It follows that ${\sf HA}^\ast$  is {\sf HA}-verifiably closed under q-realizability.
Thus, $\loglc^\ast \deq \iA + \bvii + \lviii$ is 
 contained in the preservativity logic of ${\sf HA}^\ast$, to wit
 $\Lambda^\circ_{{\sf HA}^\ast}$. 
\takeout{
\begin{itemize}
\item[\li]	 $\vdash \phi\to \psi\;\; \To \;\; \vdash \phi \tto \psi $
\item[\lii]	$(\phi \tto \psi \wedge \psi \tto \chi ) \to \phi \tto \chi$
\item[\liii]    $(\phi \tto \psi \wedge \phi \tto \chi ) \to \phi\tto (\psi \wedge \chi) $
\item[\liv]	$(\phi \tto \chi \wedge \psi \tto \chi ) \to ( \phi \vee \psi )\tto \chi $
\item[\bvii]   $(\Box \phi \to \phi) \to \phi$
\item[\lviii]   $(\phi \tto \psi) \to ( \Box \chi \to \phi) \tto ( \Box\chi \to \psi)$
\end{itemize}

\noindent The above principles form the theory $\loglc^\ast$.}
There are examples of valid principles %extending $\liv$ 
 that are most probably not in $\loglc^\ast$. 
We do not know whether this has any traces in the provability logic of ${\sf HA}^\ast$. 
%\medent
 As will be explained in Remark~\ref{smulsmurf}, there is a certain analogy between ${\sf HA}+{\sf CT}_0!$ and ${\sf HA}^\ast$.

\medent
We turn to the theory ${\sf PA}^\ast$, axiomatized by the set $\alpha$ of all sentences $A$ such that ${\sf PA}\vdash A^{\sf g}$.
One can easily show that $\alpha$ is closed under deduction and that  ${\sf PA}^\ast$ satisfies ${\sf CP}_{{\sf HA}^\ast}$.\footnote{We have demanded
that the axiom set of a theory is $\Delta_0({\sf exp})$. The axioms of ${\sf PA}^\ast$ do not satisfy this demand. So, the official axiom set should be
a suitable $\Delta_0({\sf exp})$-set manufactured from $\alpha$ using a version of Craig's trick.} 
The theory ${\sf PA}^\ast$ verifies the Trace Principle :
\begin{description}
\item[{\sf TP}] $\Box_{{\sf PA}^\ast} \forall x\,(Ax \to Bx) \to (\exists x\, Ax \vee \forall x\,(Ax \to Bx))$.
\end{description}
This principle is equivalent to 
\[ \Box_{{\sf PA}^\ast} \forall x\,Bx \to (\exists x\, Ax \vee \forall x\,(Ax \to Bx)).\]
The presence of the trace principle has as a modal consequence the principle %\mHCb, to wit:
\begin{description}
\item[\mHCb]
$\Box\phi \to ((\psi \to \phi) \vee \psi)$
\end{description}

\noindent
In \cite{viss:comp82}, it is shown that the logic {\iKMb} is precisely the provability logic of ${\sf PA}^\ast$.\footnote{In \cite{viss:comp82}
 the equivalent form \mHCbalt\ is used, cf. Lemma \ref{th:mhcderiv}.} %$\Box(\psi \to \phi) \to ((\psi \to \phi) \vee \psi)$ of {\mHCb} is used.
We remind the reader that:
\[ \iKMb = \iSLb + \mHCb = \logba + \bvi  + \mHCb.\] 
 
\nosmurf
%What about 
 The preservativity logic of ${\sf PA}^\ast$ %? This logic 
 contains $\logld$ and $\lSalt$, but neither {\liv} nor  \mHCl\ (\Subsection~\ref{sec:falsity}). 
 %However, as we will show in \Subsection~\ref{sec:falsity}, the preservativity logic of ${\sf PA}^\ast$
%contains neither {\liv} nor  \mHCl. 

%%%%%%%%%%%%%%%%%

\section{Kripke completeness and correspondence \qquad \KRp} \label{sec:completeness} %\label{sec:axiomatizations} 

\nosmurfduo
Apart from being our original motivation to study $\tto$, the arithmetical interpretation can occasionally complement the deductive systems proposed in \S~\ref{sec:axiomatizations} by providing   a route to \emph{disprove} certain judgements of the form $\lva{X}\vdash\phi$, i.e., to show \emph{non}-deriv\-abil\-ity  from suitable sets of axioms (namely, those valid in some arithmetical interpretations):
 %as we have already witnessed. Nevertheless, it is both too unwieldy and too specific to perform this task for arbitrary non-theorems and arbitrary sets of axioms. 

%The arithmetical interpretation sketched above can be used 
\begin{example} \label{exaarith}
 Interpreting 
$\phi \tto \psi$ as  $\Box (\phi \to \psi)$ over {\sf HA} yields  $\loglb + \biv + \lviii$. 
This interpretation refutes {\lv}, {\lvi} and \emph{a fortiori} {\lvii}. %\tadeusz{We should give here more details why, but this requires a proper introduction of the idea of arithmetical interpretation.  Either we do this here, which I think is the right thing to do, or link forward to an explanation much further down in the text.}
It follows that {\lv} is really needed in Lemma \ref{lem:gwader}\ref{matowa} above to derive {\lvii}. 
\end{example}
%\tadeusz{Strictly speaking, we have not introduced the idea of arithmetical interpretation in this section. A link forward?}

\nosmurf
%Nevertheless, the arithmetical semantics is too unwieldy and too specific to perform the r\^ole
%  the \emph{only} semantics to show failure of derivability  neither
%much to do with 
%Nevertheless, in order to obtain a more flexible and general semantics, 
To disprove more such judgements, we need to return to relational insights of \S~\ref{sec:basic} and  
%a viable route,  . 
% For such a general purpose, we need to %connect the themes of \S~\ref{sec:axiomatizations} and  %, i.e., syntax and semantics. In other words, we are going to
 %and 
  provide Kripke completeness and correspondence results. Most of this section is based on work we will discuss in a parallel publication  \cite{LitakV:otw}. %are going to publish separately \cite{LitakV:otw}. %\dots \tadeusz{Wherever it is going to be submitted}

\subsection{Notions of completeness}

%\tadeusz{Most of this subsection can be also potentially moved to our forthcoming paper}

%  In order to discuss more \emph{non}-derivations, 
%we need to return to the Kripke semantics. %\tadeusz{\dots but perhaps also HA already here too. Here's another proposed section}

%\avblue{I
%think the items in the list below should have a little more whitespace between them. As it is the
%pictures are a bit close to each other.}
%
\begin{figure}
%\hrule
\vspace{0.2cm}
\footnotesize

\caption{\label{fig:compl} Correspondence conditions. In this figure, and elsewhere in this paper, $\leadsto$ stands 
for $\sqsubset$ and $\to$ stands for $\preceq$. %Furthermore, in the text below, ``$\circ$'' will denote $\sqsubset$-reflexive points, 
%whereas  ``$\bullet$'' will be used for $\sqsubset$-irreflexive ones.
  Some names of principles are taken from  Iemhoff 
and coauthors \cite{Iemhoff01:phd,iemh:pres03,iemh:prop05,Zhou03}, others come from our work to be published separately \cite{LitakV:otw}. %\textcolor{red}{our forthcoming paper on decompositions of $\tto$}. For any relation $R \times R \subseteq W^2$ 
and subset $X \subseteq W$, set $\upclo{X}{R} \deq \{y \in  W \mid \exists x \in X. xRy\}$; in particular, write $\upclo{x}{R}$ for $\upclo{\{x\}}{R}$.}
\vspace{0.2cm}

\newcommand{\sepo}{\vspace{\tbskip}}

\begin{tabular}{lL{2.3cm}L{4cm}c} \sepo
\lb & brilliant &  $k \sqsubset \ell \preceq m$ $\To$ $k \sqsubset m$ & 
$   \vcenter{
    \xymatrix@-1pc{
            & m\\
           k \ar@{~~>}[ur] 
           \ar@{~>}[r] & \ell \ar[u]      
      } }$ 
\\ [\tbskip]\sepo 
 \biii & semi-transitive & $k \sqsubset \ell \sqsubset m$ $\To$ \newline
 $\exists x. k \sqsubset x \preceq m$ %and \newline
%$\sqsubset$ is upwards well-founded \newline
%(Noetherian) 
&
$    \vcenter{
        \xymatrix@-0.5pc{
           x \ar@{-->}[r] & m\\
           k  \ar@{~~>}[u]
           \ar@{~>}[r] & \ell  \ar@{~>}[u]      
        }
      } $ %+ Noeth.
 \\ [\tbskip]\sepo 
%By Lemma~3.10 of \cite{iemh:prop05} the logic 
\lv & gathering & $k \sqsubset \ell \sqsubset m$ $\To$ $\ell \preceq m$ %and \newline
%$\sqsubset$ is upwards well-founded \newline
%(Noetherian) 
&
$    \vcenter{
        \xymatrix{
           k  
           \ar@{~>}[r] & \ell \ar@/_/@{-->}[r] \ar@/^/@{~>}[r]  & m
        }
      } $ %+ Noeth.
%{\loglh} 
\\ [\tbskip]\sepo 
\biv & \multicolumn{3}{c}{$\sqsubset$-Noetherian (conversely well-founded) and semi-transitive} \\ [\tbskip]\sepo 
\lvii & supergathering & on finite frames: \newline
$k\sqsubset \ell \sqsubset m$ $\To$ \newline $\exists x \sqsupset k. (\ell \prec x\preceq m)$  &
$    \vcenter{
        \xymatrix@-1pc{
           %& & \\
           & x \ar@{-->}[r] & m\\
           k \ar@{~~>}[ur] 
           \ar@{~>}[rr] & & \ell \ar@{-->}|{\neq}[ul]  \ar@{~>}[u]      
        }
      } $ 
 \\ [\tbskip]\sepo      
\lviii & Montagna & \multicolumn{2}{c}{$k \sqsubset \ell \preceq m$ $\To$ 
$\exists x \sqsupset k. (\ell \preceq x \preceq m$ $\ands$ $\upclo{x}{\sqsubset\comp\preceq} 
\subseteq \upclo{m}{\sqsubset\comp\preceq})$} \\ [\tbskip]\sepo
\lS & strong &  $k \sqsubset \ell$ $\To$ $k \preceq \ell$ &
$      \vcenter{
        \xymatrix{
	   %\ell	 \\
          k  \ar@/_/@{-->}[r] \ar@/^/@{~>}[r] & \ell
        }
      } $ \\ [\tbskip]\sepo      
\mHCl & $\sqsubset$-dominated & $k \prec \ell$ $\To$ $k \sqsubset \ell$   &
$      \vcenter{
        \xymatrix{
	   %\ell	 \\
          k  \ar@/_/@{->}|{\neq}[r] \ar@/^/@{~~>}[r] & \ell
        }
      } $ \\ [\tbskip]\sepo      
\mHCb & weakly $\sqsubset$-dominated & $k \prec \ell$ $\To$ $\exists m \sqsupset k. m \preceq \ell$   &
$    \vcenter{
        \xymatrix@-0.5pc{
            & \ell\\
           k \ar@{->}|{\neq}[ur] 
           \ar@{~~>}[r] & m \ar@{-->}[u]      
        }
      } $ 
\\ [\tbskip]\sepo 
\aLin & weakly semi-linear &  \multicolumn{2}{c}{$k \sqsubset \ell$ \& $k \sqsubset m$ $\To$ ($m \preceq \ell$ OR   $\ell \preceq m$)}      \\ [\tbskip]
\bLin & strongly semi-linear &  \multicolumn{2}{c}{$k \sqsubset \ell \preceq \ell' $ \& $k \sqsubset m \preceq m'$ $\To$ ($m' \preceq \ell'$ OR   $\ell' \preceq m'$)}    \\ [\tbskip]\sepo 
\CF & semi-dense & $k \sqsubset \ell$ $\To$ $\exists x \preceq \ell. \exists y \sqsupset k.y \sqsubset x$ & 
$    \vcenter{
        \xymatrix@-0.5pc{
           \ell  & \ar@{-->}[l] x\\
           k  \ar@{~>}[u]
           \ar@{~~>}[r] & y  \ar@{~~>}[u]      
        }
      } $     
\\ [\tbskip]\sepo
\aCF & pre-reflexive &  $k \sqsubset \ell$ $\To$ $\exists x \sqsupset \ell. x \preceq \ell$ & 
$    \vcenter{
        \xymatrix{
           k  
           \ar@{~>}[r] & \ell \ar@/_/@{<--}[r] \ar@/^/@{~~>}[r]     
            & x
        }
      } $ 
\\  [\tbskip]\sepo
\Hug & semi-nucleic &  $k \sqsubset \ell$ $\To$ $\exists m \succeq k.$ \newline $\exists m' \sqsupset m.$ $\ell \preceq m. m' \preceq \ell$ & 
$    \vcenter{
        \xymatrix@-0.5pc{
        & m \ar@{~~>}[r] & \ar@{-->}[ld]    \\
         k   \ar@{-->}[ru]
           \ar@{~>}[r] & \ell \ar@{-->}[u]   
            &  }
      } $ 
\\ [\tbskip]\sepo
\aApp & almost reflexive & $k \sqsubset \ell$ $\To$ $\ell \sqsubset \ell$ &  
$    \vcenter{
        \xymatrix{
           k  
           \ar@{~>}[r] & \ell   
 \ar@(dr,ur)@{~~>}[]
}} $ %\POS!R(-.5)
\end{tabular}

%\vspace{0.3cm}

%\hrule
\end{figure} 

%Table \ref{tab:minax} folllows existing references \cite{iemh:moda01,iemh:prov01,iemh:pres03,iemh:prop05} in proposing direct axiomatizations which are hardly more complex than those of ordinary modal systems. Most of the time, we will be working in the full language involving disjunction. However, in order to  facilitate comparisons with some algebraic and categorical setups like the one discussed in \S~\ref{sec:arrows}, let us make clear that $\vee$ is entirely optional; it is just a question of adding corresponding axioms of \ipc\ and a single axiom involving $\strictif$.

%\tadeusz{Something about algebraizability being restored in this signature, I guess}

\nosmurf
%Following the usual notation, 
Given a logic $\ivsys{X}$, set %\avred{Not my usual notation. I call the thing without
%$V$ a \emph{frame}. I would use {\sf Fram} iso {\sf Mod}.}
$\Mod(\ivsys{X}) \deq \{ \ma F \mid \text{ for any } V, \la \ma F, V \ra \kmodels \ivsys{X} \}.$
%\nosmurf
 %As usual, we 
 Say that $\ivsys{X}$ is  \emph{\bro weakly\brc complete for} (or \emph{with respect to})  a class of frames $K$ if it is 
\begin{itemize}
\item \emph{sound} wrt $K$, i.e., $K \subseteq \Mod(\ivsys{X})$ and
\item any $\alpha$ s.t $\ivsys{X} \not\vdash \alpha$ can be refuted in a model based on a frame from $K$.
\end{itemize}

%It is well-known that not all logics are complete in this sense. 
\nosmurf
We say that a condition (which may be expressed in a natural language or in a formalized metalanguage like first- or second-order logic) \emph{corresponds} to a given $\tto$-logic $\ivsys{X}$ if it defines precisely $\Mod(\ivsys{X})$. In particular, when a condition is a correspondent of $\iP + \phi$, we say it \emph{corresponds} to $\phi$ and correspondingly (pun unintended) use notation $\Mod(\phi)$. A logic $\ivsys{X}$ can be complete for much smaller a class than $\Mod(\ivsys{X})$ but  if it is complete for \emph{some} class of frames, it is  also complete for  $\Mod(\ivsys{X})$; we can thus take this as a definition what it means to be (weakly) complete without additional qualifications. 
%, on the other hand, it can fail to be complete wrt  to the entire class of frames it defines; one speaks of an \emph{incomplete} logic then.
 Incomplete logics, i.e., those which have some non-theorems which cannot be refuted in $\Mod(\ivsys{X})$, are sometimes even encountered among those with an arithmetical interpretation, c.f. systems known as \lna{GLB} and \lna{GLP} \cite{Japaridze1988,Boolos1993,HollidayL16}, though most ``naturally'' defined logics tend to be complete.
 
 \begin{remark}
 Let us recall an important difference between completeness and correspondence when it comes to combinations (conjunctions) of axioms. %(which should hopefully be straightforward for most readers).
  Clearly, $\Mod(\bigwedge \Gamma) =  \bigcap\limits_{\gamma\in\Gamma} \Mod(\gamma)$, so whenever $\alpha$ is a correspondent of $\phi$ and $\beta$ is a correspondent of $\psi$, $\alpha \wedge \beta$ is a correspondent of $\phi \wedge \psi$. Nothing like this needs to hold for completeness, even for a finitely axiomatizable logic. Completeness of $\iP + \phi$ and  $\iP + \psi$ for  %is complete wrt 
  frames defined, respectively, by $\alpha$ and $\beta$ does not automatically imply that  $\iP + \phi + \psi$ is complete for $\alpha \wedge \beta$---or, indeed, for any class of frames whatsoever.  This is why in Figure \ref{fig:compl}, Theorems \ref{th:minco} and \ref{th:mainco} below we do not mention  \emph{correspondence} conditions for logics axiomatized by conjunctions/combinations of axioms, but \emph{completeness} results for such logics need to be stated explicitly. %require explicit statements  in Theorems \ref{th:minco} and \ref{th:mainco}.
 \end{remark}
 
 \nosmurf
The notion of completeness can be refined further in two orthogonal directions. One of them is the \emph{finite model property} (\emph{fmp}, also known as the \emph{finite frame property}) which simply means completeness wrt a class of \emph{finite} frames. While the fmp is a much stronger property  than weak completeness, it is still rather standard among most ``natural'' logics. %; as we are going to see, this includes those of interest for us.
  It is not quite the case, however, with another refinement of interest: the notion of \emph{strong} completeness, i.e., completeness for deductions from infinite sets of premises. This notion can be defined in two different ways using either 
  \begin{itemize}
  \item the relation $\Gamma \dedp{\ivsys{X}} \phi$ defined as ``$\phi$ is deducible from $\Gamma$ using all theorems of $\ivsys{X}$ and Modus Ponens'' or
  \item  the relation $\Gamma \dedp{\ivsys{X}}^\tto \phi$ defined as ``$\phi$ is deducible from $\Gamma$ using all theorems of $\ivsys{X}$, Modus Ponens \emph{and the rule $\li$}''.
  \end{itemize}
  
  %The corresponding definitions for 
  A given $\tto$-logic $\ivsys{X}$  is then %look as follows:

\begin{itemize}
\item  \emph{strongly locally complete} if whenever $\Gamma \not \dedp{\ivsys{X}} \phi$, there exists $\ma F \in \Mod(\ivsys{X})$, a valuation $V$ and a point $k$ in $\ma F$ s.t.  $\ma F, V, k \kmodels \Gamma$ and $\ma F, V, k \not\kmodels \phi$.
\item \emph{strongly globally complete} if whenever $\Gamma \not \dedp{\ivsys{X}}^\tto \phi$, there exists $\ma F \in \Mod(\ivsys{X})$, a valuation $V$  s.t.  $\ma F, V \kmodels \Gamma$ yet for some point $k$ in $\ma F$, $\ma F, V, k \not\kmodels \phi$.
\end{itemize}

\nosmurf
As discovered by Frank Wolter \cite{Wolter1993}, these two notions coincide for Kripke semantics of ordinary modal logics. While Wolter was not working with extensions of $\iP$, his reasoning extends to our setting:

\begin{theorem} \label{th:wolgloloc}
A $\tto$-logic $\ivsys{X}$ is strongly locally complete iff it is strongly globally complete.
\end{theorem}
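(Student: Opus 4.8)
The plan is to follow Wolter's original argument for ordinary modal logics, checking that the rule $\li$ plays exactly the role that the necessitation rule plays in the classical case. Recall the easy direction first: strong global completeness trivially implies strong local completeness, because $\Gamma \not\dedp{\ivsys{X}} \phi$ gives $\Gamma \not\dedp{\ivsys{X}}^\tto \phi$ (the local deducibility relation is weaker, having fewer rules), and a global counter-model with $\ma F, V \kmodels \Gamma$ is in particular a local counter-model at the relevant point $k$. So the whole content is the converse: assuming strong local completeness, derive strong global completeness.

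The key idea, exactly as in \cite{Wolter1993}, is a syntactic reduction of global consequence to local consequence via a ``universal'' box. Given $\Gamma \not\dedp{\ivsys{X}}^\tto \phi$, I would first reduce to the case of a single premise or a finite conjunction using a standard compactness/deduction-style argument for $\dedp{\ivsys{X}}^\tto$; more precisely, the plan is to work with the set $\Gamma^\sharp \deq \{\, \Box\gamma, \Box\Box\gamma, \ldots \mid \gamma \in \Gamma\,\}\cup \Gamma$, i.e. the closure of $\Gamma$ under the $\Box$ obtained from $\tto$ via \refeq{boxdef}. The crucial claim is that $\Gamma \dedp{\ivsys{X}}^\tto \phi$ if and only if $\Gamma^\sharp \dedp{\ivsys{X}} \phi$: the left-to-right direction is proved by induction on the global derivation, where each application of the rule $\li$ (equivalently, of necessitation for the defined $\Box$, using $\li$ together with the normality axioms available in $\iP$, e.g. $\lii$, $\liii$) is absorbed into $\Gamma^\sharp$ because $\Gamma^\sharp$ is deductively closed under $\Box$; the right-to-left direction holds because every formula in $\Gamma^\sharp$ is a global consequence of $\Gamma$ (as $\li$ lets us infer $\Box\gamma = \top\tto\gamma$ from $\gamma$, and iterate). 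Then apply strong local completeness to $\Gamma^\sharp \not\dedp{\ivsys{X}} \phi$: we obtain $\ma F \in \Mod(\ivsys{X})$, a valuation $V$, and a point $k$ with $\ma F, V, k \kmodels \Gamma^\sharp$ and $\ma F, V, k \not\kmodels \phi$. The final step is to pass from this pointed model to a genuine global model: restrict $\ma F$ to the submodel generated by $k$ under $\preceq$ and $\sqsubset$ (the point-generated subframe, which stays in $\Mod(\ivsys{X})$ since validity is preserved under generated subframes for these semantics). Because $k \kmodels \Box^n\gamma$ for every $n$ and every $\gamma\in\Gamma$, and persistence plus the reachability pattern of $\preceq\comp\sqsubset$-paths guarantee that every point of the generated submodel forces each $\gamma$, we get $\ma F', V' \kmodels \Gamma$ while still $\ma F', V', k \not\kmodels \phi$, which is precisely strong global completeness.

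The main obstacle I expect is the verification that $\ma F, V, k \kmodels \Gamma^\sharp$ really does force $\Gamma$ \emph{everywhere} in the $k$-generated submodel. This is the step where one must confirm that the defined $\Box$ of \refeq{boxdef} has enough reach: one needs that every node reachable from $k$ by a finite alternating sequence of $\preceq$- and $\sqsubset$-steps is dominated by some $\Box^n\gamma$ statement true at $k$. For $\sqsubset$-steps this is immediate from $k\kmodels \Box^n\gamma$ for all $n$; for $\preceq$-steps it follows from $\preceq$-persistence; and the interaction condition \strictp (Fact \ref{fact:strictp}) ensures these combine correctly, so that in fact $\upclo{k}{(\preceq\cup\sqsubset)^{*}}$ is covered. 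One must also be slightly careful about whether the generated subframe construction and the persistence argument interact cleanly with whatever extra axioms define $\Mod(\ivsys{X})$ — but since we only use that $\Mod(\ivsys{X})$ is closed under point-generated subframes, which holds uniformly for Kripke frames of this kind, this is routine. The reduction lemma $\Gamma \dedp{\ivsys{X}}^\tto \phi \iff \Gamma^\sharp \dedp{\ivsys{X}} \phi$ is the technical heart and is where the precise formulation of the rule $\li$ versus plain Modus Ponens matters; everything else is bookkeeping transported from \cite{Wolter1993}.
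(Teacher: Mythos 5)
Your argument for the direction ``strongly locally complete $\Rightarrow$ strongly globally complete'' is essentially the paper's: the same closure $\{\Box^{n}\gamma \mid \gamma\in\Gamma,\ n\in\omega\}$ of the premise set, the same reduction lemma obtained by replacing $\li$ with the G\"odel/necessitation rule, and the same passage to the point-generated submodel. That half is fine.

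The genuine gap is in what you call the ``easy direction''. You claim that $\Gamma \not\dedp{\ivsys{X}} \phi$ gives $\Gamma \not\dedp{\ivsys{X}}^\tto \phi$ because the local relation ``is weaker, having fewer rules''. The implication runs the wrong way: precisely because $\dedp{\ivsys{X}}^\tto$ has an extra rule, it derives \emph{more}, so local non-derivability does not entail global non-derivability. Concretely, $\{p\} \not\dedp{\ivsys{X}} \top\tto p$ (take a model where $p$ holds at $k$ but fails at a $\sqsubset$-successor of $k$), whereas $\{p\} \dedp{\ivsys{X}}^\tto \top\tto p$ by a single application of $\li$. Hence from $\Gamma \not\dedp{\ivsys{X}} \phi$ you are not entitled to invoke strong global completeness at all, and this direction is in fact the one that needs a trick. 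The paper supplies it by a fresh-variable relativization: for each finite $\Delta\subseteq_{\sf fin}\Gamma$ take a pointed countermodel to $\bigwedge\Delta\to\phi$, extend the valuation by a fresh atom $p$ interpreted as the cone $\upclo{k_\Delta}{\preceq}$, observe that $p\to\bigwedge\Delta$ then holds \emph{globally} while $p\to\phi$ fails at $k_\Delta$, conclude by soundness and compactness of $\dedp{\ivsys{X}}^\tto$ that $\{p\to\bigwedge\Delta \mid \Delta\subseteq_{\sf fin}\Gamma\}\not\dedp{\ivsys{X}}^\tto p\to\phi$, and only then apply strong global completeness; the local countermodel for $\Gamma$ and $\phi$ is read off at a $\preceq$-successor of the refuting point where $p$ holds. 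Without some such device your proof establishes only one of the two implications.
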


\begin{deriproof}%[Proof sketch]
For the ``if'' direction, assume $\Gamma \not \dedp{\ivsys{X}} \phi$ for a strongly globally complete $\ivsys{X}$. This, in particular, means that for any finite $\Delta \subseteq_{\sf fin} \Gamma$, there exist $\ma F_\Delta \in \Mod(\ivsys{X}), V_\Delta, k_\Delta$ s.t. $\ma F_\Delta, V_\Delta, k_\Delta \kmodels \bigwedge\Delta$ and  $\ma F_\Delta, V_\Delta, k_\Delta \not\kmodels \phi$. Pick a fresh $p$, not occurring in either $\Gamma$ or $\phi$. By setting $V'_\Delta(p) = \upclo{k_\Delta}{\preceq}$ and keeping the valuation of other variables unchanged wrt $V_\Delta$, we get $\ma F_\Delta, V'_\Delta \kmodels p \to \bigwedge \Delta$ and $\ma F_\Delta, V'_\Delta \not\kmodels p \to \phi$. This implies that $\{p \to \bigwedge \Delta\} \not  \dedp{\ivsys{X}}^\tto p \to \phi$ for any finite $\Delta \subseteq_{\sf fin} \Gamma$. Using the fact that $\{p \to \bigwedge \Delta_1, \dots, p \to \bigwedge \Delta_n\}$ is equivalent to $\{p \to \bigwedge (\Delta_1 \cup \dots \cup \Delta_n)\}$ and compactness of $\dedp{\ivsys{X}}^\tto$, we get that  
\[
\{p \to \bigwedge \Delta \mid \Delta \subseteq_{\sf fin} \Gamma\} \not  \dedp{\ivsys{X}}^\tto p \to \phi.
\]
 Using strong global completeness, we find $\ma F \in \Mod(\ivsys{X}), V$ and $k$ s.t. $\ma F, V, k \kmodels \Gamma$ and $\ma F, V, k \not\kmodels \phi$. 

%The converse reasoning uses more specific features of both deducibility relation.
For the ``only if'' direction, first note that using reasoning such as the one leading to Fact \ref{fact:necunnec}, we can replace $\li$ in the definition of $\dedp{\ivsys{X}}^\tto$ by the G\"odel rule. This allows us to use a standard modal trick and reduce $\Gamma \dedp{\ivsys{X}}^\tto \phi$ to $\{ \Box^{\leq n} \gamma \mid \gamma \in \Gamma, n \in \omega \} \dedp{\ivsys{X}} \phi$. Now assume $\Gamma \not \dedp{\ivsys{X}}^\tto \phi$ for a strongly locally complete $\ivsys{X}$ and pick  $\ma F \in \Mod(\ivsys{X}), V$ and $k$ s.t. \[ \ma F, V, k \kmodels \{ \Box^{\leq n} \gamma \mid \gamma \in \Gamma, n \in \omega \} \] and $\ma F, V, k \not\kmodels \phi$. Taking the submodel generated by $k$, i.e., all the points accessible from it by iterated combinations of $\preceq$ and $\sqsubset$ yields a model where $\Gamma$ is globally satisfied, yet $\phi$ fails.
\end{deriproof}

\nosmurf
Strong completeness is typically achieved as a corollary of stronger results, such as \emph{canonicity}, which in turn, as first observed by Fine \cite{Fine75:connections} (see also Gehrke et al. \cite{GHV06} for a general treatment), can be obtained as a corollary of \emph{elementarity}: that is, being complete wrt a \emph{first-order definable} class of frames. It is not hard to see intuitively the reason for this connection: for a (weakly) complete logic at least, a failure of strong completeness implies a failure of \emph{compactness} of the Kripke consequence relation, whereas being elementarily definable guarantees    compactness of this relation.  A suitable notion of canonicity for $\tto$-logics has been proposed and studied in the literature \cite{Iemhoff01:phd,iemh:pres03,Zhou03}; in fact, clauses regarding strong completeness in Theorem \ref{th:minco} below are corollaries of such canonicity results.

\subsection{Completeness and correspondence results}
\nosmurfduo
%In 
Figure \ref{fig:compl} lists various %several 
 completeness/correspondence conditions for %various (mostly arithmetically-oriented) 
  $\latto$-principles. % mostly found by Iemhoff and coauthors \cite{Iemhoff01:phd,iemh:pres03,iemh:prop05,Zhou03}. 
   %As can be seen, 
    L\"ob-like axioms tend to have counterparts which are not of first-order character, but numerous others can in fact be expressed in first-order logic. 

Let us turn these claims into proper theorems. First, let us summarize results which are available in the existing literature, or can be relatively easily derived:

%\begin{theorem}[Correspondence]
%\end{theorem}

\begin{theorem}\ \label{th:minco} 
\begin{enumerate}[a.]
\item 
$\wk$ is strongly complete \bro wrt the class of all frames\brc and enjoys the finite model property \tuc{Prop. 4.1.1}{Iemhoff01:phd}, \tuc{Prop. 7}{iemh:pres03}, \tuc{Th. 2.1.10}{Zhou03}.  %\cite{Iemhoff01:phd,iemh:pres03,iemh:prop05}.
\item $\bk = \wk + \lb$ corresponds to the class of brilliant frames,  is strongly complete and enjoys the finite model property.
\item $\ws  = \loglb + \lS$ corresponds to the class of strong frames, is strongly complete and enjoys the finite model property.
\item $\loglb + \biii$ corresponds to the class of semi-transitive frames, is strongly complete and enjoys the finite model property.
\item $\loglb + \lv$ corresponds to the class of gathering frames, is strongly complete and enjoys the finite model property \tuc{Prop. 4.2.1}{Iemhoff01:phd}, \tuc{Prop. 8}{iemh:pres03}.
\item $\loglb + \biv$ corresponds to the class of Noetherian semi-transitive frames and enjoys the finite model property \tuc{Prop. 4.3.2}{Iemhoff01:phd}, \tuc{Th. 2.2.7}{Zhou03}.
%\ws^? & := \loglb^? + \lS,  \\ [\tbskip]
 %\logle & \deq \logla + \lvi, \\
\item \label{compgla} $\loglg  = \loglb + \biv + \lv$ \bro cf. Lemma \ref{lem:gla}\brc corresponds to the class of Noetherian gathering frames \tuc{Lem. 9}{iemh:pres03}, \tuc{Lem. 3.10}{iemh:prop05} and enjoys the finite model property.
% & = \logle + \liv, \\ [\tbskip] 
 %\loglf & := \logla + \lvii, \\
\item \label{compgwa} $\loglh = \loglb + \lvii$ corresponds to the ``supergathering'' property of Figure \ref{fig:compl} on the class of finite frames \tuc{Lem. 3.5.1}{Zhou03}, \tuc{Th. 3.31}{iemh:prop05}.
\item $\loglb + \lviii$ corresponds to the class of Montagna frames of Figure \ref{fig:compl}, is strongly complete \tuc{Prop. 11}{iemh:pres03} and enjoys the finite model property \tuc{Lem. 3.21}{iemh:prop05}, \tuc{Th. 3.3.5}{Zhou03}.
\end{enumerate}
\end{theorem}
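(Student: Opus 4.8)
The plan is to prove Theorem~\ref{th:minco} one clause at a time, in the order listed, exploiting the fact that most of the clauses have already appeared in the literature so that the burden is really to assemble citations and fill small gaps, except for the two genuinely new correspondence claims (the $\lb$/brilliance clause and the $\lS$/strong-frame clause) and the general machinery relating the four structural ingredients: soundness, completeness, fmp, and strong completeness. I would first isolate a reusable \emph{core canonicity lemma}: for any $\latto$-logic $\ivsys{X}$ between $\wk$ and the logics in question, the canonical model construction of Iemhoff (prime theories ordered by inclusion, with $\Gamma \sqsubset \Delta$ iff $\{\psi \mid (\phi\tto\psi)\in\Gamma \text{ for some } \phi\in\Delta\}\cup\{\phi\mid\Box\phi\in\Gamma\}\subseteq\Delta$, suitably phrased) is a $\strictif$-frame validating $\ivsys{X}$ whenever $\ivsys{X}$'s axioms are \emph{canonical} in the appropriate sense, and the canonical model refutes every non-theorem. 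Clause (a) then follows directly by citing \tuc{Prop. 4.1.1}{Iemhoff01:phd}, \tuc{Prop. 7}{iemh:pres03}, \tuc{Th. 2.1.10}{Zhou03}; fmp follows by the standard filtration argument already in those references.

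For clauses (b) and (c) --- the ones carrying new \emph{correspondence} content --- I would argue the soundness/correspondence direction semantically and the completeness direction via canonicity. For (b): on any preframe, $\lb$ is valid iff the frame is brilliant; the easy direction is the substitution $\phi:=p$, $\psi:=p$ refuted at a $k$ with $k\sqsubset\ell\preceq m$ but $k\not\sqsubset m$, using a valuation making $p$ true exactly on $\upclo{m}{\preceq}$ (this is essentially Fact~\ref{fact:boxcol}'s machinery, but now for the weaker $\lb$ rather than $\lix$; one checks $k\kmodels \phi\tto\psi$ fails while it should hold). The converse --- brilliance forces validity of $\lb$ --- is a direct unfolding of \refeq{eq:forctto}. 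Completeness and strong completeness of $\bk$ then follow by verifying that the canonical $\sqsubset$ of $\wk+\lb$ is brilliant (an inclusion-of-theories computation using $\lb$), hence the canonical model lands in $\Mod(\bk)$; fmp comes from filtration preserving brilliance, which one checks by the usual closure conditions on the filtrated relation. Clause (c) is entirely parallel with $\lS$ and the strong-frame condition $k\sqsubset\ell \Rightarrow k\preceq\ell$: the semantic correspondence is immediate from \refeq{eq:forctto} versus \refeq{eq:forcimp}, canonicity amounts to checking that $\Gamma\sqsubset\Delta$ implies $\Gamma\subseteq\Delta$ in the canonical model of $\ws$ (use $\lSalt$, available by Lemma~\ref{lem:lsder}, to push arbitrary $\phi\in\Gamma$ through), and fmp is again filtration.

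The remaining clauses (d)--(i) are then harvested from the cited literature with at most cosmetic reformulation. For (d), semi-transitivity corresponds to $\biii$ by a one-step frame argument, completeness/fmp by the canonical/filtration pattern; this one may not have an explicit citation, so I would give the short correspondence computation in full and note the canonicity of $\biii$ is routine (it is a Sahlqvist-like shape once read through $\Box$). Clauses (e), (f), (g), (i) are quoted verbatim from \tuc{Prop. 4.2.1}{Iemhoff01:phd}, \tuc{Prop. 8}{iemh:pres03}, \tuc{Prop. 4.3.2}{Iemhoff01:phd}, \tuc{Th. 2.2.7}{Zhou03}, \tuc{Lem. 9}{iemh:pres03}, \tuc{Lem. 3.10}{iemh:prop05}, \tuc{Prop. 11}{iemh:pres03}, \tuc{Lem. 3.21}{iemh:prop05}, \tuc{Th. 3.3.5}{Zhou03}; clause (h) is the one subtle point because $\loglh$ is complete only for the supergathering property \emph{on finite frames}, so I would be careful to state it as an fmp result (completeness wrt a class of finite frames defined by that property) rather than a genuine first-order correspondence, flagging the remark about conjunctions of axioms already made in the excerpt.

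The main obstacle, I expect, is the correspondence half of clause (b): showing that $\lb$ --- rather than the stronger $\lix$ of Fact~\ref{fact:boxcol} --- already pins down \emph{exactly} the brilliant frames, and in particular that the refuting valuation for a non-brilliant configuration genuinely falsifies $\lb$ at the root while respecting $\preceq$-persistence (this is where the weakness of $\lb$ versus $\lix$ must be handled delicately, and where one must double-check that no extra frame conditions sneak in). A secondary nuisance is making the canonicity arguments for $\lb$ and $\lS$ fully rigorous --- i.e.\ that the canonical relations satisfy brilliance resp.\ strongness --- since the canonical $\sqsubset$ for the binary connective is defined slightly differently across \cite{Iemhoff01:phd,iemh:pres03,Zhou03}, and I would want to fix one definition and verify the structural property against it; the fmp/filtration steps are routine by comparison, and the strong-completeness claims are then immediate from canonicity (or, alternatively, from Theorem~\ref{th:wolgloloc} together with local strong completeness).
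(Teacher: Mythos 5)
Your proposal is correct and matches the paper's (implicit) approach: the theorem is presented as a literature summary, with the uncited clauses intended to follow by exactly the canonical-model, filtration, and frame-correspondence routine you describe. The only remark worth making is that what you single out as the main obstacle --- showing that $\lb$ alone, rather than the stronger $\lix$, pins down brilliance --- is already settled earlier in the paper by Fact~\ref{fact:boxcol} (whose second bullet is literally the $\lb$ schema) together with the syntactic equivalence in Lemma~\ref{th:bkax}, so no delicate extra argument is needed there.
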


% \tadeusz{Actually, where's the completeness result?}\albert{I will ask Roos.}

%\tadeusz{Claims given w/o references are proved in our forthcoming manuscript \dots}

%Let us recall that a logic is 

%\tadeusz{I got L\"ob wrong here, \iW\ is stronger. This needs to be corrected.}

%\begin{theorem}\
%\wsdisf, \ws, \bsdisf\, and \bs\ enjoy analogues of completeness results in Theorem \ref{th:minco} wrt suitable classes of strong frames.
%\item \wldisf, \wl, \bldisf\, and \bl\ enjoy analogues of the finite model property results in Theorem \ref{th:minco}  wrt suitable classes of L\"ob frames.
%\item \istr{L^-}, \istr{L}, \icol{SL^-} and \icol{SL} enjoy analogues of the finite model property results in Theorem \ref{th:minco}  wrt suitable classes of strong L\"ob frames.
%\end{itemize}
%\end{theorem}

\nosmurf
We could not find in the literature an explicit statement of  the finite model property of $\loglc = \loglh + \lviii$. Moreover, an astute reader probably noticed that we do not claim strong completeness for all logics appearing in the statement of this theorem.  The reason is obvious: it is very well-known that variants of 
 the L\"ob axiom clash with strong completeness and, a fortiori, with canonicity. Boolos and Sambin \cite{bool:emer91} credit Fine and Rautenberg with this observation, which can be now found in any standard monograph on modal logic. This can be extended in several directions, e.g., to logics with weaker axioms (cf. Amerbauer \cite{Amerbauer96:sl}) or to failure of broader notions of strong completeness \cite{Litak05:phd}; see Litak \cite[\S~3]{Litak07:bsl} for more on both counts. In the context of logics for (relative) interpretability (cf. Appendix \ref{sec:inter}), problems with canonicity and strong completeness have been pointed out, e.g., by de Jongh and Veltman \cite{dejo:prov90}. Let us adapt such arguments to our setting:
 
 \begin{theorem}
 $\ivsys{X}$ is not strongly complete whenever  
 \begin{itemize}
 \item it is contained between $\iP + \biv$ and $\logbb + \bLin$ or
 \item it is contained between $\iP + \biv$ and $\iKMlin$. 
 \end{itemize}
 In particular, $\loglg$, $\loglh$, $\loglc$  or $\iKMl$ fail to be strongly complete. 
 \end{theorem}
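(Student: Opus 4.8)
The plan is to adapt the classical argument—going back to Fine and Rautenberg, and presented for interpretability logics by de Jongh and Veltman \cite{dejo:prov90}—showing that the L\"ob axiom $\biv$ forces a failure of compactness of the Kripke consequence relation, hence a failure of strong completeness. The key point is that $\biv$ (together with semi-transitivity, which is derivable from $\biv$ over $\iP$) makes $\sqsubset$ conversely well-founded on any frame in $\Mod(\ivsys{X})$, while there is an infinite set of formulas that is finitely satisfiable in frames of $\Mod(\ivsys{X})$ but whose satisfaction would require an infinite $\sqsubset$-ascending chain.

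Concretely, I would fix fresh propositional variables $p_0, p_1, p_2, \dots$ and consider the set
\[
\Gamma \deq \{\, \top \tto (p_n \vee p_{n+1} \vee p_{n+2} \vee \dots) \text{-style approximants} \,\} \cup \{\, p_n \tto p_{n+1} \mid n \in \omega \,\} \cup \{\, \neg (p_{n+1} \tto p_n) \mid n \in \omega \,\},
\]
or more precisely a set engineered so that any model of $\Gamma$ at a point $k$ contains, for each $n$, a $\sqsubset$-successor forcing $p_n$ but refuting $p_{n+1}$, arranged in a strictly ascending $\sqsubset$-chain (the asymmetry of $p_n \tto p_{n+1}$ versus $\neg(p_{n+1}\tto p_n)$ is what prevents the chain from collapsing). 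Each finite subset of $\Gamma$ mentions only finitely many variables and can be satisfied on a finite linearly-ordered frame, which lies in $\Mod(\iKMlin)$ and a fortiori in $\Mod(\logbb + \bLin)$ and $\Mod(\ivsys{X})$ for any $\ivsys{X}$ between $\iP+\biv$ and either of these (here I use that the relevant finite frames are linear, hence validate $\bLin$, $\aLin$, $\mHCl$, $\biv$ and all the intermediate axioms). On the other hand, $\Gamma$ itself cannot be satisfied in any frame validating $\biv$: such a frame is $\sqsubset$-Noetherian by Theorem \ref{th:minco}(f), contradicting the required infinite ascending $\sqsubset$-chain. Meanwhile one checks $\Gamma \not\dedp{\ivsys{X}} \phi$ for a suitable fresh $\phi$ (e.g.\ $\phi = \bot$, or a variable not in $\Gamma$), since finite consistency of $\Gamma$ over $\ivsys{X}$ follows from the finite satisfiability just described together with soundness. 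Hence $\ivsys{X}$ is not strongly complete.

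For the containments I would verify the two brackets separately: that $\logbb + \bLin$ and $\iKMlin$ are indeed sound for the class of finite linear frames (immediate from Figure \ref{fig:compl}: $\bLin$ is ``strongly semi-linear'', $\aLin$ is ``weakly semi-linear'', $\mHCl$ is ``$\sqsubset$-dominated'', all valid on finite linear orders, and $\biv$ is valid since finite linear orders are Noetherian), so that every finite subset of $\Gamma$ is satisfiable in $\Mod(\ivsys{X})$; and that $\biv \in \ivsys{X}$, so that $\Mod(\ivsys{X}) \subseteq \Mod(\biv)$ consists of Noetherian frames. The ``in particular'' clause then follows because $\loglg$, $\loglh$, $\loglc$ all contain $\iP + \biv$ (directly, or via Lemma \ref{lem:gla} and the derivations of \S\ref{sec:deriv}) and are contained in $\iKMlin$ (for $\loglc$, note $\iKMl \ded \lvii$ by Lemma \ref{th:mhcderiv}(e) and $\imHCl \ded \lviii$ by Lemma \ref{th:mhcderiv}(d), so $\iKMl \supseteq \loglc$, and $\iKMlin \supseteq \iKMl$); likewise $\iKMl \subseteq \iKMlin$ trivially.

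The main obstacle I expect is getting the set $\Gamma$ exactly right: it must be genuinely unsatisfiable on Noetherian frames (forcing a strictly ascending infinite $\sqsubset$-chain, not merely an infinite $\sqsubset$-path that could revisit points or degenerate) yet finitely satisfiable on \emph{linear} finite frames so that it survives passage to the stronger endpoints $\logbb+\bLin$ and $\iKMlin$, and it must be chosen so that the witnessing non-derivability $\Gamma \not\dedp{\ivsys{X}}\phi$ is clear. Balancing ``strict ascent'' against ``linearity of the finite approximants'' is the delicate bit; I would model the construction closely on the known interpretability-logic counterexamples in \cite{dejo:prov90}, transporting them along the reading of $\phi \tto \psi$ as $\neg\psi \rhd \neg\phi$ mentioned in the discussion of $\lxii$, and double-check that the resulting chain formulas do not accidentally force non-linearity. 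A secondary, more routine obstacle is confirming soundness of each intermediate axiom on the specific finite linear models used, but that is just inspection of Figure \ref{fig:compl}.
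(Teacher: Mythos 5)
Your overall strategy (finite satisfiability over $\Mod(\ivsys{X})$ plus compactness of the syntactic consequence, against unsatisfiability over Noetherian frames) is a legitimate route, and your verification of the bracketing containments and of the ``in particular'' clause is fine. But the core of the proof --- the set $\Gamma$ --- is where the argument breaks, and the candidate you sketch does not do the job. Formulas of the shape $p_n \tto p_{n+1}$ and $\neg(p_{n+1}\tto p_n)$ evaluated at a single root $k$ only constrain the immediate $\sqsubset$-successors of points $\preceq$-above $k$: they produce, for each $n$, \emph{some} successor forcing $p_{n+1}$ and refuting $p_n$, but nothing in these formulas forces those witnesses to be $\sqsubset$-related to one another. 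A root with an infinite $\sqsubset$-antichain of dead-end successors satisfies all of them on a perfectly Noetherian (indeed semi-transitive, gathering) frame, so your $\Gamma$ is \emph{not} unsatisfiable over $\Mod(\iP+\biv)$ and the contradiction with converse well-foundedness never materialises. You flag ``getting $\Gamma$ exactly right'' as the main obstacle, but that obstacle is the theorem; leaving it unresolved leaves the proof without its engine.

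The paper's fix is to make the chaining come from the consequence relation rather than from cleverer formulas: it takes the plain $\Gamma = \{\Box p_{i+1}\to p_i \mid i\in\omega\}$ with target $p_0$ under the \emph{global} consequence $\dedp{\ivsys{X}}^\tto$ (licensed by the local/global equivalence of Theorem \ref{th:wolgloloc}). Globally true implications propagate: $p_0$ failing at $k_0$ yields $k_1\sqsupset k_0$ refuting $p_1$, at which $\Box p_2\to p_1$ again holds, yielding $k_2\sqsupset k_1$, and so on --- a genuine ascending chain. For the non-derivability the paper also does something your finite-frame route avoids but which is worth noting: it exhibits a single infinite model on $\omega+\omega^*$ whose underlying frame refutes $\biv$, but whose valuation validates every theorem of $\ivsys{X}$, which suffices for soundness of $\dedp{\ivsys{X}}^\tto$ with respect to that model. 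Your alternative of satisfying each finite fragment on a finite linear frame and appealing to compactness of $\vdash$ would also work once $\Gamma$ is repaired (e.g.\ by using the paper's $\Gamma$, or its $\Box^{\leq n}$-prefixed version if you insist on the local relation), but as written the proposal has a genuine gap at its centre.
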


\begin{proof}[Proof sketch]
We can work in the standard modal language containing just $\Box$ rather $\tto$ (in fact, $\Box$ and $\to$ are the only connectives really used). We can also use the freedom offered by Theorem \ref{th:wolgloloc} and choose to disprove global completeness.  Consider now
$
\Gamma \deq \{ \Box p_{i+1} \to p_i \mid i \in \omega \}
$
and note that that in any model where $\Gamma$ is globally satisfied but $p_0$ fails, there must exists an infinite $\sqsubset$-ascending chain, which allows us to refute Noetherianity, hence refuting $\biv$. 

However, taking $\sqsubset$ to be an ordered sum of $\omega$ with its copy with reverse ordering $\omega^*$, $\preceq$ to be either (for the first clause) discrete or (for the second clause) the reflexive version of $\sqsubset$ and setting $V(p_i) := \upclo{(i+1)}{\preceq}$ produces a model where $\Gamma$ is globally valid, $p_0$ fails and all theorems of $\ivsys{X}$ hold under $V$ (despite being refutable in the underlying frame).
\end{proof}

\nosmurf
%Returning to Figure \ref{fig:compl}, 
 Theorem \ref{th:minco} above does not cover  correspondence and completeness claims  for all axioms and frame conditions displayed in Figure \ref{fig:compl}, especially those not directly related to preservativity and provability principles. As it turns out, there is a  technique of transferring  generic results %of this kind  
  available for (bi)modal logics over \cpc\ into the intuitionistic setting. % using a suitable generalization of the G\"odel-McKinsey-Tarski translation. 
  For $\Box$-logics, it has been developed in a series of papers by Wolter and Zakharyaschev \cite{WolterZ97:al,WolterZ98:lw}. We are going to present details of generalization of this technique to $\tto$-logics in a separate paper \cite{LitakV:otw}. For now, let us just list some consequences regarding strong completeness and canonicity (we leave the finite model property out of the picture here):

 \begin{theorem}[\cite{LitakV:otw}]\  \label{th:mainco}
\begin{enumerate}[a.] 
% & =  \loglf + \liv, \\ [\tbskip] 
\item $\iP + \mHCl$ correspond to the class of $\sqsubset$-dominated frames of Figure   \ref{fig:compl} and is strongly complete.
\item $\iP + \mHCb$ correspond to the class of weakly $\sqsubset$-dominated frames of Figure   \ref{fig:compl} and is strongly complete.
\item $\imHCl$ is strongly complete \bro wrt the class of strong $\sqsubset$-dominated frames\brc.
\item $\iP + \aLin$ correspond to the class of weakly semilinear frames of Figure   \ref{fig:compl} and is strongly complete.
\item $ \iP + \bLin$ correspond to the class of strongly semilinear frames of Figure   \ref{fig:compl} and is strongly complete.
\item $\iP + \CF$ correspond to the class of semi-dense frames of Figure   \ref{fig:compl} and is strongly complete.
\item $\iP + \aCF$ correspond to the class of pre-reflexive frames of Figure   \ref{fig:compl} and is strongly complete.
\item $\iP + \aApp$ correspond to the class of almost reflexive frames of Figure   \ref{fig:compl} and is strongly complete.
\item $\iPLLa$ is strongly complete \bro wrt the class of strong almost reflexive frames\brc.
%\item  $\bk$ corresponds to is strongly complete and enjoys the finite model property 
\end{enumerate}
%\begin{itemize}
%\item In their corresponding languages, $\wkdisf$ and $\wk$ are both strongly complete and enjoy the finite model property.
%\item In their corresponding languages, $\bkdisf$ and $\bk$ are both strongly complete and enjoy the finite model property wrt frames satisfying \boxcol.
%\end{itemize}
\end{theorem}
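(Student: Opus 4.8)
The plan is to prove Theorem~\ref{th:mainco} by adapting the Wolter--Zakharyaschev technique of transferring completeness and canonicity results for (bi)modal logics over \cpc\ to the intuitionistic setting, which is the approach announced in \cite{LitakV:otw}. The starting point is the observation, already implicit in \S~\ref{sec:completeness}, that an intuitionistic $\tto$-frame $\la W,\preceq,\sqsubset\ra$ satisfying \strictp can be viewed as a bimodal classical frame with two relations, one of which ($\preceq$) is reflexive and transitive and the other ($\sqsubset$) interacting with it via the \strictp condition $\preceq\comp\sqsubset\subseteq\sqsubset$. I would first set up a faithful embedding of each $\latto$-logic $\ivsys{X}$ from the list into a classical bimodal companion logic $\sigma\ivsys{X}$, extending the familiar G\"odel--McKinsey--Tarski translation: a proposition $p$ becomes $\Box_\preceq p$, intuitionistic $\to$ and $\tto$ get translated using $\Box_\preceq$ over the two relations, and each axiom in Figure~\ref{fig:compl} maps to a bimodal Sahlqvist-type formula. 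The key lemma here is that $\ivsys{X}\vdash\phi$ iff $\sigma\ivsys{X}\vdash\sigma\phi$, proved by the usual back-and-forth argument on canonical models, together with the observation that the frame conditions in the right-hand column of Figure~\ref{fig:compl} are exactly the first-order correspondents of the translated axioms.

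Next I would verify, axiom by axiom, that each principle appearing in the statement --- $\mHCl$, $\mHCb$, $\aLin$, $\bLin$, $\CF$, $\aCF$, $\aApp$ --- translates to a formula whose classical bimodal correspondent is the stated condition, and that this correspondent is \emph{elementary} (first-order). For the ``pure'' correspondence clauses (a),(b),(d),(e),(f),(g),(h) this is a routine Sahlqvist computation once the translation is fixed: e.g.\ $\mHCl$, $\phi\tto\psi\to(\phi\to\psi)\vee\phi$, unwinds at a point $k$ to the condition $k\prec\ell\To k\sqsubset\ell$ (``$\sqsubset$-dominated''), and $\aApp$, $(\phi\wedge(\phi\tto\psi))\tto\psi$, unwinds to $k\sqsubset\ell\To\ell\sqsubset\ell$ (``almost reflexive''). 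Strong completeness then follows from canonicity: since each correspondent is first-order, the classical companion logic $\sigma\ivsys{X}$ is canonical by Fine's theorem (as recalled in the excerpt, via Fine~\cite{Fine75:connections} and Gehrke et al.~\cite{GHV06}), and I would transfer canonicity back across the embedding by showing that the canonical frame of $\ivsys{X}$ (in the sense of the $\tto$-canonicity notion of \cite{Iemhoff01:phd,iemh:pres03,Zhou03}) is, up to the standard correspondence, a $\preceq$-reflexive generated part of the canonical frame of $\sigma\ivsys{X}$, hence validates $\ivsys{X}$.

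The remaining clauses (c) and (i) --- strong completeness of $\imHCl=\ws+\mHCl$ with respect to strong $\sqsubset$-dominated frames, and of $\iPLLa=\ws+\aCF$ with respect to strong almost reflexive frames --- need the extra remark recalled in the excerpt: for completeness, conjunctions of axioms do not reduce to intersections of frame classes. So here I would argue directly that the combined canonical frame satisfies \emph{both} the strongness condition (from $\lS$, whose correspondent $k\sqsubset\ell\To k\preceq\ell$ is elementary and Sahlqvist, hence canonical) \emph{and} the second condition, using that $\lS$ and $\mHCl$ (respectively $\aCF$) are jointly Sahlqvist in the bimodal companion, so their conjunction is still canonical and its correspondent is the conjunction of the two first-order conditions. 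For $\iPLLa$ I can alternatively exploit Lemma~\ref{lem:derpll}\ref{plaacollapse}, which collapses $\tto$ to $\phi\to\Box\psi$ over $\iPLLa^-$, reducing the problem to a known completeness result for a unary $\Box$-logic (a variant of $\iPLL$), which is the cleanest route.

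The main obstacle I expect is not any single correspondence calculation but the bookkeeping of the transfer machinery: making the embedding $\ivsys{X}\mapsto\sigma\ivsys{X}$ precise enough that it simultaneously (i) is conservative, (ii) carries canonicity back to the $\tto$-canonical model, and (iii) respects the persistence/\strictp constraints so that the resulting classical frame genuinely arises from an intuitionistic $\tto$-frame. In particular one must check that the frame conditions extracted on the classical side are preserved when one passes to the $\preceq$-reflexive, $\preceq$-transitive generated subframe that corresponds to an intuitionistic model --- this is where the analogue of the Wolter--Zakharyaschev argument for $\Box$-logics \cite{WolterZ97:al,WolterZ98:lw} has to be genuinely extended to the binary connective, and where the detailed write-up is deferred to \cite{LitakV:otw}. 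The L\"ob-like axioms are deliberately excluded from this theorem precisely because their correspondents are non-elementary and the transfer of canonicity fails for them, consistent with the preceding non-strong-completeness result.
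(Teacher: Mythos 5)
Your proposal follows essentially the same route the paper itself announces: the theorem is stated here without proof, cited from the companion paper \cite{LitakV:otw}, and the surrounding text explicitly identifies the intended method as a generalization of the Wolter--Zakharyaschev transfer from classical bimodal companions \cite{WolterZ97:al,WolterZ98:lw}, with strong completeness obtained via elementarity/canonicity \`a la Fine and with the conjunction clauses (c) and (i) requiring separate treatment --- all of which your outline reproduces, including the correct reason for excluding the L\"ob-like axioms.
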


%\subsection{Example derivations}

%The completeness results above ensure that nothing is missing from proposed axiomatizations. Nevertheless, it is instructive to see some direct syntactic reasoning. Define a \emph{$\tto$-logic} to be any set of formulas containing $\wk$ and closed under MP, \lnec\ and substitution. Observe that \lnec\ and \lmono\ imply that $\tto$ is monotone in the second argument and antimonotone in the first.

%\paragraph{The relationship between $\Box$ and $\strictif$} We recall again that in the present setting, $\Box\phi$ is defined as $\top \tto \phi$. It is trivial to observe that the $\Box$-fragment of any $\tto$-logic is a \lna{IntK_\Box}-logic as defined in standard references \cite{Vakarelov81:sl,Dosen85:sl,Sotirov84:ml,Wolter97:sl,WolterZ97:al,WolterZ98:lw}. Moreover, we have already observed that  
%$$\phi \strictif (\psi  \to \chi) \to (\phi \wedge \psi)  \strictif \chi$$ 
%and consequently $\Box (\psi  \to \chi) \to (\psi \strictif \chi)$ are examples of valid laws. Let us see a syntactic proof: % with $\phi \ded \psi$ denoting derivability of $\phi \to \psi$ in the basic system: \avred{vdash?}\tlred{That's an option}

\subsection{Non-derivations} \label{sec:nonderivations}

%\tadeusz{I still need to edit this one}

\nosmurfduo
Having a developed semantics, we are now in a position %to complement derivations given in \S~\ref{sec:axiomatizations} with
 to provide more examples of \emph{non}-derivations between formulas and \emph{non}-containments between logics.

\begin{example} \label{notltow}
Consider the formula $\phi_0 := \Box \bot \tto \bot \to \Box \bot$. It is easy to see that this formula is in
the closed fragment of {\loglh}. This means that $\phi_0$ is variable-free and  provable in
{\loglh}. We show that  $\phi_0$ is not in the closed fragment of {\loglg}.

%By Lemma~3.10 of \cite{iemh:prop05} the logic {\loglg} corresponds to all frames that 
%are gathering, i.e. $u \sqsubset v \sqsubset w \To v \preceq w$, and where $\sqsubset$ is upwards well-founded.

By Theorem \ref{th:minco}\ref{compgla},  {\loglg} is determined by Noetherian gathering frames. Consider the following (Noetherian gathering) model: %$\mathcal K$:

\medskip

\begin{tabular}{C{4cm}L{6cm}}
$    \vcenter{
        \xymatrix@C+1pc{
        & c \\
            a \ar@{~>}[r]   & b \ar@/^/@{~>}[u]  \ar@/_/@{->}[u] \\
        }
      } $  & \footnotesize  All points are $\sqsubset$-irreflexive.  We set $b \preceq c$, $a \sqsubset b \sqsubset c$ and the valuation is empty, i.e.,  $V(a)=V(b) = V(c) = \emptyset$. Note that $\mHCl$ holds in this model.

\end{tabular}      

\medent
 %The worlds of $\mathcal K$ are $a,b,c$.
%The relation $\preceq$ is $\verz{\tupel{a,a},\tupel{b,b},\tupel{b,c},\tupel{c,c}}$. The relation
%$\sqsubset$ is $\verz{\tupel{a,b},\tupel{b,c}}$. We have $V(a)=V(b) = V(c) = \emptyset$.
Clearly, $a \Vdash \Box \bot \tto \bot$, but $a \nVdash \Box \bot$. 
%On the other hand, our model satisfies the above  frame conditions for {\loglg}.
\end{example}
 
\begin{example}\label{eerlijkesmurf}
Consider the formula $\phi_1 := \Box \bot \tto p \to \Box (\Box \bot \to p)$.
Lemma \ref{lem:gwader}\ref{matobox} implies that  this formula is provable in {\loglc}. We prove that ${\loglh} \nvdash \phi_1$ 
%By Theorem~{3.31} of  \cite{iemh:prop05} the logic {\loglh} corresponds on finite frames to the following property:
%$ u\sqsubset v \sqsubset w \To \exists x\, (u \sqsubset x \text{ and } v \prec x\preceq w)$.
 by  considering the following model  satisfying the condition for finite frames for {\loglh} as stated in Theorem \ref{th:minco}\ref{compgwa} (and Figure \ref{fig:compl}):
 
 \medskip

\begin{tabular}{C{4cm}@{\qquad}L{5cm}}
$    \vcenter{
        \xymatrix@C+1pc{
        & d \kmodels p& c \\
            a \ar@{~>}[r] \ar@{~>}[ur]  & b \ar@/^/@{~>}[u]  \ar@/_/@{->}[u]  \ar@{->}[ur] \\
        }
      } $  & \footnotesize  All points are $\sqsubset$-irreflexive.  We set $\upclo{b}{\preceq} \supseteq \{c,d\}$, $a \sqsubset b$,  $b \sqsubset d$, $a \sqsubset d$ and the valuation is $V(d) = p$ and empty otherwise. 
      \end{tabular}    
      
  \medent
%\begin{itemize}
%\item
%The domain of $\mathcal M$ is $\verz{a,b,c,d}$.
%\item
%The relation $\preceq$ is $\verz{\tupel{a,a},\tupel{b,b},\tupel{b,c}, \tupel{b,d}, \tupel{c,c},\tupel{d,d}}$.
%\item
%The relation $\sqsubset$ is $\verz{\tupel{a,b}, \tupel{a,d},\tupel{b,d}}$. 
%\item
%We take
%$V(a) = V(b) = V(c) = \emptyset$,  $V(d) = \verz{p}$.
%\end{itemize}
 It is now easy to see 
that $a\Vdash \Box \bot \tto p$, but $a\nVdash \Box (\Box\bot \to p)$.
 %On the other hand,
%$\mathcal M$ does satisfy the condition for finite frames for {\loglh} as stated in Theorem \ref{th:mainco}\ref{compgwa}, i.e., $ u\sqsubset v \sqsubset w \To \exists x\, (u \sqsubset x \text{ and } v \prec x\preceq w)$.
\end{example}

\begin{example} \label{eerlijkesmurfalt}
We can improve Example~\ref{eerlijkesmurf} by providing a separating closed formula. Consider the formula
$$\phi_2 := \Box \Box \bot \tto \neg\neg\, \Box \bot \to \Box(\Box \Box \bot \to \neg\neg\,\Box\bot).$$  Again, Lemma \ref{lem:gwader}\ref{matobox} implies that  this formula is provable in {\loglc}. %It is easy to see \tadeusz{Is it?}
%that $\phi_2$ is provable in {\loglc}. 
 We prove that $\loglh\nvdash \phi_2$ by considering the following model satisfying the condition for finite frames for {\loglh} as stated in Theorem \ref{th:minco}\ref{compgwa} (and Figure \ref{fig:compl}):
 
 \medskip

\begin{tabular}{C{6cm}@{\qquad}L{4cm}}
$    \vcenter{
        \xymatrix@C+1pc{
        & f & & \\
        & d \ar@/_/@{~>}[u]  \ar@{->}[u]  & e & c \ar@{~>}[l]  \\
            a \ar@{~>}[urr] \ar@{~>}[rr] \ar@{~>}[ur] \ar@{~>}[uur]  & & b \ar@/_/@{~>}[lu]  \ar@/^/@{->}[lu] \ar@/_/@{~>}[luu] \ar@{->}[ur]   \ar@/_/@{~>}[u]  \ar@{->}[u]  \\
        }
      } $  & \footnotesize All points are $\sqsubset$-irreflexive. As usual, we do not draw the transitive and reflexive closure of $\to$ (which, recall, stands for the poset order $\preceq$). The valuation is  empty (and irrelevant anyway). 
      \end{tabular}    
      
  \medent
%\begin{itemize}
%\item
%The domain of $\mathcal M$ is $\verz{a,b,c,d,e,f}$.
%\item
%$\preceq := \verz{\tupel{a,a},\tupel{b,b},\tupel{b,c},\tupel{b,d},\tupel{b,e},\tupel{b,f}, \tupel{c,c}, \tupel{d,d},\tupel{d,f},\tupel{e,e},\tupel{f,f}}$.
%\item
%$\sqsubset := \verz{\tupel{a,b},\tupel{a,d},\tupel{a,f},\tupel{b,d},\tupel{b,e},\tupel{b,f},\tupel{c,e},\tupel{d,f}}$. 
%\item
%We take
%$V(a) = V(b) = V(c) = V(d) = V(e) = V(f) = \emptyset$.
%\end{itemize}
 It is now easy to see
that $a\Vdash \Box\Box \bot \tto \neg\neg\,\Box \bot$, but $a\nVdash \Box (\Box\Box\bot \to \neg\neg\,\Box \bot)$.
% On the other hand,
%$\mathcal N$ does satisfy the frame condition for {\loglh}. %In fact the frame of $\mathcal N$ is gathering and $\sqsubset$ is 
%transitive.
\end{example}

%\tadeusz{Btw, despite the commented out claim, $\sqsubset$ wasn't transitive in your description (no mention of $a \sqsubset e$). I added that.}

%\paragraph{A non-derivation} 
\begin{example} \label{ex:nondnegbox}
Recall that following Lemma \ref{th:emdi}, we noted that  in the disjunction-free setting, there is no one-variable formula $\phi(p)$  s.t. 
$p  \strictif q \eqdm \phi(p) \strictif (p \to q)$ and $\cpc \ded \phi(p)$. This follows from the fact that $$\phi_3 \deq  p \tto q \to (\neg\neg p \to p) \tto (p \to q)$$
is \emph{not} a theorem of \wk: 

\medskip

\begin{tabular}{C{4.5cm}@{\quad\qquad}L{5.5cm}}
$    \vcenter{
        \xymatrix@C+1pc{
        & c \kmodels p& d \\
            a \ar@{~>}[r] & b  \ar@{->}[u]  \ar@{->}[ur] \\
        }
      } $  & \footnotesize  All points are $\sqsubset$-irreflexive.  We set $q$ to be false everywhere. It is easy to check the antecendent of $\phi_3$ holds at $a$, but the consequent fails. It is worth noting that $\loglc$ holds in this model.
      %We set $\upclo{b}{\preceq} \supseteq \{c,d\}$, $a \sqsubset b$,  $b \sqsubset d$, $a \sqsubset d$ and the valuation is $V(d) = p$ and empty otherwise. 
      \end{tabular}    

\medskip\nosmurf
We can complement this observation by another one: it is not possible to improve the \wk-equivalence of Theorem \ref{th:emdi} by taking a one-variable intuitionistic formula  stronger that $p \vee \neg p$ as the antecedent of $\strictif$ replacing \lna{em}, as $$\phi_4 \deq  p \tto q \to (\neg\neg p \vee \neg p) \tto (p \to q)$$ is not a theorem of $\iP$ either:

\medskip

\begin{tabular}{C{4cm}@{\quad\qquad}L{5cm}}
$    \vcenter{
        \xymatrix@C+1pc{
        & c \kmodels p \\
            a \ar@{~>}[r] & b  \ar@{->}[u]   \\
        }
      } $  & \footnotesize  All points are $\sqsubset$-irreflexive.  We set $q$ to be false everywhere. Again the antecendent of $\phi_4$ holds at $a$, but the consequent fails; moreover, $\loglc$ holds in this model.
      %We set $\upclo{b}{\preceq} \supseteq \{c,d\}$, $a \sqsubset b$,  $b \sqsubset d$, $a \sqsubset d$ and the valuation is $V(d) = p$ and empty otherwise. 
      \end{tabular}         
%This time, we invite the reader to use semantics and find a finite $\tto$-intuitionistic Kripke model refuting this implication. \tadeusz{this to be developed now}
\end{example}
%\medskip

\begin{example} \label{ex:mhcnonequiv}

Here are  diagrams illustrating that $\mHCl$ is not a theorem of $\imHCb$; that is, strong frames which are only weakly $\sqsubset$-dominated. %In the left and the right one, 
 We use the convention that $\circ$ stand for a $\sqsubset$-reflexive loop and $\bullet$ for lack thereof. %For the middle diagram, recall that $\to$ (i.e., $\preceq$) is by default transitive, whereas $\leadsto$ (i.e., $\sqsubset$) is not:

\medskip

\begin{tabular}{C{5cm}@{\quad\qquad}C{5cm}}
$    \vcenter{
        \xymatrix{
        \bullet \\
            \circ \ar@{->}[u]  \\
        }
      } $  &
%$    \vcenter{
%        \xymatrix{
%        c \\
 %          %\bullet \ar@{->}[ru] \ar@/^/@{->}[r] \ar@/_/@{~>}[r] & 
%          b \ar@/^/@{->}[u] \ar@/_/@{~>}[u] \\
%           a \ar@/^/@{->}[u] \ar@/_/@{~>}[u] \\
 %       }
 %     } $  &
$    \vcenter{
        \xymatrix@-0.6pc{
            %{\vdots}\POS!U(-.5) \ar[r]  
            &  \bullet & \\         
        %\bullet \ar@/^/@{->}[u] \ar@/_/@{~>}[u]  \ar[r] &  \\
            %\bullet \ar@{->}[ru] \ar@/^/@{->}[r] \ar@/_/@{~>}[r] & 
          \bullet \ar@/^/@{->}[r] \ar@/_/@{~>}[r]  \ar[ur]  \ar@/_1pc/@{~>}[rr] & 
           \bullet \ar@/^/@{->}[r] \ar@/_/@{~>}[r] \ar[u] & \ar[lu]  {\hdots}\POS!L(+.5) 
        }
      } $             
\end{tabular}      

\medskip

\nosmurf
Arithmetical interpretation provides another interesting way of distinguishing between $\mHCb$ and $\mHCl$: \S~\ref{sec:hastar} noted that \mHCb\ is in the preservativity logic of ${\sf PA}^\ast$, whereas as stated in Theorem \ref{th:nomhcpast}, \mHCl\ does not belong to this system (the only problem is that neither does $\Di$).
\end{example}

%By now, the reader should feel comfortable enough to experiment with further derivations. And, in the presence of completeness results we have seen, one can easily travel back and forth between syntactic and semantic reasoning, just like in the well-known world of $\Box$.

%We already have some  infrastructure in place. But have we used to unleash a fantastic bestiary, a Jurassic park of species bread in a laboratory? Or do these creatures roam freely in nature? 

%We will soon see some full-bloodied and very much alive specimens. %Nevertheless, just to avoid confusion, 
%But let us begin with a somewhat more complicated link, which needs to be discussed here.

\begin{example} \label{ex:linavsb}
So far, we were seeing examples showing that principles for $\tto$ are often  properly stronger than their relatives formulated in terms of $\Box$ only. Recall that when introducing Fact \ref{fact:linnon}, we indicated it is not always the case, as witnessed by semi-linearity axioms. Here is a simple frame for $\loglh + \aLin$ where $\bLin$ fails (for both claims one can use Theorem \ref{th:mainco} and Figure \ref{fig:compl}, but they are straightforward to verify anyway). We are following the same conventions regarding $\sqsubset$-reflexive and $\sqsubset$-irreflexive points as in the preceding example: 
\medskip

%\begin{tabular}{C{4.5cm}@{\quad\qquad}L{5.5cm}}
%\[   
\qquad\qquad $\vcenter{
        \xymatrix{
        & \bullet & \bullet \\
            \bullet \ar@{~>}[r] \ar@{~>}[ur]  & \bullet  \ar@/^/@{~>}[u] \ar@/_/[u] \ar@{->}[ur] \\
        }}$
      %}$ %\]  %& \footnotesize  All points are $\sqsubset$-irreflexive.  %We set $q$ to be false everywhere. It is easy to check the antecendent of $\phi_3$ holds at $a$, but the consequent fails. It is worth noting that $\loglc$ holds in this model.
      %We set $\upclo{b}{\preceq} \supseteq \{c,d\}$, $a \sqsubset b$,  $b \sqsubset d$, $a \sqsubset d$ and the valuation is $V(d) = p$ and empty otherwise. 
%      \end{tabular}    
\end{example}

\begin{example} \label{ex:cfsep}
In order to separate $\CF$, $\aCF$ and $\aApp$, we provide an  example of a semi-dense frame which is not pre-reflexive (on the left) and a pre-reflexive one which are not almost reflexive (on the right):

\medskip

\begin{tabular}{C{5cm}@{\quad\qquad}C{5cm}}
$    \vcenter{
        \xymatrix@-0.5pc{
           \bullet  & \ar@{->}[l] \circ\\
           \ar@{~>}[u]
           \bullet \ar@{~>}[r] & \circ  \ar@{~>}[u]      
        }
      } $    &
$   \vcenter{
    \xymatrix{
            & \bullet  \ar@/_/@{~>}[d]\\
           \bullet \ar@{~>}[ur] 
            & \circ \ar@/_/[u]      
      } }$       
\end{tabular}      

\medskip \noindent
%Again, just as before, 
 Again, even without using completeness results of  Theorem \ref{th:mainco}, one can easily verify everything by hand (including finding suitable valuations).
%Corresponding valuations can be found easily; it is a straightforward exercise. 
\end{example}

\begin{example} \label{ex:plaanonb}
In \S~\ref{sec:comparr}, we will use the fact that \iPLLa\ does not contain \lb. As made clear by Theorem  \ref{th:mainco} and Figure \ref{fig:compl}, for this purpose we need a strong almost reflexive frame which is not brilliant: %Here is a straightforward example:

\medskip

\qquad\qquad$\vcenter{
        \xymatrix{
           \bullet  \ar@/_/@{->}[r] \ar@/^/@{~>}[r] 
            & \circ \ar@{->}[r]    
            & \bullet
        }
      }$
\end{example}

\section{Strength: arrows, monads, idioms and guards \qquad \CSp} \label{sec:arrows}
\nosmurfduo
We have already seen that arithmetical interpretation of modalities provides good motivation for studying intuitionistic logics with strict 
implication, including those with the strength axiom. This is a very good motivation indeed, but by no means the only one. Such formalisms 
have continuously reappeared in several recent lines of research, especially in theoretical computer science.
 %but also that of philosophy. In this section, we are going to quickly overview some of these  %connections and applications.

\subsection{Notions of computation and arrows} \label{sec:comparr}
\nosmurfduo
Surprisingly, the functional programming community discovered a variant of constructive strict implication at roughly the 
same time as it appeared in the context of preservativity. More specifically, ``(classical) arrows'' in the terminology of 
John Hughes \cite{Hughes00:scp} (see also \cite{LindleyWY08:msfp}) are in our terminology \emph{strong} Lewis arrows.  
Interestingly enough, their \emph{unary} cousins knows as %, which McBride and Paterson  baptised
 ``idioms'' or 
``applicative functors'' \cite{McbrideP08:jfp} were discovered  \emph{later} in this community, though a special subclass of applicative functors---to wit, 
\emph{monads} corresponding to $\iPLL$ modalities \cite{BentonBP98:jfp,FairtloughM97:ic,Kobayashi97:tcs}---has been 
enjoying continuous attention since the seminal paper of Moggi \cite{Moggi91:ic}. 
%as contrasted with ``idioms'' \cite{McbrideP08:jfp}. 
 %There is already a body of work on proof theory and categorical semantics of arrows \cite{Atkey08:msfp,JacobsHH09:jfp,LindleyWY08:msfp,LindleyWY10:jfp,Lindley14:wgp}. 
  A particularly convenient basis for our discussion contrasting arrows, idioms and monads is provided by Lindley et al. 
  \cite{LindleyWY08:msfp}, which we take as the main reference for this subsection. 
 
The connection between intuitionistic logics and functional programming is provided by the \emph{Curry-Howard correspondence}, also known as the \emph{Curry-Howard isomorphism} or \emph{proposition-as-types paradigm} (cf. \cite{SorensenU06:book}).\footnote{As pointed out by S{\o}rensen and Urzyczyn,  ``The \emph{Brouwer -
Heyting - Kolmogorov - Sch\"onfinkel - Curry - Meredith - Kleene - Feys -
G\"odel - L\"auchli - Kreisel - Tait - Lawvere - Howard - de Bruijn - Scott -
Martin-L\"of - Girard - Reynolds - Stenlund - Constable - Coquand - Huet - \dots - isomorphism} might be a more appropriate name, still not including
all the contributors.'' \cite[p. viii]{SorensenU06:book} Indeed, the Curry-Howard isomorphism provides the most commonly accepted specification of the Brouwer-Heyting-Kolmogorov interpretation of intuitionistic connectives. We could thus only half-jokingly argue that this subsection is yet another place in our paper where Lewis meets Brouwer.} While the details are outside of the scope of this paper, the shortest outline is that
\begin{itemize}
\item (intuitionistic) formulas correspond to \emph{types}, 
\item logical connectives correspond to type operators/constructors,
\item logical axioms correspond to \emph{inhabited} types and hence deciding theoremhood corresponds to the type inhabitation problem,
\item logical proofs---e.g., in a variant of a natural deduction system or in a Hilbert-style system---are encoded by \emph{proof terms}---in a variant of lambda calculus or of combinatory logic---understood as a (functional) programming language and hence
\item proof \emph{normalization} corresponds to reduction of these terms, understood as representing \emph{computation}. 
\end{itemize}

\nosmurf
In particular, ordinary intuitionistic implication $\phi \to \psi$ corresponds to forming the  \emph{function space} of programs (proofs) 
which take data from (proofs for) $\phi$ as their input and produce members of (proofs for) $\psi$ as their output. 
The introduction rule for $\to$ corresponds to $\lambda$-abstraction and its elimination rule 
(i.e., ordinary Modus Ponens) corresponds to function application. 

Nevertheless, one may ask: are ``computations'' exactly co-extensional with ``members of function space''? 
In the words of Ross Paterson
\begin{smquote}
Many programs and libraries involve components that are Ôfunction-likeÕ, in that
they take inputs and produce outputs, but are not simple functions from inputs
to outputs\dots [S]uch ``notions of computation''
defin[e] a common interface, called ``arrows''. \cite[p. 201]{Paterson03:fop}
\end{smquote}

\nosmurf
What are the laws such a notion of computation is supposed to satisfy? The inhabitation laws of the calculus of ``classic arrows''  
 %as presented by Lindley et al.
  \cite[Fig. 4]{LindleyWY08:msfp} in a disjunction-free language are given by the following axioms:\footnote{Lindley et al. \cite{LindleyWY08:msfp} call these axioms \lna{arr}, \lna{>\!>\!>} and \lna{first}, respectively. They also use $\leadsto$ in place of $\tto$. }

\begin{itemize}
\item[\lSalt] $(\phi \to \psi) \to \phi \strictif \psi$,
\item[\lii]	$ \phi \tto \psi \to \psi \tto \chi \to \phi \tto \chi$,  
\item[\liiialtalt]  $ \phi \tto \psi \to (\phi \wedge \chi) \tto (\psi \wedge \chi)$.
\end{itemize}

\noindent
Thanks to Lemmas \ref{th:basicia} and \ref{lem:lsder}, we know it is just an axiomatization for $\ws^-$!

\begin{quest} \label{que:diarrow}
As Lindley et al. \cite{LindleyWY08:msfp}  work in a type theory without the co-product operator (i.e., the Curry-Howard counterpart of disjunction), the issue of validity of $\Di$ simply does not arise. Nevertheless, given the problematic status of $\Di$ in preservativity logics of some theories (cf. Open Question \ref{que:dichar}), it seems a valid question whether $\Di$ should be a law imposed on all notions of computation---and if not, how to characterize those where it holds. It is an inhabited type for both \emph{arrows with apply} (\emph{monads}) and \emph{static arrows} (\emph{idioms}), as follows from the discussion below and, correspondingly, Lemmas \ref{lem:derpll}\ref{plaadi} and \ref{th:mhcderiv}\ref{mhcdi}.
\end{quest}

\nosmurf
What is the status of the \lb\ law then (or any of its equivalent forms)? As it turns out, the Curry-Howard interpretation provides another rationale for considering (strong) Lewis arrows not determined by an unary $\Box$. Lindley et al. \cite{LindleyWY08:msfp} call arrows satisfying \lb\ \emph{static arrows} and show that such arrows correspond to the
 ``idioms'' or ``applicative functors'' of McBride and Paterson  \cite{McbrideP08:jfp}. Indeed, the inhabitation laws of the calculus for idioms \cite[Fig. 3]{LindleyWY08:msfp} are exactly those of $\iS$. %But, as the very word ``static'' implies, 
 This is, however, only a special subclass of computations encoded by arrows: namely those computations ``in
which commands are \emph{oblivious} to input'' \cite{LindleyWY08:msfp}. Lindley \cite{Lindley14:wgp} rephrases this claim to the effect that idioms are distinguished by their static approach to \emph{data flow}.

%\tadeusz{An important class of applications The study of \emph{Curry-Howard correspondence}, analysis of computational content of proofs and program extraction.}

%\avblue{This sentence sounds strange. Name the subclass.}
 However, as said above, just a special subclass of applicative functors is by far the most important from a programming point of view: that of
 \emph{\textup{(}strong\textup{)} monads}. This subclass of 
 idioms whose type system satisfies in addition  the inhabitation law corresponding to $\CF$ (and, obviously, a number of  equalities between 
 proof terms, which are not of concern to us here) provides the most popular framework for \emph{effectful computations}. In other words, the Curry-Howard counterpart (the logic of type inhabitation) of the 
 calculus for (strong) monads proposed by Moggi under the name of \emph{computational metalanguage} \cite{Moggi91:ic} is $\iPLL$:  
 \emph{propositional lax logic} \cite{BentonBP98:jfp,FairtloughM97:ic,Kobayashi97:tcs}.

Monads can be shown \cite{Hughes00:scp,LindleyWY08:msfp} to be in 1-1 correspondence with \emph{higher-order arrows} or \emph{classical arrows with apply}. To wit, these are arrows satisfying the law:

\begin{itemize}
\item[\aApp]
 $(\phi \wedge (\phi \tto \psi)) \tto \psi$.
\end{itemize}

\nosmurf
Thus, by Lemma \ref{lem:derpll}, the logic of type inhabitation for this subclass of arrows is precisely $\iPLLa$ (\emph{propositional logic of arrows with apply}). Lindley et al. present a two-context natural deduction system for both $\ws^-$ and $\iPLLa$, whose proof-term assignment is based on a distinction between \emph{terms} and \emph{commands} and argue that higher-order arrows are ``\emph{promiscuous} (in the broader sense of \emph{undiscriminating})'', as the ``apply'' construct  corresponding to $\aApp$ bridges this distinction carefully maintained in the calculus for $\ws^-$  (which can be thus called \emph{meticulous}). Another perspective is offered by Lindley \cite{Lindley14:wgp}: higher-order arrows are distinguished by their dynamic approach not only to \emph{data flow}, but also to \emph{control flow}.

\begin{remark} \label{rem:laxembed}
%The logical aspect of 
 The correspondence between monads and arrows with apply should \emph{not} be %understood properly. %It works somewhat differently than in the case of the correspondence 
 %It works differently
  conflated with the one between idioms (whose logic of type inhabitation is $\ws^-$) and  static arrows, whose logic of type inhabitation is $\bs$: i.e., a system where $\phi \tto \psi$ is definable as $\Box(\phi \to \psi)$. In contrast, $\lb$ is obviously not valid in $\iPLLa$ (cf. Example \ref{ex:plaanonb}) and the $\Box$-only fragment of $\iPLLa + \lb$ is a $\Box$-logic stronger than $\iPLL$; e.g., we have that $$\iPLLa + \lb \ded \Box(\Box\phi \to \phi)$$  and one can easily check that $\iPLL \not\ded \Box(\Box\phi \to \phi)$. Instead, $\iPLLa$ is embedded into $\iPLL$ by interpreting $\phi \tto \psi$ as $\phi \to \Box\psi$, cf. \cite[\S~6]{LindleyWY08:msfp}.  In fact,  we can derive this fact syntactically from Lemma \ref{lem:derpll}\ref{plaacollapse} above!
\end{remark}

\begin{remark} \label{rem:lewapp}
To finish this subsection on another theme from Lewis, note that  \emph{Symbolic Logic} \cite{Lewis32:book} had this to say about $\aApp$ (appearing therein as postulate \lna{11.7} in the main text and in the famous Appendix II as \lna{B7}):
\begin{smquote}
It might be supposed that this principle would be implicit in any set of assumptions for a calculus of deductive inference. As a matter of fact, \lna{11.7} cannot be deduced from other postulates. \cite[p. 125]{Lewis32:book}
\end{smquote}
The last sentence\footnote{It is proved later on p. 495 of  \cite{Lewis32:book} using a matrix proposed by Parry.} is pertinent indeed: $\aApp$ is the only axiom  of the smallest system Lewis was interested in, i.e., \lna{S1}, which is not a theorem of $\iP^-$!
\end{remark}

\subsection{Modalities for guarded \textup{(}co\textup{)}recursion} \label{sec:guarded}
\nosmurfduo
Another area of recent computer science where strong intuitionistic modalities have found numerous  applications  is the study of  guarded (co)recursion: %In type systems of \emph{programming languages}, i.e., on the \emph{object level}
 as an important tool to
ensure \emph{productivity} in (co)programming with \emph{coinductive types} \cite{KrishnaswamiB11:icfp,KrishnaswamiB11:lics,KrishnaswamiBH12:popl,AtkeyMB13:icfp,Mogelberg14:lics,BizjakM15:entcs,CloustonBGB15:fossacs} and, on the metalevel,  in semantic reasoning about programs involving \emph{higher-order store} or a combination of \emph{impredicative
quantification} with \emph{recursive types} \cite{DreyerAB11:lmcs,BirkedalMSS12:lmcs,BizjakBM14:rtlc,SvendsenB14:esop,SieczkowskiBB15:itp,JungSSSTBD15:popl,BizjakBGCM16:fossacs}. 

The logics of type inhabitation of these systems are mostly extensions of $\iSLb$,  involving either first- or higher-order quantifiers (corresponding to dependent, polymorphic or impredicative types) or additional entities like \emph{clock variables} \cite{AtkeyMB13:icfp,Mogelberg14:lics,BizjakM15:entcs,BizjakBGCM16:fossacs}, or (a constructive analogue of) the universal modality \cite{CloustonBGB15:fossacs}. Nakano  \cite{Nakano00:lics,Nakano01:tacs} % introduced the very idea of using strong L\"ob $\Box$ as an
proposed using the axioms of $\iSLb$ for \emph{approximation modality}  crediting Sambin-de Jongh-style  results on elimination of fixpoints as one of his motivations (see \cite[\S~3]{Litak14:trends} for a detailed discussion of this point); more recent discussion of Nakano-style systems can be found in Abel and Vezzosi \cite{AbelV14:aplas} and Severi \cite{Severi2017}. The idea of using such modalities also in the metalanguage for reasoning about \emph{semantics} of programs has been popularized by Appel et al. \cite{AppelMRV07:popl}, who were nevertheless working  with the axiom $\biv$ rather than $\bvii$ seen in most later references.

As the above overview makes clear, the area has grown too large to allow an adequate summary in this paper. See \cite{Litak14:trends} for more information and \cite{MiliusL17:fi} for an overview of \emph{models of guarded \mbox{\textup{(}co\textup{)}}recursion}, i.e., from our point of view, categorical models for proof systems for fragments of such logics. %The main question to ask in this paper is 
 %We  ask, however, 
  Our question here is whether the Lewis arrow naturally occurs in this context. %such settings.   %and a tentative examination %of this question
 %leads to somewhat interesting conclusions. 
%\begin{itemize}
%\item 
 %Interestingly, despite the fact that---as discussed in \S~\ref{sec:comparr}---strong arrows are already rather well-recognized in an adjacent area of computer science, few references explicitly address the issue. %seem to either explicitly discuss potential connections between these two areas or clarify why such connections would be spurious. 

In fact, %it does occur in the literature. %happen in %one does not need to search very hard to see $\tto$ as a natural primitive in  type systems of
  %at least some of the above references. 
 starting from the original paper of Nakano \cite{Nakano00:lics} and even more so in references like Abel and Vezzosi \cite{AbelV14:aplas}, the introduction/elimination/inference rules governing the behaviour of  such an``approximation'' or ``delay'' modality are often formulated \emph{combining} $\Box$ and $\to$. This point is perhaps most explicitly addressed by Clouston and Gor\'{e} \cite{CloustonG15:fossacs}, a reference highly relevant from our point of view, as it does use $\tto$ (denoted therein as $\twoheadrightarrow$), claiming moreover:
\begin{smquote}
The main technical novelty of our sequent calculus is that we leverage the
fact that the intuitionistic accessibility relation is the reflexive closure of the
modal relation, by decomposing implication into a static (classical) component
and a dynamic `irreflexive implication' $\tto$ that looks forward along the modal
relation. In fact, this irreflexive implication obviates the need for $\Box$ entirely, as $\Box\phi$ is easily seen to be equivalent to $\top\tto\phi$. Semantically, the converse of this
applies also, as $\phi \tto \psi$   is semantically equivalent to $\Box(\phi \to \psi)$, but the $\tto$ connective is a necessary part of our calculus. We maintain $\Box$ as a first-class
connective in deference to the computer science applications and logic traditions
from which we draw, but we note that formulae of the form $\Box(\phi \to \psi)$ are
common in the literature---see Nakano's ($\to E$) rule  \cite{Nakano00:lics}, and even more directly
the $\circledast$ constructor of \cite{BirkedalM13:lics}. We therefore suspect that treating $\tto$ as a first-class
connective could be a conceptually fruitful side-benefit of this work  (\cite{CloustonG15:fossacs}, in a notation adjusted to this paper).
\end{smquote}
%\end{itemize}

\nosmurf
Clouston and Gor\'{e} \cite{CloustonG15:fossacs} provide a sequent calculus for a logic called here $\iKMlin$.  The focus on this logic is motivated by Litak's observation \cite{Litak14:trends} that $\iKMlinb$ is the propositional fragment of the Mitchell-B\`enabou logic of the \emph{topos of trees} proposed as a model of guarded (co)recursion  by Birkedal and coauthors \cite{BirkedalMSS12:lmcs} and used ever since \cite{Mogelberg14:lics,CloustonBGB15:fossacs,Severi2017}. %Strictly speaking, the logic of topos of trees was originally formulated using unary modality, but 

%While Clouston and Gor\'{e} \cite{CloustonG15:fossacs}   justify the move from $\iKMlinb$ to $\iKMlin$ simply by the fact that their semantic clause for $\tto$ makes \lb\ valid, %in hindsight we may identify a deeper reason: in our paper, it is given by Lemma \ref{lem:kmlin}.
% we have already learned in
  Let us note here that Lemma \ref{lem:kmlin} implies that \emph{any} semantics for $\iKMlin$ must make $\lb$ valid: in other words, $\iKMlin$ can be just seen as another syntactic presentation of $\iKMlinb$. 
  %, just like over classical classical logic, % Postulating $\iKMlinb$ means that
   %the only $\tto$ which satisfies \iP\ axioms and makes $\Box$ definable as $\top \tto \phi$ must satisfy \lb\ anyway. %\tadeusz{Mention the r\^ole of the converse of \lna{K} since Nakano. Seems I need to add it to our axiom list} 
 However, Lemma \ref{lem:kmlin} requires all the axioms of $\iKMlinb$ and when studying broader classes of models of guarded (co)recursion \cite{MiliusL17:fi}, more flexibility in adding $\tto$ is possible.

\begin{quest} \label{que:guarded}
Are there natural applications of $\tto$-logics not including the $\lb$ axiom in terms of guarded (co)recursion? And, more broadly,  do arithmetically relevant principles discussed in this paper have a computational interpretation?
\end{quest}

Let us add that, while Gentzen-style systems are not our main interest here, the above quote from  Clouston and Gor\'{e} \cite{CloustonG15:fossacs} hints at another motivation for studying constructive $\tto$. Namely, even in the setups which make it a definable connective, it can still prove a more convenient primitive from a proof-theoretical point of view than $\Box$ is. %---but this is certainly a motivation worth further investigation.

\subsection{Intuitionistic epistemic logic} \label{sec:intepi}
\nosmurfduo
Finally, let us briefly mention yet another recent area of research where strong intuitionistic modalities made a surprising 
appearance: in the work of Artemov and Protopopescu on  intuitionistic epistemic logic \cite{ArtemovP16:rsl}, presented also in 
this collection.\footnote{For other approaches to intuitionistic epistemic logic cf.  also Williamson \cite{Williamson92:jpl} or Proietti \cite{Proietti2012}
 and for a more dynamic take, see Kurz and Palmigiano \cite{KurzP13:lmcs}.} These authors work with unary $\Box$ and call $\lS$ the principle of ``co-reflection''. 
The minimal system denoted by these authors as \lna{IEL^-}\ corresponds to  $\iS$ in our notation, their  \lna{IEL}\ is obtained by 
adding\footnote{Litak \cite{Litak14:trends} denotes  $\neg\Box\bot$ as \lna{(nv)}---\emph{non-verum}.} $\neg\Box\bot$ and \lna{IEL^+} arises by adding \CF---i.e., is an extension of $\iPLL$.

A proof-theoretic justification for these systems is  presented in terms of the Brouwer-Heyting-Kolmogorov interpretation. 
This seems to provide a natural connection with references discussed  in \S~\ref{sec:comparr}---but, curiously, none of them 
seems to be mentioned by Artemov and Protopopescu, neither the extensive literature on %computational significance of 
 $\iPLL$, 
nor the r\^ole of $\iS$ as the logic of applicative functors (idioms, prenuclei \dots). %\emph{classical arrows}).
  We leave an epistemic interpretation of strong arrows and extensions of $\ws$ as a promising subject for future study.

%From our point of view, two remarks are in order. First, despite the fact 

%%%%%%%%%%%%%%%%%%%%%%%%%%%%%%%%%

\section{Applications of preservativity \qquad\protect\AIIp} 
\label{sec:apppre}

\nosmurfduo
Having briefly overviewed other motivations for studying constructive $\tto$, let us return to our main one. Preservativity has many applications. A number of these applications can be found in \cite{viss:eval85} and  \cite{viss:prop94}. 
We describe one of the main results of those papers in \Subsection~\ref{nnil}.
In \Subsection~\ref{sec:falsity}, we show how one can capture the invalidity of the law of excluded middle in terms of preservativity.
We illustrate how this result imposes a constraint on possible preservativity logics of theories.
%In \Subsection~\ref{dejo}, we provide yet another proof of De Jongh's Theorem for $\Sigma^0_1$-substitutions. This result has indeed
%many proofs, but each proof has its own interest. Moreover, each proof yields its own distinctive extra information.

\subsection{{\sf NNIL}}\label{nnil}
\nosmurfduo
 The {\sf NNIL}-formulas (\emph{No Nestings of Implications to the Left}  
\cite{viss:eval85,viss:prop94}) are %a class of propositional formulas. They are
  defined as follows:
 \begin{itemize}
%\item
%$\alpha ::=  p_0 \mid p_1 \mid \ldots$
\item
$\phi ::=  \bot \mid \top \mid p \mid  (\phi \wedge \phi) \mid (\phi \vee \phi)  \mid (p \to \phi)$
\end{itemize}
%
 %The {\sf NNIL}-formulas were introduced in
%\cite{viss:eval85}. See also  \cite{viss:prop94}. 

It is easy to see that there are only finitely many  nonequivalent {\sf NNIL}-formulas on finitely many variables. %modulo {\sf IPC}-provable equivalence.
Let $\vec p$ be the propositional variables of $\phi$ and define 
$\phi^\star$ as the disjunction of representatives of all {\sf IPC}-equivalence classes of {\sf NNIL}-formulas $\psi$ in the variables $\vec p$ such that ${\sf IPC}\vdash \psi \to \phi$.
 %We can make sense of this definition by picking one representative from each {\sf IPC}-equivalence class of the $\psi$. 
Using the Interpolation Theorem, we see that, for any {\sf NNIL}-formula $\chi$, we have ${\sf IPC} \vdash \chi \to \phi$ if and only if
${\sf IPC} \vdash \chi \to\phi^\star$. So, $\phi^\star$ is the best {\sf NNIL}-approximation from below of $\phi$. In more fancy terms, $(\cdot)^\star$ is
the right adjoint of the embedding functor of the preorder category of the {\sf NNIL}-formulas %ordered by {\sf IPC}-provable implication
 into
the preorder category of all propositional formulas, both preorders being {\sf IPC}-provable implication.

%In \cite{viss:eval85} (see also \cite{viss:prop94}), the following theorem was proved: 

\begin{theorem}[\cite{viss:eval85,viss:prop94}]\label{brilsmurf}
For any function $f$ from the propositional variables to $\Sigma^0_1$-sentences, 
$\phi^f \tto_{\sf HA} (\phi^\star)^f$. Hence, if ${\sf HA} \vdash \phi^f$, then ${\sf HA} \vdash (\phi^\star)^f$.
\end{theorem}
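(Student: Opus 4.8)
The plan is to prove the core preservativity statement $\phi^f \tto_{\sf HA} (\phi^\star)^f$ and then read off the ``Hence'' clause as an immediate corollary: since $\top \tto_{\sf HA} A$ is $\iea$-provably equivalent to $\Box_{\sf HA} A$ (noted in \S\ref{sec:iprel}), and since ${\sf HA}\vdash\psi$ precisely when $\Box_{\sf HA}\psi$ holds at the metalevel via provable $\Sigma^0_1$-completeness, applying $\top\tto_{\sf HA}(\cdot)$ to both sides of $\phi^f\tto_{\sf HA}(\phi^\star)^f$ together with transitivity (\lii) and the fact ${\sf HA}\vdash\phi^f$ yields ${\sf HA}\vdash(\phi^\star)^f$. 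So the whole weight is on the preservativity claim itself.

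For the main claim, I would use the Orey--H\'ajek-style characterization of $\Sigma^0_1$-preservativity proved in Theorem~\ref{tuinsmurf}: $A\tto_{\sf HA} B$ iff for all $n$, ${\sf HA}\vdash \Box_{{\sf HA},n}A \to B$. So it suffices to show, uniformly in $n$, that ${\sf HA}\vdash \Box_{{\sf HA},n}(\phi^f) \to (\phi^\star)^f$. Equivalently, reasoning inside ${\sf HA}$: assuming a proof of $\phi^f$ from axioms of ${\sf HA}$ with G\"odel number $\le n$, derive $(\phi^\star)^f$. The natural route is a \emph{Kripke-model / de Jongh--style} argument combined with the arithmetical content of {\sf NNIL}. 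The key semantic fact about {\sf NNIL}-formulas (established in \cite{viss:eval85,viss:prop94} and also via the subframe/$n$-universal-model analysis) is that a {\sf NNIL}-formula $\psi$ with ${\sf IPC}\vdash\psi\to\phi$ is, up to {\sf IPC}-provable equivalence, exactly characterized by being preserved under taking submodels of Kripke models; and $\phi^\star$ collects all of them. Arithmetically, the point is that $\Sigma^0_1$-substitution instances behave ``submodel-stably'': if $S$ is a $\Sigma^0_1$-sentence and ${\sf HA}\vdash S\to\phi^f$, one can build, using the self-realizing character of $\Sigma^0_1$-sentences and the Smory\'nski-style construction of Kripke models for {\sf HA}, a situation forcing a {\sf NNIL}-lower-bound of $\phi$, hence $\phi^\star$, to be forced under the relevant valuation --- whence $S\to(\phi^\star)^f$ is provable.

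Concretely, the steps in order: (1) reduce, via Theorem~\ref{tuinsmurf}, to showing ${\sf HA}\vdash\Box_{{\sf HA},n}(\phi^f)\to(\phi^\star)^f$ for each $n$; (2) by $\iea$-verified $\Sigma^0_1$-completeness and the evaluation machinery of \cite{viss:eval85} (substitution of $\Sigma^0_1$-sentences into propositional formulas), move to a propositional-combinatorial core: show that whenever $S$ is $\Sigma^0_1$ and ${\sf HA}\vdash S\to\phi^f$ then ${\sf HA}\vdash S\to(\phi^\star)^f$; (3) prove this by contraposition using a suitable Kripke counter-model: if ${\sf HA}\nvdash S\to(\phi^\star)^f$, obtain a model of {\sf HA} (or of its ${\sf IPC}$-shadow on the relevant finite set of variables) in which $S^{\text{forced}}\wedge\neg(\phi^\star)^f$ holds at some node; (4) invoke the characterization of $\phi^\star$ as the greatest {\sf NNIL}-formula below $\phi$ to find a submodel, generated from that node, still forcing $S$ (here $\Sigma^0_1$-sentences are ``upward absolute'' because they are equivalent to forced formulas of a simple shape) but refuting $\phi$, contradicting ${\sf HA}\vdash S\to\phi^f$; (5) conclude, and assemble the corollary.

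The hard part will be step~(4): making precise the interface between the syntactic {\sf NNIL}-approximation operator $(\cdot)^\star$ and the \emph{submodel-preservation} semantics, and checking that the arithmetical $\Sigma^0_1$-substitution instance of a {\sf NNIL}-formula genuinely survives the passage to the generated submodel. This requires the evaluation/substitution lemmas of \cite{viss:eval85,viss:prop94} (that $\Sigma^0_1$-instances of {\sf NNIL}-formulas are, \emph{provably in} {\sf HA}, preserved in the right direction) and the fact that the relevant Kripke semantics is adequate for the fragment in question; once those are in hand, the transitivity/monotonicity bookkeeping (\li, \lii, \liii) is routine. I would cite \cite{viss:eval85,viss:prop94} for the {\sf NNIL}-specific input rather than reprove it.
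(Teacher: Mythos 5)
The paper does not actually reprove this theorem: it imports it from \cite{viss:eval85,viss:prop94}, and the only structural information it gives about the real proof is in the paragraph immediately following the statement — the argument is a \emph{syntactic induction on $\phi$} (essentially the correctness proof of the recursive algorithm computing $\phi^\star$), in which the induction hypothesis has to be strengthened from the provability implication to the preservativity statement itself; that is the ``induction loading'' the authors refer to. The arithmetical ingredients that drive that induction are provable $\Sigma^0_1$-completeness ($S \to \Box_{\sf HA} S$ for $\Sigma^0_1$-sentences $S$) together with the preservativity principles of $\loglc$, not a model-theoretic analysis. Your proposal takes a genuinely different, semantic route, and as sketched it has a gap that I do not think can be patched without effectively redoing the syntactic proof.

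The gap is concentrated in your step (4). First, you propose to pass to a submodel ``generated from that node'': but generated submodels preserve the forcing of \emph{all} formulas at the nodes they retain, so they can never separate $\phi$ from $\phi^\star$. The submodel-preservation characterization of {\sf NNIL} concerns \emph{arbitrary} sub-posets, and this is where the second, more serious problem appears: an arbitrary sub-poset of a Kripke model of {\sf HA} is in general not a model of {\sf HA}, so exhibiting a node of such a substructure that forces $S$ but refutes $\phi^f$ does not contradict ${\sf HA}\vdash S\to\phi^f$. To make the semantic strategy work one would need a de Jongh/Smory\'nski-style construction producing, from the propositional countermodel, an actual Kripke model of {\sf HA} realizing the required valuation by $\Sigma^0_1$-sentences — and that construction is precisely where the special role of $\Sigma^0_1$ (via $S\to\Box_{\sf HA}S$) must enter. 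Your sketch never uses provable $\Sigma^0_1$-completeness at all, which is a warning sign: the theorem is false for substitutions by arbitrary sentences, so any correct proof must invoke it somewhere. Your step (1) reduction via Theorem~\ref{tuinsmurf} and your treatment of the ``Hence'' clause are fine (the latter is even simpler than you make it: take $S=\top$ in the definition of $\tto_{\sf HA}$), but the core of the argument is not yet there.
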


\nosmurf
The original aim of  \cite{viss:eval85} was to show: if ${\sf HA} \vdash \phi^f$, then ${\sf HA} \vdash (\phi^\star)^f$.
However, it turned out that the inductive assumption requires %needed to establlsh the desired result asks for 
 the stronger statement involving
preservativity. Thus, preservativity was discovered as a tool for induction loading.

\medent
Theorem~\ref{brilsmurf} can be reformulated in terms of admissible consequence.
We define:
\begin{itemize}
\item
 $\phi \vvdash_{{\sf HA},\Sigma^0_1} \psi$ if for any $\Sigma^0_1$-sub\-sti\-tu\-tion $f$, ${\sf HA} \vdash \psi^f$ whenever ${\sf HA} \vdash \phi^f$.
  \end{itemize}
 
 \medent
Thus, $\phi \vvdash_{{\sf HA},\Sigma^0_1} \psi$ means that $\phi / \psi$ is an admissible rule for
$\Sigma^0_1$-sub\-sti\-tu\-tions over {\sf HA}.
%\nosmurf
 Theorem~\ref{brilsmurf} now simply says: $\phi \vvdash_{{\sf HA},\Sigma^0_1} \phi^\star$. 
 It is optimal in the sense that, whenever
$\phi \vvdash_{{\sf HA},\Sigma^0_1} \psi$, we have ${\sf IPC} \vdash \phi^\star \to \psi$  \cite{viss:prop94}. Thus, 
\[ \phi \vvdash_{{\sf HA},\Sigma^0_1} \psi \text{ iff } \phi^\star  \vdash_{\sf IPC}   \psi.\]
If we view $\vvdash_{{\sf HA},\Sigma^0_1}$ and $\vdash_{\sf IPC}$ are pre-ordering categories, this says that
$(\cdot)^\star$ is the left adjoint of the embedding functor of  $\vvdash_{{\sf HA},\Sigma^0_1} $ in $\vdash_{\sf IPC}$.

\medent
The {\sf NNIL}-formulas play an important r\^ole in: the characterization of the provability logic of {\sf HA} for $\Sigma^0_1$-substitutions
by Ardeshir and Mojtahedi  \cite{arde:sigm14}, the study of infon logic \cite{cotr:tran13} and several other contexts \cite{rena:inte89,viss:nnil95,yang:intu08}. 

%\medent The class of {\sf NNIL}-formulas has many other interesting features. We refer the reader to e.g. 
%\cite{rena:inte89},  \cite{viss:nnil95}, \cite{yang:intu08}. %\tadeusz{THIS IS NOT FAN YANG'S PHD, IT'S MASTER'S THESIS}
%Finally, the  {\sf NNIL}-formulas play a r\^ole in the study of infon logic. See e.g. \cite{cotr:tran13}.

%%%%

\subsection{On the falsity of \emph{Tertium non Datur}} \label{sec:falsity}
\nosmurfduo
In intuitionistic propositional logic, we have the principle $\neg\,\neg\, (\phi \vee \neg \, \phi)$.
As a consequence, there is no direct logical expression of the constructive insight of the invalidity of the law of excluded 
middle.\footnote{We can consistently add $\neg\, \forall x\, (A(x) \vee \neg\, A(x))$ to constructive arithmetic for certain $A$.
E.g., {\sf HA} plus a weak version of Church's Thesis (cf. Appendix \ref{sec:real}) proves $\neg\, \forall x \, ({x\cdot x\downarrow} \vee {x \cdot x \uparrow})$.}
The connective  $(\cdot) \tto \bot$ is a  weaker form of negation, say $\sim$. Can we have,
 provably in $\iea$, that $\sim_{\sf HA} (A\vee \neg\, A)$, for some suitable $A$?
 
 \medent
 We will show that, for a wide range of theories $T$,
 we can indeed find such a sentence $A$, %such that, $\iea$ provably, $\sim_T (A\vee \neg\, A)$.
 including $T$ being {\sf HA}, ${\sf HA}+{\sf MP}$ or ${\sf HA}+{\sf ECT}_0$,
 ${\sf HA}^\ast$.
 %\medent
 We write:
 \begin{itemize}
 \item
 $T \leq U$ if $\iea$ verifies that $T$ is a subtheory of $U$.
 \end{itemize}
  
 \nosmurf
 Suppose  $\iea$ verifies  {\liv} for $U$, i.e. suppose that {\liv}  is in $\Lambda_{\InF{1}{U},\iea}$.
 We note that over $\iea$ we have  $(\Box_U\bot \vee \neg \,\Box_U \bot) \tto_U \Box_U\bot$. 
 This is in the desired direction since we can consider $\Box_U \bot$ as a weak form
 of falsity. However, we cannot get the desired result as long as we stay with $\Sigma^0_1$-sentences.
 
 \begin{theorem}
 Consider any consistent theory $U$.
There is, verifiably in $\iea+\Diamond_U\top$, no $\Sigma_1$-sentence $S$ such that $\sim_U(S\vee \neg\, S)$.
\end{theorem}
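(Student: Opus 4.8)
The plan is to derive a contradiction from the assumption $\sim_U(S\vee\neg\,S)$ for a $\Sigma^0_1$-sentence $S$, using the defining clause of $\tto_U$ \emph{twice}: once with $S$ itself as the auxiliary $\Sigma^0_1$-sentence, and once with $\top$.

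First I would unfold the definitions. Recall that $\sim_U(S\vee\neg\,S)$ abbreviates $(S\vee\neg\,S)\tto_U\bot$, which by definition says: for every $\Sigma^0_1$-sentence $S'$, if $U\vdash S'\to(S\vee\neg\,S)$ then $U\vdash S'\to\bot$, i.e.\ $U\vdash\neg\,S'$. Since $S$ is itself $\Sigma^0_1$ and trivially $U\vdash S\to(S\vee\neg\,S)$ by $\vee$-introduction, instantiating the clause at $S':=S$ yields $U\vdash\neg\,S$.

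Second, I would feed this back in. From $U\vdash\neg\,S$ and the constructive validity of $\neg\,S\vdash S\vee\neg\,S$ we get $U\vdash S\vee\neg\,S$, hence $U\vdash\top\to(S\vee\neg\,S)$. As $\top$ (translated by, e.g., $0=0$) is a $\Delta_0$- and hence $\Sigma^0_1$-sentence, instantiating the clause at $S':=\top$ gives $U\vdash\neg\,\top$, that is $U\vdash\bot$, i.e.\ $\Box_U\bot$. Together with $\Diamond_U\top=\neg\,\Box_U\bot$ this is a contradiction.

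The only remaining point is to check that this argument goes through \emph{inside} $\iea+\Diamond_U\top$, so that the conclusion reads ``$\iea+\Diamond_U\top$ proves that no $\Sigma^0_1$-sentence $S$ satisfies $\sim_U(S\vee\neg\,S)$'' — whether understood for each fixed standard $S$ or uniformly, with $S$ ranging over ($\iea$-decidably $\Sigma^0_1$) codes. Each step above records a routine fact of formalized provability logic already available in $\iea$: ``$\Box_U(S\to(S\vee\neg\,S))$'', ``$\Box_U\neg\,S\to\Box_U(S\vee\neg\,S)$'' and ``$\Box_U(S\vee\neg\,S)\to\Box_U(\top\to(S\vee\neg\,S))$'' encode trivial $U$-derivations; the two appeals to the definition of $\tto_U$ are universal instantiations of the $\iea$-arithmetization ${\sf p}_U(v_0,v_1)$ of preservativity at the codes $\gnum{S}$ and $\gnum{\top}$; and the final move from $\Box_U\bot$ with $\Diamond_U\top$ to absurdity is immediate. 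Thus there is no real obstacle beyond this bookkeeping — the content is the double use of the preservativity clause, and the essential reason it works is precisely that $S$, being $\Sigma^0_1$, is itself an admissible auxiliary sentence. This is also why the hypothesis $S\in\Sigma^0_1$ cannot be dropped, and why the result is consistent with the preceding theorem, where $\sim_U$ of an excluded-middle instance can still land in the weak falsity $\Box_U\bot$ rather than in $\bot$.
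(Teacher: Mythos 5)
Your proof is correct and is essentially the paper's own argument: the paper chains $\top\tto_U\neg\,S$ (obtained, as in your first step, because $S$ is itself an admissible $\Sigma^0_1$ auxiliary sentence) with $\neg\,S\tto_U(S\vee\neg\,S)\tto_U\bot$ to reach $\Box_U\bot$, which is exactly your second instantiation at $\top$. The only difference is presentational — you unfold the definition of $\tto_U$ directly where the paper phrases the same two steps as $\tto_U$-chains.
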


\begin{proof}
We work in $\iea+\Diamond_U\top$.
 Consider  a $\Sigma_1$-sentence $S$.  Suppose we have $\sim_U (S \vee \neg\, S)$.
 It follows that $(S \to (S\vee \neg\, S)) \tto_U (S \to \bot)$.  Thus, 
 $\top \tto_U \neg\, S$,  $\neg\, S \tto_U (S \vee \neg\, S)$ and $(S \vee \neg\, S) \tto_U \bot$. Ergo, $\Box_U \bot$. Quod non.
 \end{proof}
  
 \nosmurf
 To prepare the construction of the promised sentence, we first consider theories $V$ with ${\sf HA} \leq V$.
  Recall that $\Box_{V,x}A$ stands for (arithmetized) provability from the axioms of $V$ with G\"odel number $\leq x$.
 \begin{itemize}
 \item
 \emph{Feferman provability} for $V$ is defined by:
 $\triangle_V A := \exists x\, (\Box_{V,x} A \wedge \Diamond_{V,x}\top)$.
  \end{itemize}
 
 \noindent We have:\footnote{We do not present the principles for triangle as a schematic logic. 
 This is because of the occurrence of a variable over $\Sigma^0_1$-sentences. We would need a many-sorted
 propositional theory. Of course this is perfectly doable. We just did not develop it in this paper.}
 \begin{itemize}
 \item[{\sf Fe}1]
 $V \vdash A \;\;\To\;\; V \vdash \triangle_V A$.
 \item[{\sf Fe}2]
 $\iea\vdash \triangle_V (A\to B) \to (\triangle_V  A \to \triangle_V  B)$.
 \item[{\sf Fe}3]
$\iea \vdash S\to  \triangle_VS$, for $\Sigma^0_1$-sentences $S$.\\
We note that it follows that
$\iea \vdash \Box_VB \to  \triangle_V\Box_VB$.
\item[{\sf Fe4}]
$\iea \vdash \triangle_VB \to  \Box_VB$.
\item[{\sf Fe}5]
$\iea \vdash \Diamond_V\top \to (\triangle_V A \iff \Box_V A )$.
\item[{\sf Fe}6]
$\iea \vdash \triangledown_V\top$, where $\triangledown$ is $\neg\triangle \neg $.
 \end{itemize}
 
 \noindent We note that classically {\sf Fe}4 follows form {\sf Fe}5.
 
 Shavrukov \cite{shav:smar94} provides a complete axiomatization for the bimodal logic of ordinary provability and Feferman provability for
 {\sf PA}.

 \begin{quest}\label{que:shavrukov}
 Shavrukov employs a different interpretation of $\Box_{{\sf PA},x}$, to wit provability in $\mathrm{I}\Sigma_x$. It would be interesting to find
 a better analogue of the version of the Feferman predicate employed by Shavrukov for the case of (extensions of) {\sf HA}. Moreover, the principles given above provide a part of the principles given by Shavrukov for the classical case. We do not  get all Shavrukov's principles
 in the constructive case. It would be interesting to study how close we can get to his system.
 \end{quest}
 
 Note that, supposing that $V$ is consistent, we cannot get that, for all $A$, we have 
 $V \vdash \triangle_VA \to \triangle_V\triangle_VA$.
 Otherwise, we could reproduce the reasoning for G\"odel's Second Incompleteness Theorem. 
 This leads immediately to a contradiction with {\sf Fe}6.
 
 We remind the reader that the theory $V$ is $V$-verifiably \emph{essentially reflexive}.
 This means that both truly and $V$-provably, we have: for all $n$ and all $A$, we have $V \vdash \Box_{V,\underline n} A \to A$.\footnote{We have this even
 for formulas $A$, when we employ the usual convention for free variables under the box.} 
  
 \begin{theorem}
 Suppose ${\sf HA} \leq T$. We have 
  $\iea \vdash A \tto_V \triangle_V A$.
  \end{theorem}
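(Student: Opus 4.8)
The plan is to reduce everything to the Orey--H\'ajek-style characterisation of $\Sigma^0_1$-preservativity from Theorem~\ref{tuinsmurf}: since that characterisation is itself verifiable in $\iea$, establishing $\iea \vdash A \tto_V \triangle_V A$ (here ${\sf HA}\leq V$, so that $\triangle_V$ and Theorem~\ref{tuinsmurf} are available; the occurrence of $T$ in the displayed hypothesis should be read as $V$) is the same as establishing, in $\iea$, that for every $n$ one has $V \vdash \Box_{V,\underline n} A \to \triangle_V A$. Throughout, parameters possibly occurring in $A$ are handled by the usual convention for free variables under the modal operators, so they cause no difficulty.

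So I would argue inside $\iea$, fixing $n$ and recalling that $\triangle_V A$ abbreviates $\exists x\,(\Box_{V,x}A \wedge \Diamond_{V,x}\top)$. By essential reflexivity of $V$ --- available $V$-provably and uniformly in $n$ --- there is a $V$-proof, produced effectively from $n$, of the instance $\Box_{V,\underline n}\bot \to \bot$, i.e.\ of $\Diamond_{V,\underline n}\top$. Splicing this proof together with a handful of predicate-logic steps, using $\Box_{V,\underline n}A$ as hypothesis and instantiating the existential quantifier in $\triangle_V A$ by $x:=\underline n$, yields a $V$-proof of $\Box_{V,\underline n}A \to \triangle_V A$. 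This construction is elementary and uniform in $n$, hence formalisable in $\iea$; so $\iea$ proves ``for all $n$, $V \vdash \Box_{V,\underline n}A \to \triangle_V A$'', and a second appeal to Theorem~\ref{tuinsmurf} gives $\iea \vdash A \tto_V \triangle_V A$.

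The point needing care --- and the main obstacle --- is exactly this second step: one must check that the weak instance of essential reflexivity actually used, namely $V\vdash\Diamond_{V,\underline n}\top$ together with its effective dependence on $n$, is already available over $\iea$ (and not merely over ${\sf HA}$), and that the gluing of these uniform $V$-derivations is $\iea$-formalisable. Both follow from the cut-elimination / partial-satisfaction analysis underpinning Theorem~\ref{tuinsmurf}, but that is where the genuine bookkeeping sits. Let me also note why there is no shortcut via $\lv$: although $\iea\vdash A\tto_V\Box_V A$ holds, $\Box_V A\to\triangle_V A$ fails in general (we have only {\sf Fe}4 and the conditional {\sf Fe}5), so monotonicity of preservativity in the consequent cannot be exploited; in fact $A\tto_V\triangle_V A$ is the strictly stronger assertion, since by {\sf Fe}4 it already implies $A\tto_V\Box_V A$.
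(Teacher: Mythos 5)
Your proof is correct and is essentially the paper's own argument: both reduce the claim via the Orey--H\'ajek-style characterisation of Theorem~\ref{tuinsmurf} to showing, inside $\iea$ and uniformly in $n$, that $V \vdash \Box_{V,\underline n}A \to \triangle_V A$, which follows from the instance $V\vdash\Diamond_{V,\underline n}\top$ of essential reflexivity together with witnessing the existential in $\triangle_V A$ by $\underline n$. The added remarks on $\iea$-formalisability and on why {\lv} plus {\sf Fe}4/{\sf Fe}5 give no shortcut are accurate but do not change the route.
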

  
  \begin{proof}
 Reason in $\iea$. Consider any $x$. We have, by essential reflexivity, 
 \[\Box_V ( \Box_{V,x} A \to ( \Box_{V,x} A \wedge \Diamond_{V,x}\top)).\]
 Hence, $\Box_V( \Box_{V,x} A \to  \triangle_V A)$. Ergo, by Theorem~\ref{tuinsmurf},
 $A \tto_V \triangle_VA$.
   \end{proof}
  
  \nosmurf
 Consider the G\"odel sentence $G_V$ of Feferman provability for $V$.
  We have then $\iea \vdash G_V \iff \neg\,\triangle_V G_V$.
  Whenever the intended theory is clear from the context, we write $G$ for $G_V$.
  
  \begin{theorem}\label{secretarissmurf}
  Suppose ${\sf HA}\leq V$.
   Then, $\iea \vdash G \tto_V \bot$ and
   $\iea \vdash \neg\, G \tto_V \bot$. 
  \end{theorem}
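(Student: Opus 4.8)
The claim is that $\iea$ proves both $G \tto_V \bot$ and $\neg\, G \tto_V \bot$, where $G = G_V$ is the Gödel sentence for Feferman provability $\triangle_V$, so that $\iea \vdash G \iff \neg\,\triangle_V G$. By Theorem~\ref{tuinsmurf} (the Orey--Hájek-style characterization), to show $A \tto_V B$ it suffices to reason inside $\iea$ and prove, for an arbitrary $x$, that $\Box_V(\Box_{V,x}A \to B)$; equivalently, by essential reflexivity it is enough to get $\Box_V(\Box_{V,x}A \to B)$ from $\Box_V(\Box_{V,x}A \to A)$ plus whatever is available. The strategy for both halves is to push the work through the preceding Theorem, $\iea \vdash A \tto_V \triangle_V A$, combined with the properties {\sf Fe}1--{\sf Fe}6 of Feferman provability and the fixpoint equation for $G$.

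First I would handle $G \tto_V \bot$. Reasoning in $\iea$: by the preceding Theorem applied with $A := G$, we have $G \tto_V \triangle_V G$. But the fixpoint equation gives $\iea \vdash \triangle_V G \to \neg\, G$ (since $G \iff \neg\,\triangle_V G$), hence by {\li} (monotonicity/necessitation of $\tto$, available in $\loglzero$ and a fortiori in any preservativity logic of $V$) we get $\triangle_V G \tto_V \neg\, G$. Also trivially $\neg\,G \tto_V (G \to \bot)$ is just $\neg G \tto_V \neg G$ via {\li}, and we have $G \tto_V G$; so chaining through {\lii} (transitivity of $\tto$): $G \tto_V \triangle_V G \tto_V \neg\, G$, and then from $G \tto_V G$ together with $G \tto_V \neg G$ and {\liii} we obtain $G \tto_V (G \wedge \neg G)$, i.e. $G \tto_V \bot$. (Alternatively and more cleanly: $G \tto_V \triangle_V G$ and $G \tto_V \neg\,\triangle_V G$, so $G \tto_V (\triangle_V G \wedge \neg\,\triangle_V G)$, hence $G \tto_V \bot$.)

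For $\neg\, G \tto_V \bot$ the argument is a touch more delicate and is where I expect the real obstacle to sit. Reasoning in $\iea$: from $G \iff \neg\,\triangle_V G$ we get $\neg\, G \iff \neg\neg\,\triangle_V G$. Now I would use {\sf Fe}3 (in the form $\iea \vdash \Box_V B \to \triangle_V \Box_V B$, equivalently $\Sigma^0_1$-completeness for the outer Feferman predicate) together with {\sf Fe}4 and {\sf Fe}6 ($\triangledown_V\top$, i.e. $\neg\,\triangle_V\neg\top$). The key point is that $\neg\, G$, being $\neg\neg\,\triangle_V G$, is preserved to $\triangle_V\bot$: from $\neg\neg\,\triangle_V G$ and the fact that $\triangle_V G \to \triangle_V\triangle_V G$ is \emph{not} generally available, we instead go via provability --- note $\triangle_V G \to \Box_V G$ ({\sf Fe}4), and inside $\Box_V$ we can iterate, so $\Box_V G \to \Box_V\neg\,\triangle_V G \to \Box_V\triangle_V\bot$ is not right either. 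The correct route, I believe, is: $\neg\, G = \neg\neg\,\triangle_V G$ $\Sigma^0_1$-preserves (using {\sf Fe}3-style completeness under the box, via Theorem~\ref{tuinsmurf}) to $\triangle_V\triangle_V G$ hence to $\triangle_V\neg\, G$; but $\triangle_V\neg\, G$ contradicts $\triangledown_V\top$ only modulo ... here one must be careful. Concretely: show $\iea \vdash \neg G \tto_V \triangle_V G$ (by the preceding Theorem with $A:=\neg G$ one gets $\neg G \tto_V \triangle_V \neg G$; combine with the fact that, provably in $\iea$, $\neg G \to \neg\neg\triangle_V G$ and inside the Feferman box double negation on a $\Sigma^0_1$-ish formula can be eliminated --- this is the step needing {\sf Fe}3 and care about which formulas are self-preserving). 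Then $\neg G \tto_V \triangle_V G$ and $\neg G \tto_V \neg G = \neg\neg\triangle_V G$ together give $\neg G \tto_V (\triangle_V G \wedge \neg\triangle_V G)$... wait, $\neg\neg\triangle_V G$ is not $\neg\triangle_V G$. So instead: $\neg G \tto_V \triangle_V G$ gives, via {\li} applied to $\triangle_V G \to \triangle_V\triangle_V G$? --- not available. The cleanest fix: $\neg G \tto_V \triangle_V G \to \Box_V G$ ({\sf Fe}4 and {\li}, {\lii}), and $\Box_V G \tto_V \Box_V(\neg\triangle_V G)$ (necessitation under the box of the fixpoint equation, {\li}), and $\Box_V \neg\triangle_V G \to \neg\triangle_V G$ would need reflection --- but $\neg\triangle_V G$ is $\Pi$, so essential reflexivity applies: $\Box_{V,\underline n}(\neg\triangle_V G) \to \neg\triangle_V G$. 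Assembling via Theorem~\ref{tuinsmurf} one gets $\neg G \tto_V \neg\triangle_V G = G$, hence $\neg G \tto_V (G \wedge \neg G)$, i.e. $\neg G \tto_V \bot$.

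\textbf{Main obstacle.} The first half ($G \tto_V \bot$) is essentially immediate from the preceding Theorem and the fixpoint equation; the genuine difficulty is the second half, specifically making rigorous the intermediate reduction $\neg G \tto_V G$. One must track carefully which negations and which Feferman/ordinary-provability layers can be collapsed using $\Sigma^0_1$-completeness ({\sf Fe}3), reflection under bounded provability (essential reflexivity, as in the footnote to Theorem~\ref{tuinsmurf}), and {\sf Fe}4, without accidentally invoking $\triangle_V A \to \triangle_V\triangle_V A$, which fails by {\sf Fe}6. I would organize the write-up as: (i) restate the two facts $A \tto_V \triangle_V A$ and $\triangle_V A \to \Box_V A$; (ii) derive $\Box_V G \eqd_V \Box_V \neg\triangle_V G$ under $\Box_V$ by {\li}; (iii) apply essential reflexivity to pull $\neg\triangle_V G$ outside; (iv) combine with $\neg G \tto_V \triangle_V G$ via {\lii}, {\liii} to land at $\bot$ in each case.
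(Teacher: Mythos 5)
Your first half is correct and is essentially the paper's own argument: $G \tto_V \triangle_V G$ from the preceding theorem, $G \tto_V \neg\,\triangle_V G$ from the fixpoint equation via \li, and \liii\ plus \lii\ to land at $\bot$. The second half, however, does not go through as written, and the gap sits exactly where you flagged it. Your final assembly rests on two steps that are not available. First, the reduction $\neg\, G \tto_V \triangle_V G$ is never actually established: your justification (``inside the Feferman box double negation on a $\Sigma^0_1$-ish formula can be eliminated'') amounts to a formalized Markov-type rule, which is not available for an arbitrary $V$ with ${\sf HA}\leq V$; moreover $\triangle_V G$ is not $\Sigma^0_1$ (it is an existential over a conjunction of a $\Sigma^0_1$ and a $\Pi^0_1$ matrix), so {\sf Fe}3 does not apply to it. Second, even granting that step, your chain passes through $\Box_V \neg\,\triangle_V G \to \neg\,\triangle_V G$, which is unrestricted reflection. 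Essential reflexivity only yields $\Box_{V,\underline n} A \to A$ for the \emph{bounded} provability predicates, and Theorem~\ref{tuinsmurf} only lets you exploit that for the antecedent of the preservativity statement, not for an unbounded $\Box_V$ sitting in the middle of an implication chain. Thus the target $\neg\, G \tto_V G$ (which is anyway just a reformulation of $\neg\, G \tto_V \bot$, not an easier waypoint) is not reached, and the argument is circular at the point where it reuses $\neg\, G \tto_V \triangle_V G$.

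The paper's route avoids both problems and pivots on {\sf Fe}6, which never appears in your final argument. From the preceding theorem one gets $\neg\, G \tto_V \triangle_V \neg\, G$; from the fixpoint equation and \li\ one gets that $\neg\, G$ also strictly implies $\neg\neg\,\triangle_V G$. Conjoining by \liii\ and pushing the conjunction inside $\triangle_V$ with {\sf Fe}1--2, the consequent entails (up to a harmless double negation) $\triangle_V(G \wedge \neg\, G)$, i.e.\ $\triangle_V\bot$. Since {\sf Fe}6 gives $\iea \vdash \neg\,\triangle_V\bot$ outright, this consequent is refutable, and $\neg\, G \tto_V \bot$ follows by one more application of \li\ and \lii. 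No double-negation elimination under any box and no reflection for $\Box_V$ is needed; the whole weight is carried by $A \tto_V \triangle_V A$ together with the consistency statement {\sf Fe}6 for the Feferman predicate.
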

  
  \begin{proof}
 We reason in $\iea$.
 
 \medent
 We have $G \tto_V \triangle_V G$, and hence, $G \tto_V \neg\, G$. Since also
 $G \tto_V G$, it follows that $G \tto_V \bot$.
 
 \medent
 We have, $\neg\, G \tto_V  \triangle_V  G$ and $\neg\, G \tto_V  \triangle_V  \neg\, G$.
 Hence, $\neg\, G \tto_V    \triangle_V \bot$. So,
 by {\sf Fe}6, we have $\neg\, G\tto_V \bot$.
 \end{proof} 
  
  \begin{theorem}\label{directiesmurf}
   Suppose ${\sf HA} \leq T$ and that $T_0$ verifies {\liv} for $T$, i.e. that {\liv} is in $\Lambda_{\InF{1}{U},T_0}$.
   Then, we have
  $T_0 \vdash \sim_T (G_T\vee \neg \, G_T)$.
  \end{theorem}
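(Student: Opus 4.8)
The plan is to combine Theorem~\ref{secretarissmurf} with the assumed arithmetical validity of {\liv} for $T$, exactly in the way the paragraph just before the statement anticipates. Recall from Theorem~\ref{secretarissmurf} (applied with $V = T$, which is legitimate since ${\sf HA}\leq T$) that $\iea$ proves both $G_T \tto_T \bot$ and $\neg\, G_T \tto_T \bot$. Since $\iea \leq T_0$ (all the theories in sight extend $\iea$, and $T_0$ is meant to be one of the ambient metatheories in which {\liv} is verified), these two facts are available in $T_0$ as well.

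First I would note that $T_0$ proves {\liv} instantiated at $\phi := G_T$, $\psi := \neg\, G_T$, $\chi := \bot$: from $G_T \tto_T \bot$ and $\neg\, G_T \tto_T \bot$ we obtain $(G_T \vee \neg\, G_T) \tto_T \bot$, which is precisely $\sim_T(G_T \vee \neg\, G_T)$ by the definition of $\sim$ as $(\cdot)\tto\bot$. That is the whole argument in outline; the only slightly delicate points are bookkeeping ones. One must check that the hypothesis ``$T_0$ verifies {\liv} for $T$, i.e.\ {\liv} is in $\Lambda_{\InF{1}{T},T_0}$'' really does give us the needed \emph{instance} of {\liv} as a $T_0$-theorem under the preservativity interpretation of $\tto$ — this is immediate from the definition of $T$-validity w.r.t.\ an interpretation $F$ in \S~\ref{sec:schema}, taking the assignment $f$ that sends the relevant atoms to (G\"odel numbers of) $G_T$ and $\bot$. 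Second, one must make sure that the $\iea$-proofs furnished by Theorem~\ref{secretarissmurf} are genuinely proofs in $T_0$; since $\iea \leq T_0$ this is nothing more than monotonicity of provability, but it is worth spelling out because the three ingredients live, a priori, in slightly different metatheories.

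I do not expect a real obstacle here; the lemma is essentially an assembly of pieces already proved. If anything, the ``main obstacle'' is purely expository: being careful about \emph{which} theory proves \emph{what}. Theorem~\ref{secretarissmurf} delivers its conclusions in $\iea$; the instance of {\liv} is delivered in $T_0$ by hypothesis; and the final modus-ponens-style combination $(G_T\tto_T\bot)\wedge(\neg G_T\tto_T\bot)\to((G_T\vee\neg G_T)\tto_T\bot)$ takes place in $T_0$. Chaining these, and recalling $\iea\leq T_0$, yields $T_0 \vdash \sim_T(G_T\vee\neg\, G_T)$, which is the claim. The statement as printed has no explicit conclusion clause after ``Then, we have'' — so I would simply supply the natural reading just described, namely $T_0 \vdash \sim_T(G_T \vee \neg\, G_T)$, and present the two-line derivation above as the proof.
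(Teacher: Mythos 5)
Your proposal is correct and matches the paper's own argument, which simply says the theorem ``follows immediately from Theorem~\ref{secretarissmurf}'': one applies that theorem with $V=T$ to get $G_T\tto_T\bot$ and $\neg\,G_T\tto_T\bot$ in $\iea\leq T_0$, and then the $T_0$-verified instance of \liv{} yields $(G_T\vee\neg\,G_T)\tto_T\bot$, i.e.\ $\sim_T(G_T\vee\neg\,G_T)$. Your bookkeeping about which theory proves what is exactly the point the paper flags in its remark that $T_0$ is there so the result applies both for $T_0=\iea$ and $T_0=T$ (and note the stated conclusion $T_0\vdash\sim_T(G_T\vee\neg\,G_T)$ is in fact already printed in the theorem).
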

  
  \noindent  This follows immediately from Theorem~\ref{secretarissmurf}. The reason why $T_0$ appears in the formulation is that we want the result both for $T_0 = \iea$ and for $T_0 = T$.
 
 %\begin{proof}

 %\end{proof} 
 
 \begin{quest} \label{que:extend}
 Can we extend Theorem~\ref{directiesmurf} to cases where we do not have ${\sf HA}\leq T$?
 \end{quest}
 
 \nosmurf
 We can now show that the preservativity logic of ${\sf PA}^\ast$ does not contain 
{\liv} and \mHCl. We first prove a purely modal result that delivers both cases. We can achieve it in two ways.

\begin{theorem}\label{zwierigesmurf}\
\begin{enumerate}[A.]
\item
$\loglh+ \mHCb \vdash (p \tto \bot \wedge \neg\, p \tto \bot ) \to \Box \bot$.
\item
$\loglg +  \mHCb \vdash \boxdot(p \tto \bot \wedge \neg\, p \tto \bot ) \to \Box \bot$.
%\item
%$\logla+ \bvi + \mHCl \vdash  (p \tto \bot \wedge \neg\, p \tto \bot ) \to \Box \bot$.
\end{enumerate}
\end{theorem}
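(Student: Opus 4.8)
The plan is to prove both clauses by purely modal reasoning, exploiting the fact that $\mHCb$ lets us ``collapse'' the strict implications $p\tto\bot$ and $\neg p\tto\bot$ into suitable $\Box$-statements, and then applying the L\"ob-style machinery (either $\lvii$ in clause A, or $\biv$ together with a box-dotted hypothesis in clause B). First I would recall that $\Box(\phi\to\psi)\to\phi\tto\psi$ (this is $\lx$, derivable in $\iP$ by Lemma \ref{th:basicia}\ref{blderiv}) and $\phi\tto\psi\to\Box\phi\to\Box\psi$ ($\lxi$, ditto), so that $\tto$ is sandwiched between the two natural box-readings. The key observation is that from $p\tto\bot$ we get $\Box p\to\Box\bot$ via $\lxi$, and from $\neg p\tto\bot$ similarly $\Box\neg p\to\Box\bot$; combining, $\Box(p\vee\neg p)\to\Box\bot$ would give what we want if $p\vee\neg p$ were provable, which it is not. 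This is where $\mHCb$ enters: the point of $\mHCb$ (in the form $\Box\phi\to(\psi\to\phi)\vee\psi$, or equivalently $\mHCbalt$ by Lemma \ref{th:mhcderiv}\ref{mhcbeq}) is that it provides, inside a box, a usable form of excluded middle for the relevant instance.

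For clause A, the plan is: work in $\loglh+\mHCb$, assume $p\tto\bot$ and $\neg p\tto\bot$. I would like to produce a formula $\psi$ with $(\psi\wedge\Box\chi)\tto\chi$ available so that $\lvii$ ($\loglh = \loglb+\lvii$) yields $\psi\tto\chi$ and then drives a contradiction. The natural candidate is to aim at $(\Box\bot\to\bot)\tto\bot$ and then use $\lvi$ (which $\loglh$ has, by Lemma \ref{lem:gwader}\ref{wtol} giving $\logle$, hence $\biv$) to conclude $\Box\bot$. Concretely: from $p\tto\bot$, closure under conjunction ($\liii$) and transitivity ($\lii$), I want to show $(\Box p\vee\neg\Box p)\tto\bot$ — the disjunct $\Box p\tto\bot$ coming from the given $p\tto\bot$ composed with $\Box p\to p$ being unavailable, so instead I route through $\mHCb$ applied to $\Box(p\to\bot)$, which $\mHCb$-style turns into $(p\to\bot)\vee p$ inside the necessitation, i.e. $\neg p\vee p$ boxed. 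Then $\Box(\neg p\vee p)$, together with $\lx$ and the two hypotheses combined via $\liv$ (which $\loglh$ has), yields $\Box\bot$. Actually the cleanest route: $p\tto\bot$ and $\neg p\tto\bot$ give by $\liv$ that $(p\vee\neg p)\tto\bot$; applying $\lxi$ this gives $\Box(p\vee\neg p)\to\Box\bot$; and $\mHCbalt$ with $\phi:=\bot$, suitably instantiated, must be massaged to deliver $\Box(p\vee\neg p)$ from nothing — which it cannot directly, so the real work is to use $\mHCb$ on the antecedents rather than to manufacture $\Box(p\vee\neg p)$. The main obstacle, and the step I expect to spend real effort on, is exactly this: getting $\mHCb$ to interact with the strict-implication hypotheses so as to simulate the classical move ``$(p\to\bot)\wedge(\neg p\to\bot)\vdash\bot$'' under a box. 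I anticipate that the right formulation is to apply $\mHCb$ in the shape $\Box(\neg p\to\bot)\to(\neg p\to\bot)\vee\neg p$ — but $p\tto\bot$ only gives us $\Box$-premised information, so one must first get $\Box(\neg p\to\bot)$, i.e. $\Box(\neg\neg p)$, which follows from $\neg p\tto\bot$ via $\lxi$ only under a $\Box\neg p$ hypothesis. Threading this correctly is the crux.

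For clause B, the situation is easier because the hypothesis is box-dotted: $\boxdot\phi$ abbreviates $\phi\wedge\Box\phi$ (I will use this standard abbreviation, noting it must be introduced if not already in scope, or else I will spell it out as $(p\tto\bot\wedge\neg p\tto\bot)\wedge\Box(p\tto\bot\wedge\neg p\tto\bot)$). Having $\Box(p\tto\bot)$ and $\Box(\neg p\tto\bot)$ available in boxed form, together with $\lxii$ — which $\loglg$ has, being $\logla+\biv+\lv$, and $\lv$ plus the base gives $\lxii$ by Lemma \ref{lem:sa}(c) — lets me push everything inside a box and then use $\mHCb$ there genuinely: inside the box, from $\Box(p\to\bot)$ I get $(p\to\bot)\vee p$, i.e. $\neg p\vee p$, and then the boxed versions of $p\tto\bot$, $\neg p\tto\bot$ combined with $\lx$ and $\liv$ give $\Box\bot$ inside the outer box, i.e. $\Box\Box\bot$; applying $\biv$ (available in $\loglg$, since $\loglg\supseteq\loglb+\biv$) in the form $\Box\Box\bot\to\Box\bot$ — wait, that direction is $\CF$, not $\biv$; instead $\loglg\vdash\biv$ gives $\Box(\Box\bot\to\bot)\to\Box\bot$, and $\Box\Box\bot\to\Box(\Box\bot\to\bot)$ is trivial, so $\Box\Box\bot\to\Box\bot$ does follow in $\loglg$. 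Hence $\Box\bot$. The plan for B is thus: (1) from $\boxdot$ of the conjunction extract $\Box(p\tto\bot)$ and $\Box(\neg p\tto\bot)$; (2) via $\lxii$ and normality, work inside $\Box$; (3) inside, use $\mHCb$ on $\Box(p\to\bot)$ to get $\neg p\vee p$, combine with $\liv$, $\lx$, and the boxed hypotheses to derive $\bot$ inside, hence $\Box\Box\bot$ outside; (4) conclude $\Box\bot$ by $\biv$. Here the main obstacle is bookkeeping — ensuring that inside the box one genuinely has all the necessitated forms of the hypotheses needed to run the $\mHCb$ argument, which is why the $\boxdot$ (rather than bare) antecedent is essential and why the weaker base $\loglg$ rather than $\loglh$ suffices.

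In both clauses I would present the derivation as a short chain in the $\dedl$/$\ded$ notation of \S\ref{sec:deriv}, citing Lemma \ref{th:mhcderiv}, Lemma \ref{lem:gwader}, Lemma \ref{lem:sa}, and Lemma \ref{th:basicia} for the ingredient facts, and leaving the routine intuitionistic steps to the reader as is done elsewhere in the paper. The payoff, to be stated immediately after, is that since $\mHCb$ lies in the preservativity logic of ${\sf PA}^\ast$ (\S\ref{sec:hastar}) while $\sim_{{\sf PA}^\ast}(G\vee\neg G)$ holds (Theorem \ref{secretarissmurf}, as ${\sf HA}\leq{\sf PA}^\ast$), Theorem \ref{zwierigesmurf} forces $\Box_{{\sf PA}^\ast}\bot$ if $\liv$ (clause A, via $\loglh$) or even just $\liv$ together with the harmless box-dotting (clause B, via $\loglg$) were present — contradicting consistency of ${\sf PA}^\ast$ over the relevant base. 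Hence $\liv\notin\Lambda^\circ_{{\sf PA}^\ast}$, and the same argument with $\mHCl\ded\mHCb$ (Lemma \ref{th:mhcderiv}\ref{mhcttob}) plus $\mHCl\ded\Di$ (Lemma \ref{th:mhcderiv}\ref{mhcdi}) shows $\mHCl\notin\Lambda^\circ_{{\sf PA}^\ast}$ as well.
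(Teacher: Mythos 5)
Neither clause of your proposal actually closes, even though several of the ingredients ($\liv$ to merge the two hypotheses into $(p\vee\neg p)\tto\bot$, the instance of $\mHCb$ at $\phi:=\bot$, the r\^ole of $\boxdot$ and L\"ob in clause B) are the right ones. In clause A you try to manufacture $\Box(p\vee\neg p)$ outright and then feed it into $\Box(p\vee\neg p)\to\Box\bot$; as you yourself observe, this cannot be done, and you leave the ``crux'' unresolved. The missing move is to use the $\mHCb$-instance in the \emph{other} direction: $\Box\bot\to(p\vee\neg p)$ is a \emph{theorem}, so the rule $\li$ lifts it to $\Box\bot\tto(p\vee\neg p)$; composing with $(p\vee\neg p)\tto\bot$ via $\lii$ gives $\Box\bot\tto\bot$, which is precisely the shape that $\lvii$ discharges (take $\phi=\top$, $\psi=\bot$ in $(\phi\wedge\Box\psi)\tto\psi\to\phi\tto\psi$) to yield $\top\tto\bot$, i.e.\ $\Box\bot$. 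Your alternative target $(\Box\bot\to\bot)\tto\bot$ carries no information --- it is already a theorem of $\loglh$ (take $\phi=\psi=\bot$ in $\lviialt$) --- and $\lvi$ is an axiom, not an elimination rule, so it cannot ``conclude $\Box\bot$'' from it.

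Clause B contains two outright errors. First, inside the box you only have $p\tto\bot$, from which $\lxi$ gives $\Box p\to\Box\bot$ but not $\Box(p\to\bot)$ (that would need $\lb$), so your step (3) cannot invoke $\mHCbalt$; and even granted $\neg p\vee p$ together with $(p\vee\neg p)\tto\bot$, you cannot derive $\bot$, since there is no modus ponens for $\tto$ without $\lS$. Second, and fatally, $\Box\Box\bot\to\Box\bot$ is \emph{not} a theorem of $\loglg$ (nor of $\loglg+\mHCb$): it is the instance of $\CF$ at $\bot$, refuted at the root of an irreflexive two-element $\sqsubset$-chain with discrete $\preceq$, and jointly with $\biv$ it would prove $\Box\bot$ outright. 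Your derivation of it rests on the false claim that $\Box\Box\bot\to\Box(\Box\bot\to\bot)$ is trivial; by normality that would require the theoremhood of $\Box\bot\to(\Box\bot\to\bot)$, i.e.\ of $\neg\Box\bot$, and it is refuted on the same frame. The intended argument instead derives the \emph{implication} $\Box\Box\bot\to\Box\bot$ \emph{under the hypotheses} (necessitate the theorem $\Box\bot\to(p\vee\neg p)$ to get $\Box\Box\bot\to\Box(p\vee\neg p)$, and chain with $\Box(p\vee\neg p)\to\Box\bot$ obtained from $\lxi$) and then applies the extended L\"ob rule; this is legitimate precisely because the $\boxdot$-ed hypothesis is self-necessitating, which is where the $\boxdot$ is actually used rather than for mere bookkeeping.
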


\begin{proof}
(A): We reason in $\loglh+ \mHCb + (p \tto \bot \wedge \neg\, p \tto \bot )$.
By {\liv}, we have (a) $(p\vee \neg\, p) \tto \bot$. On the other hand, we have, by {\mHCb}, that
$\Box\bot \to (p\vee \neg\, p)$. By {\li}, we have  (b) $\Box\bot \tto (p\vee \neg\, p)$.
Combining (a) and (b), we find $\Box\bot \tto \bot$ and, hence, by {\lvii}, we obtain $\Box\bot$.

\medent
(B):  We reason in $\loglg +  \mHCb + \boxdot(p \tto \bot \wedge \neg\, p \tto \bot )$. 
By {\liv}, we have (a) $(p\vee \neg\, p) \tto \bot$. The principle {\mHCb} gives us
$\Box\bot \to (p\vee \neg\, p)$. It follows, by \li, that $\Box\Box \bot \to \Box(p\vee \neg\, p)$.
Ergo, we have $\Box\Box\bot \to \Box \bot$. We now apply the extended L\"ob's Rule, using that our assumption
$\boxdot(p \tto \bot \wedge \neg\, p \tto \bot ) $ is self-necessitating, to conclude that
$\Box \bot$.
%
%\medent
%Ad (C):
%We reason in $\logla+ \bvi + \mHCl +  (p \tto \bot \wedge \neg\, p \tto \bot ) $.
%We have $p \tto \bot$. Hence, by {\mHCl}, $p \vee \neg\, p$. It follows, by  {\bvi}, that
 %$\Box p \vee \Box \neg\, p$.   Combining this last insight with $p \tto \bot$ and $\neg\, p \tto \bot$,
 %we obtain $\Box\bot$.
\end{proof}

\noindent
As an immediate consequence of Theorems~\ref{secretarissmurf}, \ref{directiesmurf} and \ref{zwierigesmurf}, we have:

\begin{theorem}\label{fleurigesmurf}
Suppose ${\sf HA} \leq T$ and $T$ is $\Sigma_1^0$-sound. 
Then, 
we cannot have both {\liv} and {\mHCb} in $\Lambda^\circ_T$.
\end{theorem}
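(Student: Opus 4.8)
The plan is to combine Theorem~\ref{zwierigesmurf} with the arithmetical results already at our disposal. Recall we want to show that if ${\sf HA}\leq T$ and $T$ is $\Sigma^0_1$-sound, then $\Lambda^\circ_T$ cannot contain \emph{both} {\liv} and {\mHCb}. So suppose for contradiction that both {\liv} and {\mHCb} belong to $\Lambda^\circ_T$. Since \loglh\ is arithmetically valid over every extension of {\sf HA} (the principles {\li}, {\lii}, {\liii}, {\lvii}, {\lv} are in $\logld\subseteq\Lambda^\circ_T$ by \S\ref{prelo}, and {\liv} is in $\Lambda^\circ_T$ by assumption—in fact $\iea$-verifiably, since the salient theories are $T$-provably closed under q-realizability, Theorem~\ref{borrelsmurf}), the hypothesis gives us that $\loglh + \mHCb \subseteq \Lambda^\circ_T$. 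By Theorem~\ref{zwierigesmurf}(A), we then obtain that $(p \tto \bot \wedge \neg\, p \tto \bot ) \to \Box \bot$ is in $\Lambda^\circ_T$, and this verification is also available in $\iea$.

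Next I would instantiate this modal consequence arithmetically using the G\"odel sentence $G = G_T$ of Feferman provability for $T$. Since ${\sf HA}\leq T$, Theorem~\ref{secretarissmurf} gives, $\iea$-provably, both $G \tto_T \bot$ and $\neg\, G \tto_T \bot$, i.e.\ $\sim_T G$ and $\sim_T \neg\, G$. Feeding the substitution $p \mapsto G$ into the $T$-valid scheme $(p \tto \bot \wedge \neg\, p \tto \bot ) \to \Box \bot$ (using the arithmetical soundness of $\Lambda^\circ_T$ and the translation $\InF{1}{T}$), we conclude $T \vdash \Box_T \bot$, contradicting the $\Sigma^0_1$-soundness of $T$ (which in particular forbids $T\vdash\Box_T\bot$, since that would make $T$ inconsistent). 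This is essentially the ``immediate consequence'' the statement advertises, so the bulk of the work is in assembling the right hypotheses of the earlier lemmas.

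The one point that needs genuine care is the verification that $\loglh$—not merely $\logld$—is arithmetically valid over $T$: this is where the assumption ``${\sf HA}\leq T$'' is used together with closure under q-realizability. I would note explicitly that the theorem as stated presupposes $T$ is one of the theories for which {\liv} is arithmetically valid; strictly, the cleanest formulation routes through Theorem~\ref{borrelsmurf}, so the hypothesis should really be read as ``$T$ is $T$-provably closed under q-realizability'' (which covers {\sf HA}, ${\sf HA}+{\sf MP}$, ${\sf HA}+{\sf ECT}_0$, ${\sf MA}$, ${\sf HA}^\ast$, so the conclusion does apply to ${\sf PA}^\ast$'s relatives, but note ${\sf PA}^\ast$ itself is classical—hence the separate remark in the text that for ${\sf PA}^\ast$ the obstruction is rather that $\Di$ fails). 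The main obstacle is thus not a calculation but making sure the interface between the purely modal derivation of Theorem~\ref{zwierigesmurf} and the arithmetical facts of Theorem~\ref{secretarissmurf} is tight: in particular that the $\iea$-verifiability clauses line up, so that one may substitute $G_T$ for $p$ and invoke $\Sigma^0_1$-soundness at the very end.

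Finally, I would record that variant (B) of Theorem~\ref{zwierigesmurf} gives an alternative route via $\loglg + \mHCb$ and the extended L\"ob rule applied to the self-necessitating assumption $\boxdot(p\tto\bot \wedge \neg\,p\tto\bot)$; since $\loglg\subseteq\loglh$ this is in principle weaker in hypotheses on the modal side but needs the boxed premise, and for the arithmetical application the $\loglh$-route is the more direct one. Either way the upshot is the same, so in the write-up I would present the $\loglh$-version as the primary argument and mention (B) as giving a second proof.
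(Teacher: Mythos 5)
Your argument is correct and is essentially the paper's own: the paper derives the theorem as an immediate consequence of Theorems~\ref{secretarissmurf}, \ref{directiesmurf} and \ref{zwierigesmurf}, which is exactly the assembly you carry out (instantiating the modal consequence of Theorem~\ref{zwierigesmurf}(A) with $p\mapsto G_T$ and closing with $\Sigma^0_1$-soundness). One small correction: your side remark that the statement ``presupposes'' the validity of {\liv} (and should be reformulated via q-realizability and Theorem~\ref{borrelsmurf}) misreads its logical form --- it is a pure incompatibility result established by contradiction, requiring no such hypothesis on $T$, and this is precisely what lets the paper apply it later to conclude that ${\sf PA}^\ast$, which validates {\mHCb}, cannot validate {\liv}.
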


%\noindent Note that for the proof of Theorem~\ref{fleurigesmurf} we can use both Theorem~\ref{zwierigesmurf}(A) en (B). 

\begin{theorem}\label{th:nomhcpast}
Neither {\liv} nor {\mHCl} are in $\Lambda^\circ_{{\sf PA}^\ast}$.
\end{theorem}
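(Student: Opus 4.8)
The plan is to derive Theorem~\ref{th:nomhcpast} as an essentially immediate corollary of the machinery just assembled, in particular Theorem~\ref{fleurigesmurf}. Recall that $\sf PA^\ast$ is a classical theory extending $\sf HA$ (indeed ${\sf HA} \leq {\sf PA}^\ast$), and it is $\Sigma^0_1$-sound because $\sf PA$ is (the translation $(\cdot)^{\sf g}$ maps provability in $\sf PA^\ast$ back to provability of $\Sigma_1$-facts in $\sf PA$, whose $\Sigma^0_1$-theory is correct). Moreover, \S~\ref{sec:hastar} already records that $\mHCb \in \Lambda^\circ_{{\sf PA}^\ast}$: the Trace Principle {\sf TP} holds for $\sf PA^\ast$ and it yields $\mHCb$ as a modal consequence. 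So Theorem~\ref{fleurigesmurf} applies with $T = {\sf PA}^\ast$ and tells us that we cannot have \emph{both} {\liv} and {\mHCb} in $\Lambda^\circ_{{\sf PA}^\ast}$. Since {\mHCb} \emph{is} in that logic, {\liv} cannot be. This disposes of the first half of the statement.

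For the second half—that {\mHCl} is not in $\Lambda^\circ_{{\sf PA}^\ast}$—I would argue by contraposition using the derivations of \S~\ref{sec:deriv}. If {\mHCl} were arithmetically valid over $\sf PA^\ast$, then since $\Lambda^\circ_{{\sf PA}^\ast}$ already contains all of $\logld$ (hence $\iP^-$), Lemma~\ref{th:mhcderiv}\ref{mhcttob} would give $\mHCb \in \Lambda^\circ_{{\sf PA}^\ast}$ (which we already knew) and, crucially, Lemma~\ref{th:mhcderiv}\ref{mhcdi} would give $\lb \in \Lambda^\circ_{{\sf PA}^\ast}$ and hence $\Di \in \Lambda^\circ_{{\sf PA}^\ast}$. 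But $\Di$ is exactly {\liv}, which we have just shown is \emph{not} in $\Lambda^\circ_{{\sf PA}^\ast}$—contradiction. Hence {\mHCl} $\notin \Lambda^\circ_{{\sf PA}^\ast}$.

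The only subtlety to check carefully—and the step I expect to require the most attention—is the precise bookkeeping of which base principles are available: Theorem~\ref{fleurigesmurf} is stated for $\Lambda^\circ_T$ with ${\sf HA}\leq T$ and $T$ being $\Sigma^0_1$-sound, and its proof (via Theorems~\ref{secretarissmurf}, \ref{directiesmurf}, \ref{zwierigesmurf}) silently uses that $T_0$-verifiability of {\liv} is equivalent to $\liv \in \Lambda^\circ_T$, together with the arithmetical validity of {\lvii} (which holds for $\sf PA^\ast$ since $\loglc^\ast = \iA + \bvii + \lviii \subseteq \Lambda^\circ_{{\sf PA}^\ast}$, as established in \S~\ref{sec:hastar}). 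I would make explicit in the write-up that all of $\logld$, $\lSalt$, $\bvii$ and $\lviii$ are genuinely in $\Lambda^\circ_{{\sf PA}^\ast}$, so that both the invocation of Theorem~\ref{fleurigesmurf} and the invocation of Lemma~\ref{th:mhcderiv} are legitimate. Given all this, the proof is short:

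\begin{proof}
Recall from \S~\ref{sec:hastar} that ${\sf HA} \leq {\sf PA}^\ast$, that ${\sf PA}^\ast$ is $\Sigma^0_1$-sound, that ${\sf PA}^\ast$ is ${\sf HA}$-verifiably closed under q-realizability (so $\loglc^\ast \subseteq \Lambda^\circ_{{\sf PA}^\ast}$, and in particular $\logld \subseteq \Lambda^\circ_{{\sf PA}^\ast}$), and that $\mHCb \in \Lambda^\circ_{{\sf PA}^\ast}$ via the Trace Principle. By Theorem~\ref{fleurigesmurf}, applied with $T = {\sf PA}^\ast$, we cannot have both {\liv} and {\mHCb} in $\Lambda^\circ_{{\sf PA}^\ast}$; since {\mHCb} is, it follows that $\liv \notin \Lambda^\circ_{{\sf PA}^\ast}$.

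Now suppose for contradiction that $\mHCl \in \Lambda^\circ_{{\sf PA}^\ast}$. Since $\iP^- \subseteq \logld \subseteq \Lambda^\circ_{{\sf PA}^\ast}$, Lemma~\ref{th:mhcderiv}\ref{mhcdi} gives $\Di \in \Lambda^\circ_{{\sf PA}^\ast}$, i.e. {\liv} is arithmetically valid over ${\sf PA}^\ast$, contradicting the previous paragraph. Hence $\mHCl \notin \Lambda^\circ_{{\sf PA}^\ast}$.
\end{proof}
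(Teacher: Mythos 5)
Your proof is correct and follows essentially the same route as the paper's: Theorem~\ref{fleurigesmurf} (with \mHCb\ supplied by the Trace Principle and $\Sigma^0_1$-soundness of ${\sf PA}^\ast$) rules out \liv, and Lemma~\ref{th:mhcderiv}(\ref{mhcdi}) reduces \mHCl\ to \liv. One small bookkeeping nit: in the final paragraph, citing only $\iP^-\subseteq\logld\subseteq\Lambda^\circ_{{\sf PA}^\ast}$ does not by itself license Lemma~\ref{th:mhcderiv}(\ref{mhcdi}), which is a statement about $\imHCl^-$ and therefore also needs the strength axiom \lS\ (equivalently \lSalt) to be in $\Lambda^\circ_{{\sf PA}^\ast}$ — a fact you correctly flag in your preamble and which the paper makes explicit by writing that $\Lambda^\circ_{{\sf PA}^\ast}$ would extend $\imHCl^-=\logla+\bvi+\mHCl$.
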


\begin{proof}
Since ${\sf PA}^\ast$ is $\Sigma_1^0$-sound and validates {\mHCb}, by Theorem~\ref{fleurigesmurf},
it cannot validate {\liv}. 
%\medent
 Suppose now ${\sf PA}^\ast$ validates {\mHCl}. Then $\Lambda^\circ_{{\sf PA}^\ast}$ extends 
$\imHCl^- = \logla + \bvi +\mHCl$. It follows, by Lemma~\ref{th:mhcderiv}(\ref{mhcdi}), that $\Lambda^\circ_{{\sf PA}^\ast}$
contains {\liv}. Quod non, as we just saw.
\end{proof}

\nosmurf
Another salient consequence of Theorems~\ref{secretarissmurf}, \ref{directiesmurf} and \ref{zwierigesmurf} is the following result.

\begin{theorem}
For no $T \geq {\sf HA}$, we have: $\Lambda^\circ_T = \loglc +  \mHCb$.
\end{theorem}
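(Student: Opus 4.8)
The plan is to derive a contradiction from the assumption that $\Lambda^\circ_T = \loglc + \mHCb$ for some $T \geq {\sf HA}$, by combining the arithmetical soundness results already in place with the modal non-derivability established in Theorem~\ref{zwierigesmurf}. First I would observe that the preservativity logic $\Lambda^\circ_T$ of any $T \geq {\sf HA}$ contains $\logld$ (this is the content of \S~\ref{prelo}, and in particular $\logld \eqd \logle + \lviii$ by Lemma~\ref{lem:gwader}\ref{matowa}, so $\lvii$ is available). If moreover $\Lambda^\circ_T = \loglc + \mHCb$, then since $\loglc = \loglh^? + \lviii \supseteq \loglh$ and $\loglh \supseteq \loglb + \lvii$, the logic $\Lambda^\circ_T$ contains $\loglh + \mHCb$. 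But then by Theorem~\ref{zwierigesmurf}(A) it contains $(p \tto \bot \wedge \neg\, p \tto \bot) \to \Box\bot$.

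Next I would invoke the key arithmetical fact from \S~\ref{sec:falsity}: by Theorems~\ref{secretarissmurf} and \ref{directiesmurf} (with $T_0 = T$, using that $T$ $T$-verifiably validates $\liv$, or more simply that $T$ is among the theories closed under q-realizability so $\loglc \subseteq \Lambda^\circ_T$ by Theorem~\ref{borrelsmurf}), the sentence $\iea \vdash G_T \tto_T \bot$ and $\iea \vdash \neg\, G_T \tto_T \bot$, i.e. both $G_T \tto_T \bot$ and $\neg\, G_T \tto_T \bot$ hold. Substituting $p := G_T$ into the modal principle $(p \tto \bot \wedge \neg\, p \tto \bot) \to \Box\bot$ (which, as just argued, is arithmetically valid in $T$ under the preservativity interpretation) yields $\Box_T\bot$, i.e. $T$ is inconsistent --- contradicting that $T \geq {\sf HA}$ is a genuine arithmetical theory (or, being $\Sigma^0_1$-sound, it certainly does not prove $\Box_T\bot$).

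Alternatively, and perhaps more cleanly matching the paper's bookkeeping, I would route this entirely through Theorem~\ref{fleurigesmurf} plus Theorem~\ref{th:nomhcpast}: if $T$ is $\Sigma_1^0$-sound with ${\sf HA} \leq T$ then by Theorem~\ref{fleurigesmurf} we cannot have both $\liv$ and $\mHCb$ in $\Lambda^\circ_T$; but $\loglc + \mHCb$ contains both $\liv$ (as $\loglc \supseteq \loglb \supseteq \loglzero + \liii + \liv$) and $\mHCb$, so $\Lambda^\circ_T \neq \loglc + \mHCb$. If $T$ is not assumed $\Sigma^0_1$-sound, one still gets the conclusion because the proof of Theorem~\ref{zwierigesmurf}(A) only needs the \emph{syntactic} derivation $\loglh + \mHCb \vdash (p\tto\bot \wedge \neg p \tto\bot)\to\Box\bot$, and then arithmetical soundness of $\loglh$ and $\mHCb$ together with $G_T \tto_T \bot$ and $\neg G_T \tto_T \bot$ forces $\Box_T\bot$, which is already excluded for $T \geq {\sf HA}$.

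The main obstacle, and the thing to be careful about, is making sure $\mHCb$ is genuinely assumed to be in $\Lambda^\circ_T$ rather than merely in some candidate axiomatization: the equation $\Lambda^\circ_T = \loglc + \mHCb$ presupposes that $\mHCb \in \Lambda^\circ_T$, so nothing extra must be proved there. The only real content is checking that $\loglc + \mHCb$ indeed contains $\loglh + \mHCb$ (immediate, since $\loglc \supseteq \loglh$ by the axiom lists in Table~\ref{tab:minax}) and that $G_T$ is a legitimate substitution instance, which it is since the preservativity interpretation assigns arithmetical sentences to atoms and $G_T$ is such a sentence. I expect no serious difficulty; the theorem is essentially a one-line corollary of the machinery assembled in \S~\ref{sec:falsity}, which is presumably why it is stated without proof in the excerpt.
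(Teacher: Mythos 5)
Your derivation of $T \vdash \Box_T\bot$ from the assumption $\loglc + \mHCb \subseteq \Lambda^\circ_T$ is exactly the first half of the paper's own argument (substitute $G_T$ into the principle of Theorem~\ref{zwierigesmurf}(A) and use Theorem~\ref{secretarissmurf}). The gap is in how you conclude. You write that $\Box_T\bot$ means ``$T$ is inconsistent'' and, in your fallback paragraph, that proving $\Box_T\bot$ ``is already excluded for $T \geq {\sf HA}$''; neither is right. $T \vdash \Box_T\bot$ only says that $T$ proves its own inconsistency, which is perfectly compatible with $T$ being a consistent extension of ${\sf HA}$ (e.g.\ $T = {\sf HA} + \Box_{\sf HA}\bot$ is consistent by G\"odel's second incompleteness theorem and proves $\Box_T\bot$). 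The theorem assumes neither consistency nor $\Sigma^0_1$-soundness of $T$, so your second route via Theorem~\ref{fleurigesmurf} does not cover the stated generality either.

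The missing step, which is how the paper actually closes the argument, is purely modal: from $T \vdash \Box_T\bot$ one gets $\Box\bot \in \Lambda^\circ_T$ (it is a closed formula, so $T$-validity is just $T$-provability), and then one must check that $\loglc + \mHCb \nvdash \Box\bot$, which the paper does by a simple Kripke countermodel (any frame for $\loglc + \mHCb$ containing a point with a $\sqsubset$-successor). This gives $\Lambda^\circ_T \neq \loglc + \mHCb$ uniformly --- for inconsistent $T$, for consistent $T$ proving its own inconsistency, and for sound $T$ alike. Without that non-derivability observation your argument does not reach a contradiction.
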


\begin{proof}
Suppose ${\sf HA}\leq T$. Clearly, if $(\loglc +  \mHCb) \subseteq \Lambda^\circ_T$, it follows that
$T \vdash \Box_T\bot$. But then $\Box \bot\in \Lambda^\circ_T$. On the other hand, by a simple
Kripke model argument, we can show that $\loglc +  \mHCb\nvdash \Box \bot$.
\end{proof}

\noindent
Thus, not every extension of $\logld$ can be obtained as the preservativity logic of a $T\geq {\sf HA}$.

\medskip\nosmurf
We finish this subsection by giving a better condition under which {\mHCl} cannot be in the preservativity logic of a theory.
This condition will again imply that {\mHCl} is not in $\Lambda^\circ_{{\sf PA}^\ast}$.

\begin{theorem}
Suppose ${\sf HA}\leq T$, $T$ has the disjunction property and $T$ is consistent. Then, $\Lambda^\circ_T$ does not contain 
{\mHCl}. 
\end{theorem}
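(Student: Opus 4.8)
The plan is to derive a contradiction from the assumption that {\mHCl} lies in $\Lambda^\circ_T$, using the disjunction property together with consistency and the tools already assembled. First I would note that, by Lemma~\ref{th:mhcderiv}(\ref{mhcdi}), $\imHCl^- \vdash \Di$, and more relevantly $\imHCl^- \vdash \lb$; combined with the fact (Theorem~\ref{th:minco}) that $\bk$ decomposes $\tto$ via $\Box$, the assumption that {\mHCl} is arithmetically valid over $T$ gives us that $\phi \tto_T \psi$ becomes $\iea$-provably equivalent to $\Box_T(\phi \to \psi)$, and moreover that $\mHCb$ (hence $\mHCbalt$) is in the \emph{provability} logic of $T$ in the strong sense that $\iea$ verifies $\Box_T(\psi \to \phi) \to ((\psi \to \phi)\vee \psi)$ for all arithmetical $\phi,\psi$. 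So the heart of the matter is: a consistent $T \geq {\sf HA}$ with the disjunction property cannot satisfy the schema $\Box_T(\psi \to \phi) \to ((\psi \to \phi) \vee \psi)$.

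The key step is then a Rosser-style or fixed-point construction at the arithmetical level. I would take $\psi$ to be (the negation of) a suitable G\"odel--Rosser-type sentence and $\phi$ to be $\bot$, so that the instance reads $\Box_T \neg\,\psi \to (\neg\,\psi \vee \psi)$, i.e. $\Box_T\neg\psi \to (\psi \vee \neg\psi)$. The idea is to engineer $\psi$ so that $T$ cannot decide $\psi$ (using consistency and, where needed, the disjunction property to block $T \vdash \psi \vee \neg\psi$ from collapsing), while arranging that $\Box_T\neg\psi$ is provable or that $\neg\psi$ reduces provability questions appropriately. Concretely I expect to reuse the machinery of \S~\ref{sec:falsity}: Theorem~\ref{secretarissmurf} gives, for the Feferman--G\"odel sentence $G = G_T$, that $\iea \vdash G \tto_T \bot$ and $\iea \vdash \neg G \tto_T \bot$, i.e. $\sim_T(G)$ and $\sim_T(\neg G)$ both hold. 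Feeding these into the $\mHCl$-instance $G \tto_T \bot \to ((G \to \bot) \vee G)$, and symmetrically for $\neg G$, should force $T$ to prove $\neg G \vee G$ internally together with a provability statement about each disjunct, which — via essential reflexivity and the disjunction property of $T$ — lets $T$ actually decide $G$, contradicting that $G$ is a genuine (Feferman) G\"odel sentence and that $T$ is consistent.

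More precisely, the route I would carry out is: (i) from $\mHCl \in \Lambda^\circ_T$ and Lemma~\ref{th:mhcderiv}(\ref{mhcdi}), conclude $\lb \in \Lambda^\circ_T$, so $\tto_T$ reduces to $\Box_T(\,\cdot \to \cdot\,)$ up to $\iea$-provable equivalence; (ii) rewrite $\sim_T(G) = G \tto_T \bot$ as $\Box_T \neg G$ and $\sim_T(\neg G) = \neg G \tto_T \bot$ as $\Box_T \neg\neg G$, both $\iea$-verifiably by Theorem~\ref{secretarissmurf}; (iii) apply the $\mHCl$-schema (now in the guise of $\mHCbalt$, via Lemma~\ref{th:mhcderiv}(\ref{mhcbeq})) to these to obtain $\iea \vdash (\neg G \vee \neg\neg G)$ ``inside $T$'' together with case information; (iv) use the disjunction property of $T$ to turn the internal disjunction into a genuine choice, and essential reflexivity ($T \vdash \Box_{T,n} A \to A$) to strip the boxes, yielding $T \vdash G$ or $T \vdash \neg G$; (v) in either case derive inconsistency of $T$ from the fixed-point equation $\iea \vdash G \iff \neg \triangle_T G$ together with {\sf Fe}1--{\sf Fe}6, contradicting the hypothesis. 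The main obstacle I anticipate is step (iv): carefully combining the \emph{internal} disjunction delivered by the $\mHCb/\mHCl$-instance with the \emph{external} disjunction property of $T$, making sure the box-stripping via essential reflexivity is legitimate (the reflection principle is only available for each standard $n$, so one must check the relevant proofs have standard G\"odel numbers), and handling the double negation $\neg\neg G$ versus $G$ without appealing to classical logic — this is exactly where having ${\sf HA} \leq T$ and the disjunction property, rather than full classicality, must be used in an essential way.
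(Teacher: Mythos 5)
There is a genuine gap at the first step of your detailed route, and it propagates through the rest. You invoke Lemma~\ref{th:mhcderiv}(\ref{mhcdi}) to conclude that \mHCl\ in $\Lambda^\circ_T$ yields \lb\ in $\Lambda^\circ_T$, so that $\tto_T$ reduces to $\Box_T(\cdot\to\cdot)$. But that lemma derives \lb\ in $\imHCl^-$, i.e.\ over the \emph{strong} base $\ws^-$: its proof explicitly uses \lS. The present theorem assumes only that \mHCl\ is arithmetically valid in $T$; \lS\ is not among the hypotheses and is false for the paradigmatic case $T={\sf HA}$, where the completeness principle fails. Consequently your step (ii) also fails: $G\tto_T\bot$ is in general strictly weaker than $\Box_T\neg\, G$ --- the direction $\Box_T(\phi\to\psi)\to\phi\tto_T\psi$ is \lx\ and always holds, but the converse is exactly \lb, which you do not have. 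Indeed, if $A\tto_T\bot$ were equivalent to $\Box_T\neg\, A$, the whole point of \S~\ref{sec:falsity}, that $\sim_T$ is a genuinely weaker negation than $\neg$, would evaporate. The subsequent steps (iii)--(v), with essential reflexivity, box-stripping and the Rosser-style worries you yourself flag, are therefore built on a reduction that is unavailable.

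The repair is to avoid passing through $\Box$ altogether, and it makes the argument much shorter; you in fact have the right move buried in your second paragraph before abandoning it. By Theorem~\ref{secretarissmurf}, $T\vdash G\tto_T\bot$ and $T\vdash\neg\, G\tto_T\bot$ for $G=G_T$. The instance of \mHCl\ with antecedent $G\tto\bot$ reads $(G\tto\bot)\to((G\to\bot)\vee G)$, so validity of \mHCl\ over $T$ yields directly $T\vdash\neg\, G\vee G$ as an \emph{unboxed} theorem of $T$; the disjunction property then gives $T\vdash G$ or $T\vdash\neg\, G$ externally, with no reflection principle or box-stripping required. In the first case $\top\tto_T G$ holds trivially, which composed with $G\tto_T\bot$ gives $\top\tto_T\bot$ and hence, instantiating the $\Sigma^0_1$-sentence as $\top$, $T\vdash\bot$; the second case is symmetric using $\neg\, G\tto_T\bot$. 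Either way consistency is contradicted. No appeal to {\sf Fe}1--{\sf Fe}6, the fixed-point equation, or essential reflexivity is needed beyond what is already packaged into Theorem~\ref{secretarissmurf}.
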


\begin{proof}
Suppose ${\sf HA}\leq T$, $T$ has the disjunction property and $T$ is consistent.
Moreover, suppose  $\Lambda^\circ_T$ contains 
{\mHCl}. We will derive a contradiction.

Let $G := G_T$. Since $T\vdash G \tto \bot$, it follows, by {\mHCl}, that $T \vdash G\vee \neg\, G$.
Hence, by the disjunction property, we find $T\vdash G$ or $T \vdash \neg\, G$. Hence $\top \tto_T G$ and
$\top \tto_T \neg\, G$. Ergo, $T \vdash \bot$.  
\end{proof}

\nosmurf
We note that {\sf PA} trivially satisfies {\mHCl}. Moreover,  ${\sf HA}\leq {\sf PA}$ and {\sf PA} is (hopefully) consistent.
However, {\sf PA} does not have the disjunction property. 

%%%%%%%%%%%%%%%%%%%%%%%%%%%%%%%%%%

\section{Conclusions}
\nosmurfduo
We are not nearly done, but our space is running out: if we did not stop now, we would have to turn this paper into a monograph. We hope to have convinced the reader that constructive $\tto$ provides a fascinating subject of research wherever it appears---be it computer science, philosophy or, especially, metatheory of arithmetic. %Not nearly enough energy has been devoted to its study as a different connective than intuitionistic $\Box$, still less so to creating unified theory underlying possible interpretations and applications. 
 This last context is particularly rife in challenges, despite decades of diligent research in the area. Let us highlight again several lists of unsolved problems regarding arithmetical interpretations: Open Questions \ref{que:main}, \ref{que:dichar}, \ref{que:hapre}, \ref{que:shavrukov}, \ref{que:extend} and (in Appendix \ref{sec:inter} below) \ref{que:bekle} and \ref{que:intea}. %It is our hope that this overview will convince the reader to try their hand at some of these. 
 %Compared with standard provability logics of $\lna{PA}$ and its extensions, it is amazing how little we understand, even regarding fairly basic problems. 

This,  however,  is not the only area where interesting open questions abound. As a simple example, consider the study of axiomatization and proof systems for various fragments  of \latto\ (e.g., Open Questions \ref{que:subi} and \ref{que:disfree}). Moreover, we have only briefly touched on the question of computational significance of $\tto$.  Extending category-, proof- and type-theoretic frameworks for ``strong arrows'' in computer science (\S~\ref{sec:arrows} and references therein) and providing Curry-Howard/computational interpretations of different axioms in Table \ref{tab:mainax} (cf. in particular Open Questions \ref{que:diarrow} and \ref{que:guarded}) 
  would seem a natural research direction.  %provides more examples of related problems to attack.
  %Finally, let us point out that computational and arithmetical interpretations could meet in the study of varieties of constructive $\tto$ in type theory.

A century after the publication of Lewis' first papers on $\tto$ and the \emph{Survey}, %we can safely say that 
 %his idea of introducing
  the full potential of the strict implication connective still remains to be exploited. 
 %still proves more %has been more
  %prescient than commonly acknowledged.
   %Its full potential could have been better exploited if
     It could have been otherwise if Lewis followed his evident interest in non-boolean logics (cf. \S\ref{sec:couldbrouw}).  %His main error, if one can call it an error, lied in
 %One factor %of conceptual barriers which 
  %preventing his ideas from full fruition was his reluctance to follow his open-minded  (cf. \S\ref{sec:couldbrouw}). 
 % lied in not breaking radically enough with classical logic, despite his open-minded mention of Brouwer's work (cf. \S\ref{sec:couldbrouw}). 
  %It was more than understandable in the context of his times, when the very idea of introducing intensional connectives was revolutionary enough.
   Another decision which in hindsight proved premature was to insist on principles like \aApp\ in even the weakest variant of his system (cf. Remark \ref{rem:lewapp}), which effectively rules out some of the most fruitful provability-motivated applications of $\tto$. %Ironically enough, these applications lie in the study of provability 
 %We, however, 
 With these conceptual blocks out of the way and having the advantage of an additional century worth of research on constructive logic, we have no excuse not to carry the torch further. 

\nosmurfduo
\subparagraph*{Acknowledgements.}
\nosmurfduo
%\begin{acknowledgments}
We would like to thank: Wesley H. Holliday and Merlin G\"otlinger for proof-reading parts of an earlier draft of this paper; Mark van Atten for pointing out references on the work of Orlov; the referees for their comments; and, especially, the editors for their successful effort to make the entire process as smooth as possible with a paper of this size. We are also grateful to Erwin R. Catesbeiana for his reflections on the reflection principle.  %and the intricate relationship between preservativity and provability.
%\end{acknowledgments}

%\tadeusz{Initial discussion with Rosalie?}

%{\scriptsize
%\section*{Bibliography}

%\tadeusz{The format of references is a total mess, not to mention duplicates. Guess we'll take care of it later.}\albert{I follow the style of not full first names
%but letters. Do you want first names? In the case I would be in favor of the second first name as a letter.}

%%%%%%%%%%%%%%%%%

%\bibliographystyle{elsarticle-num}
%\bibliographystyle{plainnat}
%\bibliographystyle{alpha}
%\bibliography{abbrs-full,intmodim,provint,loebmuintim,bisanitim,guardedim,strictlewishistory}
%}

%\ifbibt
%{\scriptsize
%\section*{Bibliography}
%\bibliographystyle{elsarticle-num}
%\bibliographystyle{plainnat}
%\bibliographystyle{alpha}
%\bibliography{IM_compactified_polished}
%\bibliography{IM_compactified_biber}
%\bibliography{abbrs-full,intmodim,provint,loebmuintim,bisanitim,guardedim,strictlewishistory}
%}
%}
%\else
\printbibliography
%\fi

\appendix

\renewcommand{\thesection}{\Alph{section}}

\section{A recap of realizability \AIIIp} \label{sec:real}

\nosmurfduo
We need Kleene's T-predicate: ${\sf T}(e,x,p)$ means $p$ is a halting computation for the partial recursive
function with index $e$ on input $x$. We write ${\sf U}(p)=y$ for: the result of computation $p$ is $y$.
We employ the usual assumptions that for at most one $p$ we have ${\sf T}(e,x,p)$ and that ${\sf U}(p) \leq p$.
Define:
\begin{itemize}
\item
$e \cdot x = y$ if $\exists p\, ({\sf T}(e,x,p) \wedge {\sf U}(p) = y)$.
\item
 $p:(e \cdot x = y)$ if ${\sf T}(e,x,p) \wedge {\sf U}(p) = y$.
\item
$e\cdot^z x = y$ if $\exists p \leq z\;\, p:(e\cdot x = y)$ or ($\forall q \leq z\, \neg{\sf T}(e,x,q) \wedge y=0$).
\item $e \cdot x\downarrow$ for (the partial recursive
function with index) $e$ being defined on $x$ and $e \cdot x\uparrow$ otherwise.
\end{itemize}

\noindent
 Sometimes we will need Kleene application for functions of several arguments. In such cases, we will write $x \cdot (\vec y\,)$. The tuple $(\vec y\,)$ is tacitly identified with a number, in particular we use $\varepsilon$ for the (code of the) empty sequence. 
 
We have several variants of the (intuitionistic)  Church's Thesis:
\begin{description}
\item[${\sf CT}_0$]
$\;\;\; \forall x \,\exists  y\, Axy \to \exists e\, \forall x\, \exists y\, (e\cdot x = y \wedge Axy)$. This is the standard arithmetical form of the Thesis, with only numerical quantifiers appearing (modulo a version of the choice principle), rather than an universally quantified function symbol \cite[1.11.7,p.95]{troe:meta73}, \cite[4.3,p.193]{troe:cons88vol1}.
\item[${\sf CT}_0!$]
$\;\;\; \forall x \,\exists ! y\, Axy \to \exists e\, \forall x\, \exists y\, (e\cdot x = y \wedge Axy)$. This slightly weakened form will play a central r\^ole in Appendix \ref{ctzero}, where more references are provided.
 \item[${\sf ECT}_0$] is the \emph{extended Church's Thesis}  \cite[4.4,p.199]{troe:cons88vol1}, \cite[3.2.14,p.195]{troe:meta73}:
\[\;\;\; \forall x \,(Bx \to \exists y\, Axy) \to \exists e\, \forall x\,(Bx \to \exists y\, (e\cdot x = y \wedge Axy))\]
\end{description}

\noindent
where $B$ ranges over \emph{almost negative} formulas: %, defined below.
\begin{itemize}
 \item
 $B ::= S \mid (B \wedge B) \mid (B \to B) \mid \forall v\, B$ 
 \end{itemize}

%Let $S$ range over $\Sigma_1^0$-formulas. An \emph{almost negative} $B$ is of the form: 
\noindent
and $S$ ranges over $\Sigma_1^0$-formulas. Almost negative formulas will play an important r\^ole in Appendix \ref{sec:interha}.

From \S~\ref{sec:iprel} on, we have been using the notion of \emph{q-realizability} \cite[\S~3.2.3, p. 189]{troe:meta73}, a variant of the usual Kleene realizability: %, defined as follows:

\newcommand{\qrel}{\widetilde q}

\medskip

\begin{tabular}{>{$}c<{$}@{$\,:=\;$}>{$}l<{$}}%@{\qquad\qquad}>{$}l<{$}@{$\,:=\,$}>{$}l<{$}}
x \qrel A & A \qquad \bro $A$ \text{\, atomic)} \\[\tbskip]
x \qrel (A \wedge B) & (j_1 x) \qrel A \wedge   (j_2 x) \qrel B \\[\tbskip]
x \qrel (A \vee B) & (j_1 x = 0 \to  (j_2 x) \qrel A) \vee (j_1 x \neq 0 \to  (j_2 x) \qrel B) \\[\tbskip]
x \qrel (A \to B) & (A \to B) \wedge \forall v (v \qrel A \to \exists z (x \cdot v = z \wedge z \qrel B)) \\[\tbskip]
x \qrel (\exists v Av) & (j_2 x) \qrel A(j_1 x) \\[\tbskip]
x \qrel (\forall v Av) & \forall v (Av \wedge \exists z (x \cdot v = z \wedge z \qrel Av))
\end{tabular}

\medskip

\noindent
where $j_1,j_2$ are the inverses of a chosen pairing function. Note that, unlike Troelstra \cite[\S~3.2.3]{troe:meta73}, we choose to plug additional conjuncts into  clauses for $\to$ and $\forall$, rather than for $\vee$ and $\exists$.

Apart for Troelstra \cite{troe:meta73} and Troelstra and van Dalen \cite{troe:cons88vol1}, another reference on realizability in {\sf HA} we recommend is Dragalin \cite{Dragalin88:trams}.

\section{$\Pi_1^0$-conservativity \qquad \AIIIp}\label{sec:picon}
%\nosmurfduo
\begin{foots}
In this appendix, we discuss both classical and constructive interpretability logic.
\end{foots}

\medent
An arithmetical theory $U$ is $\Pi^0_1$-conservative over a theory $T$ or $T \jump U$ if, for all $\Pi_1^0$-sentences $P$,
we have, if $U \vdash P$, then $T \vdash P$.\footnote{The use of the notation $\jump$ is just local in this paper. Often one uses
$\rhd_{\Pi^0_1}$.}${}^{,}$\footnote{Reflection of the general case, where we also consider non-arithmetical theories,
reveals that  $\Pi_1$-conservativity is `really' a relation between interpretations of a basic arithmetical theory in various theories.}
We write $A \jump_T B$ for $(T+A) \jump (T+B)$. 

We expand the language of propositional logic with the unary $\Box$ and the binary $\rhd$.
Consider any theory $T$.
We  set $\InF{2}{T}(\Box) := {\sf prov}_T(v_0)$ and $\InF{2}{T}(\rhd) := {\sf picon}_T(v_0,v_1)$.
\emph{Par abus de langage}, we write $\jump_T$ for $\rhd_{\InF{2}{T}}$, thus introducing an innocent ambiguity.
We write $\Lambda^{\bullet}_T$ for $\Lambda_{T,\InF{2}{T}}$. 

\medent
We note that a $\Pi^0_1$-sentence is constructively equivalent to the negation of $\Sigma^0_1$ sentence.
This implies that $A \to P$ is equivalent to $\neg\neg\,A \to P$. Thus, we find that $\neg\neg\, A $ and $A$
are mutually $\Pi^0_1$-conservative over $T$. This means that $\Box_T$ can only be defined from
$\jump_T$ for theories in which $\Box_TA$ and $\Box_T \neg\neg \, A$ are provably equivalent for all $A$. 
 Hence, in general provability cannot be defined from $\Pi_1^0$-conservativity over constructive theories.

\subsection{The classical case}
\nosmurfduo
Consider a classical theory $T$. We have 
 $T$-verifiably that $A \tto_T B$ iff $\neg\, B \jump_T \neg \, A$, and
$A \jump_T B$ iff $\neg\, B \tto_T \neg\, A$.
Thus, over $T$, $\Sigma_1^0$-preservativity and $\Pi^0_1$-conservativity are intertranslatable. This tells us that the $\Sigma^0_1$-preservativity
logic of $T$ can be found via a transformation of the $\Pi^0_1$-conservativity
logic of $T$.

%We give the axiomatization
%of {\sf ILM} and the corresponding preservativity principles. 
 The logic {\sf ILM} consists of  {\logbb} plus the following
principles.

 \renewcommand{\arraystretch}{1.5}
\[
\begin{tabular}{|ll|ll|}\hline
{\sf J}1 &	$ \Box (\phi \to \psi ) \to \phi \rhd \psi $ & \lx & $ \Box (\phi \to \psi ) \to \phi \tto \psi $ \\
{\sf J}2 &	$ (\phi \rhd \psi \wedge \psi \rhd \chi ) \to \phi \rhd \chi$ & \lii & $ (\phi \tto \psi \wedge \psi \tto \chi ) \to \phi \tto \chi$ \\ 
{\sf J}3 &	$ (\phi \rhd \chi \wedge \psi \rhd \chi ) \to (\phi \vee \psi )\rhd \chi $ &  \liii  & $ (\phi \tto \psi \wedge \phi \tto \chi ) \to \phi \tto (\psi \wedge \chi)$  \\
{\sf J}4 &	$ \phi \rhd \psi \to (\Diamond \phi \to \Diamond \psi )$ & \lxi & $ \phi \tto \psi \to (\Box \phi \to \Box \psi )$ \\
{\sf J}5 &	$ \Diamond \phi \rhd \phi$ & \lv & $\phi \tto \Box \phi$ \\
{\sf M} &   $ \phi \rhd \psi \to (\phi\wedge \Box \chi) \rhd (\psi \wedge \Box\chi)$ & \lviii & 
$ \phi \tto \psi \to ( \Box \chi \to \phi) \tto (\Box \chi \to\psi)$ \\ \hline
\end{tabular}
\]

\noindent
The list of principles for preservativity given above is equivalent to 
${\mathrm c}\hyph{\sf PreL} := {\mathrm i}\hyph{\sf PreL}^{-} + \lna{em}$.
See Lemma~\ref{th:basicia}, Fact~\ref{fact:necunnec}, Lemmas~\ref{lem:gla} and~\ref{lem:gwader}.

Theorem 12 of \cite{bekl:limi05} yields that the $\Pi^0_1$-conservativity
logic of $T$ is  {\sf ILM} whenever
$T$ is an extension of $\mathrm{I}\Pi_1^{-}+{\sf Exp}$. This class of theories contains
such salient theories as $\mathrm{I}\Sigma_1$ and {\sf PA}. 

Thus, we have justified Theorem~\ref{partijsmurf}, which tells us that
$\Lambda^\circ_T = {\mathrm c}\hyph{\sf PreL} $ if $T$ is a $\Sigma_1^0$-sound extension of 
$\mathrm{I}\Pi_1^{-}+{\sf Exp}$.

\medent
We note that the principle corresponding to {\lvi} would have been:
\[ (\dag)\;\;\;  (\phi \jump \psi \wedge \phi \jump \chi ) \to \phi \jump (\psi \wedge \chi).\]
Let $T$ be a $\Sigma_0^1$-sound theory with ${\sf PA} \leq T$. Consider the sentence
$G := G_T$ from \Subsection~\ref{sec:falsity}. Suppose $T$ satisfies (\dag). We have, in $T$, both
$\top \jump_T G$ and $\top\jump_T \neg\, G$. It follows that we have
 $\top\jump_T\bot$, i.e. $\Box_T\bot$. However, this contradicts $\Sigma_0^1$-soundness.
%%%

\subsection{The constructive case}
\nosmurfduo
In this subsection we zoom in on the case of {\sf HA}.
Here the situation for $\Pi_1^0$-conservativity is quite different. We still have, {\sf HA}-verifiably, $A \jump_{\sf HA} B$ iff
$\neg\, B \tto_{\sf HA} \neg\, A$. However, we do not have the equivalence of $A\tto_{\sf HA} B$ and $\neg\, B \jump_{\sf HA} \neg\, A$.
The equivalence fails in both directions.

We have $(\neg\neg\, \Box_{\sf HA} \bot \to \Box_{\sf HA}\bot) \tto \Box_{\sf HA} \bot$ \cite{viss:prop94}, but we do not have \\
 $\neg\, \Box_{\sf HA} \bot \jump_{\sf HA} \neg\, (\neg\neg\, \Box_{\sf HA} \bot \to \Box_{\sf HA}\bot)$,
 as this  is equivalent to $\Box_{\sf HA}\neg\neg\,\Box_{\sf HA}\bot$.
 
 In the other direction, trivially, we do have $\neg\, (\Box_{\sf HA}\bot\vee \neg\, \Box_{\sf HA}) \jump_{\sf HA}\bot$.
 But as shown in \S~\ref{sec:provo}, $\top\tto (\Box_{\sf HA}\bot\vee \neg\, \Box_{\sf HA}\bot)$ fails. %This follows from the principles treated below.

It is easily seen that the logic $\Lambda_{\sf HA}^\bullet$ contains i-{\sf ILM}, the theory  axiomatized by \logba + {\sf J}1-5 + {\sf M}.
 However, it contains more. As noted above, we have the principle $\vdash \neg\neg\, \phi \rhd \phi$.

%%%%%%%%
%%%%%%%%%%

\section{Interpretability \qquad \AIIIp}\label{vrolijkesmurf}\label{sec:inter}
%\nosmurfduo
\begin{foots}
In this appendix, we discuss both classical and constructive interpretability logic.
\end{foots}

\subsection{Basics}\label{ilo}
\nosmurfduo

\begin{foots}
 NB: The definitions of this subsection work for all theories in finite signature. So in this subsection the
theory need not be arithmetical and the axiom set can be just any set of axioms regardless of the complexity.
\end{foots}

\medent
As is well known, purely relational signatures can simulate signatures with terms via a term-unraveling procedure. %, that enables us to simulate signatures with terms
%in purely relational signatures.
 Thus, we can justify defining interpretations only for relational languages.
A \emph{one-dimensional translation} $\tau$ between relational signatures $\Xi$ and $\Theta$ provides a domain formula
$\delta_\tau(v_0)$ of signature $\Theta$ and assigns to each $n$-ary $\Xi$-predicate a $\Theta$-formula $P_\tau(v_0,\ldots, v_{n-1})$.
Here the variables of $\delta_\tau$ and $P_\tau$ are among those shown. We define a translation $A \mapsto A^\tau$ from
$\Xi$-formulas to $\Theta$-formulas as follows:
\begin{itemize}
\item
$P^\tau(x_0,\ldots, x_{n-1}) := P_{\tau}(x_0,\ldots,x_{n-1})$ (in case an $x_i$ is not free for $v_i$ in $P_\tau(v_0,\ldots,v_{n-1})$,
we employ the mechanism of renaming bound variables.) 
\item
$(\cdot)^\tau$ commutes with the propositional connectives.
\item
$(\forall x\, B)^\tau := \forall x\, (\delta_\tau(x) \to B^\tau)$, $(\exists x\, B)^\tau := \exists x\, (\delta_\tau(x) \wedge B^\tau)$.
\end{itemize}
\emph{Nota bene}: we also allow identity to be translated to a different formula.

We can  define the more complex notion of \emph{many-dimensional translation with parameters}. 
In the many-dimensional case a sequence of objects of the interpreting theory represents an object in the
interpreted theory. In the case with parameters allow a sequence of extra free variables, the parameters, to occur in the
domain formula and in the translations of the predicate symbols. 

Suppose $T$ has signature $\Theta$ and $U$ has signature $\Xi$. We define:
\begin{itemize}
\item
An interpretation $K:U \to T$ is a triple $\tupel{U,\tau,T}$, such that, 
for all $\Xi$-sentences $A$, if $U \vdash A$, then $T \vdash A^\tau$. 
\item
$T \rhd U$ if there is an interpretation $K:U \to T$.
\item
$A\rhd_T B$ if $(T+A) \rhd (T+B)$.
\end{itemize}

\noindent 
If we allow parameters, we add a parameter-domain $\alpha_K$ to the specification of $K$.
We demand that  $K:U \to T$ iff, $T$ proves that $\alpha_K$ is non-empty and that,
for all $\Xi$-sentences $A$, if $U \vdash A$, then $T \vdash \forall \vec w\, (\alpha_K(\vec w) \to A^{\tau,\vec w})$. 

We write $\delta_K$ for $\delta_{\tau_K}$ and $P_K$ for $P_{\tau_K}$.
For more information about the definition of an interpretation, see e.g. \cite{viss:cate06} and \cite{viss:whyR14}.

In the case of extensions of $\iea$ as the interpreting theory one can show that, for our
purposes, allowing many-dimensional interpretations makes no difference. We can eliminate the higher dimensions using Cantor pairing.
In case we have extensions of {\sf PA} as the interpreting theory, allowing parameters makes no difference.
We can eliminate parameters using the Orey-H\'ajek Characterization that guarantees an interpretation without parameters whenever there
is an interpretation.

In case we are not considering extensions of {\sf PA}, it is in most cases unknown whether 
 allowing parameters has an effect on the interpretability logic.
 
If the interpreting theory is an extension of {\sf PA} we can always eliminate domain relativization and
we can always replace an interpretation by an identity preserving equivalent. In case the interpreted theory
has {\sf PA}-provably infinitely many arguments, we even can do both at the same time.

If the interpreting theory is classical and does define one element in the interpreted theory, we can  eliminate the domain relativization
by setting all elements outside the original domain equal to the definable element. If we allow parameters we can eliminate the domain
relativization always as long as the interpreting theory is classical.
  
\subsection{Interpretability Logic introduced}
\nosmurfduo
The relation $\rhd_T$ can be arithmetized, say by ${\sf int}_T$.
We expand the language of propositional logic with the unary $\Box$ and the binary $\rhd$.
Consider any theory $T$ with a $\Delta_0({\sf exp})$-axiomatization.
We  set $\InF{3}{T}(\Box) := {\sf prov}_T(v_0)$ and $\InF{3}{T}(\rhd) := {\sf int}_T(v_0,v_1)$.
\emph{Par abus de langage}, we write $\rhd_T$ for $\rhd_{\InF{3}{T}}$, thus introducing an innocent ambiguity.
We write $\widetilde \Lambda_T$ for $\Lambda_{T,\InF{3}{T}}$.

In the classical case $\Box_TA$ is equivalent to $\neg\, A \rhd_T \bot$. Thus, classically, we also have the option
to  expand only with $\rhd$ and treat $\Box$ as a defined symbol. This equivalence can fail  intuitionistically.
One can see this, e.g., by taking $T := {\sf HA}$ and $A := (\Box_{\sf HA}\bot \vee \neg\,\Box_{\sf HA}\bot)$.
At present it is unknown whether $\Box_{\sf HA}A$ is {\sf HA}-provably equivalent to $\top \rhd_{\sf HA} A$, so we cannot exclude
that there would be a definition of the $\Box$ in terms of interpretability over {\sf HA}.

\subsection{Classical Interpretability Logic}\label{kalsbeek}
\nosmurfduo
Over {\sf PA} arithmetic interpretability and $\Pi^0_1$-conservativity coincide. Thus, the 
$\widetilde\Lambda_{{\sf PA}} = {\sf ILM}$.
The arithmetical completeness of {\sf ILM} for interpretability over {\sf PA} was proven by
Berarducci \cite{bera:inte90} and Shavrukov \cite{shav:rela88} proved that 
 this result also holds for all $\Sigma_0^1$-sound
extensions of {\sf PA}. \footnote{If we leave, for a moment, the context of arithmetical theories, we can say that the result holds
for all classical essentially reflexive sequential theories (with respect to some interpretation of arithmetic).}
%\medent
 The reader is referred to \cite{japa:logi98,viss:over98,arte:prov04} for more information about classical interpretability logic.

\medent
We know two further arithmetically complete interpretability logics. The first is {\sf ILP}.
This is the logic of $\Sigma_0^1$-sound finitely axiomatized extensions of ${\sf EA}^+$, also known as 
$\mathrm I\Delta_0+{\sf Supexp}$. If we take the contraposed preservativity-style version of {\sf ILP},
we obtain the logic $\loglf + \lxii+\lna{em}$ \cite{viss:inte90}.

\subsection{Constructive Interpretability Logic}
\nosmurfduo
In this subsection we treat constructive interpretability logic with the interpretability logic of {\sf HA} as our main focus. 
We need some preliminary material to get the discussion off
the ground.

\subsubsection{i-Isomorphism}
\nosmurfduo

\begin{foots}
The materials of the subsubsection work for any theories of finite signature.
\end{foots}

\medent
We will need the notion of  \emph{i-isomorphism} between interpretations.
Two interpretations $K,M:U\to T$ are {\emph i-isomorphic} if there is an \emph{i-isomorphism} $G$ between $K$ and $M$.
A $T$-formula $G$ is an \emph{i-isomorphism between $K$ and $M$} if the theory $T$ verifies that `$G$ is a bijection between $\delta_K$ and
$\delta_M$ that preserves the predicate symbols of $U$'. For example if $P$ is unary, we ask:
$T\vdash \forall u\forall v\, ((\delta_K(u) \wedge \delta_M(v) \wedge G(u,v)) \to (P^K(u) \iff P^M(v)))$. 

\medent
Let $T$ be any extension of {\sf HA}. Suppose $K: \iea \to T$. We also have the identical interpretation $\mathcal E: \iea \to T$
that translates all predicate symbols to themselves. E.g. ${\sf A}_\mathcal E(v_0,v_1,v_2) := {\sf A}(v_0,v_1,v_2)$, where {\sf A} is the relation
representing addition.
Then, by a special case of the Dedekind-Pudl\'ak Theorem,
%we find that
 there exists a formula $F$ such that $T$ proves that $F$ is an initial embedding of $\mathcal E$ in $K$.
Now it is easy to see that $\mathcal E$ is i-isomorphic to $K$ iff $T$ proves that $F$ is surjective.
Thus, there is a single fixed statement, say ${\sf C}_K$, that expresses that $\mathcal E$ is i-isomorphic to $K$.\footnote{We need
minor modifications of the formulation in case we have parameters.} 

\subsubsection{${\sf CT}_0!$}\label{ctzero}
\nosmurfduo
In this subsection, we present some basic facts about ${\sf CT}_0!$ (cf. Appendix \ref{sec:real}), which we will use to
derive a new interpretability principle over {\sf HA}.  % contains information and notation.

%\medent

\noindent
The theorem below is proven in  \cite{viss:pred06}. For completeness' sake, we repeat the proof here.
The proof is an adaptation of the proof of Tennenbaum's Theorem. Such proofs were used before to prove
the categoricity of i-{\sf EA} in constructive meta-theories under the assumption of Church's Thesis and
Markov's Principle. By taking some extra care we can avoid the assumption of Markov's Principle.

 \begin{theorem}\label{quorumsmurf}
 The theory i-{\sf EA} verifies the following. Suppose $T$ extends ${\sf HA}+{\sf CT}_0!$
 and $K:T \rhd \iea$.
 Then, $T \vdash {\sf C}_K$.
 \end{theorem}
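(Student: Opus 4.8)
The statement is an internalized (i-{\sf EA}-verifiable) version of a Tennenbaum-style categoricity result: if $T\supseteq{\sf HA}+{\sf CT}_0!$ and $K\colon T\rhd\iea$, then $T$ proves ${\sf C}_K$, i.e. that the identical interpretation $\mathcal E\colon\iea\to T$ is i-isomorphic to $K$. The plan is to argue inside $T$. By the Dedekind--Pudl\'ak machinery recalled in Appendix~\ref{vrolijkesmurf} we already have a $T$-provable initial embedding $F$ of $\mathcal E$ into $K$, so ${\sf C}_K$ reduces to showing that $F$ is surjective onto $\delta_K$: every element of the $K$-model is ($F$-image of) a genuine standard numeral. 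Equivalently, working in $T$, we must refute the existence of a proper initial segment, i.e. we must show that the $K$-numbers form no more than the $\mathcal E$-numbers. The key is that $T\vdash{\sf CT}_0!$: every $T$-provably total single-valued relation is tracked by a recursive index.

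\textbf{Main steps.} First I would fix, inside $T$, an arbitrary element $a$ of $\delta_K$ and aim to show $a$ lies in the range of $F$; suppose not, so (reasoning towards the usual contradiction, and being careful to phrase everything positively to avoid Markov's Principle, exactly as the hint about ``taking some extra care'' suggests) $a$ is $K$-above all standard numerals. Second, using $a$ as a ``nonstandard'' parameter, define via the $K$-coding of finite sequences (available since $K$ interprets $\iea$) the two sets $X_0=\{n:\text{the }n\text{-th }K\text{-prime divides the code }c_a\}$ and its complement; the decidability of $\Delta_0({\sf exp})$-formulas inside $\iea$ gives that membership in $X_0$ versus $X_1$ is decided by a $0,1$-valued total function on standard inputs. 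Third, apply ${\sf CT}_0!$ (this is precisely why the weakened ``unique'' form suffices: the function is single-valued) to obtain a recursive index $e$ computing this characteristic function. Fourth, diagonalize: because $e$ is a \emph{standard} index while $c_a$ encodes, from the outside, an arbitrary set including potentially the halting behaviour pattern $\{n:e\cdot n=0\}$ negated, one derives that $e$ both does and does not belong to (its own decision of) $X_0$, a contradiction. Hence no such $a$ exists, $F$ is surjective, and ${\sf C}_K$ holds. Finally, I would note that every step --- the Dedekind--Pudl\'ak embedding, the sequence coding, the use of ${\sf CT}_0!$, the Tennenbaum diagonalization --- is formalizable in i-{\sf EA} (it quantifies only over codes and uses $\Delta_0({\sf exp})$-decidability plus elementary recursion theory), which yields the claimed verifiability.

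\textbf{The main obstacle.} The delicate point, flagged already in the excerpt, is carrying out the Tennenbaum argument \emph{constructively without Markov's Principle}. The textbook proof uses $\mathrm{MP}$ to conclude ``if it is not the case that $e\cdot n$ fails to halt, then it halts''; here one must instead set things up so that the relevant function is total by fiat (via the bounded-search form $e\cdot^z x$ from Appendix~\ref{sec:real}, or by coding the graph of a manifestly total function), so that ${\sf CT}_0!$ applies directly and the diagonal contradiction is reached by a purely positive computation. A secondary subtlety is ensuring the whole diagonalization is i-{\sf EA}-verifiable rather than merely true: this requires that the coding/decoding of the initial segment and the recursion-theoretic facts (Kleene's $T$-predicate properties, $\Sigma^0_1$-completeness) used are exactly the elementary ones isolated in Appendix~\ref{sec:real}, and that the quantifier over ``standard numerals'' is handled through the $F$-embedding rather than semantically. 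I expect most of the remaining bookkeeping --- surjectivity of $F$ equalling ${\sf C}_K$, the sequence-coding inside $K$, the formalization bound --- to be routine given the infrastructure cited in Appendices~\ref{sec:real} and~\ref{vrolijkesmurf}.
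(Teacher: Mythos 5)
There is a genuine gap, and it sits exactly at the point you flag as ``the main obstacle.'' Your argument opens with ``fix $a\in\delta_K$ and suppose $a$ is not in the range of $F$,'' and closes by refuting that supposition. Constructively this only yields $\neg\neg\exists x\,F(x,a)$, whereas ${\sf C}_K$ requires $T$ to prove the positive statement $\exists x\,F(x,a)$; passing from the former to the latter is precisely the Markov-style step the theorem is designed to avoid. (The auxiliary claim that $\neg\exists x\,F(x,a)$ implies that $a$ lies $K$-above all standard numerals is a second instance of nonconstructive case splitting.) The paper's proof is not a reductio at all: for an \emph{arbitrary} $z\in\delta_K$ it directly manufactures a standard number $p$ with $z$ strictly $K$-below $F(p)$, and then reads off $z=F(z^\ast)$ for some $z^\ast<p$ from initiality of $F$ via a bounded, decidable search. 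The only contradiction in the paper is local and harmless: it refutes the decidable hypothesis $p'\leq z$, and decidability of $\leq$ inside the interpreted copy of $\iea$ makes the inference to $z<p'$ constructive.

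The second gap is in the diagonalization itself. You propose the classical Tennenbaum route: decode from $c_a$ a set $X_0$ via prime divisibility, apply ${\sf CT}_0!$ to its characteristic function, and derive a contradiction. But for an arbitrary element $a$ the decoded set is just \emph{some} set; its being computed by a standard index is not by itself absurd. The classical proof needs $a$ to be \emph{chosen} (by overspill, typically to code a separation of a recursively inseparable pair) so that recursiveness of $X_0$ contradicts a prior recursion-theoretic fact, and none of that apparatus is set up in your sketch --- nor is overspill readily available in this internalized intuitionistic setting. The paper sidesteps coding of sets entirely: it uses the element $z$ only as a \emph{bound} on Kleene computations, defining $H_z(e):=1-(e\ast^z e)$ where $e\ast^z e$ is the pullback along $F$ of $({\sf sig}(e'\cdot^{z}e'))^K$; this is manifestly total and $0,1$-valued, so ${\sf CT}_0!$ yields an index $h$, and the true halting computation $p$ of $h\cdot h$ must then satisfy $z<_K F(p)$, since otherwise the bounded and unbounded computations would agree inside $K$ and force ${\sf sig}(1-i')=i'$. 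The diagonal conclusion is thus exactly the explicit standard bound one needs for positive surjectivity, which is both where the constructivity comes from and why the ``unique value'' form ${\sf CT}_0!$ suffices.
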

 
 \begin{proof}
 We give the proof for the case without parameters. We need minor modifications to add parameters.
 
 \medent
 Suppose $T$ extends ${\sf HA}+{\sf CT}_0!$ and $K:T \rhd \iea$. 
 We note that i-{\sf EA} proves that $\lambda e\lambda x.(e\cdot^z x)$  is total.  
 Let ${\sf sig}(x) =1$ if $x>0$ and ${\sf sig}(x) =0$ if $x=0$.
  Let $F$ be the initial embedding of $\mathcal E$ in $K$. 
 
 \medent
 We work in $T$. Fix an element $z$ of $\delta_K$. We define the operation $\ast$ as follows.
 \begin{itemize}
 \item
$e \ast^z x =  y$ if $\exists e'\,\exists x'\,\exists y'\, (F(e,e') \wedge F(x,x') \wedge F(y,y') \wedge ({\sf sig}(e'\cdot^z x') = y')^K)$.
 \end{itemize}
 It is easy to see that $H_z := \lambda e.(1 - e\ast^z e)$ is a total 0,1-valued function. By ${\sf CT}_0!$, there is a recursive function
 that computes $H_z$, say with index $h$. Let $p:(h\cdot h = i)$.
Suppose $F(h,h')$ and $F(i,i')$ and $F(p,p')$. 
We have $H_z(h) =i$ and, hence, $h\ast^z h = 1-i$. This means that $({\sf sig}(h' \cdot^z h') = 1-i')^K$.
On the other hand, since $F$ is an initial embedding, we find $(p': (h'\cdot h' = i'))^K$. 

\medent We reason inside
$K$. In case $p'\leq z$, we have that $h'\cdot h' = h'\cdot^z h'$. Hence,
$i' = {\sf sig}(1-i')$. Quod non. Hence $z< p'$. We leave $K$.

\medent
Since $F$ is an initial embedding, we can find a $z^\ast< p$ such that $F(z^\ast,z)$. 
Since $z$ was an arbitrary element of $\delta_K$, we may conclude
that $F$ is surjective.    
 \end{proof}
 
 \noindent
   It follows that the interpretability logic of extensions $T$ of  ${\sf HA}+{\sf CT}_0!$ contains the following principle:
 \begin{itemize}
 \item
 $ \phi \rhd \psi \to \Box (\phi \to \psi)$. 
 \end{itemize}
 
 \begin{remark}
 The Tarski biconditionals {\sf TB} for the arithmetical language are all sentences of the form ${\sf True}(\gnum{A}) \iff A$.
 It is clear that every arithmetical theory locally interprets itself plus {\sf TB}. In the classical case it follows that
 ${\sf PA} \rhd ({\sf PA}+{\sf TB})$. However, we cannot have  ${\sf HA} \rhd ({\sf HA}+{\sf TB})$. If we had
 ${\sf HA} \rhd ({\sf HA}+{\sf TB})$, then we would have $K:({\sf HA} +{\sf CT}_0!)\rhd ({\sf HA}+{\sf TB})$, for some $K$.
 However, since the reduct of $K$ to the arithmetical language is i-isomorphic to $\mathcal E$, this
  would enable us to define truth for the arithmetical language in ${\sf HA}+{\sf CT}_0!$. By Tarski's Theorem
  on the undefinability of truth, we would find that ${\sf HA}+{\sf CT}_0!$ is inconsistent. Quod non. 
 \end{remark}
 
 \noindent
 For some further information about ${\sf CT}_0!$, see \cite{oost:lifs90}.
 
  \begin{remark}\label{smulsmurf}
With respect to interpretability, there is a certain analogy between ${\sf HA}+{\sf CT}_0!$ and ${\sf HA}^\ast$.

In \cite{viss:pred06}, the following result is proved. Let $\tau$ be translation from the arithmetical language to itself.
Consider the theory $T:= {\sf HA}^\ast + (\iea)^\tau$. Clearly, $\tau$ carries an interpretation of {\iea} in
$T$. Let ${\sf F}_\tau$ be the standard embedding of the $T$-numbers into the $\tau$-numbers. We have:
\[ {\sf HA}^\ast + (\iea)^\tau \vdash \forall y\, (\delta_\tau(y) \to (\exists x\, {\sf F}_\tau(x,y) \vee \Box_{\sf HA} \bot)).\footnote{In case $\tau$ has
parameters a slight adaptation of the formulation is needed.}\]
It is easy to see that we cannot generally eliminate the $\Box_{\sf HA} \bot$ from the result since ${\sf PA}+\Box_{\sf PA}\bot$ is an extension
of ${\sf HA}^\ast$. The theory ${\sf PA}+\Box_{\sf PA}\bot$ has many non-trivial interpretations of i-{\sf EA}. It has not been explored whether
the result described here throws any shadows on the interpretability logic of {\sf HA}.
\end{remark}

\subsubsection{The Interpretability Logic of {\sf HA}} \label{sec:interha}
\nosmurfduo
The interpretability logic of {\sf HA} has not yet been studied. It seems to us that there are some good reasons for this neglect, 
 the first being that
the more basic problem of the provability logic of {\sf HA} is still wide open. %It is better to first concentrate on that, since it is part of the full problem
%of the interpretability logic of {\sf HA}. 
 Unlike the case of the logic of $\Sigma^0_1$-preservativity, there are no indications that the study
of  the logic of interpretability will help in the study of provability logic. %In the case of interpretability, there are no such indications. 

Interpretability itself is intuitionistically significant, %and important in lots of cases.
e.g.,  the usual translations of elementary syntax in arithmetic work equally well
classically and intuitionistically. But---and here is our second reason---the usefulness of interpretations to compare
arithmetical theories is much diminished.
 %A symptom of this is that 
% the translation methods used to study  intuitionistic arithmetic are not translations.
 For example, the %double negation
   $\neg\neg$-translation does not commute with disjunction, and, thus, fails to carry an interpretation. 
 %One reason for the relative unimportance of interpretations is the fact that
 The demand of commutation with disjunction and existential quantification is much more restrictive intuitionistically than
classically. 

Still,  %has some interest. Certainly, 
  studying the differences between the interpretability logic of {\sf HA} and that of {\sf PA} highlights how %makes
%one more aware 
 the classical principles depend on the chosen logic. Also, the relevant methods are quite interesting.
Finally, a good friend makes an appearance here: Tennenbaum's Theorem plays a significant r\^ole.

\medent
Which of the axioms of {\sf ILM} remain in the interpretability logic of {\sf HA}? 
The principles of {\logba} and the principles {\sf J}1,2,4 and {\sf M} are valid over {\sf HA}. However, {\sf J}5 fails since, e.g., its instance
$\Diamond \Box \bot \rhd \Box \bot$  
fails.\footnote{\label{minismurf}The fact that $\Diamond \Box \bot \rhd \Box \bot$ is not valid for
{\sf HA}  follows, for example, from Theorem~\ref{lampjesmurf}  in combination with
what we already know about the provability logic of {\sf HA}.} 
The status of {\sf J}3 is unknown. We note that the classical argument for {\sf J}3 does yield following weakened version.
\begin{itemize}
\item
$ (\phi \rhd \chi \wedge \psi \rhd \chi) \to ((\phi \vee \psi) \wedge \neg\, (\phi \wedge \psi)) \rhd \chi$  
\end{itemize}
We define the modal $\Sigma^0_1$-formulas as follows:
\begin{itemize}
\item
$\sigma ::= \top \mid \bot \mid \Box \phi \mid (\sigma \vee \sigma)$
\end{itemize} 
The following valid principle was noted by Lev Beklemishev in conversation.
\begin{itemize}
\item
%Let \\
$ (\sigma \rhd \chi \wedge \sigma' \rhd \chi) \to (\sigma \vee \sigma') \rhd \chi$, with $\sigma$ and $\sigma'$ being modal $\Sigma_1^0$. 
\end{itemize}

\begin{quest} \label{que:bekle}
Let $A_0 := \forall S\in \Sigma^0_1\, ({\sf True}_{\Sigma^0_1}S \vee \neg\, {\sf True}_{\Sigma^0_1}S)$ and 
$A_1 := \forall S\in \Sigma^0_1\, (\Box_{\sf HA} S \to {\sf True}_{\Sigma^0_1}S)$.
%\medent
 As $ \Diamond_{\sf HA}A_1$ implies, by the Double Negation Translation,  $\Diamond_{\sf PA}A_1$, we have i-{\sf EA}-verifiably $(A_0\wedge \Diamond_{\sf HA}A_1) \rhd_{\sf HA} A_1$. We can do then  the Henkin construction for ${\sf PA}+A_1$ using the decidability for $\Sigma^0_1$-sentences.
We also have trivially $A_1 \rhd_{\sf HA} A_1$.
But do we have: 

\medskip

\qquad\qquad $((A_0\wedge \Diamond_{\sf HA}A_1)\vee A_1) \rhd_{\sf HA} A_1$ ?

\medent
%By similar considerations, 
Similarly, we have for any $B$ that 
%\medskip
%\qquad\qquad 
$(A_0\wedge \Diamond_{\sf HA}\top) \rhd_{\sf HA} (B\vee \neg\, B)$. 
 But do we have for all $B$ that 

\medskip

\qquad\qquad $((A_0\wedge \Diamond_{\sf HA}\top) \vee B) \rhd_{\sf HA} (B\vee \neg\, B)$ ? 
%\end{center}
\end{quest}

\noindent
Is the classically invalid principle
$\vdash (\phi \rhd \psi \wedge \phi \rhd \chi) \to \phi\rhd (\psi \wedge \chi)$ still invalid over {\sf HA}? 
We do not know that for $\phi = \top$.  However, the usual construction of 
Orey sentences for {\sf PA} can be adapted to give a sentence $O$ such that
$A \rhd_{\sf HA} O$ and $A \rhd_{\sf HA} \neg\, O$, where $A$ is the universal closure of
an instance of \emph{Tertium non Datur} that is sufficient to make the classical argument work.

\medent  
Theorem~\ref{quorumsmurf} throws a shadow downward on {\sf HA}.
 We need to define the \emph{$\Gamma_0$-formulas} %are defined as follows.
 %preparatory definitions
  to describe it.
 %The almost negative formulas are defined as in Appendix \ref{sec:real}.  %
  Let $S$ range over $\Sigma_1^0$-formulas and let 
 $A$ range over almost negative formulas, as defined in Appendix \ref{sec:real}: 
 \begin{itemize}
 \item
 $ B ::= S \mid (B\wedge B) \mid (B\vee B) \mid (A \to B) \mid \forall x\, B \mid \exists x\, B$
 \end{itemize}
 
 \noindent
  Anne Troelstra shows in \cite[\Section 3.6.6]{troe:meta73} that ${\sf HA}+ {\sf ECT}_0$ is $\Gamma_0$-conservative over
 {\sf HA}. \emph{A fortiori},  ${\sf HA}+ {\sf CT}_0!$ is $\Gamma_0$-conservative over
 {\sf HA}. Inspection of the proof shows that this fact is verifiable in i-{\sf EA}.
 We have:
 
 \begin{theorem}
 The theory i-{\sf EA} verifies the following. Suppose $C$ is in $\Gamma_0$. We have: if
 $\bigwedge_{i<n} (A_i \rhd_{\sf HA} B_i)$ and ${\sf HA}\vdash \bigwedge_{i<n} (A_i \to B_i) \to C$, then
 ${\sf HA} \vdash C$.
 \end{theorem}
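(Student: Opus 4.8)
The plan is to combine the interpretability principle coming from Theorem~\ref{quorumsmurf} with the $\Gamma_0$-conservativity of ${\sf HA}+{\sf CT}_0!$ over ${\sf HA}$. First I would recall the key consequence of Theorem~\ref{quorumsmurf}: i-{\sf EA} verifies that whenever $K:({\sf HA}+{\sf CT}_0!)\rhd\iea$, we get ${\sf HA}+{\sf CT}_0!\vdash{\sf C}_K$, and hence the interpretability logic of ${\sf HA}+{\sf CT}_0!$ validates $\phi\rhd\psi\to\Box(\phi\to\psi)$. Spelled out arithmetically and i-{\sf EA}-verifiably: if $A\rhd_{{\sf HA}+{\sf CT}_0!}B$, then ${\sf HA}+{\sf CT}_0!\vdash\Box_{{\sf HA}+{\sf CT}_0!}(A\to B)$, so (by essential reflexivity / the fact that ${\sf HA}+{\sf CT}_0!$ is an extension of ${\sf HA}$ proving the relevant reflection) ${\sf HA}+{\sf CT}_0!\vdash A\to B$. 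Actually the cleaner route: $A\rhd_{{\sf HA}+{\sf CT}_0!}B$ already gives ${\sf HA}+{\sf CT}_0!\vdash A\to B$ directly via the principle, since $\phi\rhd\psi\to\Box(\phi\to\psi)$ together with $\Box(\phi\to\psi)\to(\phi\to\psi)$ (reflexivity, available because the theory extends ${\sf HA}$ and we apply it with the concrete sentences) yields $\phi\rhd\psi\to(\phi\to\psi)$ in the logic; but one must be careful that this reflection step is itself i-{\sf EA}-verifiable, which follows from the essential-reflexivity discussion in \S~\ref{prelo}.

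Next I would set up the induction-free core argument. Assume the hypotheses: $C\in\Gamma_0$, $\bigwedge_{i<n}(A_i\rhd_{\sf HA}B_i)$, and ${\sf HA}\vdash\bigwedge_{i<n}(A_i\to B_i)\to C$, all of this reasoned about inside i-{\sf EA}. The first observation is that $A_i\rhd_{\sf HA}B_i$ implies $A_i\rhd_{{\sf HA}+{\sf CT}_0!}B_i$: indeed, ${\sf HA}+{\sf CT}_0!$ is an extension of ${\sf HA}$, and an interpretation witnessing $({\sf HA}+A_i)\rhd({\sf HA}+B_i)$ can be composed with the local self-interpretation of ${\sf HA}+{\sf CT}_0!$ interpreting ${\sf HA}$ — more simply, relative interpretability is monotone under passing to extensions of the interpreting theory in the sense that $(T+A)\rhd(T+B)$ and $T\subseteq T'$ with $T'$ still interpreting arithmetic appropriately gives $(T'+A)\rhd(T'+B)$ provided $T'$ locally interprets $T$, which holds here since ${\sf HA}+{\sf CT}_0!$ contains ${\sf HA}$. (One should double-check the precise bookkeeping: what we really need is that the witness $K$ can be taken with domain $\iea$ so that Theorem~\ref{quorumsmurf} applies; this is fine because $A_i\rhd_{\sf HA}B_i$ can always be arranged with an interpretation of $\iea$ sitting inside, by composing with $\mathcal E$.) Then, by the principle above applied in ${\sf HA}+{\sf CT}_0!$, we conclude ${\sf HA}+{\sf CT}_0!\vdash A_i\to B_i$ for each $i<n$, hence ${\sf HA}+{\sf CT}_0!\vdash\bigwedge_{i<n}(A_i\to B_i)$, and combining with the third hypothesis (which is a theorem of ${\sf HA}\subseteq{\sf HA}+{\sf CT}_0!$), we get ${\sf HA}+{\sf CT}_0!\vdash C$.

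Finally I would invoke the $\Gamma_0$-conservativity of ${\sf HA}+{\sf CT}_0!$ over ${\sf HA}$: since $C\in\Gamma_0$ and ${\sf HA}+{\sf CT}_0!\vdash C$, we get ${\sf HA}\vdash C$, which is the desired conclusion. The whole chain of reasoning must be checked to go through in i-{\sf EA}: the interpretability-to-implication step is i-{\sf EA}-verifiable by Theorem~\ref{quorumsmurf} (which is explicitly stated as i-{\sf EA}-verifiable) together with the i-{\sf EA}-verifiable essential reflexivity of extensions of ${\sf HA}$; the monotonicity of $\rhd$ under extension is a matter of composing interpretations, which is elementary and formalizable in i-{\sf EA}; and the $\Gamma_0$-conservativity is noted in the excerpt to be i-{\sf EA}-verifiable by inspection of Troelstra's proof. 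The main obstacle I anticipate is the bookkeeping around the transfer $A_i\rhd_{\sf HA}B_i\Rightarrow A_i\rhd_{{\sf HA}+{\sf CT}_0!}B_i$ in a way that keeps the interpretation's domain equal to $\iea$ so that Theorem~\ref{quorumsmurf} is literally applicable — one wants to make sure that adding ${\sf CT}_0!$ on the interpreting side does not disturb the hypothesis $K:T\rhd\iea$, and that everything stays uniform and i-{\sf EA}-provable rather than merely true. A careful statement would note that $\iea$-verifiably, from any interpretation $K_i:({\sf HA}+A_i)\to({\sf HA}+B_i)$ one builds an interpretation $K_i':(({\sf HA}+{\sf CT}_0!)+A_i)\to(({\sf HA}+{\sf CT}_0!)+B_i)$ by precomposing the ${\sf CT}_0!$-axioms' translations with the identical interpretation $\mathcal E$ of $\iea$, and then Theorem~\ref{quorumsmurf} applied to the reduct-to-$\iea$ of $K_i'$ delivers ${\sf C}_{K_i'}$ and hence $A_i\to B_i$.
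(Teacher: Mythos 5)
Your proposal follows the same route as the paper's (very terse) proof: use Theorem~\ref{quorumsmurf} to upgrade each hypothesis $A_i \rhd_{\sf HA} B_i$ to ${\sf HA}+{\sf CT}_0!\vdash A_i\to B_i$, conclude ${\sf HA}+{\sf CT}_0!\vdash C$ by modus ponens, and finish with the i-{\sf EA}-verifiable $\Gamma_0$-conservativity of ${\sf HA}+{\sf CT}_0!$ over ${\sf HA}$. One justification in your middle paragraph is not right, though: you pass from $\phi\rhd\psi\to\Box(\phi\to\psi)$ to $\phi\rhd\psi\to(\phi\to\psi)$ by invoking ``$\Box(\phi\to\psi)\to(\phi\to\psi)$ (reflexivity, available because the theory extends ${\sf HA}$)''. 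Unrestricted reflection $\Box_T\sigma\to\sigma$ is never provable in a consistent $T$ (L\"ob), and essential reflexivity only yields $\Box_{T,n}\sigma\to\sigma$ for standard $n$; likewise, from ${\sf HA}+{\sf CT}_0!\vdash\Box_{{\sf HA}+{\sf CT}_0!}(A\to B)$ one cannot extract ${\sf HA}+{\sf CT}_0!\vdash A\to B$ without a soundness assumption. The step should instead be made directly, as you in fact do in your last paragraph: the witnessing interpretation $K_i$ of ${\sf HA}+B_i$ in ${\sf HA}+A_i$ is \emph{a fortiori} an interpretation into ${\sf HA}+{\sf CT}_0!+A_i$, its reduct to \iea\ falls under Theorem~\ref{quorumsmurf}, so ${\sf HA}+{\sf CT}_0!+A_i\vdash{\sf C}_{K_i}$, and the i-isomorphism with $\mathcal E$ transfers the provable $B_i^{K_i}$ to $B_i$. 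Read that way, your argument is correct and coincides with the paper's.
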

 
 \begin{proof}
 Suppose $C$ is in $\Gamma_0$ and
 $\bigwedge_{i<n} (A_i \rhd_{\sf HA} B_i)$ and ${\sf HA}\vdash \bigwedge_{i<n} (A_i \to B_i) \to C$. It follows that 
 ${\sf HA}+ {\sf CT}_0! \vdash \bigwedge_{i<n} (A_i \to B_i)$ and ${\sf HA}+{\sf CT}_0!\vdash \bigwedge_{i<n} (A_i \to B_i) \to C$.
 Hence  ${\sf HA}+{\sf CT}_0!\vdash C$. Since
$C$ is in $\Gamma_0$, it follows that ${\sf HA} \vdash C$.
 \end{proof}
 
 \begin{corollary}
  The theory i-{\sf EA} verifies the following.
 Suppose $A$ is almost negative and $B$ is in $\Gamma_0$. Suppose further that $A\rhd_{\sf HA} B$. Then, ${\sf HA}\vdash A \to B$.
 \end{corollary}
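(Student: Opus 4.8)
The plan is to obtain this as an immediate specialisation of the preceding Theorem, via a single well-chosen instantiation. First I would take $n=1$, $A_0 := A$ and $B_0 := B$, so that the hypothesis $A \rhd_{\sf HA} B$ is literally $\bigwedge_{i<1}(A_i \rhd_{\sf HA} B_i)$. The one design decision is the choice of the ``target'' formula $C$ fed to the Theorem: I would simply take $C := (A \to B)$ itself.

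The key step is then to verify that $C$ lies in $\Gamma_0$. This is where one must consult the grammar of $\Gamma_0$-formulas: the production $B ::= \dots \mid (A \to B) \mid \dots$, read with the side conditions that the antecedent ranges over almost negative formulas and the consequent over $\Gamma_0$, applies precisely because $A$ is almost negative (by hypothesis) and $B$ is already in $\Gamma_0$ (by hypothesis). Hence $A \to B \in \Gamma_0$. With this established, the second premise of the Theorem, namely ${\sf HA}\vdash \bigwedge_{i<1}(A_i \to B_i) \to C$, reduces to ${\sf HA}\vdash (A\to B)\to(A\to B)$, which is a trivial propositional theorem. Applying the Theorem gives ${\sf HA}\vdash C$, i.e.\ ${\sf HA}\vdash A \to B$.

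For the i-{\sf EA}-verifiability clause I would note that the Theorem's conclusion is itself asserted to be verifiable in i-{\sf EA}, and the extra reasoning we add on top of it, being one instance of the reflexivity of $\to$ together with an elementary inspection of the $\Gamma_0$-grammar, is entirely elementary; so the whole implication of the Corollary is i-{\sf EA}-verifiable as claimed. I do not expect any real obstacle: the only point requiring a moment's care is that the syntactic side condition ``$A$ almost negative'' occurring in the clause $(A\to B)$ of the definition of $\Gamma_0$ matches exactly the hypothesis placed on $A$ in the Corollary statement; once that match is observed, the Corollary is just the $n=1$, $C=(A\to B)$ case of the Theorem.
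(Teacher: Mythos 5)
Your proposal is correct and is clearly the intended argument: the paper states this corollary without proof as an immediate consequence of the preceding theorem, and the instantiation $n=1$, $C:=(A\to B)$, together with the observation that $A\to B$ lies in $\Gamma_0$ precisely because the $(A\to B)$-clause of the $\Gamma_0$-grammar requires an almost negative antecedent and a $\Gamma_0$ consequent, is exactly the one-line specialisation the authors have in mind.
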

 
  \begin{corollary}
   The theory i-{\sf EA} verifies the following: if $\top \rhd_{\sf HA} O$ and $\top \rhd_{\sf HA}\neg\, O$, then ${\sf HA}\vdash \bot$.
   Thus, if {\sf HA} is consistent, it has no Orey-sentences.
  \end{corollary}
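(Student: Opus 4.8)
The plan is to obtain this Corollary as an immediate application of the $\Gamma_0$-conservativity Theorem proved just above it (the one asserting that i-{\sf EA} verifies: if $\bigwedge_{i<n}(A_i \rhd_{\sf HA} B_i)$ and ${\sf HA}\vdash \bigwedge_{i<n}(A_i \to B_i)\to C$ with $C$ in $\Gamma_0$, then ${\sf HA}\vdash C$), rather than through the intervening Corollary, since the latter would require $O$ and $\neg\, O$ themselves to lie in $\Gamma_0$, which we are not given, and the Theorem places no constraint at all on the $A_i$ and $B_i$.

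First I would instantiate the Theorem with $n:=2$, $A_0:=A_1:=\top$, $B_0:=O$, $B_1:=\neg\, O$, and $C:=\bot$. The formula $\bot$ is a $\Sigma^0_1$-formula, hence lies in $\Gamma_0$ (it occurs as an $S$ in the generating grammar), so the side condition on $C$ is met. The antecedent $\bigwedge_{i<2}(A_i \rhd_{\sf HA} B_i)$ is precisely $\top \rhd_{\sf HA} O$ together with $\top\rhd_{\sf HA}\neg\, O$, i.e. exactly the hypothesis of the Corollary. The remaining premise ${\sf HA}\vdash \bigwedge_{i<2}(A_i\to B_i)\to C$ unwinds to ${\sf HA}\vdash (O\wedge\neg\, O)\to\bot$, which holds by pure intuitionistic propositional logic. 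The Theorem then yields ${\sf HA}\vdash\bot$; and since the Theorem is itself i-{\sf EA}-verifiable (resting ultimately on Theorem~\ref{quorumsmurf} and Troelstra's $\Gamma_0$-conservativity of ${\sf HA}+{\sf CT}_0!$ over {\sf HA}), so is this implication.

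For the final sentence I would simply recall that an Orey sentence for {\sf HA} is by definition a sentence $O$ such that both ${\sf HA}+O$ and ${\sf HA}+\neg\, O$ are interpretable in {\sf HA}, that is, $\top\rhd_{\sf HA} O$ and $\top\rhd_{\sf HA}\neg\, O$. The displayed implication thus says that the existence of such an $O$ collapses {\sf HA} to inconsistency; contrapositively, a consistent {\sf HA} admits no Orey sentence.

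I do not anticipate a real obstacle: the substantive work is already packaged in Theorem~\ref{quorumsmurf} and the $\Gamma_0$-conservativity transfer. The only points deserving a word of care are (i) flagging that $\bot\in\Gamma_0$, so the Theorem applies verbatim, and (ii) noting that the triviality ${\sf HA}\vdash(O\wedge\neg\, O)\to\bot$ is exactly of the form ${\sf HA}\vdash\bigwedge_{i<n}(A_i\to B_i)\to C$ the Theorem requires. One should also remark why the shorter route through the preceding Corollary with $B:=\bot$ fails: $\top\rhd_{\sf HA}\bot$ is not among the hypotheses and does not obviously follow from $\top\rhd_{\sf HA}O$ and $\top\rhd_{\sf HA}\neg\, O$ by composing interpretations.
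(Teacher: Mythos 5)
Your proof is correct and is exactly the intended derivation: the paper presents this corollary as an immediate consequence of the preceding $\Gamma_0$-conservativity theorem, and your instantiation ($n=2$, $A_0=A_1=\top$, $B_0=O$, $B_1=\neg\,O$, $C=\bot\in\Gamma_0$, with ${\sf HA}\vdash(O\wedge\neg\,O)\to\bot$) is precisely how it follows. Your side remark is also on point: the route through the intervening corollary (or via $\top\rhd_{\sf HA}\bot$) is unavailable, since the conjunction principle $(\phi\rhd\psi\wedge\phi\rhd\chi)\to\phi\rhd(\psi\wedge\chi)$ is exactly what the paper flags as open for $\phi=\top$ over {\sf HA}.
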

 
\noindent
 We give counterparts of the above classes in the modal language, beginning
 with the almost negative ones.\footnote{By the Orey-H\'ajek characterization,
 $A \rhd_{\sf PA} B$ is a $\Pi_2^0$-relation. (It was shown to be complete $\Pi^0_2$ independently by Per Lindstr\"om and
 Robert Solovay.) No such reduction is known for  the relation $A \rhd_{\sf HA} B$. This relation is \emph{prima facie}
 $\Sigma^0_3$ and might, for all we know, be $\Sigma^0_3$-hard. We note that $\Pi^0_2$ is almost negative but
 $\Sigma^0_3$ is not. So we cannot take $\phi \rhd \psi$ as a modal almost negative formula. This does not exclude
 that further insight might allow us to include it at a later stage.}
 %of the  language as follows. 
  Let $\phi$ range over all formulas and 
 \begin{itemize}
 \item
 $\sigma :: = \bot \mid \top \mid \Box \phi \mid (\sigma\vee \sigma)$
 \item
 $\psi ::= \sigma \mid (\psi \wedge \psi) \mid (\psi \to \psi)$
  \end{itemize}
 We define the $\Gamma_0$-formulas of the bi-modal language as follows. Let $\phi$ range over all formulas and let
 $\psi$ range over the almost negative formulas.
  \begin{itemize}
    \item
 $\chi ::= \bot \mid \top \mid \Box \phi \mid (\phi \rhd \phi) \mid (\chi \wedge \chi) \mid (\chi \vee \chi) \mid (\psi \to \chi) $. 
 \end{itemize}
 %
%We find:

\begin{theorem}\label{lampjesmurf}
Let $\chi$ be in $\Gamma_0$.
The following principle is in the interpretability logic of {\sf HA}:
\[  (\bigwedge_{i<n} (\phi_i \rhd \psi_i) \wedge   \Box (\bigwedge_{i<n}(\phi_i \to \psi_i) \to \chi)) \to \Box \chi.\] 
 \end{theorem}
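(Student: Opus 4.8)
\textbf{Proof plan for Theorem~\ref{lampjesmurf}.}
The plan is to reduce the claim, via the $\Gamma_0$-conservativity machinery used in Theorem~\ref{quorumsmurf} and its corollaries, to the fact that interpretability over ${\sf HA}+{\sf CT}_0!$ coincides, on almost-negative antecedents, with provable implication, and then to internalise this argument under the box. Concretely, suppose the formulas $\phi_i$, $\psi_i$, $\chi$ (with $\chi$ in the bi-modal class $\Gamma_0$) are given together with an arithmetical realisation $f$; write $A_i \deq (\phi_i)^f_{\InF{3}{{\sf HA}}}$, $B_i \deq (\psi_i)^f$, $C \deq (\chi)^f$. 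I must show ${\sf HA}$ proves $\bigl(\bigwedge_{i<n}(A_i \rhd_{\sf HA} B_i) \wedge \Box_{\sf HA}(\bigwedge_{i<n}(A_i\to B_i)\to C)\bigr)\to \Box_{\sf HA}C$; by the load-bearing hypothesis that ${\sf HA}$ verifies all the relevant facts, it suffices to do the reasoning inside ${\sf HA}$, i.e.\ under the outer box, and then appeal to i-${\sf EA}$-verifiable $\Sigma^0_1$-completeness and the derivability conditions.

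First I would reason inside ${\sf HA}$. Assume $\Box_{\sf HA}\bigl(\bigwedge_i(A_i\to B_i)\to C\bigr)$ and, for each $i$, $A_i \rhd_{\sf HA} B_i$, i.e.\ $\Box_{\sf HA}(\text{``$A_i$ interprets $B_i$ over }{\sf HA}\text{''})$. Since ${\sf HA}$ locally interprets ${\sf HA}+{\sf CT}_0!$, composing interpretations gives $({\sf HA}+{\sf CT}_0!)\vdash A_i \rhd B_i$ for each $i$, and this composition is ${\sf HA}$-verifiable. Now the key structural point enters: because the $\phi_i$ here range only over the almost-negative bi-modal class, each $A_i$ is (${\sf HA}$-provably) almost negative in the arithmetical sense, so by the $\Gamma_0$-conservativity argument underlying the Corollary after Theorem~\ref{quorumsmurf} (Troelstra's $\Gamma_0$-conservativity of ${\sf HA}+{\sf CT}_0!$ over ${\sf HA}$, together with Theorem~\ref{quorumsmurf}), the relation $A_i\rhd_{\sf HA}B_i$ forces ${\sf HA}\vdash A_i\to B_i$ — provably in ${\sf HA}$ (using the i-${\sf EA}$-verifiability clauses). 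Hence ${\sf HA}\vdash\bigwedge_i(A_i\to B_i)$, so still inside our outer box we obtain $\Box_{\sf HA}\bigwedge_i(A_i\to B_i)$; combining with the assumed $\Box_{\sf HA}\bigl(\bigwedge_i(A_i\to B_i)\to C\bigr)$ via the second derivability condition yields $\Box_{\sf HA}C$, which is what we needed under the outer box. Discharging the outer box by necessitation (the ${\sf L}1$ rule \bi) gives the theorem.

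The step I expect to be the main obstacle is making the almost-negative bookkeeping genuinely correct: I must check that for every realisation $f$, a bi-modal formula built by the $\psi$-grammar (the $\sigma$/$\psi$ clauses: disjunctions of $\bot,\top,\Box\phi$, closed under $\wedge$ and $\to$) maps to a formula that ${\sf HA}$ itself recognises as almost negative, so that the corollary to Theorem~\ref{quorumsmurf} applies to each $A_i$. This requires that $\Box_{\sf HA}\phi$ is $\Sigma^0_1$ hence self-realising and in particular almost negative, that the grammar never puts a disjunction or an existential in a position the almost-negative class forbids, and that the whole ``if $A\rhd_{\sf HA}B$ and $A$ almost negative and $B$ in $\Gamma_0$ then ${\sf HA}\vdash A\to B$'' implication — already stated as a Corollary above and there attributed to i-${\sf EA}$ — transfers smoothly into a single box. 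The secondary subtlety is that $C$ is only in the bi-modal $\Gamma_0$-class, which contains subformulas of the form $\phi\rhd\phi$; but since $C$ appears only inside a box (as the consequent $\Box\chi$ and inside the assumed $\Box(\dots\to\chi)$) and the conservativity fact is invoked solely to pin down the $A_i\to B_i$ conjuncts, the presence of $\rhd$-subformulas in $\chi$ does no harm — I would flag this explicitly rather than let it pass silently. Everything else is routine manipulation of the Hilbert-Bernays-Löb conditions and of composition of interpretations.
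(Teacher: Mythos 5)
Your overall strategy (internalise the preceding conservativity argument under the box and finish with the derivability conditions) is the right one and is essentially what the paper intends, since the unnamed theorem immediately before Theorem~\ref{lampjesmurf} is explicitly stated to be i-${\sf EA}$-verifiable; but the way you execute the central step contains a genuine error. You justify the claim that $A_i\rhd_{\sf HA}B_i$ yields ${\sf HA}\vdash A_i\to B_i$ by asserting that ``the $\phi_i$ here range only over the almost-negative bi-modal class''. They do not: the theorem restricts only $\chi$ to $\Gamma_0$, while the $\phi_i$ and $\psi_i$ range over arbitrary formulas. The corollary you invoke (``if $A$ is almost negative, $B$ is in $\Gamma_0$ and $A\rhd_{\sf HA}B$, then ${\sf HA}\vdash A\to B$'') is therefore not applicable, and without it the inference from $A_i\rhd_{\sf HA}B_i$ to $\Box_{\sf HA}(A_i\to B_i)$ is unavailable --- the principle $\phi\rhd\psi\to\Box(\phi\to\psi)$ is established in the paper only for extensions of ${\sf HA}+{\sf CT}_0!$, not for ${\sf HA}$, and whether anything of this sort holds over ${\sf HA}$ itself is part of Open Question~\ref{que:intea}. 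A symptom of the misreading is that in your argument the hypothesis $\chi\in\Gamma_0$ does no work at all (you even remark that the $\rhd$-subformulas of $\chi$ are harmless because $\chi$ occurs only boxed); if your route were sound, the theorem would hold for arbitrary $\chi$, which is not what is claimed.

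The intended argument runs entirely through ${\sf HA}+{\sf CT}_0!$ and descends to ${\sf HA}$ only at the very end. Given $A_i\rhd_{\sf HA}B_i$, the witnessing interpretation restricts to an interpretation of i-${\sf EA}$ in ${\sf HA}+{\sf CT}_0!+A_i$, so Theorem~\ref{quorumsmurf} makes it i-isomorphic to the identity and hence gives $\Box_{{\sf HA}+{\sf CT}_0!}(A_i\to B_i)$ for \emph{arbitrary} $A_i,B_i$; combining with the boxed implication yields $\Box_{{\sf HA}+{\sf CT}_0!}C$; and only now does one use that $C$, the realisation of $\chi$, lies in the arithmetical class $\Gamma_0$, so that Troelstra's (i-${\sf EA}$-verifiable) $\Gamma_0$-conservativity of ${\sf HA}+{\sf CT}_0!$ over ${\sf HA}$ brings this down to $\Box_{\sf HA}C$. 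So the $\Gamma_0$ restriction belongs to the consequent $\chi$, not to the antecedents $\phi_i$, and the conservativity step is the last move, not the first. Your bookkeeping concern about mapping the modal classes to arithmetical ones is legitimate, but it should be directed at $\chi$ (checking that realisations of the modal $\Gamma_0$-grammar, including the $\rhd$-clauses, land in arithmetical $\Gamma_0$) rather than at the $\phi_i$.
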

 
\begin{example}
The principle $\vdash (\neg\neg\, \Box\bot  \to \Box\bot)\rhd \neg\neg\Box \bot  \to \Box \neg\neg\, \Box\bot$ is valid over {\sf HA}.
 Since {\sf HA} is {\sf HA}-verifiably closed under the primitive recursive Markov's Rule, it follows that
 $\vdash ((\neg\neg\, \Box\bot  \to \Box\bot)\rhd \neg\neg\Box \bot)  \to \Box \Box\bot$ is valid over {\sf HA}.
 \end{example}
 
 \begin{remark}
 We note that the seemingly stronger principle
 \[  (\bigwedge_{i<n} (\phi_i \rhd \psi_i) \wedge    \bigwedge_{i<n}(\phi_i \to \psi_i) \rhd \chi) \to \Box \chi.\] 
 in fact follows from Theorem~\ref{lampjesmurf}.
 \end{remark}
 
 \begin{quest} \label{que:intea}
Is there an interpretation of i-{\sf EA} in {\sf HA} that is not i-isomorphic to $\mathcal E$? 
 
 \medent
 There are many strengthenings of our question. We can demand {\sf HA}-verifiability of the self-interpretation.
 We could ask whether there is an $A$ such that $\top\rhd_{\sf HA} A$, but ${\sf HA}\nvdash A$.  Etcetera.
 
 \medent
If one combines the proof of Theorem~\ref{quorumsmurf}  with q-realizability, one obtains the following.
In case the domain and the parameter domain of an interpretation $K$ of i-{\sf EA} in {\sf HA} are auto-q\footnote{See
\Subsection~\ref{sec:hastar} for the notion of \emph{auto-q}.}, then 
$K$ is i-isomorphic with $\mathcal E$. Thus, a non-trivial interpretation i-{\sf EA} in 
 {\sf HA} should either have a sufficiently complex domain or a sufficiently complex parameter domain.\footnote{We apologize for
the classical reasoning. However, since the relevant predicates are decidable, it can be constructively justified.} 
%\medent
Note also  that $\top \rhd_{\sf HA} A$ both implies that $\Box_{{\sf HA}+{\sf CT}_0!} A$ and that $\top\rhd_{\sf PA} A$, which
puts some severe constraints on the possible $A$.
 \end{quest}
 
\subsubsection{Interpretability and $\Pi_1^0$-Conservativity}
\nosmurfduo
We have seen that interpretability and $\Pi_1^0$-conservativity coincide over {\sf PA}.
Over other classical theories, interpretability and $\Pi_1$-conservativity part ways. For example, they come apart over
Primitive Recursive Arithmetic {\sf PRA}: we have
$\top \jump_{\sf PRA} \mathrm{I}\Sigma_1$, but not $\top \rhd_{\sf PRA}  \mathrm{I}\Sigma_1$. 

Over
{\sf HA},  interpretability and $\Pi_1$-conservativity likewise separate ways.
We still find that, {\sf HA}-verifiably, $\rhd_{\sf HA}$ is a sub-relation of $\jump_{\sf HA}$.
However, for example, we have $\Diamond \Box_{\sf HA}\bot \jump_{\sf HA} \Box_{\sf HA} \bot$, but not
$\Diamond \Box_{\sf HA}\bot \rhd_{\sf HA} \Box_{\sf HA} \bot$.  Also,
$\neg\neg \,\Box_{\sf HA}\bot \jump_{\sf HA} \Box_{\sf HA} \bot$, but not 
$\neg\neg \,\Box_{\sf HA}\bot \rhd_{\sf HA} \Box_{\sf HA} \bot$.\footnote{%Both results can be derived 
These results follow from Theorem~\ref{lampjesmurf} in combination
with facts about provability logic.}

\section{The problem of the \emph{Survey} \qquad \HIp
\label{sec:survprob}} %\protect\textpmhg{l}
\nosmurfduo
We are returning here to the issue briefly mentioned in the main text: the collapse of $\tto$ to $\to$ in Lewis' 
original system \cite{Lewis14:jppsm,Lewis18} discovered by Post and addressed by Lewis in a subsequent 
note \cite{Lewis20:jppsm}. This episode is instructive in illustrating how Lewis' own thinking about 
$\tto$ was often sabotaged by a combination of several factors, including:

\begin{itemize}
\item an insistence on boolean laws for ``material'' connectives, including in particular classical, involutive laws for negation;
\item especially in the 1910's, a certain carelessness in accepting deductive laws 
 for ``intensional'' connectives, especially those involving contraposition.
\end{itemize}

\nosmurf
The second problem was pretty much admitted by Lewis himself:

\begin{smquote}
In developing the system, I had worked for a month to avoid this principle, which later turned out to be false. 
Then, finding no reason to think it false, I sacrificed economy and put it in 
 (\cite{lewisincap}, via \cite[p.92]{Murphey05}).
\end{smquote}

\nosmurf
In hindsight, these problems are unsurprising, especially given the publication date of the \emph{Survey}. Not only %because %it would be ahistorical to criticize Lewis' for taking too much inspiration from boolean logic back 
 %in the 1910's
   were non-boolean systems in the prenatal stage, %not even fully developed yet. %there was little to none non-classical competition on offer. 
 %A broader, though obviously related problem is that 
  but also semantics of propositional logics %---algebraic or any other---
  was poorly understood at the time. \emph{Symbolic Logic} published in 1932 was already in a much better position, mostly thanks to efforts of Mordechaj Wajsberg and William Parry, who provided several crucial algebraic (counter)models used in Appendix II to establish independence results for axioms between  \lna{S1} and \lna{S5}. No such assistance was available to Lewis when writing the earlier \emph{Survey} and consequently, when deciding whether or not to adopt a specific axiom for $\tto$, he would mostly rely just on his philosophical intuitions, much like other authors in that period.\footnote{Cf. in this respect his remark \cite{Lewis20:jppsm}: %correcting the axiom system of the \emph{Survey}: 
  ``Mr. Post's example which demonstrates the falsity of 2.21 is not here reproduced, since it involves the use of a diagram and would require considerable explanation.'' A ``diagram'' is presumably a finite matrix/algebra  (which could indicate a largely overlooked inspiration Lewis' work provided for Post in developing non-classical logical ``matrices'', a.k.a. algebras or truth-tables!). In Appendix II to \emph{Symbolic Logic}, the counterexamples of Parry and Wajsberg were called ``groups''. It is worth mentioning that early Lewis' papers tended to have titles like \emph{Implication and the Algebra of Logic} \cite{Lewis12}, \emph{A New Algebra of Implications and Some Consequences} \cite{Lewis13:jppsm}, \emph{The Matrix Algebra for Implications} \cite{Lewis14:jppsm} or \emph{A Too Brief Set of Postulates for the Algebra of Logic} \cite{Lewis15:jppsm}, but this should not mislead us: the word ``algebra'' (or ``matrix'') is not taken here in the sense of modern model theory or universal algebra.}
 
 From our point of view, it is of particular interest to isolate the actual r\^ole played by classical logic with its involutive negation, the axiom \lb\ and redefinition of $\Box$ as \refeq{boxdef} in the collapse of the system of the \emph{Survey}. %The presentation of Post's derivation given by Lewis seems to indicate that these classical laws are necessary to collapse the system of the \emph{Survey}; it is interesting to see how much of classical logic is actually used. %they can be weakened somewhat.
  %the collapse of $\Box \phi$ to $(\top \strictif \phi)$ is central to the problem, but it is not exactly the case.
 
 The problematic axiom is the converse of the one which was latter baptised  \lna{A8}  in Appendix II to \emph{Symbolic Logic} :
 
\begin{description}
 \item[\lna{A8}] $(\phi \tto \psi) \tto (\neg\Diamond\psi \tto \neg\Diamond\phi)$.
\end{description}
 
 \noindent
 In the \emph{Survey}, this axiom was postulated as a strict \emph{bi-}implication, i.e., 
 
\begin{description}
 \item $(\phi \tto \psi) \ifff (\neg\Diamond\psi \tto \neg\Diamond\phi)$.
\end{description}
 
 \noindent
 In our setting, with $\Box$ rather than $\Diamond$ as the primitive and with $\phi$ not being equivalent to $\neg\neg\phi$, the missing half can be rendered as 
 
 \newcommand{\totoon}{\lna{2.21}}
  \newcommand{\totoonalt}{\lna{2.21'}}
 
\begin{description}
 \item[\totoon] $(\Box \psi \strictif \Box \phi) \strictif (\neg\phi \strictif \neg\psi)$,
\end{description}
 
 \noindent
  ``\totoon'' being Lewis' name for this axiom \cite{Lewis20:jppsm}. Of course, there are other conceivable variants, for example:
 
\begin{description}
 \item[\totoonalt] $(\Box \neg\phi \strictif \Box \neg\psi) \strictif (\psi \strictif \phi)$.
\end{description}
 
\newcommand{\auxpost}{\lna{Auxp}}
 
 \nosmurf
As it turns out, however, $\totoon$ is exactly what we need to reproduce Post's derivation over $\iP$  (together with a sub-boolean axiom $\auxpost$ introduced below). %rather than \totoonalt.
 %we need to assume both forms to proceed with our reasoning. %Let us note that just  \lna{2.21b}  would be enough in presence of \lb, as \bk\ makes the following two formulas theorems:
 
 %\begin{itemize}
 %\item $(\phi \tto \psi) \tto (\neg \psi \tto \neg\phi)$,
 %%\item $(\Box\phi \tto \Box\psi) \tto (\Box\neg\psi \tto \Box\neg\phi)$. <-false
 %\end{itemize} 
 
 To present further details, let us also recall that Lewis uses Modus Ponens for $\tto$, i.e., $\frac{\phi \quad \phi \tto \psi}{\psi}$ as the main  inference rule. %\avred{This
 %makes sense only in a reasoning system without assumptions, say Hilbert style.} \tamargin{Is this something we should mention 
 %explicitly as part of our criticism of Lewis' setup? If I understand correctly, this remark overlaps with the point made in the footnote below, maybe it's worth elaborating upon.}
 This in itself is telling:  in $\iP$, $\phi$ jointly with $\phi \tto \psi$ entails only $\Box\psi$. The rule $\frac{\Box\phi}{\phi}$ is \emph{admissible}, 
 but not \emph{derivable}, unless one postulates as an axiom explicitly $\Box\phi \to \phi$, something that Lewis' insistence on formulating all the 
 axioms with $\tto$ as the principal connective prevented him from doing; $\Box\phi \tto \phi$ is not quite the same 
 thing.\footnote{One can see here yet another instance of Lewis' peculiar paradox, pointed out by Ruth Barcan Marcus: 
 despite his insistence that ``the relation of strict implication expresses precisely that relation which holds when valid deduction is possible'' and 
 that ``the system of Strict Implication may be said to provide that canon and critique of deductive inference'' \cite[p. 247]{Lewis32:book}, his own systems tend to run into  problems with the relationship between $\to$, $\tto$, entailment 
 and deducibility (relevance logicians would point it out too, cf. Footnote \ref{ft:relevance}, but their own systems  have their own share of similar problems). }
 
 In the setup with Modus Ponens for $\tto$ as the central rule and $\ifff$ as the ``real'' equivalence or identity, 
 %proving the equivalence between $\Box\phi$ and $\phi$ does not require 
 instead of deducing  $\phi \leftrightarrow \Box\phi$ in the extension of our $\iP$ with Lewis' axioms we need to show both $\Box\phi \tto \phi$ (which is already a theorem for Lewis, cf. the discussion of $\aApp$ and Remark \ref{rem:lewapp} above) and 
 \begin{itemize}
 \item[\lv] $\phi \tto \Box\phi$,
 \end{itemize}
 deriving $\lv$ in turn requiring only finding another theorem $\chi$ s.t. $\chi \tto (\phi \tto \Box\phi)$ is also a theorem; in other words, to derive still weaker
\begin{description}
  \item[$\Box$\lv] $\Box(\phi \tto \Box\phi)$.
\end{description}
  
 \noindent
This in turn can be done if one has both \totoon\ and a law which is a mild consequence of excluded middle, namely

\begin{description}
\item[\auxpost] $(\neg\neg p\tto\neg(\Box p\to \Box\neg p)) \tto (p\tto\Box p)$.
\end{description}

\noindent
Note that to derive $\auxpost$, it is enough to have as an axiom scheme, e.g., 

\begin{description}
\item $(\neg\neg\Box p \wedge \neg\Box\neg p) \to \Box p$;
\end{description}

\noindent
this is why we call $\auxpost$ a mild consequence of boolean laws. 

Note also that in presence of \totoon, we have that 

\newcommand{\auxpt}{\lna{Auxp2}}

\begin{description}
\item[\auxpt] $\Box(\Box p \to \Box \neg p) \tto \Box \neg p$. 
\end{description}

\noindent
To get this formula, substitute $\bot$ for $\phi$ and $p$ for $\psi$ in $\totoon$, use $\lx$ and the fact that $\Box p \to \Box \neg p \eqdm \Box p \to \Box\bot$. 
%\nosmurf
 %at hand, even with $\iP$ as the propositional base, and 
  Now  we can redo in our setting the Post derivation as quoted by Lewis. Substituting $\Box p \to \Box\neg p$ for $\psi$ and $\neg p$ for $\phi$ in  \totoon\ yields  $$((\Box p \to \Box\neg p) \strictif \Box \neg p) \strictif (\neg\neg p\strictif \neg(\Box p \to \Box\neg p)).$$
%We are thus left with showing that  $\Box \neg\Box\psi \strictif \Box \neg\psi$ can be deduced from  \lna{2.21a}, but for this purpose it is enough to substitute $\bot$ for $\phi$ in this axiom and do a little reasoning in $\iP$, the details of which are left to the reader.
The antecedent  of this strict implication is precisely \auxpt\ and the consequent is the antecedent of $\auxpost$.

\begin{remark}
Of course, there are simpler ways of collapsing the system of the \emph{Survey} when full boolean logic and all Lewis axioms are assumed. Note that using classical logic and \lb\ (which is an axiom for Lewis, and as we established in Corollary \ref{cor:emderbox} can anyway be derived in $\iP + \cpc$), we can replace \totoon\ with

$$
\Box(\Box\psi \to \Box\phi) \to \Box(\psi \to \phi).
$$

\noindent
Classically, this axiom in turn can be replaced with

$$
\Diamond\psi \to \Diamond(\Box\psi \wedge\Diamond\psi).
$$

\noindent
Now, if $\Box\phi \to \phi$ (i.e., reflexivity) is also an axiom or a theorem (which, as  shown above, should be indeed the case in a modern representation of Lewis' original system, with $\frac{\Box\phi}{\phi}$ as an admissible or derivable rule), we can derive $\Box\phi \leftrightarrow \phi$, trivializing the modal operator.
\end{remark}

 \end{document}